\documentclass[11pt, openany]{book}
\pdfoutput=1	% force arXiv to use pdflatex

\usepackage[letterpaper, top=1in, bottom=1in, left=1in, right=1in]{geometry}
\usepackage{indentfirst}
\usepackage[T1]{fontenc}
\usepackage{lmodern}	% scalable fonts (needed for microtype expansion)

\usepackage{titlesec}
\usepackage{titletoc}
\titleformat{\chapter}[display]
    {\fontsize{16pt}{16pt}\bfseries}
    {\chaptertitlename\ \thechapter}{16pt}{}
\titleformat{\section}
  {\fontsize{15pt}{15pt}\bfseries}
  {\thesection}{1em}{}
\titlecontents{chapter}% <section-type>
  [0pt]% <left>
  {\addvspace{10pt}%
  \bfseries}% <above-code>
  {\chaptername\ \thecontentslabel : \,}% <numbered-entry-format>
  {}% <numberless-entry-format>
  {\hspace{\fill} \hspace{-10mm} \contentspage}% <filler-page-format>
  
\usepackage{amsfonts, amssymb, amsmath, amsthm}
\usepackage{latexsym}
\usepackage[tracking=smallcaps]{microtype}	% for an arXiv submission, set \pdfoutput=1 near the top so it uses pdflatex
\usepackage{url}
\usepackage{color}
\definecolor{DarkGray}{rgb}{0.1,0.1,0.5}
\definecolor{curgray}{rgb}{0.5,0.5,0.5}
\definecolor{curblue}{rgb}{0.04,0.11,0.64}
\definecolor{curpurple}{rgb}{0.65,0.16,0.58}
\definecolor{curorange}{rgb}{1,0.32,0}
\usepackage[colorlinks=true,breaklinks, linkcolor=DarkGray,citecolor=DarkGray,urlcolor=DarkGray, hyperfootnotes=false]{hyperref}	% linkcolor=blue,citecolor=blue,urlcolor=blue %DarkGray
\usepackage{bookmark}

\usepackage[nottoc]{tocbibind}

\usepackage{graphicx}
\usepackage{subcaption}	%% incompatible with revtex
\usepackage{multirow}

\newcommand{\ket}[1]{{|#1\rangle}}

\newcommand{\abs}[1]{{\lvert #1\rvert}}	% since the delimiters do not scale, it might be a good idea to add a dummy {} at the end, so \abs{big expression}^2 has the superscript at a low height

\newcommand{\NP}{\ensuremath{\mathsf{NP}}}%{{\mathcal{NP}}}
\newcommand{\identity}{\ensuremath{\boldsymbol{1}}} %\mathbb{I}
\newcommand{\Id}{\identity} 
\DeclareMathOperator{\CNOT}{\operatorname{CNOT}}
\DeclareMathOperator{\SWAP}{\operatorname{SWAP}}
\DeclareMathOperator{\CZ}{\operatorname{CZ}}

\DeclareMathOperator{\pa}{\operatorname{parent}}
\DeclareMathOperator{\ro}{\operatorname{root}}
\DeclareMathOperator{\spam}{\operatorname{SPAM}}

\def\CZ {C\!Z}	% control-Z gate

\DeclareMathOperator{\depth}{\operatorname{depth}}

\newtheorem{theorem}{Theorem}%[section]
\newtheorem{lemma}{Lemma}

\newtheorem{claim}{Claim}

\newtheorem{condition}{Condition}
\newtheorem{procedure}{Procedure}

\newfont{\subsubsecfnt}{ptmri8t at 11pt}
\renewcommand{\subparagraph}[1]{\smallskip{\subsubsecfnt #1.}}

\newcommand{\eqnref}[1]{\hyperref[#1]{{(\ref*{#1})}}}
\newcommand{\thmref}[1]{\hyperref[#1]{{Theorem~\ref*{#1}}}}
\newcommand{\lemref}[1]{\hyperref[#1]{{Lemma~\ref*{#1}}}}
\newcommand{\corref}[1]{\hyperref[#1]{{Corollary~\ref*{#1}}}}
\newcommand{\defref}[1]{\hyperref[#1]{{Definition~\ref*{#1}}}}
\newcommand{\secref}[1]{\hyperref[#1]{{Sec.~\ref*{#1}}}}
\newcommand{\chapref}[1]{\hyperref[#1]{{Chap.~\ref*{#1}}}}
\newcommand{\figref}[1]{\hyperref[#1]{{Fig.~\ref*{#1}}}}  % for revtex use Fig. except at the start of a sentence
\newcommand{\figureref}[1]{\hyperref[#1]{{Figure~\ref*{#1}}}}  % for revtex use Fig. except at the start of a sentence
\newcommand{\tabref}[1]{\hyperref[#1]{{Table~\ref*{#1}}}}
\newcommand{\remref}[1]{\hyperref[#1]{{Remark~\ref*{#1}}}}
\newcommand{\appref}[1]{\hyperref[#1]{{Appendix~\ref*{#1}}}}
\newcommand{\claimref}[1]{\hyperref[#1]{{Claim~\ref*{#1}}}}
\newcommand{\factref}[1]{\hyperref[#1]{{Fact~\ref*{#1}}}}
\newcommand{\propref}[1]{\hyperref[#1]{{Proposition~\ref*{#1}}}}
\newcommand{\exampleref}[1]{\hyperref[#1]{{Example~\ref*{#1}}}}
\newcommand{\conjref}[1]{\hyperref[#1]{{Conjecture~\ref*{#1}}}}
\newcommand{\condref}[1]{\hyperref[#1]{{Condition~\ref*{#1}}}}
\newcommand{\procref}[1]{\hyperref[#1]{{Procedure~\ref*{#1}}}}
\newcommand{\expref}[1]{\hyperref[#1]{{Experiment~\ref*{#1}}}}

\allowdisplaybreaks[1]
\def\COLOR{}
\ifdefined\COLOR

\else

\fi

\definecolor{Cayenne}{rgb}{0.5,0,0}
\definecolor{Midnight}{rgb}{0,0,0.5}
\definecolor{Plum}{rgb}{0.5,0,0.5}
\definecolor{Teal}{rgb}{0,0.5,0.5}
\definecolor{Clover}{rgb}{0,0.5,0}
\definecolor{Maroon}{rgb}{0.5,0,0.25}
\definecolor{Ocean}{rgb}{0,0.25,0.5}
\definecolor{Tangerine}{rgb}{1,0.5,0}
\definecolor{Strawberry}{rgb}{1,0,0.5}
\definecolor{Fern}{rgb}{0.25,0.5,0}
\definecolor{Aqua}{rgb}{0,0.5,1}
\definecolor{Moss}{rgb}{0,0.5,0.25}
\definecolor{Mocha}{rgb}{0.5,0.25,0}
\definecolor{Lemon}{rgb}{1,1,0}
\definecolor{Asparagus}{rgb}{0.5,0.5,0}
\definecolor{Grape}{rgb}{0.5,0,1}
\definecolor{Iron}{rgb}{.3,.3,.3}
\definecolor{Steel}{rgb}{.4,.4,.4}
\definecolor{Purple}{rgb}{.5,0,.5}

\def\llbracket{{[\![}}
\def\rrbracket{{]\!]}}

\usepackage[utf8]{inputenc}

\usepackage{enumitem}
\usepackage{color}
\usepackage{dsfont}
\usepackage{amsmath}
\usepackage{amssymb,bm}
\usepackage{amsthm}
\usepackage{mathtools}
\usepackage{url}

\usepackage{booktabs}

\usepackage{array, makecell}
\usepackage{boldline}
\usepackage{enumitem}

\usepackage{color,dsfont}
\usepackage{graphicx}
\usepackage{ragged2e}
\usepackage[normalem]{ulem}
\usepackage[capitalise]{cleveref}
\usepackage{mathrsfs}
\usepackage{keyval}
\usepackage{mathtools}
\usepackage{float}
\usepackage{verbatim}
\usepackage{latexsym}
\usepackage{amsmath}
\usepackage{amssymb}
\usepackage{setspace}
\usepackage{amsfonts}
\usepackage{enumitem}
\usepackage{lipsum}
\setlist[enumerate]{label*=\arabic*.}

\definecolor{cardinal}{rgb}{0.827, 0, 0}

\begin{document}

\frontmatter
\begin{titlepage}
    \begin{center}
        \vspace*{2cm}
         \large
        LOWER OVERHEAD FAULT-TOLERANT BUILDING BLOCKS FOR NOISY QUANTUM COMPUTERS

        \normalsize
        \vspace{0.8cm}
        by\\
        \vspace{0.8cm}
        Prithviraj Prabhu

        \vfill
        
        \begin{singlespace}
        A Dissertation Presented to the\\
        FACULTY OF THE USC GRADUATE SCHOOL\\
        UNIVERSITY OF SOUTHERN CALIFORNIA\\
        In Partial Fulfillment of the\\
        Requirements for the Degree\\
        DOCTOR OF PHILOSOPHY\\
        (ELECTRICAL ENGINEERING)
        \end{singlespace}
        
        \vspace{2cm}

        May $2024$
            
        \vspace{2cm}
        
    \end{center}

    Copyright 2024
    \hfill
    Prithviraj Prabhu
        \vspace{1cm}
    
\end{titlepage}

\chapter{Acknowledgements}
\label{pref:Ack}

First and foremost, I would like to express my deepest gratitude to my supervisor, Dr. Benjamin Reichardt, for their invaluable encouragement, guidance and support throughout my PhD journey. Their expertise in the field of quantum error correction and fault tolerance has been crucial in shaping my thesis. Additionally, their influence over the past six years has been instrumental in shaping me into the researcher I am today. 

I would also like to thank the members of my qualifying exam and thesis defense committees, Dr. Todd Brun, Dr. Daniel Lidar, Dr. Eli Levenson-Falk and Dr. Keith Chugg. Their constructive criticism at crucial junctures of my PhD have been greatly appreciated. My deepest thanks to Dr. Brun, who served as my secondary advisor. Their constant support and supply of ideas kept me motivated and exploring new possibilities.

I am grateful to my colleagues in the Electrical Engineering department at USC for their stimulating discussions and collaborative spirit. Special thanks to Dr. Rui Chao, Dr. Sourav Kundu, Yuanjia Wang and Anirudh Lanka for their insightful conversations and constructive criticisms.

I would also like to thank the quantum computing team at Intel and Dr. Christopher Chamberland previously at Amazon for providing me the opportunity to pursue internships. My forays in the industry have certainly increased the value of this thesis.

Finally, I want to express my deepest appreciation to my family and friends for their unwavering love and support during this challenging but rewarding experience. I am grateful for the support of my parents, Govindaraj Prabhu and Sharmila Devi Prabhu, and my sister Prasidha.
\begin{singlespace}
\tableofcontents
\listoftables
\listoffigures
\end{singlespace}
% \chapter*{Abbreviations}
% \label{pref:abb}

\chapter{Abstract}
\label{sec:abs}

Quantum computation holds the promise of solving certain complex problems exponentially faster than classical computers. However, the high prevalent noise in current quantum devices impedes the accurate execution of even basic quantum algorithms. This can be remedied by protecting quantum information with a quantum error-correcting code, in which the logical information of an algorithmic qubit is spread across multiple physical qubits. Individual quantum errors are then located and corrected by the fault-tolerant measurement of multi-qubit stabilizer operators (parity checks). Unfortunately, error correction and fault tolerance both impose large demands on the qubit overhead: hundreds to thousands of physical qubits per logical qubit.

In this thesis, we reduce the qubit and time cost of fault tolerance by redesigning key building blocks of an error-corrected quantum computer. First, we develop a combinatorial proof with flag fault tolerance that exponentially reduces the number of qubits needed to measure a stabilizer of any size, while tolerating one fault. We then leverage the combinatorial proofs to develop fault-tolerant circuits to prepare cat states deterministically with only one ancillary qubit. These results then enable the construction of few-qubit fault-tolerant circuits for the preparation of complex encoded states with $100\%$ yield. Next, we optimize the overhead of error correction on a planar $25$-qubit layout. We show with extensive simulations that a distance-four code encoding six logical qubits protects information as well as the distance-five surface code, using one-tenth as many physical qubits. Finally, we optimize the time overhead of logical gates in surface code quantum computers. For computations executed via lattice surgery measurements of multi-qubit Pauli operators, we show that protecting measurement results with a classical code cuts computation time by a factor of two to six. Our hardware-agnostic optimizations of the space and time costs of fault tolerance thus suggest new routes to advance the timeline of error-free quantum computing.

\mainmatter
\chapter{Introduction}
\label{chap:prelim}

% In this chapter, we motivate the construction of a fault-tolerant quantum computer, introduce its building blocks, and describe the context in which they are redesigned. We start with a quick discussion on the necessity of quantum computing, before progressing to the challenges of building a quantum computer. We then show how models of fault tolerance adopted from classical computing can be applied to build robust quantum computers. After an initial wave of results proving that fault-tolerant quantum computers can compute with arbitrary accuracy, the goal has been to reduce the overhead. In \secref{sec:redesign}, we introduce the building blocks of fault-tolerant quantum computation, before delving into the context and results of the redesign of each building block. We end this chapter with a discussion of current thrusts in the field of quantum error correction, and highlight what potential future directions may be.

\section{What is a fault-tolerant quantum computer?} 

\begin{figure}
    \centering
    \includegraphics[width=\textwidth]{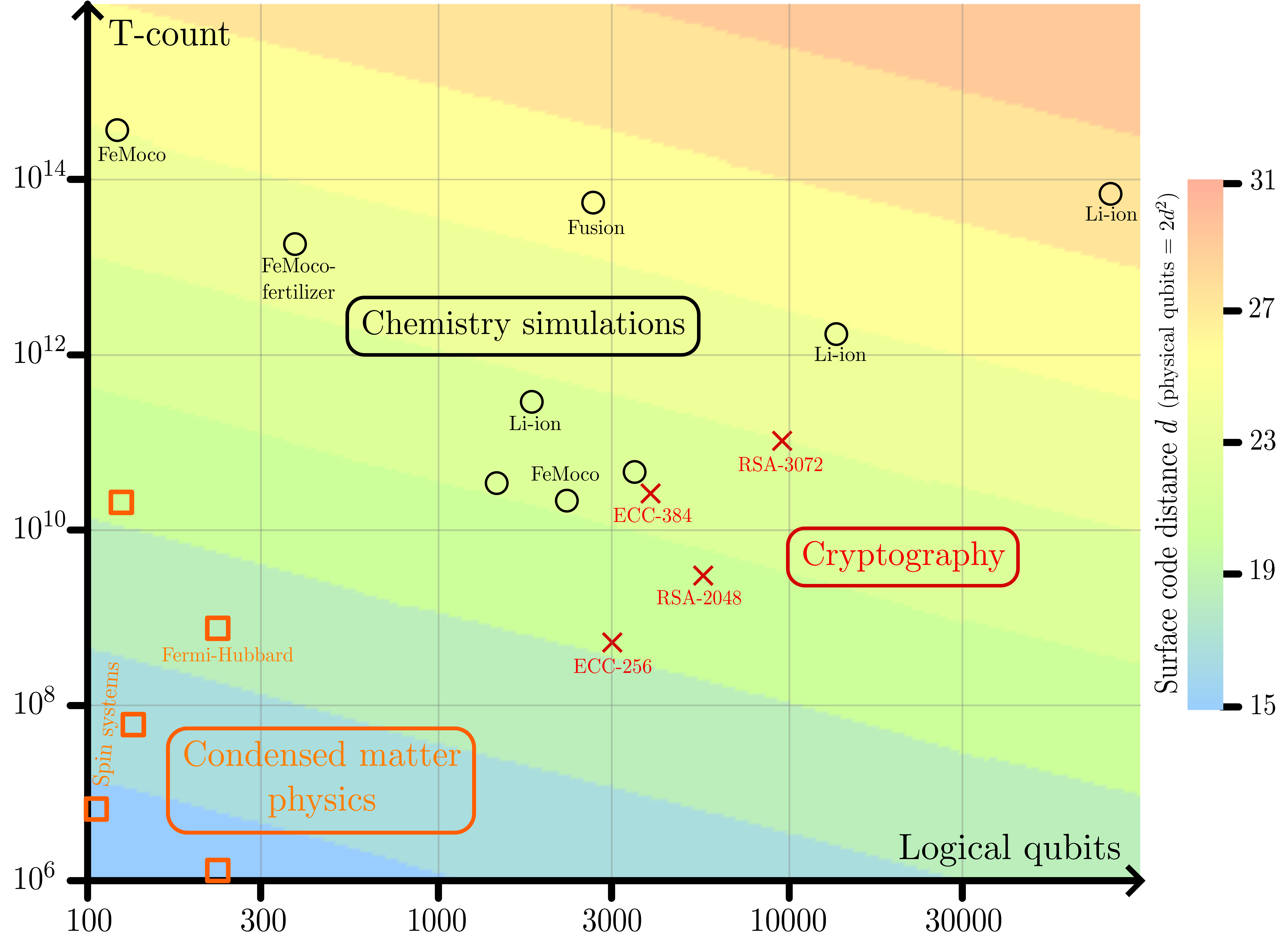}
    \vspace{-.6cm}
    \caption{Qubit and gate counts ($T$) needed for academic and commercial quantum use~\cite{scholten24}. Quantum algorithms decomposed into a sequence of Clifford and $T$ gates assume only the latter is computationally expensive. 
    To ensure reliable results for large and long computations, qubits are protected by error-correcting codes. When using the surface code, we show the minimum code distance needed to achieve an algorithmic accuracy of $1\%$, with a physical gate error rate of $10^{-3}$. We make modest assumptions about the routing space and ignore magic state distillation factories.
    }
    \label{fig:overheadgraph}
    \vspace{-.6cm}
\end{figure}

% Fundamental issues:
% 1) Maintaining scale below threshold (hardware, level 1)
% 2) Resource overheads (EC, Logical level 3, 4)
% 3) Control electronics (hardware level 1)

% Sources of noise that make prevent accurate quantum computation include hardware manufacturing imperfections, decoherence on idle qubits, coherent and crosstalk errors during one- and two-qubit gates, and other environmental sources such as cosmic rays.

% Some of the applications of a fully fault-tolerant quantum computer include molecular simulation for drug discovery and material design, solving combinatorial optimization problems.

The study of the complexity of classical algorithms informs us that some problems cannot be solved by computers in reasonable timeframes. Quantum computers can solve these problems quickly, and the construction of these devices will bring profound technological changes in the upcoming decades. Some practical applications for which quantum algorithms currently exist include molecular simulation for drug discovery and material design~\cite{Berry19,Lee21chemistry,Yuan21, rubin2023quantum}, algorithms that break public-key cryptography~\cite{Gidney2021rsa} and simulations of quantum physics~\cite{Babbush18, Childs18}. In \figref{fig:overheadgraph}, we show the number of qubits and computation depth needed for some of these algorithms. Quite surprisingly, public-key cryptography can be rendered obsolete with a $3000$-qubit quantum computer~\cite{Litinski23}.

Experiments with current quantum devices show that there are hints of quantum advantage around the corner~\cite{Google2019, IBM23}. However, the execution of the algorithms in \figref{fig:overheadgraph} requires the ability to run very deep computations. This is not possible with current quantum devices. Minor control inaccuracies and the inevitable interaction between a quantum system and its environment corrupt quantum states, rendering computations inaccurate. For example, an algorithm requiring $N$ gates executed with an output accuracy of $1\%$ requires gate error rates below $\frac{1}{100 N}$. For $N=10^6$, we must engineer gate accuracy to be five orders of magnitude better than what is available today~\cite{Acharya23}.

The current state of development of quantum computers is reminiscent of the age when classical computers were built with vacuum tubes and mechanical relays. These devices suffered from high failure rates and needed technicians to constantly replace components. With the onset of the transistor, failures decreased. 
% Soon, integrated circuits were developed with many thousands of transistors working in harmony. The argument is not quite the case with quantum devices. 
The quantum equivalent of the transistor has not yet been discovered. This is the case despite decades of research developing quantum devices with different platforms, such as superconductors, photons, ion traps, neutral atoms, semiconductor quantum dots, nuclear magnetic resonance qubits and nitrogen-vacancy centers in diamonds. Useful applications providing quantum advantage may only be possible with more robust qubits. Quantum error correction paves a route to robustness~\cite{Shor95decoherence, Gottesman97thesis, terhal15review}. Here, the quantum information taking part in a logical algorithm is encoded into a quantum error-correcting code on a large number of physical qubits. Any accumulated noise is removed using carefully designed error diagnosis and correction protocol. Using this technique, we can achieve arbitrary levels of accuracy.

Protocols for error correction can locate and remove errors. However these protocols are executed using faulty components and control sequences. Errors occurring \textit{during} the protocol thus corrupt ideal execution, potentially introducing more errors into the system. This can be problematic for very long computations. Hence to run deep quantum algorithms on hundreds of thousands of qubits, we will require fault-tolerant implementations of the fundamental processes of a quantum computer.

\subsection*{Theoretical models for fault-tolerant quantum computing}

% The first developments in quantum fault tolerance where built using the earliest model of quantum computation, the circuit model. 
Shor and Steane suggested the first techniques for fault-tolerant error correction. Shor's scheme \cite{Shor96}, which is applicable to a large class of codes called stabilizer codes, relies on a slow error diagnosis process where subsets of qubits are checked sequentially. A highly parallelized version of this process applied on topological codes yields very high fidelities. We provide more information on these codes later. Steane's scheme~\cite{Steane97}, applicable to a subset of stabilizer codes called CSS codes~\cite{Steane96codes}, determines error locations by copying errors onto an analogous state in one step. The catch is that the fault-tolerant preparation of this resource state is quite complex~\cite{Paetznickgolay2013}. 

Clever choices of early quantum error-correcting codes permitted some elementary logical gates to be transversal\footnote{An encoded gate performed by executing the same physical operation on all qubits in parallel}~\cite{Steane96css}. Further, magic state protocols enable quantum universality~\cite{Bravyi05}. For arbitrarily accurate computation, quantum codes can be concatenated. This is a process wherein encoded qubits are further encoded using another quantum error-correcting code to polynomially improve the protection. Early proposals for planar devices using concatenation for arbitrary accuracy required unattainably-high-fidelity physical operations~\cite{Gottesman00local, SvoreTerhalDiVincenzo04, Hollenberg06, Svore07localSteane}.

% Different models of quantum ocmptuing are interchangeable with each other.
% another fault-tolerant quantum computing model was devised - raussendorff fundamentally diff physical operations required. resource state acts as a substrate on which quantum computation is performed by performing a chain of measurements across the substrate.

The above results represent the first era in fault-tolerant quantum error correction. The second era began with the connection of ideas in topological quantum field theories to quantum error correction~\cite{Kitaev97faulttolerance}. The surface code~\cite{Fowler12surface}, first explored in the early 2000s, has been comprehensively investigated over the last two decades due to its potential for large-scale quantum computation. This attention has unveiled numerous advantages and opportunities, along with some pitfalls and challenges. A compelling advantage is its high fault tolerance threshold, permitting arbitrarily accurate computation at higher error rates than concatenated protocols. Within these families of codes, choosing larger codes guarantees arbitrarily high accuracy below a fault tolerance threshold.

The surface code exhibits some weaknesses from the perspective of quantum coding theory. The associated qubit overhead for information storage and performing quantum logic is prohibitively large~\cite{Gidney2021rsa, Litinski19}. In \tabref{tab:overheads}, we show some estimates for the physical resources required to implement different algorithms in \figref{fig:overheadgraph}. Additionally, the classical computing power needed to keep up with the error decoding problem can be substantial, even in moderate-sized implementations. Recent developments simplify the error decoding problem by partitioning the decoder and decoding multiple chunks in parallel~\cite{ChaoPrallel22, EarlParallel}. However with large devices, bottlenecks will be encountered in the global control and heat dissipation of multiple fast parallel decoders. Possibly different codes and fault tolerance techniques must be studied.

% quantum simulations, $10^7$, $100$,~\cite{}
% fertilizer manufacture, $10^{10}$, $2000$~\cite{Berry19, vonBerg21, Lee21chemistry}
% battery materials, $10^{13}$, $2000$~\cite{Delgado22, Shokrian23}

\begin{table}
\caption{\label{tab:overheads} Overhead estimates for different fault-tolerant algorithm implementations.}
\centering
\setlength{\tabcolsep}{5pt}
\begin{tabular}{c | c}
\hline \hline
Reference & Overheads \\
\hline
\cite{Gidney2021rsa} & $20$ million qubits in 8 hours to break RSA-2048 \\
\cite{Gouzien21} & 13436 qubits in 177 days to break RSA-2048 \\
\cite{Litinski23} &  6000 resource state generators at 580MHz with \\
 & logarithmic non-local connections in 58 seconds to break ECC-256\\ 
\cite{Gouzien23} & 126133 cat qubits in 9 hours to break 256-bit ECC \\
\cite{Matteo20} & 8 GB QRAM query in 2 ms with $10^{15}$ qubits or 2.4 years with 300000 qubits \\
% \cite{} &  \\
% \cite{} &  \\
% \cite{} &  \\
\hline \hline 
\end{tabular}
\end{table}

\subsection*{Experiments on quantum error correction}

\begin{figure}
    \centering
    \includegraphics[width =\textwidth]{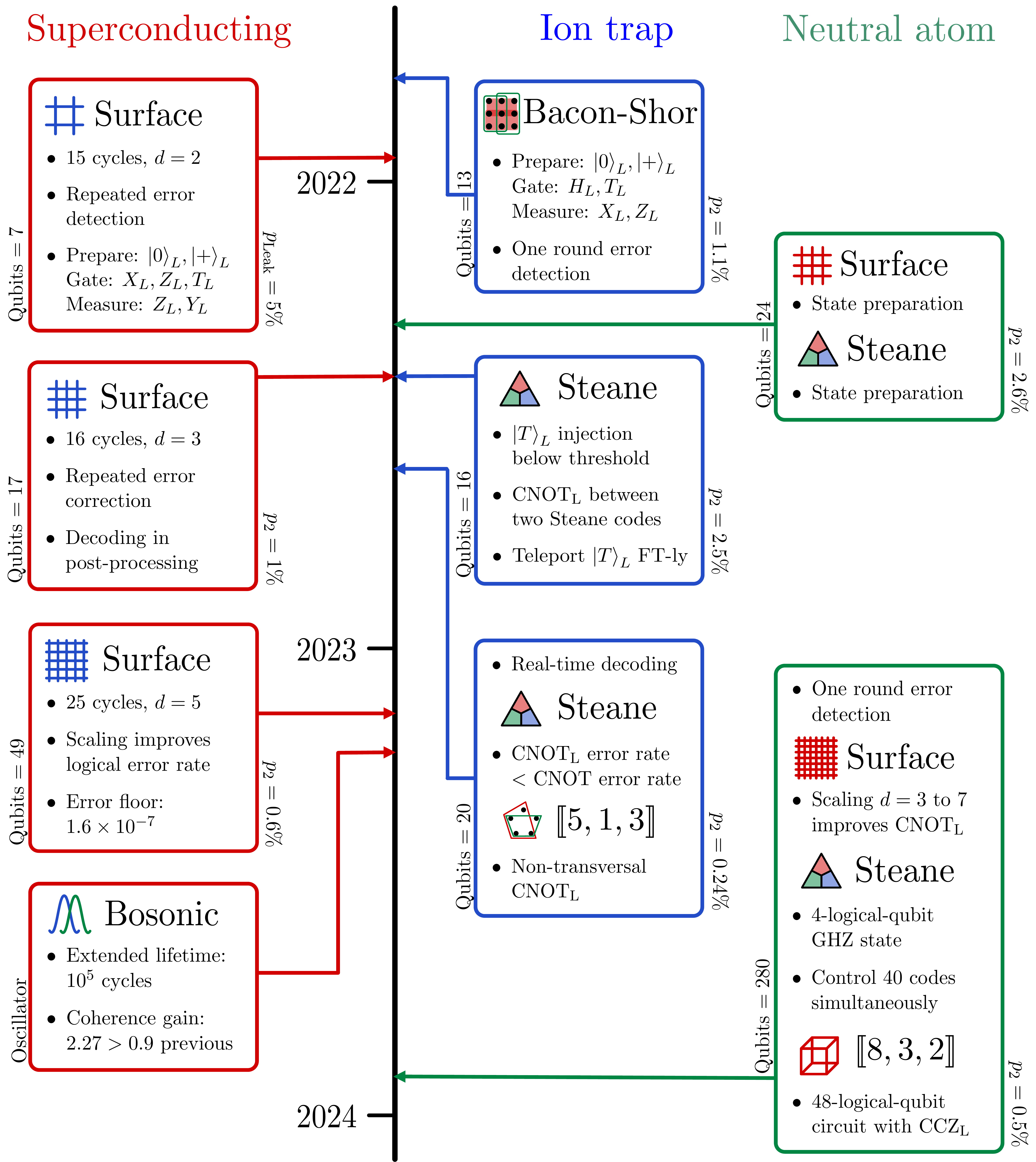}
    \caption{Progress of experimental demonstrations of quantum error correction on superconducting, ion trap, and neutral atom systems. Currently the objective is  to benchmark error correction primitives with small codes. In the next few years, devices will become larger while maintaining low noise levels. Neutral atom systems can control many qubits and maintain coherence, but lack repeated real-time error correction. Superconducting and ion trap qubits have demonstrated all the primitives for fault tolerance, and now face the engineering challenge of scaling up system sizes.
    }
    \label{fig:timelineexps}
\end{figure}

Despite the theoretical advancements outlined in the previous subsection, experimental quantum error correction is still nascent. Drastic improvements to device manufacturing and control optimization have permitted device error rates that are sufficiently low enough to see improvements from fault-tolerant error correction protocols. In \figref{fig:timelineexps} and \tabref{tab:timelineEC}, we highlight some of the most recent experiments on quantum error correction conducted on superconducting, ion trap and neutral atom quantum computers. Superconducting and ion trap technologies have matured from industrial funding, while neutral atom computers are still gaining traction within academia. 

\begin{table}
\caption{\label{tab:timelineEC} References for the error correction experiments in \figref{fig:timelineexps}.}
\centering
\setlength{\tabcolsep}{5pt}
\begin{tabular}{c c}
\hline \hline
Published date & Reference  \\
\hline
\multicolumn{2}{l}{\textbf{Superconducting qubits}} \\
Dec 16 2021 & \cite{Marques22} \\
May 25 2022& \cite{Krinner22}  \\
Feb 22 2023 & \cite{Acharya23}  \\
March 22 2023 & \cite{Sivak23}  \\[.2cm]
\multicolumn{2}{l}{\textbf{Ion trap}} \\
Oct 04 2021 & \cite{duke21BScode}  \\
May 25 2022 & \cite{Postler22}  \\
Aug 3 2022 & \cite{RyanAnderson22}  \\[.2cm]
\multicolumn{2}{l}{\textbf{Neutral atoms}} \\
April 20 2022 & \cite{Bluvstein22} \\
% Oct 11 2023 & \cite{Evered23} & Parallel multi-qubit entangling gates \\
Dec 06 2023 & \cite{Bluvstein24} \\
\hline \hline 
\end{tabular}
\end{table}

Currently, small-scale fault-tolerant implementations of the fundamental building blocks of a computation have been executed on all three platforms. What remains to be seen is whether these systems are able to retain low gate error rates as they are scaled up in size. As larger systems become more prevalent, more complex error correction protocols will be executed. Potent classical computers will be needed to keep up with decoding for error correction, and developments will need to be made to operate these circuits in cryogenic environments.

\section{Redesigns of fault-tolerant building blocks}
\label{sec:redesign}

Ideally, a quantum computation proceeds by preparing a quantum state, operating on it, and finally measuring it to observe an outcome. When this model is generalized to an error-corrected fault-tolerant quantum computation, the building blocks of the computation are slightly altered. We show a layered architecture to construct large-scale fault-tolerant quantum machines in Fig.~\ref{fig:arch}~\cite{Jones12}. This hierarchy abstracts the instructions for a quantum algorithm into five levels, ranging from high-level algorithmic operations to low-level physical gates on the device. Crucially, the gap between the algorithmic and physical layers is bridged by error correction, acting as a vital buffer against the ubiquitous noise. 

With an appropriate choice of quantum error-correcting code, the building blocks of a quantum computation must be executed with fault-tolerant physical circuits. These building blocks include state preparation, error correction, and logical gates, with stabilizer measurement serving as a foundational subroutine. 
Note that we do not include measurement as a building block.
% as we include it in the bracket of logical gates. Lattice-surgery and measurement-based computation are two example models that show that Fault-tolerant measurements may be performed using transversal single-qubit measurements for some CSS codes. 
We include fault-tolerant measurements of logical qubits under the bracket of logical gates, as in many models of quantum computation, they are used to perform logical operations on quantum states~\cite{Chamberland22, Briegel09}. In \figref{fig:summary}, we show a summary of the important results in this thesis. For stabilizer measurement, state preparation, and error correction, we show improved qubit overheads. Alternatively, we show an improvement in the execution time for fault-tolerant error correction protocols and logical gate sequences.

\begin{figure}
    \centering
    \includegraphics[width = 0.9\textwidth]{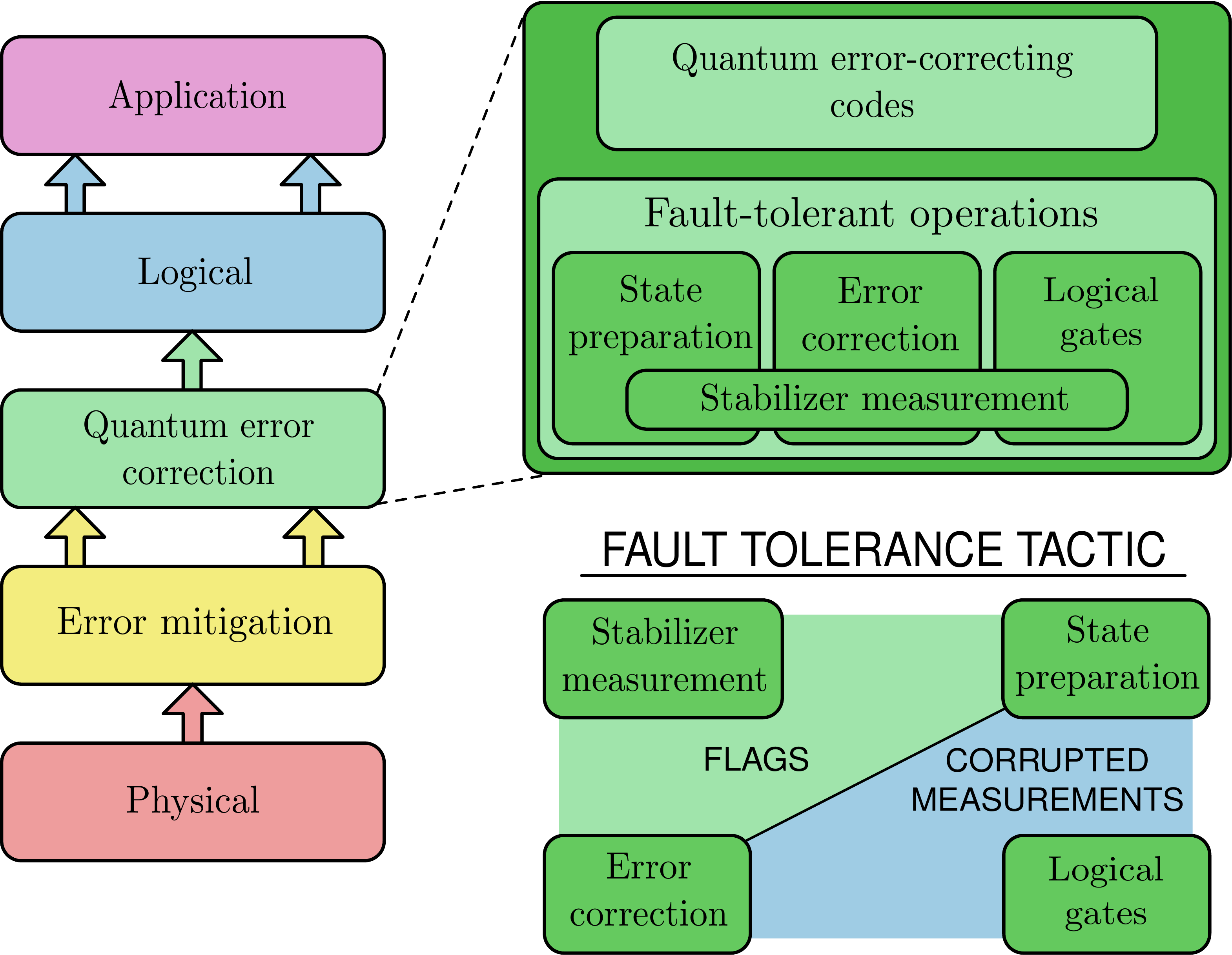}
    \caption{Layered architecture of a fault-tolerant quantum computer~\cite{Jones12}. The functioning of a quantum computer can be decomposed into a five-layer vertical stack, each handling a different level of abstracted instructions. From hardware to software, they are the physical pulses, error mitigation strategies, quantum error correction procedures, substrates for quantum logic and the programs of quantum algorithms. Given a quantum code, the fundamental fault-tolerant operations that underpin the error correction layer are state preparation, error correction and logical gates, with stabilizer measurement as a foundational subroutine. Faults affect each building block differently. Flag-based fault tolerance curbs error spread due to physical two-qubit gates. Higher up the stack, faults affect measurement results, requiring fault tolerance tactics against corrupted measurements.}
    \label{fig:arch}
\end{figure}

\subsection*{Tactics for fault tolerance}

\begin{figure}
    \centering
    \includegraphics[width = 0.99\textwidth]{imagesChap1/modelscorrupt4.pdf}
    \caption{Three different models of fault tolerance against faults that corrupt measurement results. The results of stabilizer measurement in state preparation and error correction are used to apply Pauli corrections. Similarly, results of multi-qubit Pauli measurements are used to apply Pauli or Clifford corrections. We show fault-tolerant circuits that are deterministic (no postselection). These models are distance-$d$ fault-tolerant, for $d=2 t + 1$.  A model is fault-tolerant if for $k \leq t$ faults, the weight of the output error after corrections is at most $k$. 
    State preparation is the hardest fault tolerance problem as we must ensure tolerance against $t$ input errors and also up to $t$ faults in the execution of the measurements. Error correction is simpler as the total number of input errors and internal faults is constrained to be at most $t$. 
    Finally, measurement faults in multi-qubit Pauli measurements can be corrected by protecting results with a classical error-correcting code.
    }
    \label{fig:modelscorrupt}
\end{figure}

% In classical computing, errors are generally modeled as bit flips. With quantum computers, a widely-accepted model is the depolarizing error model, comprised of bit flips ($X$), phase flips ($Z$) and combinations of the two ($Y$).

Errors affecting quantum computers are typically modeled as continuous processes~\cite{AhnDohertyLandahl01continuous}. However, continuous noise is challenging to analyze from a fault tolerance perspective. Discrete noise models are simpler. Here, faults are modeled as the probabilistic injection of discrete errors at specific circuit locations. This is a valid approach since stabilizer measurements during error correction discretize continuous errors on individual qubits. With this model of faults, we can analyze the effects of introduced errors. A fault occurring midway through an error correction protocol corrupts subsequent stabilizer measurements, misdiagnosing existing errors, which eventually leads to the application of corrections on the wrong qubits. The resulting effect is an increase in the number of errors. Additionally, errors can spread when qubits interact. Hence, the main objective of fault tolerance is to curb the spread of errors.

Faults affect the operation of the building blocks in different ways, hence we must employ different models to solve each fault tolerance problem. For stabilizer measurement and state preparation by operator encoding, the concern is that faults occuring partway through a circuit may eventually spread to errors on many qubits. In these cases we employ flag fault tolerance to curb error spread. Here, circuits are engineered such that errors on code qubits spread onto a set of flag qubits, which are then measured to determine if a fault occurred. Based on the flag measurement outcomes, corrections can be applied to reduce error spread. Early theoretical and current experimental results on flags rely on postselection-style fault tolerance~\cite{RyanAnderson22, ChaoReichardt18fewqubitcomputation}. Recently, there has been a push to make these protocols deterministic to permit repeated circuit execution for extended periods of time~\cite{DiVincenzoAliferis06slow, stephens14colorcodeft, YoderKim16trianglecodes, ChaoReichardt17errorcorrection}. In this thesis, we consider deterministic flag protocols in \chapref{chap:SMcat}, \chapref{chap:catstateprep} and \chapref{chap:encodedstateprep}. Additionally, we consider some circuits that combine postselection with correction in \chapref{chap:D4codes}. We highlight the building blocks using flags for fault tolerance in \figref{fig:arch}.

Flags prevent the spread of errors due to qubit interactions, but these errors can still corrupt measurement results. A noisy error correction procedure may erroneously apply corrections at wrong locations. This can be prevented by ensuring the measured data is protected from errors. Naively, it may seem like encoding the measurement results in a classical error-correcting code may offer some protection. In fact, this is the solution for a model where faults only affect measurement outcomes. In \figref{fig:modelscorrupt}, we outline three models of fault tolerance against corrupted measurements. The model for fault-tolerant error correction with stabilizer measurements is historically important as threshold results are derived using it~\cite{AliferisGottesmanPreskill05}. The model for state preparation is more difficult to find solutions for as more errors must be tolerated. For these models, faults on qubits leave residual errors while also corrupting the measurement result. In the model for logical gates, the qubits are assumed to be robust, hence we only consider faults affecting measurements.

Consider the model for error correction. The protocol is deemed fault-tolerant to $t$ errors if for $t_1$ input errors and $t_2$ internal faults (faults \textit{during} the protocol), the output has at most $t_2$ errors when $t_1+t_2<t$~\cite{delfosse2020short}. In the absence of faults, the protocol must correct errors up to the capacity of the code. If there are no input errors and only faults occur, this model ensures that ``errors don't spread". We use this model in \chapref{chap:D4codes} to make fault-tolerant error correction protocols for distance-four quantum codes. 

\begin{figure}
    \centering
    \includegraphics[width=.95\textwidth]{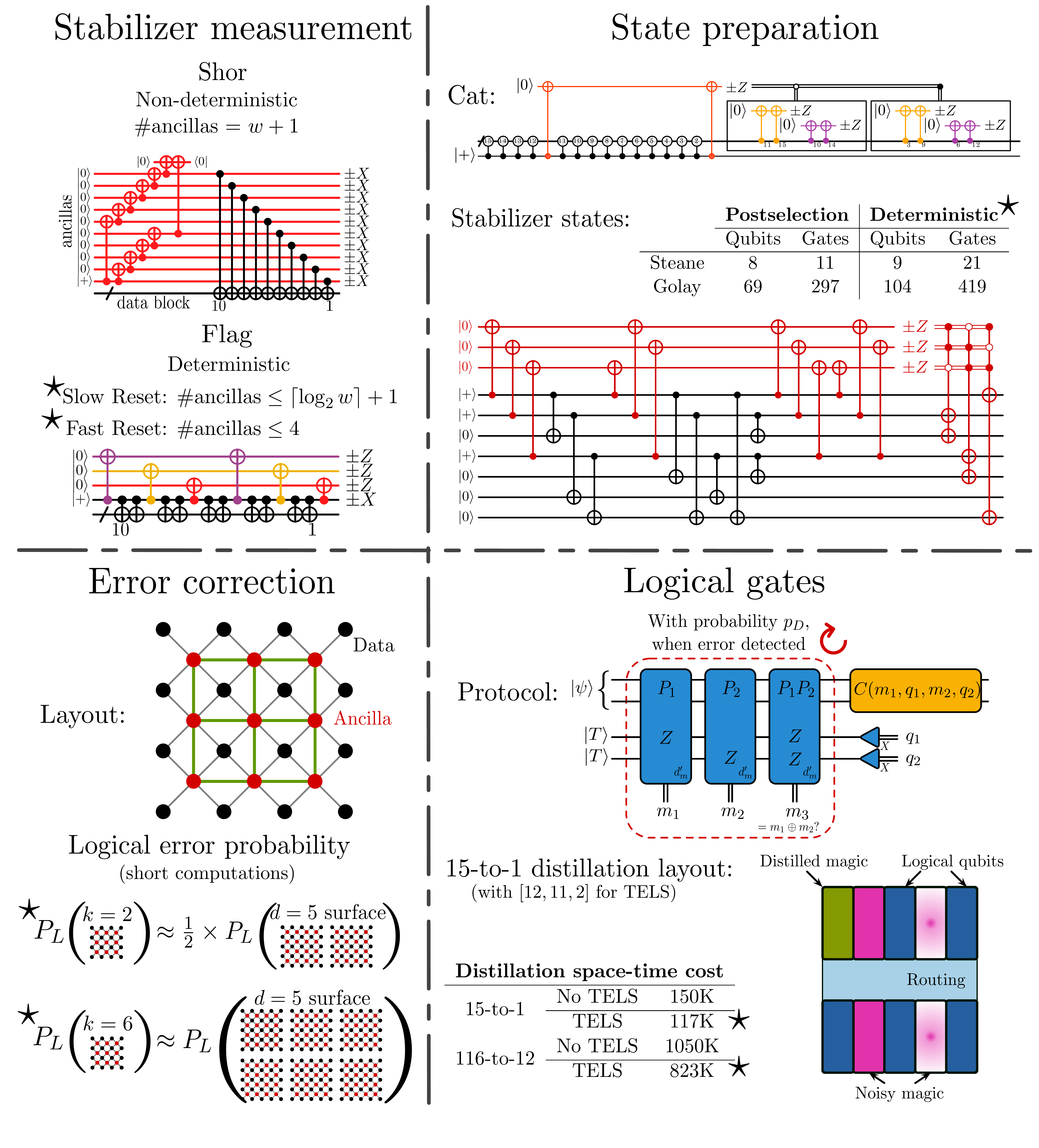}\vspace{-.7cm}
    \caption{Overview of the thesis. New results are indicated by a $\star$. \textbf{(Top left)} The flag scheme needs $\sim \log_2 w$ ancilla qubits to deterministically measure a stabilizer while tolerating one fault, exponentially fewer than Shor's probabilistic scheme.  \textbf{(Top right)} First overhead estimates on \textit{deterministic} state preparation. One-shot fault-tolerant stabilizer state preparation is optimized with either flags or tolerance to corrupted measurements. \textbf{(Bottom left)} Qubit-efficient distance-four codes protect information as well as the distance-five surface code, assisted by postselection. The qubit layout is an augmented surface code layout. Flags prevent error spread in stabilizer measurement, and a clever selection of stabilizers can tolerate corrupted measurements.
     \textbf{(Bottom right)} 
    Parallelizable Pauli measurements run 
    % two to six times 
    faster by pre-encoding measurement results into a classical code (Temporal Encoding of Lattice Surgery). This can reduce magic state distillation~costs.}
    \label{fig:summary}
\end{figure}

The model for state preparation requires tolerance to more errors. When preparing states by stabilizer measurement, projection into the desired state is not guaranteed. Instead the initial set of measurements are random (due to anticommuting operators), leading to a projection that leaves residual errors on the qubits. This is why we require tolerance to at least $t$ input errors. In the absence of faults, the stabilizer measurement results are decoded to diagnose the exact error space, prompting a correction into the desired state. When faults occur, erroneous corrections can project the state into a different error space. As long as the number of residual errors after the completion of the state preparation protocol scales with the number of faults, and fault spread is curbed in stabilizer measurement, this protocol is fault-tolerant. This model is used to prepare cat states and stabilizer states in \chapref{chap:catstateprep} and \chapref{chap:encodedstateprep}.

Finally, we consider a model where the qubits are assumed to be noise free, and faults only affect measurement results. This is the case when considering qubits that are arbitrarily well-protected by error correction, but still suffer from noisy quantum logic. In this model, the measurements are affected by bit flip errors on the results. Since there are no residual quantum errors on the data, we use a classical error-correcting code to protect the measurement results. This approach is only applicable when the set of measurements to be performed commute. We use this model to speed up quantum logic operations in \chapref{chap:MSD} and \chapref{sec:MSDMSD}.

\subsection*{Distance-three stabilizer measurement}

Stabilizer measurement is the most fundamental and repeated operation in a fault-tolerant quantum computer. Not only is it the backbone of Shor-type error correction schemes~\cite{Shor96}, it is also used exclusively for operations with the surface code~\cite{Fowler12surface}, and has a similar circuit design to operator encoding circuits, allowing fault tolerance ideas to be shared between the two. Steane-style error correction~\cite{Steane97} doe not involve the measurement of individual stabilizers, but the circuits used to prepare the complex ancilla states for Steane error correction require either operator encoding circuits or stabilizer measurement itself, as we show in \chapref{chap:encodedstateprep}. Due to its prevalence, optimizing the overhead of fault-tolerant stabilizer measurement is paramount. Previous methods exhibit overheads that scale linearly in the size of the measured stabilizer~\cite{DiVincenzoAliferis06slow, stephens14colorcodeft, YoderKim16trianglecodes, ChaoReichardt17errorcorrection}. We show logarithmic overhead to measure a stabilizer while tolerating one fault. 

In \chapref{chap:SMcat}, we demonstrate that distance-three fault-tolerant measurement of a weight-$w$ stabilizer needs at most $\lceil \log_2 w \rceil +1$ ancilla qubits, with non-local connectivity. If qubits reset quickly, four ancillas suffice. These improvements arise from mapping flag qubit measurement outcomes to the vertices of a hypercube. With $a$ flag qubits, previous methods use $O(a)$ flag patterns to identify faults. We use the same flag qubits more efficiently by constructing maximal-length paths through the $a$-dimensional hypercube to use nearly all $2^a$ possible flag patterns. Finally, we extend this technique to distance-five and -seven stabilizer measurement, where we demonstrate overhead that scales as $\sim w/2$ and $\sim w$.

\subsection*{Deterministic state preparation}

The combinatorial approach of constructing paths in a hypercube can also be extended to fault-tolerant cat state preparation. An even more efficient method is discussed in \chapref{chap:catstateprep}, where $w$-qubit cat states are prepared with \textbf{one} ancilla qubit measured $O(\log_2 w)$ times. Asymptotic upper bounds on the qubit overhead in different scenarios are tabulated in \tabref{f:resultsCSP}. While these methods primarily focus on checking a cat state after it has been prepared noisily, we also explored an approach that prepares cat states by measuring stabilizers. 
% Noisy measurement results are tolerated by selecting sequences of stabilizers while analyzing the effects of fault propagation. By carefully selecting minimal-length sequences of stabilizers, measurement faults can be tolerated while increasing the fidelity of prepared cat states. 
This method also requires one ancilla, but performs $w + O(\log_2 w)$ measurements. This technique was then extended to tolerate an arbitrary number of faults during cat state preparation when placing qubits on a $1$-D chain. 

Both techniques developed in \chapref{chap:catstateprep} can be applied in a straightforward manner to the preparation of more complex ancilla states, as detailed in \chapref{chap:encodedstateprep}. Here, we compare the fidelity and overhead of using different deterministic circuits for the preparation of the encoded $\ket{0}$ states of the Steane and Golay quantum codes. Previous results on the fault-tolerant preparation of these states leveraged postselection-based fault-tolerance, where a protocol is rejected when a non-trivial outcome is observed. Here, we decode measurement results and assign compatible corrections to ensure states are never thrown away. The high-fidelity preparation of stabilizer states is motivated by the use of these states for Steane-style error correction. In this protocol, qubits of a quantum code are engineered to copy errors over to a stabilizer state. Measuring the state and decoding the results reveals the location of errors in the quantum code. For some codes, the Steane-style protocol exhibits higher thresholds than Shor-type schemes~\cite{Steane97, Escobar24}. This has also been verified experimentally~\cite{Postler23, Huang23b, Bluvstein24}.

\subsection*{Distance-four error correction}

Finding the perfect quantum error-correcting code is an elusive goal. For near-term implementations of quantum memory, we desire a code with high-fidelity error correction, high encoding rate, and circuits that can be implemented on near-term hardware. Topological codes like the surface code are well-suited to near-term hardware and exhibit very high fidelity. Unfortunately, these codes have low encoding rate and require many physical qubits per logical qubit. Quantum LDPC codes, on the other hand, have high rate but need circuits that are difficult to implement on current hardware~\cite{Panteleev22, Bravyi2023}. In contrast, \chapref{chap:D4codes} of this thesis considers distance-four, efficient encodings of multiple qubits into a modified planar patch of the 16-qubit surface code. These codes satisfy all the desired properties for near-term error correction. We use postselection techniques to achieve high-fidelity error correction. These codes can encode up to six logical qubits with distance-four protection using the same number of physical qubits as the distance-four surface code. Finally, the circuits for error correction are designed for planar hardware. 

Choosing distance four codes allows us to correct single faults and detect double faults. Thus logical errors occur due to at least three faults, which is rare at current noise rates. We perform extensive simulations of error correction and compare the logical error rate of our codes against the distance-four and -five surface codes~\cite{tomita14surface}. These codes serve as the perfect benchmarks as they satisfy most of the desired properties for a quantum memory while only suffering from low code rate. A spectrum of trade-offs were revealed: at one end, a code encoding six logical qubits can achieve comparable error protection as the distance-five surface codes using one-tenth as many qubits. At the other end of the spectrum, a two-logical-qubit code can achieve logical error rate half that of the surface code, using one third of the physical qubits; an improvement on both fronts! Hence distance-four codes, using postselection and in a planar geometry, are qubit-efficient candidates for fault-tolerant, moderate-depth computations.

The circuits in this chapter are amenable to architectures that are constrained to be planar with local connectivity, such as superconducting systems and semiconductor spin systems. Currently IBM develops superconducting systems where qubits interact with at most three neighbors~\cite{Sundaresan2023}, and with Google's devices, qubits can interact with up to four neighbors~\cite{Acharya23}. Constructing a device with a connectivity of eight neighbors naively permits the use of our protocols for error correction. But low-cost modifications can decrease the connectivity requirements. 

\subsection*{Logical gates by temporally encoded lattice surgery}

From the perspective of experimental quantum computing, the leading candidate quantum error-correcting code is the surface code. Stabilizer measurements for error correction are simple, and can be parallelized when placing the qubits on a square planar grid. The only caveat of these codes is that they encode very little logical information, and hence require many physical qubits. This caveat may indeed be a boon, as performing logical operations on surface code qubits is simpler than with codes encoding multiple qubits. Further, we consider biased-noise implementations of surface codes, as they are backed by experiment and theory~\cite{Puri20, Claes2023}. A biased noise model has reduced noise for one type of error, thereby reducing the overhead of qubits needed to protect for it. In the context of biased noise, different variants of the surface code possess different desirable properties. Our focus is to reduce the number of physical resources per logical qubit for a CPU-style architecture of logical qubits~\cite{Chamberland22}. Recent research on biased noise surface codes has shown that the total fault tolerance overhead is lowest when rectangular surface codes are used with biased noise models~\cite{HiggottBP}. When equipped with magic state distillation factories, the lowest-overhead logical gates that permit universal computation are lattice surgery measurements of multi-logical-qubit Pauli operators~\cite{Litinski19}. 

Surface code logical qubits are protected from errors by the spacelike distance of the code. However when performing lattice surgery measurements, the accuracy of the result depends on the timelike distance, which is the number of rounds of syndrome measurements needed to protect against strings of timelike failures. As such, a larger timelike distance requirement will result in the slowdown of an algorithm’s runtime. Temporal encoding of lattice surgery (TELS) is a technique which can be used to reduce the number of syndrome measurement rounds that are required during a lattice surgery protocol~\cite{Chamberland22b}. In a regular lattice surgery protocol, multi-qubit Pauli operators are sequentially measured. With TELS, only commuting subsequences of Pauli operators can be measured. In this technique, the classical measurement results of the commuting subsequence are encoded into a classical error-correcting code. This results in the measurement of a larger set of mutually commuting multi-qubit Pauli operators that are derived from the initial sequence. Up to the distance of the classical code, the multi-qubit Pauli measurement results are protected against timelike lattice surgery failures. We evaluate the speedups that could be achieved by TELS at different desired accuracy ($10^{-10}$, $10^{-15}$, $10^{-20}$, $10^{-25}$) and observe an algorithmic speedup by a factor of two to six respectively. In these calculations, we considered codes of size up to $128$, and distance up to $11$. The application of TELS is not restricted to the surface code. In fact, it can be applied for any topological or non-topological code that executes logical computations by lattice surgery.

We also apply TELS to magic state distillation protocols with biased noise in \chapref{sec:MSDMSD}. Magic state distillation is a short algorithm executed on encoded qubits~\cite{Bravyi05}, and thus served as the perfect benchmark to test the improvements of TELS and perform a full cost estimation. In order to make TELS compatible with magic state distillation protocols, we designed a new distillation technique, where multi-qubit measurements are performed in the Clifford frame\footnote{Computations performed in a certain frame contain errors of that frame's type, but the locations of the errors are known and they are tracked in software until they can be corrected.}. Previous techniques distilled in the Pauli frame~\cite{Litinski19magic}. Using TELS and optimally-designed layouts, we demonstrate a reduction in the space-time cost of magic state factories by as much as $22 \%$.

\section{Current thrusts and future outlook}

\subsection*{Current thrusts}

Research at the forefront of improvements to quantum error correction can be mapped to different levels of abstraction in the layered architecture of \figref{fig:arch}. Here we discuss improvements to hardware architecture designs, developments on new quantum error-correcting codes, and advancements in fault-tolerant quantum algorithms.

For the first three years of the 2020s, quantum computers built using superconducting and ion trap technologies were touted as game-changing. Superconducting quantum circuits are fast, and gate fidelities are improving, however scaling and cryogenics are formidable challenges~\cite{Huang2020}. On the other hand, ion trap quantum computers can demonstrate impressive gate fidelities and all-to-all connectivity. But they suffer from scaling issues, and slow circuit operations~\cite{Bruzewicz19}. 
Neutral atom systems have recently emerged as another potential candidate to build moderately sized fault-tolerant quantum computers. Non-local gates are performed by transporting qubits across a $2$-D plane. 
Although qubit transport is slow, it can still be performed much faster than the timescale of decoherence~\cite{Bluvstein22}. Additionally, many qubits can be simultaneously loaded, providing a simple route to scale to larger devices. However most importantly, this architecture permits the execution of highly optimal circuits for encoded quantum logic. Steane error correction, shown to exhibit higher thresholds than some Shor schemes~\cite{Escobar24}, can form the backbone of error correction with neutral atoms. 
In \chapref{chap:encodedstateprep}, we show low-overhead fault-tolerant circuits to prepare code states for Steane error correction.

As highlighted in \chapref{chap:SMcat}, the measurement of stabilizers of high weight can require many extra qubits. The surface code measures stabilizers of size at most four. But recent work has highlighted new codes with measurements of at most two or three qubits. These operators are trivial to measure fault-tolerantly. In most cases, these operators are not stabiliszers of the code, but are representations of ignored logical qubits, colloquially known as gauge qubits~\cite{Hastings21, gidney2021faulttolerant, Higgott21}. 
Similar to the surface code, these codes are topological in nature and encode a constant number of logical qubits. The low encoding rates of these codes may prohibit wide-scale use for logical computation but these codes may serve as a primary level of protection in a concatenated setting.

The field of quantum error-correcting codes has recently been inundated with the study of high-rate codes constructed by quantum low-density parity check techniques (QLDPC). This ranges from the discovery of new optimal codes~\cite{Panteleev22, Pannteleev22b, Lin2023}, codes on planar layouts~\cite{Bravyi2023}, and analyses of concatenations of high-rate QLDPC codes with other high-fidelity codes~\cite{Ruiz24, Gidney23yoked}. Additionally, different encoding schemes are being constructed based on the physics of quantum devices~\cite{Grimm2020, GottesmanKitaevPreskill00}. These advancements offer flexibility in code choice throughout the architecture stack.

Converting all noise to erasure is an exciting new paradigm. Erasures are errors at known locations, hence they are easier to correct than errors whose position is unknown. There have been recent experimental proposals to convert two-qubit gate errors in neutral atom systems and amplitude damping noise in superconducting systems to erasure noise~\cite{Wu22, Ma23, Kubica23}. This has been complemented with an experimental implementation of a dual-rail cavity qubit in both systems~\cite{Chou23, Scholl23, Levine23} and in ion trap systems~\cite{Kang23}. New codes are also being developed for systems with biased erasure errors~\cite{Sahay23}. Other types of noise have also been suppressed in recent experiments. For example, leakage errors have been reduced~\cite{Battistel21, Varbanov2020}, and high-energy low frequency errors from cosmic rays have also been studied~\cite{McEwen2022}.

Quantum devices entered the era of noisy intermediate-scale devices (NISQ) with the emergence of devices boasting around fifty qubits capable of running circuits with up to a hundred gates~\cite{bharti2021noisy}. This prompted the discovery of a new class of algorithms for these devices, termed NISQ algorithms~\cite{cerezo2020variational}. The hope was that one such NISQ algorithm could be found that shows a definite quantum advantage. However in the half decade or so that has passed, no such results have been shown. 
Research on quantum algorithms has now shifted focus. With larger devices and more robust error correction primitives being shown experimentally, research on quantum algorithms has shifted to fault tolerance. General fault-tolerant algorithms for an arbitrarily accurate quantum computer are converted to permit execution on near-term devices, while incorporating limited error correction benefits. These results indicate the dawn of early fault-tolerant quantum computing (EFTQC)~\cite{Campbell22, Zhang22, Lin22, Wang22, Kshirsagar22, Wan22, Wang23, Zhiyan23, Akahoshi23}. 

\subsection*{Future outlook}

% \textbf{Fault-tolerant quantum computer architecture.} 
Current early architectures for fault-tolerant quantum computing involve using just one quantum error-correcting code. In the future, a quantum computer will require different encodings throughout the architecture stack. The earliest models concatenated small codes with themselves~\cite{Knill05nat}, however we believe that codes with different properties should be concatenated together. 

Consider the following thought experiment to construct a fault-tolerant computer. A GKP qubit is a continuous-variable logical qubit with a strong ability to suppress errors. Concatenating GKP qubits with the surface code can then produce low overhead logical qubits with very good protection to noise~\cite{Noh20, Noh22}. Considering qubits are fairly well protected, a QLDPC code may now be added on top to further increase protection while keeping the encoding rate fairly high~\cite{Gidney23yoked, Ruiz24}. In the process of error correction or logical gates, classical codes may be used to protect measurement results, as explored in \chapref{chap:MSD}. This simple thought experiment uses four levels of codes, however future quantum computers may require more. 

Additionally, one must consider the decoding problem when using multiple codes. Physical systems that require cryogenics have stringent constraints on the heat output of the quantum computer, hence low-energy techniques must be devised for complex decoding algorithms with classical computers. Even with a simple decoder such as a lookup table, there can be complications. Consider as an example the decoding of a $28$-bit string of classical results for the Golay code in \chapref{chap:encodedstateprep}. At least ten gigabytes are required to store a lookup table of all the corrections. On the other hand, slow complex fault-tolerant decoders can potentially stall computations for long periods of time~\cite{ChambsLocalNN22}. To combat this, efforts have been made to improve the speed and scalability of decoders for surface code quantum computers~\cite{Barber23, Liyanage23, Tan23}. A recent summary on real-time decoding is shown in Ref.~\cite{Battistel23}.

Conventionally, universal quantum computing is achieved using magic state distillation factories. However, code-switching techniques may be a useful alternative~\cite{BevsUniversal}. A quantum computer then can be partitioned into multiple units such as a memory unit with high rate codes and a computational unit with a plethora of codes permitting different types of logic. Accumulating a library of Clifford and non-Clifford operations can speed up the execution of complex algorithms while also permitting the execution of a wide range of algorithms.

\chapter[Syndrome measurement]{Fault-tolerant syndrome measurement}
\label{chap:SMcat}

A critical component of quantum error correction is syndrome measurement: a set of circuits that are used to pinpoint which qubits have errors.  This process of error identification is itself susceptible to noise and may fail.  To make this robust, extra (ancilla) qubits can be used to identify damaging mid-circuit faults and mitigate the spread of errors.  The objective of this paper is to reduce the overhead of ancilla qubits used in imparting this fault tolerance.  In particular, we focus on optimizing the flag technique for distance-three fault-tolerant stabilizer measurement.  

\begin{figure}
    \centering
    \subfloat[\label{f:genove}]{
        \includegraphics[width = 0.45\textwidth]{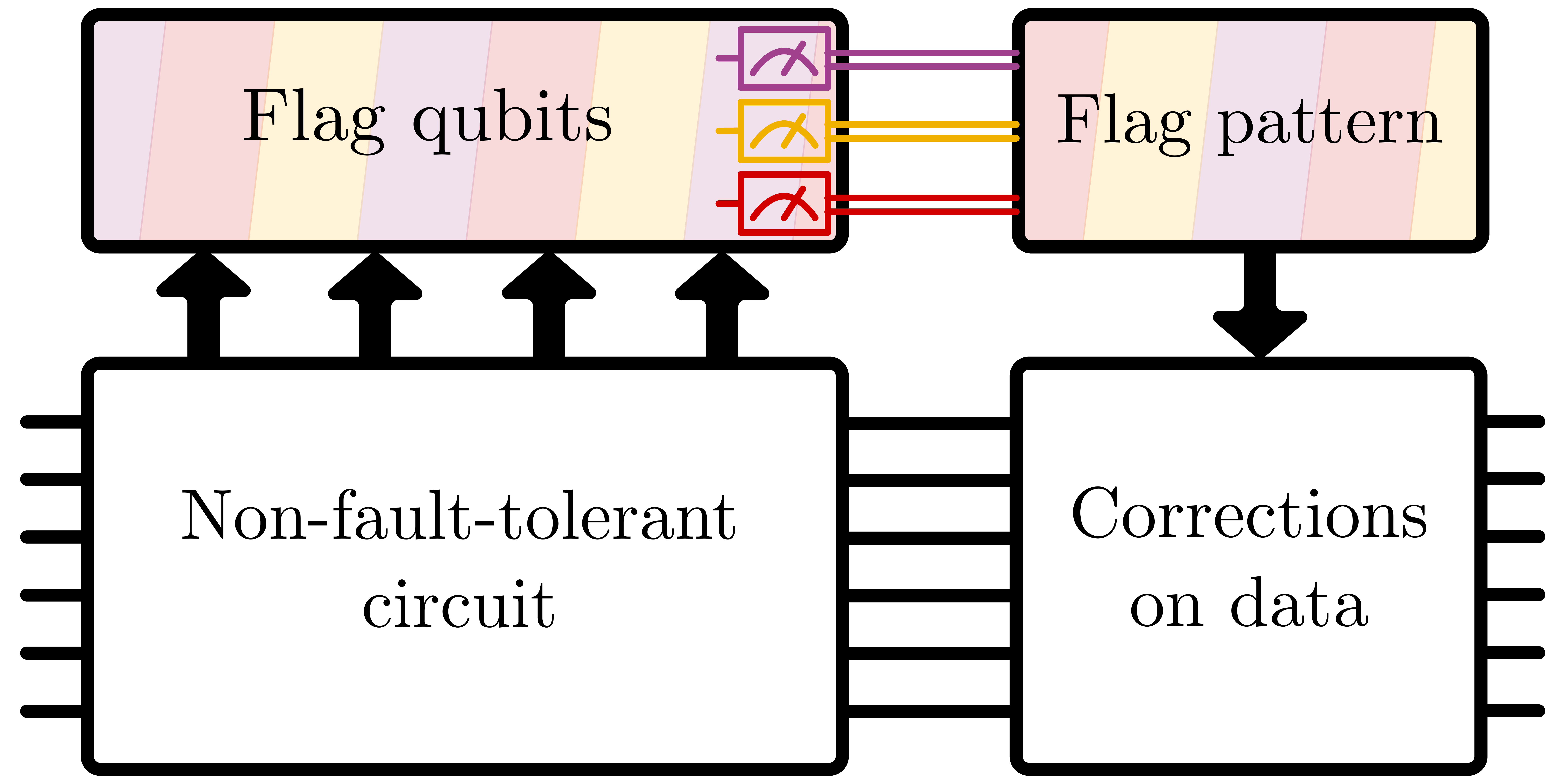}}
    \hspace{0.25cm}
    \subfloat[\label{f:syndromemeasurementd3slowresetw10}]{
        \includegraphics[width = 0.5\textwidth]{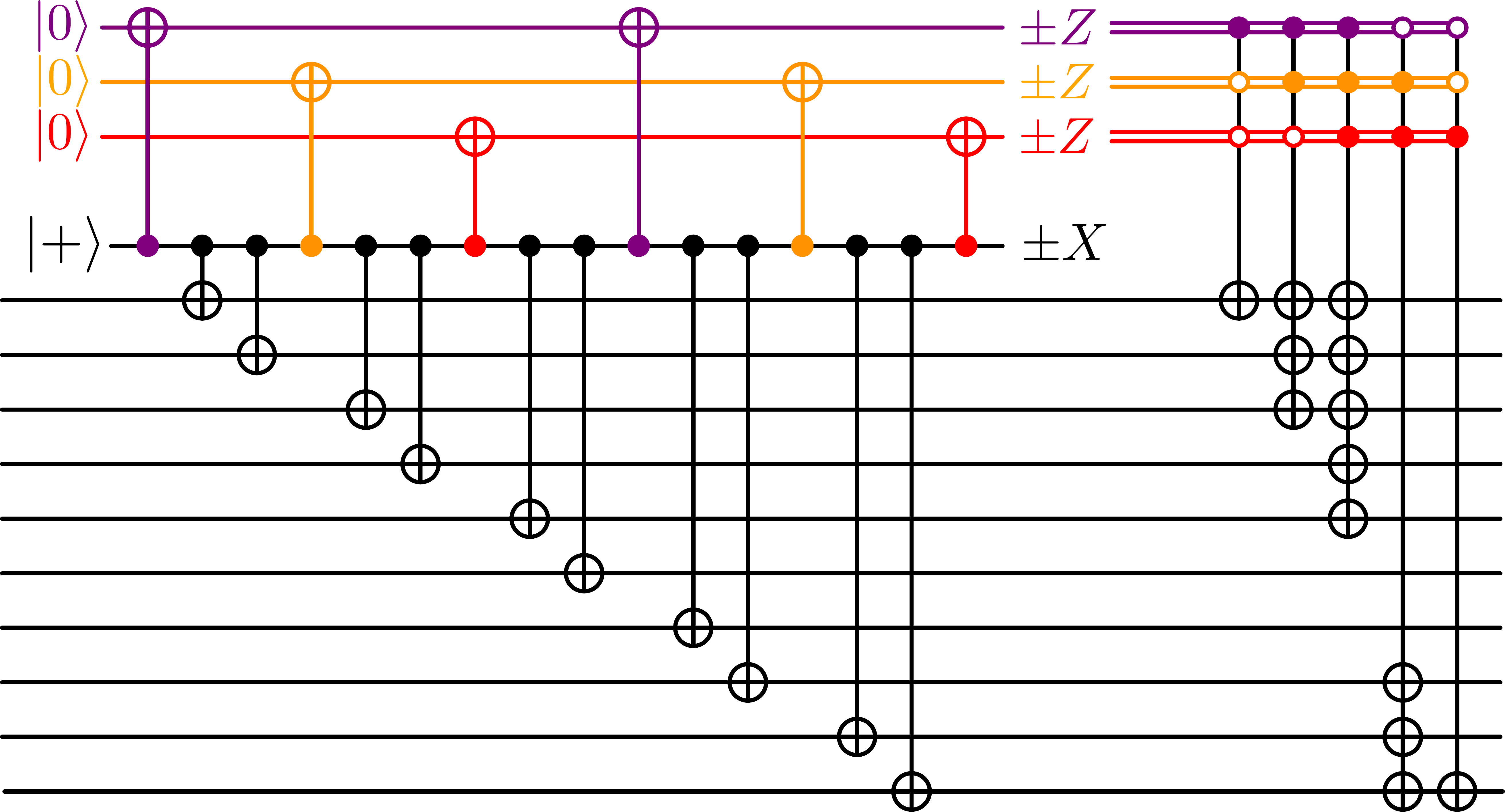}}
\caption{(a)~Function of a flag scheme. Errors in a non-fault-tolerant circuit can be made to spread into flag qubits.  On measurement, the flag qubits yield a pattern of $1$s and~$0$s, based on which the data is corrected.  (b)~Circuit to measure the stabilizer $X^{\otimes 10}$, using three flag qubits, in color, to protect against one $X$ fault (distance three).}
\end{figure}

We strive for low qubit overhead since quantum computers with limited qubits count resources preciously, and even minor improvements can free up extra qubits for other tasks. In topological codes where stabilizers are localized in space and are of  low weight, only a few flag qubits close to each stabilizer suffice to impart fault tolerance~\cite{YoderKim16trianglecodes, Chamberlandtriangularcodesflag2020, Chamberland20heavycodes}.  It has also been shown that with adaptive control and quickly resetting qubits, only four ancillas are required for the universal fault-tolerant operation of some distance-three codes~\cite{ChaoReichardt17errorcorrection, ChaoReichardt18fewqubitcomputation}.  In this paper, we present a general fault-tolerant protocol that works for a stabilizer of any size.  If qubits are connected well enough, we show that only logarithmic overhead is required for fault-tolerant stabilizer measurement, an exponential space improvement over the previous linear overhead.
 
\begin{figure}
\centering
    \subfloat[\label{f:Shorcard} \cite{Shor96}]{
        \includegraphics[width = 0.45\textwidth]{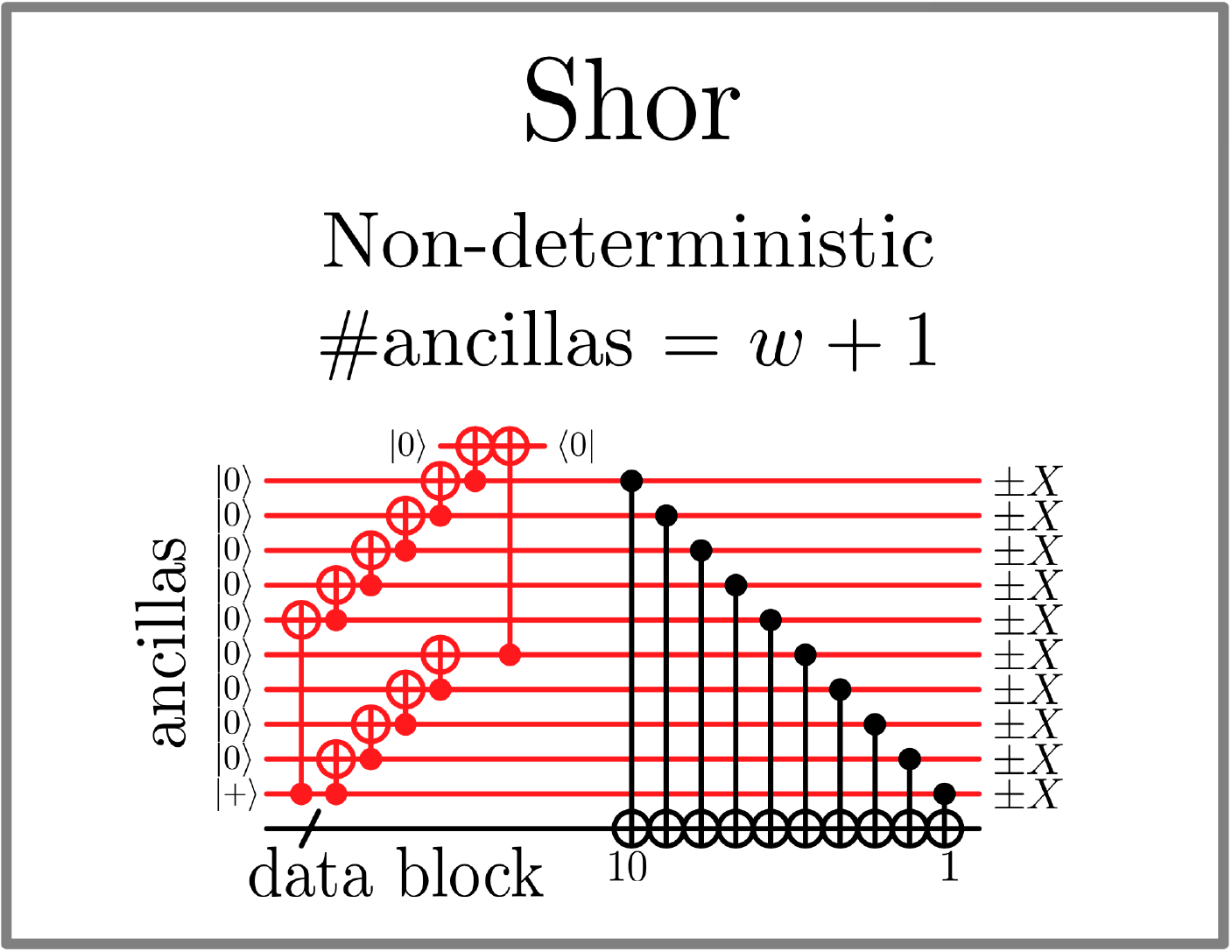}}
    \hspace{0.2cm}
    \subfloat[\label{f:DAcard} \cite{DiVincenzoAliferis06slow}]{
        \includegraphics[width = 0.45\textwidth]{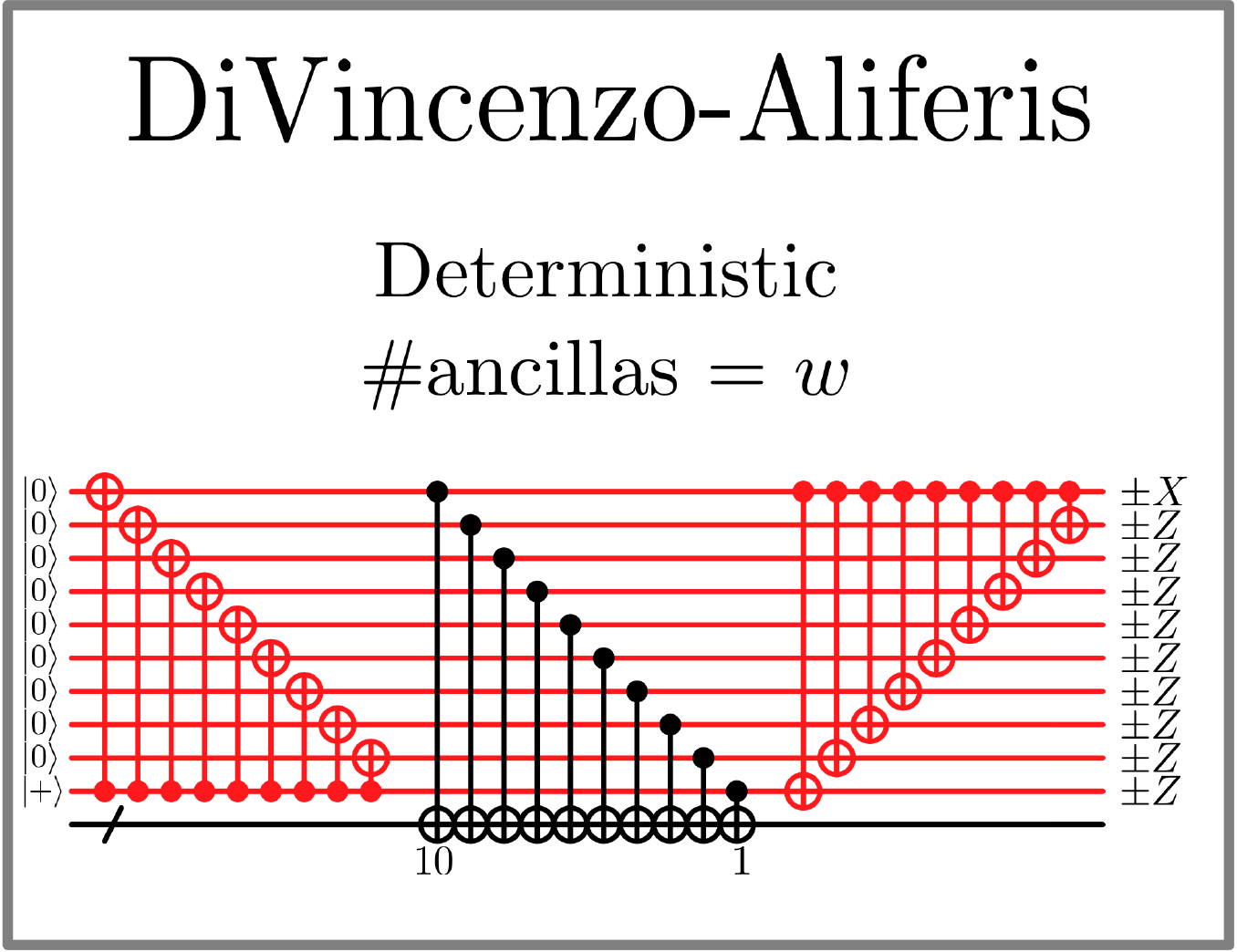}}
    \hspace{0.05cm}
    \subfloat[\label{f:CompDAcard} \cite{stephens14colorcodeft, YoderKim16trianglecodes, ChaoReichardt18fewqubitcomputation}]{
        \includegraphics[width = 0.45\textwidth]{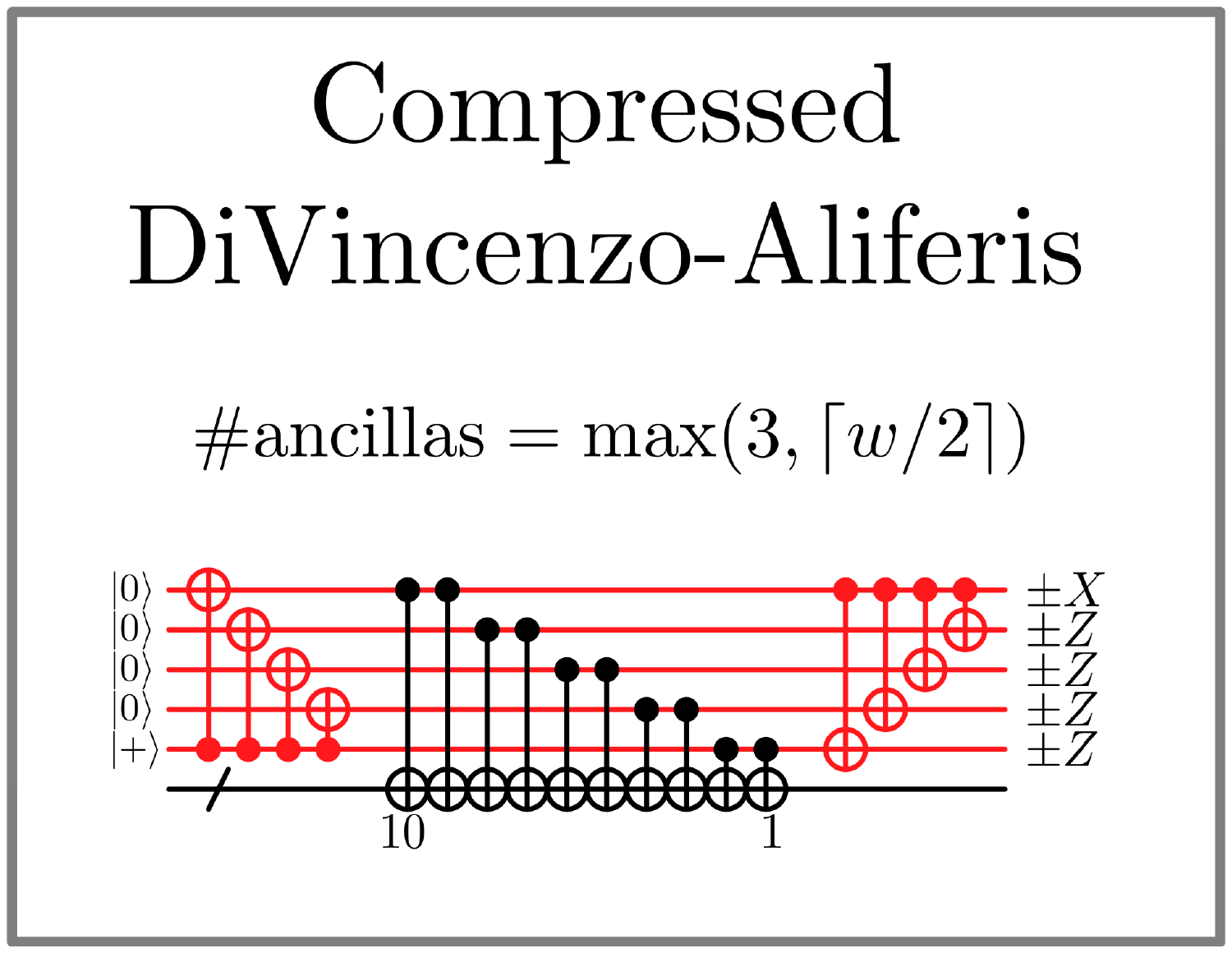}}
    \hspace{0.2cm}
    \subfloat[\label{f:Flagcard}]{
        \includegraphics[width = 0.45\textwidth]{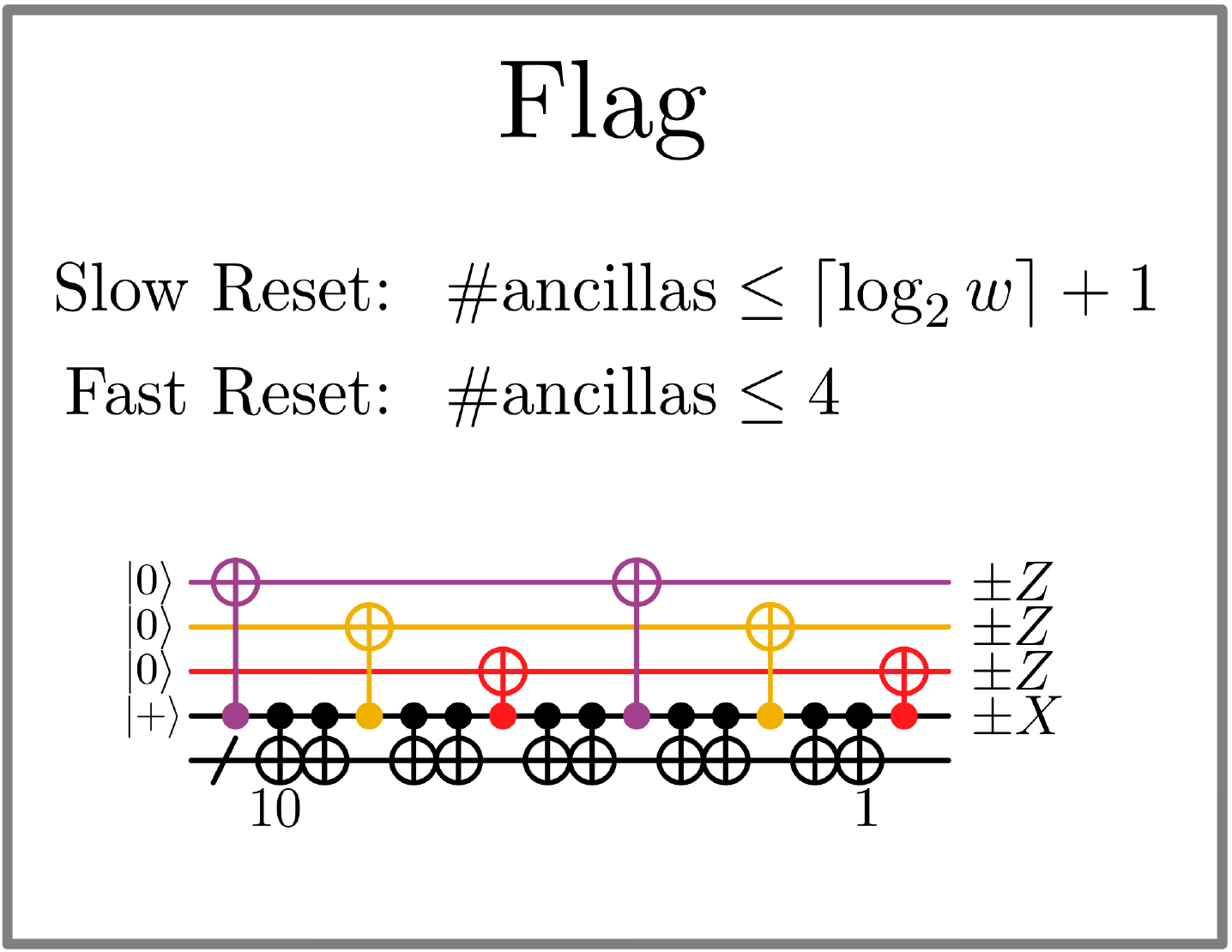}}
    \caption{Historical progression of stabilizer measurement circuits, illustrated by a weight-$10$ $X$ stabilizer measurement. 
    The black CNOTs have targets on the 10 data qubits, collectively represented by a black wire.  In~(b-d), fault-tolerance is only guaranteed to distance three and Pauli corrections, or frame updates, are applied to the data based on the $Z$ basis measurements.
    (a)~Shor's method uses $w+1$ ancillas and requires a fault-tolerantly prepared cat state. 
    (b,c)~These methods use unverified cat states with subsequent error decoding, 
    %.  Hence non-deterministic cat state verification is replaced by 
    giving a deterministic circuit.
    (d)~Our flag method prepares and unprepares an ancilla cat state while collecting the stabilizer.  Exponentially more flag patterns can thus be accessed for fault diagnosis.
    }
    \label{f:stabmeascomparison}
\end{figure}

The general model of flag-based fault tolerance is displayed in \figref{f:genove}. Here a set of flag ancilla qubits monitor operations in a non-fault-tolerant circuit and when measured at the end, produce flag patterns that identify mid-circuit faults. Based on the observed flag pattern, a correction is applied to the data to minimize the spread of errors. As an example, \figref{f:syndromemeasurementd3slowresetw10} measures a stabilizer on 10 data qubits while tolerating one fault. The three colored qubits are the flags and the measured flag patterns each imply different corrections.  Also note that the sequence of flag patterns $100, 110, 111, 011, 001$ is a path on the hypercube and corresponds to the order of the flag CNOTs, e.g., between $100$ and $110$ a CNOT targets flag qubit~$2$.

In this chapter, we restrict discussion to the measurement of individual stabilizers of a quantum code, as in Shor-style fault-tolerant stabilizer measurement~\cite{Shor96}. We do not consider measuring multiple stabilizers in parallel, as in~\cite{Steane97, Knill05, Huang21ShorSteane}.  \cref{f:stabmeascomparison} displays improvements made to Shor's method.  Note that Shor's method can tolerate any number of faults by increasing the fault tolerance of the cat state preparation.  The subsequent schemes forgo this property and are only fault-tolerant to distance three.  DiVincenzo and Aliferis first make the circuit deterministic by removing the need for cat state verification~\cite{DiVincenzoAliferis06slow}. This ensures that a circuit designer need not wait for a fault-tolerantly prepared cat state before measuring the stabilizer.  Subsequent improvements were made in~\cite{stephens14colorcodeft, YoderKim16trianglecodes, ChaoReichardt18fewqubitcomputation} to reduce ancilla count by coupling each ancilla qubit to two data qubits instead of one.  

With our flag method, the ancilla cat state is prepared and unprepared while collecting the stabilizer.  As in \figref{f:syndromemeasurementd3slowresetw10}, an $X$ fault occurring anywhere on the $\ket +$ qubit may spread into the data, but will also leaves its imprint on the flags. This is then measured out as a flag pattern. Due to the particular arrangement of the flag CNOTs, any fault that can spread to a data error of weight more than one triggers one of the five shown flag patterns. Each flag pattern then applies a unique correction that ensures that there is at most one data qubit in error.  This satisfies the condition for fault tolerance, which states that $k$ faults in a circuit should cause no more than $k$ qubits to have~errors. 

For the distance-three fault-tolerant measurement of a weight-$w$ stabilizer, we propose two methods based on the speed of qubit reset. With fast qubit reset, \thmref{t:fastresetd3syndromemeasurement}, only three flag ancillas are required in total, but each flag needs to be measured once per four data qubits.  If more flags are used in parallel, the number of accessible flag patterns grows exponentially and the number of measurements per ancilla converges to one. This is the regime of slow qubit reset, \thmref{t:syndromemeasurementd3slowreset}, which uses at most $\lceil \log_2 w \rceil$ flag ancillas measured only at the end.  

The rest of the chapter is divided into two sections. \secref{s:flagseq} details the construction of the two paths on the hypercube that we use as flag sequences. \secref{s:SMcatSDSM} describes how to use these sequences for distance-three fault-tolerant stabilizer measurement. In \secref{s:SMcatSDSM5}, we detail the distance-five and distance-seven cases.

\section{Flag sequences}
\label{s:flagseq}

A flag pattern is a string of $1$s and $0$s that arises from measuring flag qubits.  A flag pattern with $a$ flags is a vertex of the $a$-dimensional hypercube~$\{0,1\}^a$.  We show how to construct two maximal-length paths through the hypercube.  
Between sequential flag patterns only one bit changes, which in the fault-tolerant circuit constructions below will correspond to a CNOT from the syndrome qubit to that flag qubit.

The first type of flag sequence just requires a maximal-length traversal of the $a$-dimensional hypercube. A simple choice for this is the Gray code~\cite{Gray1953, Gardner1986Doughnuts}.  

\begin{lemma} \label{t:graycode}
For $a \geq 1$, the Gray code creates a length-$2^a$ Hamming path in the $a$-dimensional hypercube~$\{0,1\}^a$.
\end{lemma}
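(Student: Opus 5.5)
The plan is induction on $a$ using the standard reflected construction of the Gray code. I define $G_a$ as a sequence of $2^a$ strings in $\{0,1\}^a$. For the base case, $G_1 = (0,1)$. For $a \geq 2$, write $G_{a-1} = (g_1, \dots, g_{2^{a-1}})$ and set
\[
G_a = (0g_1, 0g_2, \dots, 0g_{2^{a-1}}, 1g_{2^{a-1}}, \dots, 1g_2, 1g_1),
\]
that is, the prefixed copy of $G_{a-1}$ followed by its reversal with the opposite prefix. A length-$2^a$ Hamming path is a sequence of $2^a$ vertices of $\{0,1\}^a$ with two properties: every vertex appears exactly once, and consecutive vertices differ in exactly one bit. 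So there are two things to prove by induction, distinctness with full coverage, and the one-bit adjacency.

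For coverage, the inductive hypothesis says $G_{a-1}$ lists all $2^{a-1}$ strings of $\{0,1\}^{a-1}$ exactly once. The first half of $G_a$ is then exactly the set of strings with leading bit $0$, and the second half is exactly the set with leading bit $1$. Each half has no repeats, and the halves are disjoint because their leading bits differ. Together they give all $2^a$ vertices, each once.

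For adjacency there are three kinds of consecutive pairs to check:
\begin{itemize}
\item within the first half, $0g_i$ and $0g_{i+1}$ differ only where $g_i$ and $g_{i+1}$ differ, which is one bit by induction;
\item within the reversed second half, the pairs are $1g_{i+1}$ and $1g_i$, which also differ in one bit by induction, since reversing a sequence preserves the adjacency relation;
\item at the seam, $0g_{2^{a-1}}$ and $1g_{2^{a-1}}$ differ only in the leading bit.
\end{itemize}
This closes the induction.

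The seam is the only step that needs care, and it is why the second half must be the \emph{reversed} copy rather than a straight one. With a straight copy, the seam pair would be $0g_{2^{a-1}}$ and $1g_1$, which can differ in more than one bit. Nothing else is an obstacle.

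Two consequences are worth recording for later use in the circuit constructions. First, the bit flipped between consecutive patterns is the one that identifies which flag qubit the corresponding CNOT targets. Second, the path starts at $0^a$, which is the flag pattern reported when no fault occurs.
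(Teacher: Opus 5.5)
Your proof is correct and follows the same reflected-code induction as the paper. The only difference is cosmetic: you prepend the new bit where the paper appends it, so your $a=2$ sequence is $00,01,11,10$ rather than the paper's $00,10,11,01$, and you also write out the coverage and seam checks that the paper leaves implicit.
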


\begin{proof}
We construct the sequence inductively.
For $a = 1$, use $0, 1$.  For $a > 1$, first run the sequence for $a-1$ with $0$s appended, then run it backwards with $1$s appended.  
\end{proof}

%\vspace{-.1cm}%DEBUG
\noindent
For $a = 2$, e.g., the sequence is $00,\allowbreak 10,\allowbreak 11,\allowbreak 01$.
For $a = 3$, the sequence is $000,\allowbreak 100,\allowbreak 110,\allowbreak 010,\allowbreak 011,\allowbreak 111,\allowbreak 101,\allowbreak 001$.  

\smallskip%DEBUG

The second type of sequence is related to the degree of fault tolerance of the circuit.  By definition, fault tolerance to distance $d$ implies that for all $k \leq t =  \lfloor \frac{d-1}{2} \rfloor$, correlated errors of weight $k$ occur with $k$-th order probability.  For distance-three Calderbank-Shor-Steane (CSS) fault-tolerant syndrome measurement, any single fault should result in a data error with $X$ and $Z$ components having weight zero or one.  

In order to ensure that the circuit is distance-three fault-tolerant, we need to ensure that a measurement fault on any one ancilla qubit does not trigger corrections of weight greater than one.  Hence the second maximal-length sequence requires that there are no weight-one strings except at the start and end.  As shown in \figref{f:syndromemeasurementd3slowresetw10}, we may assign weight-one corrections to these two patterns, but for all the others, multi-qubit corrections~are~required.

%\pagebreak %DEBUG  \comment{I don't think this is needed. -BR}

\begin{figure}
\centering
\includegraphics[width = 0.472\textwidth]{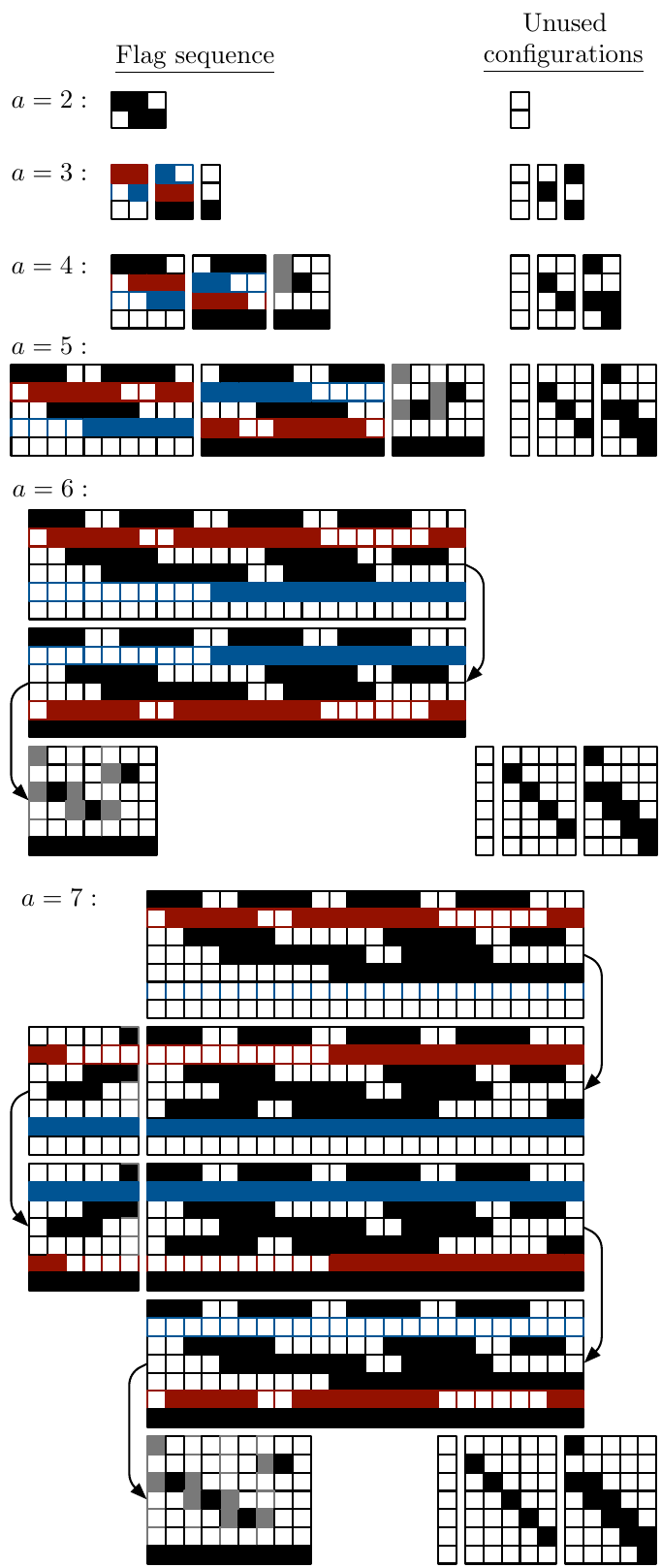}
\caption{
Flag sequences for distance-three fault-tolerant syndrome measurement, using $a$ flag qubits, each measured once (the slow reset model).  These sequences are walks through the $a$-dimensional hypercube, from $10^{a-1}$ to $0^{a-1}1$; passing through each vertex at most once and no other weight-one vertices.  Flag patterns are stacked vertically and ordered initially left to right, with solid and empty squares representing $1$ and $0$, respectively, e.g., \protect\raisebox{-.05cm}{\protect\includegraphics[scale=.4]{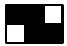}} represents $10, 11, 01$.} 
\label{f:slowresetdistance3flagsequences}
\end{figure}

\begin{lemma} \label{t:slowresetdistance3flagsequences}
For $a \geq 2$, in the $a$-dimensional hypercube $\{0,1\}^a$ there exists a path $v_1 = 10^{a-1}, \ldots, v_n = 0^{a-1}1$, with length $n = 2^a - 2 a + 3$, such that all $v_2, \ldots, v_{n-1}$ have weight at least two, and none repeat.  
\end{lemma}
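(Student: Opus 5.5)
We argue by induction on $a$, building the path in dimension $a+1$ from the path in dimension $a$ together with a slightly shortened Gray code from \lemref{t:graycode}. Write $n_a = 2^a - 2a + 3$. Then $n_{a+1} - n_a = 2^a - 2$, so each inductive step must add exactly $2^a - 2$ new vertices.

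\emph{Base case.} For $a=2$ take $10, 11, 01$. This has length $3 = n_2$, and its only interior vertex $11$ has weight two.

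\emph{Inductive step.} Let $P = v_1, \ldots, v_{n_a}$ be the path for dimension $a$, running from $10^{a-1}$ to $0^{a-1}1$. The new path is the concatenation of two pieces.

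\textbf{First piece.} Append a $0$ to every vertex of $P$. This gives a path in $\{0,1\}^{a+1}$ from $10^{a}$ to $0^{a-1}10$. Its interior vertices have the same weights as before, so all have weight at least two. Its endpoint $0^{a-1}10$ has weight one and is now interior in the new path. This is harmless because the next step is to flip the last coordinate, reaching $0^{a-1}11$ of weight two; flipping a coordinate is a legal hypercube move.

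Wait—$0^{a-1}10$ itself would then be an interior vertex of weight one. To avoid this, I would instead drop the final vertex of $P$. The first piece becomes $v_1, \ldots, v_{n_a-1}$ with a $0$ appended. Its last vertex $v_{n_a-1}0$ has weight at least two and is adjacent to $v_{n_a}0 = 0^{a-1}10$. The move from there to the top half is to flip the last bit of $v_{n_a-1}$ rather than of $0^{a-1}1$.

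\textbf{Second piece.} It lies in the half $\{x1 : x \in \{0,1\}^a\}$. There a vertex $x1$ has weight $\abs{x}+1$, so it has weight at least two iff $x \neq 0^a$, and it must end at $x = 0^a$. So I need a path in $\{0,1\}^a$ from $x_{\mathrm{start}}$ to $0^a$ with nonzero interior vertices and the right number of vertices. Note that the top-half vertices are automatically distinct from the bottom-half ones.

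\textbf{Count to fix.} This repair costs one vertex in the bottom half, so the top half must contribute $2^a - 1$ vertices and start at $v_{n_a-1}$. I would again take the reflected Gray cycle of \lemref{t:graycode} (a Hamiltonian cycle of $\{0,1\}^a$ through the edge $0^a \sim 0^{a-1}1$, with $0^a, 10^{a-1}, 110^{a-2}, 010^{a-2}$ consecutive). Rotate and orient it to run from $v_{n_a-1}$ toward $0^a$, then shortcut a square, e.g.\ $010^{a-2} \to 0^a$ skipping $110^{a-2}, 10^{a-1}$, to trim it to the needed length.

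\textbf{Main obstacle.} The genuine difficulty is this endpoint bookkeeping: choosing the Hamiltonian-type path in the top half with prescribed endpoints (one fixed by $P$, one equal to $0^a$) and prescribed length, while respecting bipartite parity. As a fallback I would invoke Havel's theorem that hypercubes are Hamiltonian-laceable: any two vertices of opposite parity are joined by a Hamiltonian path. I would then delete vertices via square shortcuts to hit the exact length $n_{a+1}$. Distinctness follows because the two halves are disjoint and each piece is a simple path. The weight conditions follow from the observations above.
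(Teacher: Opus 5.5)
Your decomposition is sound up to the top half. After dropping $v_{n_a}$ and appending $0$, you do need a path in $\{0,1\}^a$ from $s=v_{n_a-1}$ to $0^a$ with exactly $2^a-1$ vertices. The step meant to produce this path fails as written.

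Since $s$ is interior (weight $\geq 2$) and adjacent to the weight-one vertex $0^{a-1}1$, it has weight exactly two. So both endpoints are even, and the path must use every even vertex and omit exactly one odd vertex.

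\emph{The primary recipe does not achieve this.} Rotating the Gray cycle does not change where $s$ sits relative to $0^a$. For instance, with the valid $a=3$ path $100,110,111,011,001$ you get $s=011$. On the cycle $000,100,110,010,011,111,101,001$ this vertex is at cycle distance four from $000$, so either arc from $011$ to $000$ has five vertices, not seven. The square shortcut $010^{a-2}\to 0^a$ only makes things worse. It leaves a $(2^a-2)$-cycle, on which any even-to-even path has at most $2^a-3$ vertices.

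\emph{The fallback fails for parity reasons.} Havel's laceability theorem joins vertices of \emph{opposite} parity, but $s$ and $0^a$ are both even. Square shortcuts delete two vertices at a time, so starting from a Hamiltonian path ($2^a$ vertices) you can never reach the odd count $2^a-1$.

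\emph{The missing idea is a hypercube automorphism, not a rotation.} Choose a coordinate permutation $\pi$ (which fixes $0^a$) sending $110^{a-2}$ to $s$; this is possible since both have weight two. On the image under $\pi$ of the Gray cycle, the vertices $0^a,\pi(10^{a-1}),s$ are consecutive. Traversing it the long way from $s$ to $0^a$ omits only the weight-one vertex $\pi(10^{a-1})$. That path has exactly $2^a-1$ vertices, all nonzero except the last, and no shortcut is needed.

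With this repair your induction goes through; for example, it reproduces $100,110,111,011,001$ from $10,11,01$. It is then a genuinely different and simpler route than the paper's. The paper instead reruns the dimension-$(a-1)$ path backwards in the top half with coordinates $2$ and $a-1$ swapped. It then appends an explicit tail through weight-two and weight-three vertices, and tracks the set of unvisited vertices by induction to certify distinctness.
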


\begin{proof}
\def\indicator{{1}}
Let $\indicator_S \in \{0,1\}^a$ denote the vertex that is $1$ exactly for indices in $S$.  
\cref{f:slowresetdistance3flagsequences} illustrates the inductive construction.  For $a=2$, the sequence is the same as that in \lemref{t:graycode}. The base case of our inductive proof is with $a=3$, where the sequence is $100, 110, 111, 011, 001$. For $a > 3$, first run the previous sequence for $b = a - 1$ with $0$s added to the bottom, up to the second-to-last element 
(which for $b \geq 3$ is $\indicator_{\{2, b\}}$).  Then run the sequence backward, except with $1$s added to the bottom, and swapping coordinates $2$ and $a-1$ (the red and blue rows in the figure).  Finally, finish the sequence from $\indicator_{\{1, a\}}$ by walking through $\indicator_{\{3, a\}}, \indicator_{\{4, a\}}, \ldots, \indicator_{\{a - 2, a\}}, \indicator_{\{2, a\}}$, with the appropriate weight-three sequences $\indicator_{\{1, 3, a\}}, \indicator_{\{3, 4, a\}}, \ldots,$ $\indicator_{\{a - 1, a - 2, a\}}, \indicator_{\{a - 2, 2, a\}}$ (shown in gray) interposed.

To ensure that no vertex is visited more than once, one need only check that the last $2a - 5$ sequences are distinct from those that came before.  For this, one can track by induction the $2a - 3$ hypercube vertices that are not visited by each walk: $0^a$, the $a - 2$ weight-one strings $\indicator_2, \ldots, \indicator_{a - 1}$, and the $a - 2$ weight-two strings $\indicator_{\{1,3\}}, \indicator_{\{3,4\}}, \indicator_{\{4,5\}}, \ldots, \indicator_{\{a - 1, a\}}$.  
Thus, the sequence has total length $2^a - (2 a - 3)$.  
\end{proof}

The length $2^a - 2a + 3$ is maximal.  
This follows since there are $2^{a-1} - a$ vertices with odd weight more than one,
and vertices must alternate odd and even weights.

\section[Distance-three stabilizer measurement]{Distance-$3$ fault-tolerant stabilizer measurement}
\label{s:SMcatSDSM}

In this section, we outline two protocols for distance-three CSS fault-tolerant stabilizer measurement.  They differ based on the speed of qubit measurement and reset.

For $w \in \{4, 5, 6\}$, flag-fault-tolerant circuits are constructed the same way regardless of qubit reset speed.  We show in \figref{f:slowresetd3w6stabilizermeasurement} that for $w = 6$, only two flag qubits are required.  Lower-weight stabilizers can be measured by removing data CNOTs and making appropriate changes to the Pauli corrections.  For $7 \leq w \leq 10$, the different constructions yield the same circuits.  It is only for $w > 10$ that the effects of qubit reset speed are~pronounced.

\subsection{Fast reset}

\begin{theorem} \label{t:fastresetd3syndromemeasurement}
If qubits can be measured and reset quickly, then for any $w$, four ancilla qubits are sufficient to measure the syndrome of $X^{\otimes w}$, CSS fault-tolerantly to distance three.  Moreover, the number of measurements needed is $\lceil \tfrac{w + 2}{4} \rceil + 1$.  
\end{theorem}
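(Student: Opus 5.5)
The plan is an explicit circuit built from one syndrome qubit and three flag qubits, followed by a single-fault case analysis. The syndrome qubit is prepared in $\ket +$ and controls CNOTs onto the $w$ data qubits. Its $X$ errors are what can spread. Throughout, the three flags are used cyclically with fast reset, so that the data CNOTs are split into consecutive blocks of about four.

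\textbf{Step 1: the block structure.} Between blocks, one flag is coupled to the syndrome qubit by a CNOT, which opens it. The next CNOT onto the same flag, which closes it, comes one block later. The flag is then measured and reset immediately, while the next flag opens. At any time at most two flags are open. Locally the flag patterns then follow the sequence of \lemref{t:slowresetdistance3flagsequences} for $a=2$ or $a=3$, namely $10,11,01$ or $100,110,111,011,001$, stitched together along the whole length of the circuit.

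\textbf{Step 2: the single-fault analysis.} An $X$ fault on the syndrome qubit between two consecutive flag events spreads to the suffix of data qubits not yet reached. I would choose the blocks so that any two fault locations producing the same (time-resolved) flag-measurement record lead to suffix errors differing by at most one qubit, up to multiplication by the stabilizer $X^{\otimes w}$. The correction attached to each record is then the suffix error, or its complement if that has smaller weight. Because the complement is allowed, the first and last blocks can hold up to two extra data qubits unflagged: a suffix of weight at most one, or a suffix whose complement has weight at most one, needs no flag.

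\textbf{Step 3: the remaining checks and the count.} I would then verify three things.
\begin{itemize}
\item A measurement or reset fault on a single flag produces a record that no syndrome-qubit fault can produce, or else triggers a correction of weight at most one. This is why weight-one patterns appear only at the start and end of each local sequence, as in \lemref{t:slowresetdistance3flagsequences}.
\item $Z$ faults on flags propagate to the syndrome qubit only as syndrome-bit flips, which is acceptable for a distance-three syndrome measurement.
\item Faults on the flag CNOTs themselves give errors of weight at most one on the data.
\end{itemize}
Counting the measurements then gives one syndrome measurement plus one flag measurement per block. With about four data qubits per block and two extra data qubits absorbed at the ends, this gives $\lceil \tfrac{w+2}{4}\rceil$ flag measurements, for $\lceil \tfrac{w+2}{4}\rceil + 1$ in total.

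\textbf{Main obstacle.} The hard part is the distinguishability argument at the seams, where one flag is measured and reset while the next one opens. A fault just before a reset and a fault just after it must not produce identical records with incompatible corrections. My first attempt is to interleave the flag CNOTs so that each data block is covered by a distinct pair of open flags, and to use the measured value from the previous block as part of the pattern. If a particular seam still fails, I would shorten the blocks adjacent to it, and accept a change in the constant of the count only if the stated formula cannot be met.
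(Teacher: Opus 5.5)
There is a genuine gap. Your ingredients match the paper: one syndrome qubit, three recycled flags, and one flag measurement per four data qubits. But the structure you specify cannot be made fault-tolerant, and the seam problem you defer is where it breaks.

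\textbf{Two flags must cover the middle of the circuit.} A readout or reset fault on one flag gives a weight-one record with no data error. So a weight-one record may only carry a correction of weight at most one, which means every location where an $X$ fault has spread to weight at least three modulo $X^{\otimes w}$ (the whole middle of the circuit) must be covered by at least two open flags.

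\textbf{So three flags must sometimes be open at once.} The syndrome qubit takes part in one CNOT at a time, and there is a fault location between consecutive gates. Hence every flag CNOT changes the number of open flags by exactly one. In the interior this number must therefore alternate $2,3,2,3,\dots$, with all three flags open at alternate times. Your requirement ``at most two flags open at any time'' is incompatible with fault tolerance. Stitching windows of \lemref{t:slowresetdistance3flagsequences} together leaves, at each seam, a mid-circuit location covered by one or zero flags whose fault has spread to roughly half the data. Shortening the neighbouring blocks does not remove it, because that location lies between two flag CNOTs.

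\textbf{Four data CNOTs per record is too coarse.} If the record changes only at block boundaries, the five fault positions around four data CNOTs share one record. Their suffix errors have extremes differing on four qubits, so no single correction leaves weight $\le 1$. Your criterion is also off: requiring same-record errors to differ pairwise in at most one qubit allows only one data CNOT per record. What is needed, and what the paper uses, is one correction within weight one of all of them. That allows exactly two data CNOTs per record, corrected as if the fault occurred between them.

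\textbf{The paper's construction and count.} A flag CNOT follows every second data CNOT, alternately opening a fresh flag and closing the oldest one. Each interior flag instance then stays open across four two-CNOT regions, the record weights alternate between two and three, all records are distinct, and each flag measurement adds four data qubits. At each end, one data CNOT is unflagged and one sits in the region where only the first (or last) flag is open. With $f$ flag measurements this gives $w = 2(2f-3)+4 = 4f-2$, so $f = \lceil \frac{w+2}{4}\rceil$, plus the syndrome measurement.

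Your count, ``four per block plus two extra at the ends'', instead gives $w = 4f+2$, hence $\lceil \frac{w-2}{4}\rceil$ flags. That does not yield the formula you state. For $w=6$ it would claim a single flag suffices, which the argument of \figref{f:oneflagprotection} rules out.
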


\begin{proof}
For $w \in \{ 4, 5, 6\}$, the circuit using two flag ancillas is shown in \figref{f:slowresetd3w6stabilizermeasurement}.  It runs through a sequence of three flag patterns and a multi-qubit correction is only applied for the flag pattern $11$.  For $w > 6$, the general construction is shown in \figref{f:syndromemeasurementd3fastreset}.  Each repetition of the highlighted region adds the $X$ parity of four more data qubits, while measuring and quickly reinitializing one flag qubit.  In terms of the number of measurements~$m$, the construction achieves up to $w = 4 \, (m - 1) - 2$.  
It is fault-tolerant because $X$ faults on the control wire cause flag patterns of alternating weights two or three, that localize the fault to three possible consecutive locations along the control wire: before, between or after two CNOT gates.  The appropriate correction, ensuring distance-three fault tolerance, is for a fault between the CNOT gates.  
\end{proof}

% \subfigure[$w = 6$, $a = 7$]{\includegraphics[width=.325\textwidth]{imagesChap2/dist5w6}}    

% \label{f:slowresetd3w6stabilizermeasurement}
% \label{f:slowresetd3w6CSP}

\begin{figure}
    \centering
    \subfloat[\label{f:slowresetd3w6stabilizermeasurement}]{
        \includegraphics[width = 0.47\textwidth]{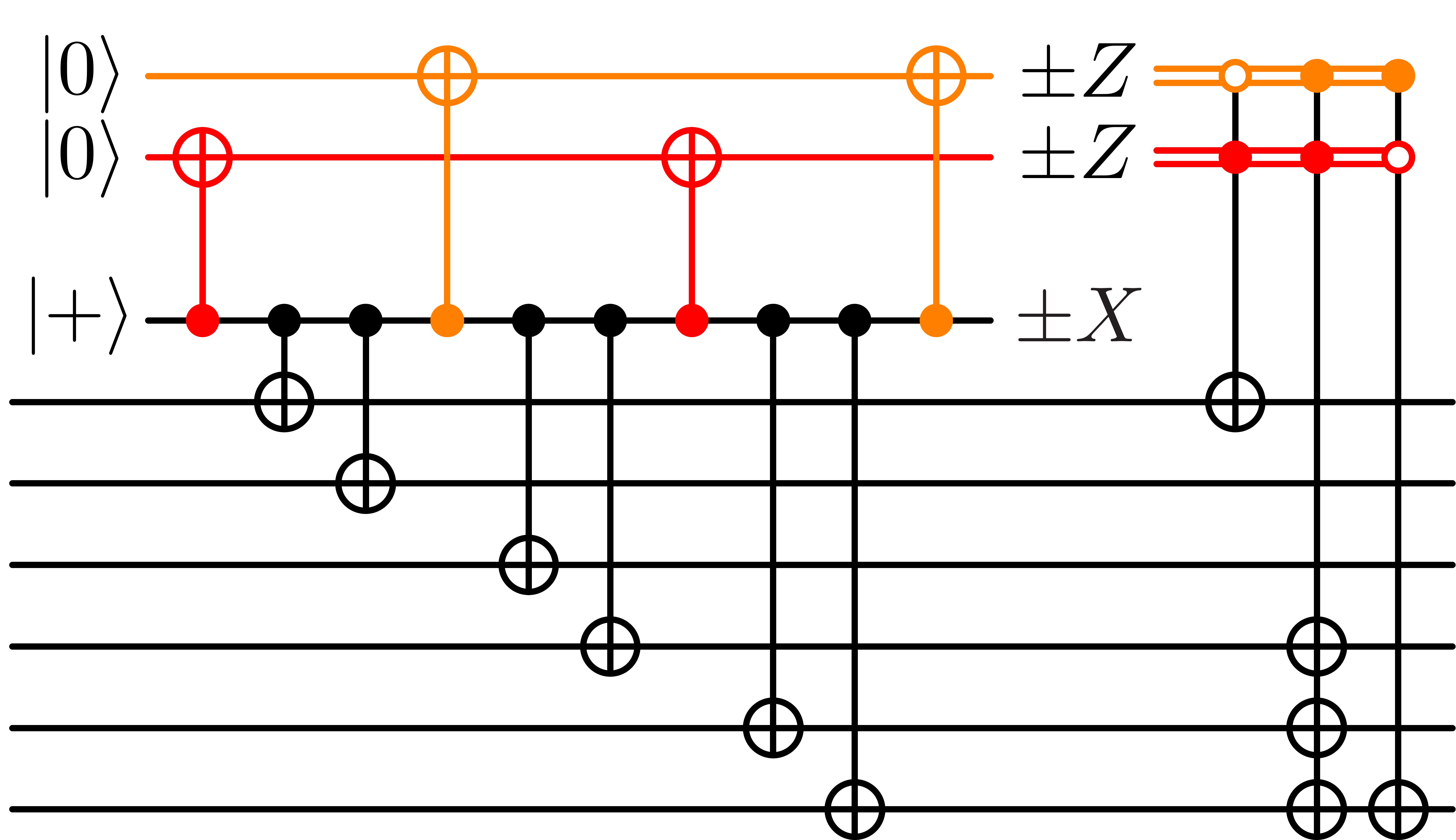}}
    \hspace{0.2cm}
    \subfloat[\label{f:slowresetd3w6CSP}]{
        \includegraphics[width = 0.44\textwidth]{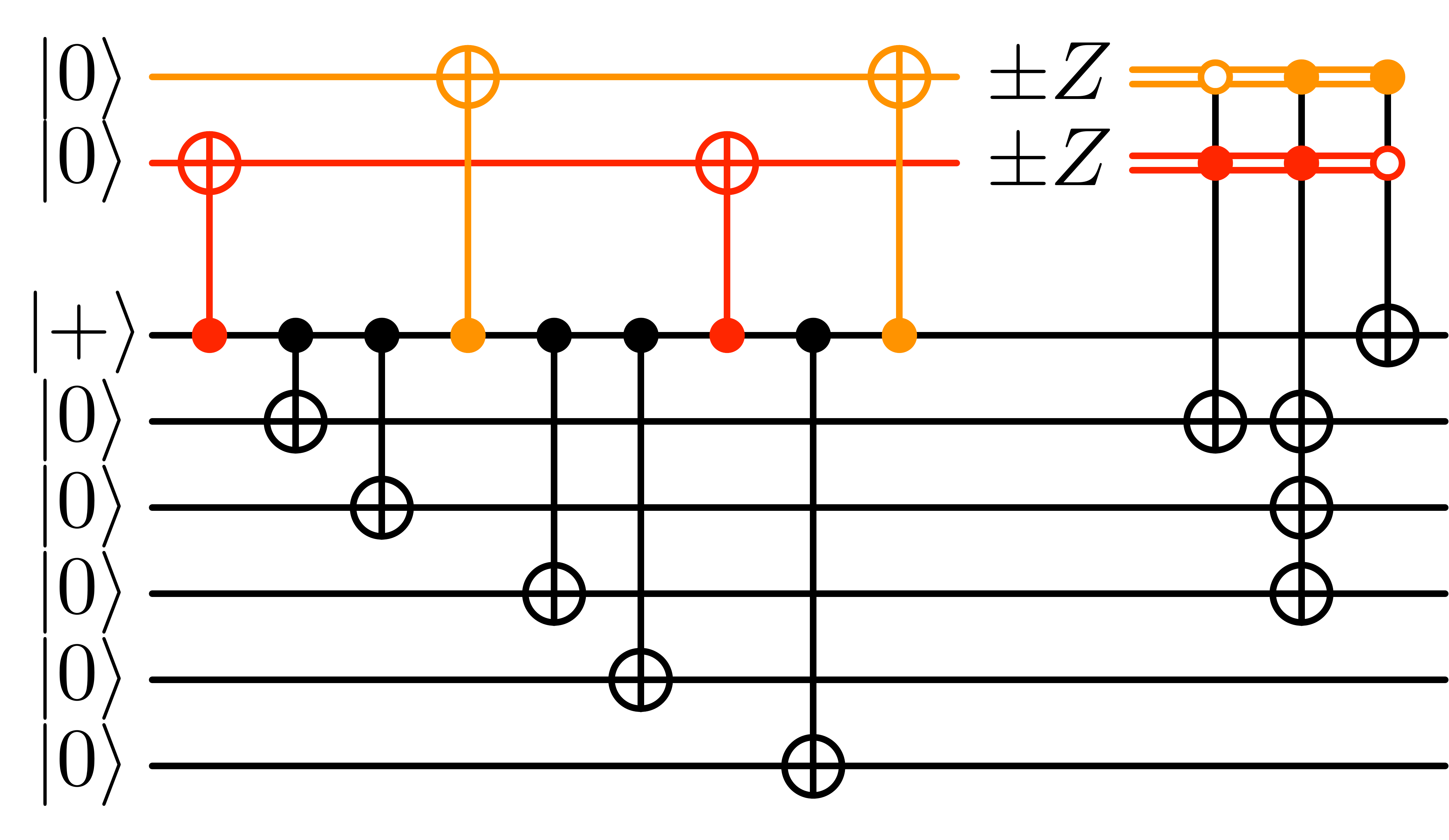}}
    \caption{(a) Circuit to measure an $X^{\otimes 6}$ stabilizer, CSS fault-tolerant to distance three.  (b) Circuit to prepare a six-qubit cat state, fault-tolerant to distance three.}
    \label{f:distance3w6SMandcat}
\end{figure}

\begin{figure}
    \centering
    \includegraphics[width=0.999\textwidth]{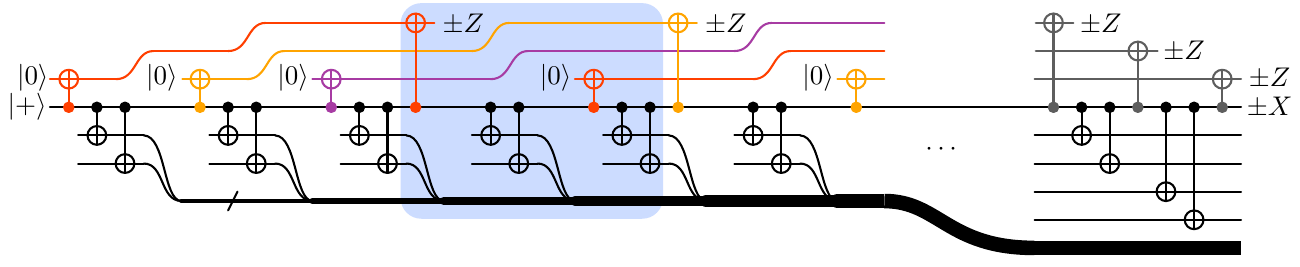}
    \caption{Distance-three fault-tolerant syndrome bit measurement only needs three flag qubits. The highlighted region can be repeated to fit the weight of the stabilizer being measured.}
    \label{f:syndromemeasurementd3fastreset}
\end{figure}

\begin{figure}
    \centering
    \includegraphics[width=.41\textwidth]{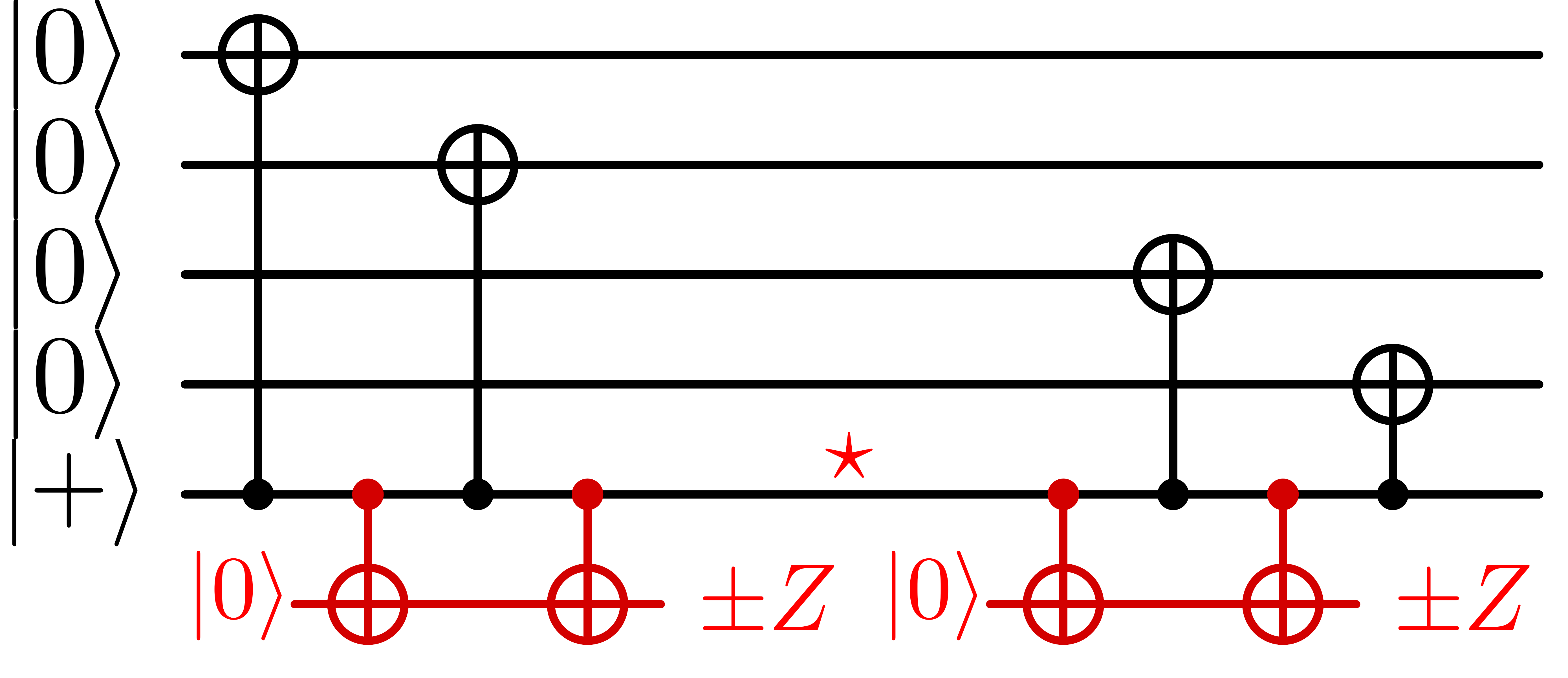}
    \includegraphics[width=.38\textwidth]{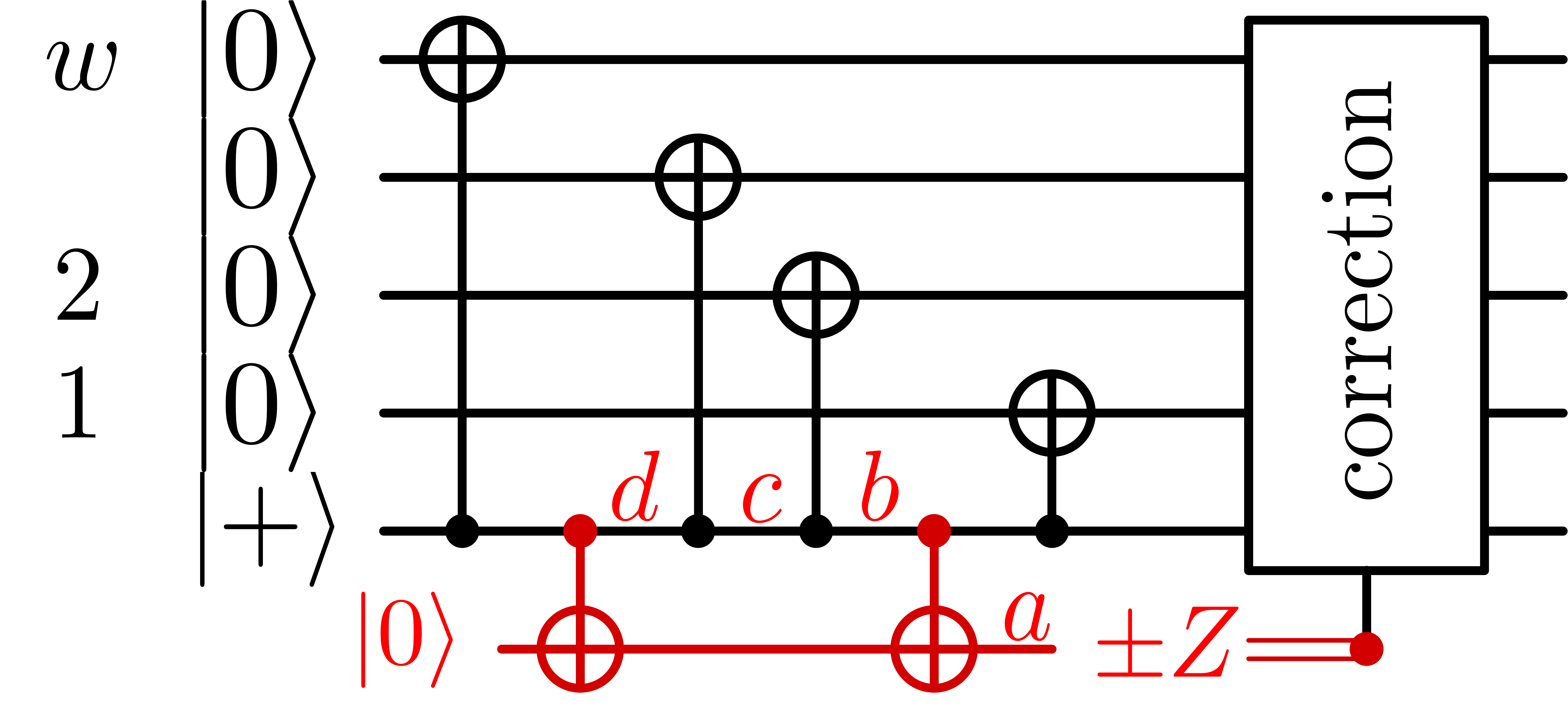} %oneancillad3
    \caption{Distance-three error correction is not possible with one flag qubit.  Either (top) the control wire is unprotected at some point $\star$, from which an $X$ fault can propagate to an error of weight at least two; or (bottom) faults at $a$, $b$, $ c$, $ d$, causing respective errors $I$, $X_1$, $X_1 X_2$, $X_w$ have no consistent~correction.} 
    \label{f:oneflagprotection}
\end{figure}

\thmref{t:fastresetd3syndromemeasurement} may be optimal; it does not appear to be possible to use fewer than three flag qubits.  With just one flag qubit, one can detect that an error has occurred, but not where.  As illustrated in \figref{f:oneflagprotection}, either the control wire is unprotected at some point or for $w \geq 4$ there is no consistent correction rule.  

By a similar argument, two flag qubits are not enough.  Any correction based on a single flag can have weight at most one, since the flag measurement itself could be faulty.  However, if at some point in the middle the control wire is protected by just a single flag, a weight-one correction will not suffice.  On the other hand, if both flags are used to protect the control wire across the entire sequence of CNOT gates, we are unable to locate faults well enough to correct them.  

We remark that this construction can also be used to prepare a $w$-qubit cat state fault-tolerantly to distance three. The conversion follows three steps: 1.\ Remove one data qubit.  2.\ Initialize the data qubits as $\ket 0$.  3.\ Remove the syndrome ancilla measurement, so as to retain it in the support of the stabilizer.  An example of this conversion is shown for $w = 6$ in \figref{f:slowresetd3w6CSP}.  In \chapref{chap:catstateprep}, we suggest different protocols that use just one ancilla qubit.

\subsection{Slow reset}
\label{subsec:stabmeasslow}

\begin{theorem} \label{t:syndromemeasurementd3slowreset}
The syndrome of $X^{\otimes w}$ can be measured CSS fault-tolerantly to distance three using $m \geq 3$ measurements, provided that 
\begin{equation*}
w \leq 2 \, (2^{m - 1} - 2 (m - 1) + 3) \, .
\end{equation*}
\end{theorem}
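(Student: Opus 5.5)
The plan is to build the circuit from the path of \lemref{t:slowresetdistance3flagsequences} with $a = m-1$ flag qubits; the remaining measurement is the syndrome ancilla. Let $v_1 = 10^{a-2}, \ldots, v_n = 0^{a-1}1$ be that path, with $n = 2^a - 2a + 3$. The syndrome qubit is prepared in $\ket +$ and first CNOTs into flag~$1$, producing pattern $v_1$. Between consecutive patterns $v_i$ and $v_{i+1}$, which differ in exactly one bit, a CNOT from the syndrome qubit onto that flag qubit is placed. Inside each of the $n$ ``segments'' of the control wire where the accumulated flag pattern equals $v_i$, two data CNOTs (syndrome as control, data as target) are placed. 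After segment $n$, the pattern $v_n$ is uncomputed by a final CNOT into flag $a$. The flags are measured in the $Z$ basis only at the end, as are the syndrome qubit's $X$ parity readout. This collects $2n$ data qubits, giving $w \le 2(2^{m-1} - 2(m-1)+3)$. Smaller $w$ are handled by deleting data CNOTs. The final untoggle works because $v_1$ and $v_n$ are weight one, so the flags start and end at $0^a$ relative to the cat.

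The fault analysis proceeds in the following order. First, an $X$ fault on the syndrome wire in segment $i$ propagates to the flags as exactly the pattern obtained by XORing the remaining flag CNOTs. Equivalently, up to the final uncomputation, it flips the flags by $v_i$ (or by $v_i$ shifted to the adjacent segment, depending on the side of a flag CNOT). A fault before all flag CNOTs or after all of them gives trivial flags and a data error equivalent to $I$ or to the full stabilizer, hence harmless. A fault in segment $i$ yields flag pattern $v_i$ with a data error on a suffix of the data qubits. Modulo the stabilizer, this error differs by at most one qubit depending on whether it occurs before, between, or after the two data CNOTs of that segment. Since the $v_i$ are distinct, pattern $v_i$ identifies the segment, and the correction is taken as the "between" error. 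This leaves a residual error of weight at most one.

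Second, the single-fault cases that do not come from the syndrome wire must be handled. A measurement fault on a flag, or an $X$ fault on a flag wire, produces a weight-one flag pattern with no data error. The only weight-one patterns assigned corrections are $v_1$ and $v_n$. Their corrections are a suffix of weight $w-1$ or $w-2$ (equivalent mod the stabilizer to weight $\le 2$), or a prefix of weight $\le 2$. Placing the segment-1 and segment-$n$ data CNOTs appropriately, and choosing the mod-stabilizer representative, makes these weight one. This is exactly the requirement that interior vertices have weight $\ge 2$ in \lemref{t:slowresetdistance3flagsequences}, and it is the step I expect to need the most care. Single $Z$ faults propagate only onto the syndrome qubit, flipping at most the syndrome bit, and onto single data qubits. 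Data-side $X$ faults on a target stay weight one. Two-qubit gate faults reduce to these cases.

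The main obstacle is bookkeeping at the segment boundaries. A fault immediately adjacent to a flag CNOT can be ambiguous between patterns $v_i$ and $v_{i+1}$. Since the flag CNOT lies between data pairs, however, the data error there is the same for either attribution. The check is that the corrections for $v_i$ and $v_{i+1}$ each differ from that boundary error by at most one qubit, which holds because each segment contains only two data CNOTs.
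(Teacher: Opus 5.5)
Your proposal is correct and is essentially the paper's proof. The data CNOTs come in pairs separated by flag CNOTs that trace the path of \lemref{t:slowresetdistance3flagsequences} on $a=m-1$ flags, each pattern $v_i$ is corrected as if the $X$ fault occurred between the two data CNOTs of its segment, and because the interior patterns have weight at least two, a single faulty flag can trigger only the weight-one endpoint corrections. Two small slips: $v_1$ should be $10^{a-1}$, and the endpoint step you expected to be delicate is automatic, since your ``between'' rule already gives $X_2 X_3 \cdots X_w \equiv X_1$ for $v_1$ and $X_w$ for $v_n$.
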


\begin{proof}
Two examples are shown in \figref{f:slowresetd3w6stabilizermeasurement}, for $w = 6$, and \figref{f:syndromemeasurementd3slowresetw10}, for $w = 10$.  
As in these figures, in general we collect the syndrome two qubits at a time into a syndrome qubit that is initialized as $\ket +$.  Between each of these pairs of CNOT gates, a CNOT is applied from the syndrome qubit into one of $m - 1$ flag qubits.  This leads to a sequence of flag patterns, e.g., $100, 110, 111, 011, 001$ for the $w = 10$ example.  Based on the observed flag pattern, a correction is applied as if an $X$ fault had occurred between the corresponding pair of flag CNOT gates.  

Observe that the flag sequence changes one bit at a time; it can be thought of as a path on the hypercube.  It begins and ends with weight-one patterns, but otherwise the patterns all have weight at least two.  This is important for distance-three fault tolerance because a fault could affect the flags, and only the first and last data corrections have weight one.  Also, the flag patterns along the sequence are distinct, so each is associated with only one correction.  The theorem then just follows using the flag sequence construction in \lemref{t:slowresetdistance3flagsequences}.  
\end{proof}

Note that the approach of \thmref{t:syndromemeasurementd3slowreset}, with slow reset, is different from the fast reset case of \thmref{t:fastresetd3syndromemeasurement}, in that a flag qubit is active and able to detect faults in more than one region of the circuit.  

\subsection{Space-time cost}
\label{subsec:SMcatSDSMspacetime}

Here, we count the circuit depth and number of ancillas used in our distance-three fault-tolerant stabilizer measurement circuits. Parallelization can substantially reduce circuit depth. \tabref{f:resourcereqs} compares our flag method for measuring a weight-$w$ stabilizer to the earlier methods in \figref{f:stabmeascomparison}. Also considered is a parallelized Shor method, in which the initial cat state is prepared in logarithmic depth, with $w/4$ extra ancilla qubits for postselection checks. The Shor methods must pass the postselection checks, and so they are non-deterministic protocols. \tabref{f:resourcereqs} shows the best case, where all the checks pass. Note that the flag and parallelized Shor methods both have $\text{space} \times \text{depth}$ cost scaling as $O(w \log w)$, with the leading coefficient in favor of the flag method.

\begin{table}
    \caption{Space and time costs for measuring a weight-$w$ stabilizer using different distance-three fault-tolerant stabilizer measurement circuits. In the following, all the logarithms are base $2$. The flag method requires the fewest ancillas and has low depth, allowing for the smallest cost when computing $\# \text{ancillas} \times \text{depth}$. }
    \centering
    \begin{tabular}{c c c c}
        \hline
        \hline
        Protocol & Ancillas & Depth & Ancillas$\times$Depth \\
        \hline
        Shor & $w+1$ & $w/2 + 3$ & $O(0.5 w^2)$ \\
        Shor-Par & $5w/4$ & $3\log w - 1 $ & $O(3.75 w \log w)$ \\
        DA & $w$ & $2w - 1$ & $O(2 w^2)$ \\
        Compressed DA & $w/2$ & $3w/2 - 2$ & $O(0.75 w^2)$ \\
        Flag & $\log w +1$  & $3w/2 + O(1)$ & $O(1.5w \log w)$ \\[0.25cm]
        Not fault-tolerant & $1$ & $w$ & $O(w)$ \\
        \hline
        \hline
    \end{tabular}
    \label{f:resourcereqs}
\end{table}

Using a standard depolarizing noise model, we simulate noisy versions of the different circuits to determine statistics of the weight-one and weight-two errors. Specifically:
\begin{itemize}
    \item With probability $p$, the preparation of $\ket 0$ is replaced by $\ket 1$ and vice versa---similarly for $\ket +$ and $\ket -$.
    \item With probability $p$, an $X$ or $Z$ measurement has its outcome flipped.
    \item With probability $p$, a one-qubit gate is followed by a Pauli error drawn uniformly at random from $\{ X, Y, Z\}$.
    \item With probability $p$, the two-qubit CNOT gate is followed by a two-qubit Pauli error drawn uniformly at random from $\{ I, X, Y, Z\}^{\otimes 2} \setminus \{I \otimes I \}$.
\end{itemize}
There are no errors on idle resting qubits. 

\begin{figure}
    \centering
    \includegraphics[width=.999\textwidth]{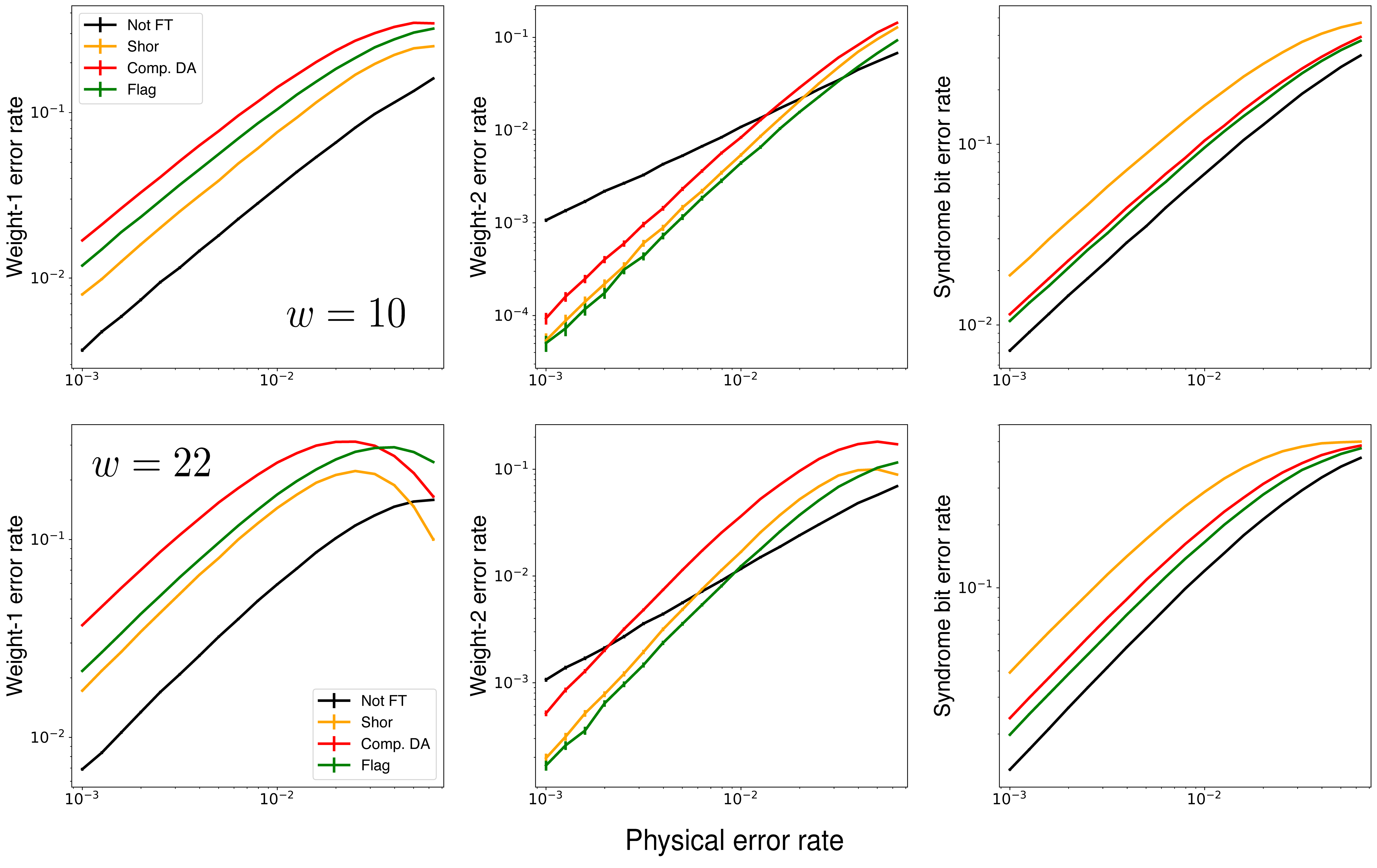}
    \caption{Simulation of the noisy measurement of an $X ^{\otimes 10}$ and $X ^{\otimes 22}$ stabilizer at physical error rate $p \in \{ 10^{-3}, 10^{-2} \}$ using different distance-three fault-tolerant circuits: Shor-style, compressed Divincenzo-Aliferis, and the flag method of \secref{subsec:stabmeasslow}. In the first and second column of graphs, we show the rate of weight-one and weight-two data errors due to these circuits, with $99 \%$ error bars. In the third column, we show the rate at which the measured syndrome bit is wrong.} 
    \label{f:sims}
\end{figure}

\cref{f:sims} shows the rates of weight-one errors and weight-two errors for different input physical error rates $p$. The rate of weight-one errors is lowest in the non-fault-tolerant circuit, since it contains the fewest locations for faults. The Shor method has a lower weight-one error rate than the flag method, but among the deterministic fault-tolerant methods, the flag method performs the best. For larger stabilizers ($w=22$), the curves for the Shor method and the flag method are closer, implying that the difference in the rate of weight-one errors between the Shor method and the flag method is reduced. 

As expected, the rate of weight-two errors of the three fault-tolerant protocols scales quadratically with $p$, allowing for a lower probability of weight-two errors below a pseudothreshold (physical error rate below which a fault-tolerant method achieves lower weight-two error rate than the non-fault-tolerant method). Notice that the flag method has the highest pseudothreshold. Moreover, as the stabilizer weight is increased, the pseudothreshold of the flag method decreases slower than those of the other fault-tolerant methods. Asymptotically, the flag method admits the highest pseudothreshold for weight-two errors, but incurs more weight-one errors than the probabilistic Shor method. Additionally, we compute the rate of errors on the syndrome bit, as this determines how much fault tolerance will be needed to correct faulty syndrome information~\cite{Delf22}. The rate of faulty syndrome bits is lowest when using the flag method for fault tolerance.

\section[Distance-five/seven stabilizer measurement]{Stabilizer measurement tolerating more than 1 fault}
\label{s:SMcatSDSM5}

Distance-five fault tolerance is interesting for stabilizers of weight $w \geq 6$.  For $w \in \{ 6, 7, 8\}$, the circuits in \figref{f:syndromemeasurementd5} with seven ancilla qubits are distance-five fault-tolerant.  We present a general method to construct stabilizer measurement circuits for arbitrary $w$ in \figref{f:d5arbwsyndmeas}.  By computer simulation, we verify the fault tolerance of this construction for $w$ up to $90$ qubits.  The general construction proceeds as follows. First, five flag qubits are activated. For each additional flag that is needed, the gates in the shaded blue region are applied.  These gates deactivate an existing flag and activate a new flag.  Finally, when no additional flags are needed, the flags are deactivated in the order $\{ 2,4,1,5,3\}$. $1$ denotes the flag that has been active for the longest time and $5$, the flag that has been active for the shortest time.  To ensure faults are correctly flagged, it is necessary to ensure there is asymmetry between the order in which flags are activated and the order in which they are deactivated. This is in contrast to the distance-three DiVincenzo-Aliferis method in \figref{f:CompDAcard}, where both the orders are symmetric.

In \figref{f:d5arbwsyndmeas}, the thick black line indicates the $w$-qubit register of data qubits that are in the support of the stabilizer.  Data CNOT gates (in black) are applied to qubits $\{ w, w-1, \mathellipsis , 1\}$ after every flag CNOT (in red).  The last data CNOT must be placed either before the third-last or second-last flag CNOT.  The addition of another data CNOT gate before the last flag CNOT results in uncorrectable errors. 

If there are $a$ ancilla qubits, one can measure a weight-$(2 a - 5)$ or weight-$(2 a - 4)$ stabilizer.  Hence a weight-$w$ CSS stabilizer may be fault-tolerantly measured to distance-five, for $w \leq 2a - 4$.  Note that at most five flag qubits are active at any instant.  Hence with fast qubit reset, one only requires five flag ancillas and one syndrome ancilla to measure an arbitrary weight stabilizer fault-tolerantly to distance-five.

For distance-seven fault-tolerance, there are differences in the spacing between data CNOT gates and the order in which flag ancillas are activated and deactivated.  We show how to construct circuits for stabilizer of arbitrary weight $w$ by first discussing a circuit for a weight-$17$ stabilizer, shown in \figref{f:d7w17}. We chose $w=17$ since the circuit is non-trivial and its construction encompasses all the tricks needed to construct circuits for arbitrary weight.  In general, compared to \figref{f:d5arbwsyndmeas}, the number of ancilla CNOT gates between data CNOT gates is doubled, except in the center of the circuit, where it is tripled for the length of four data CNOT gates.  For odd $w$, the number of ancilla CNOTs between the $w-1$ subsequent pairs of data CNOT gates is the sequence $\{ (\lceil \frac{w-6}{2} \rceil  \text{ 2's} ), 3, 3, 3, 3 , (\lfloor \frac{w-6}{2} \rfloor \text{ 2's}), 1\}$, as shown in \figref{f:d7w17}. For even $w$, the sequence is $\{ ( \frac{w-6}{2} \text{ 2's} ), 3, 3, 3, 3 , (\frac{w-6}{2} \text{ 2's}), 1\}$. Note that, as shown in \figref{f:d7w17}, one additional ancilla CNOT gate is required at the start.

Next we comment on the order in which ancilla qubits are deactivated as flags. Similar to the distance-five case, after initially activating seven flags, a flag is deactivated to activate a new flag qubit. An active group of seven flags is closed in the order $\{ 2,4,6,1,3,5,7 \}$. As these seven flags are closed, seven new flags are simultaneously opened. The process repeats unitl there are exactly seven remaining flags to close. These last seven flags are also closed in the same order $\{ 2,4,6,1,3,5,7 \}$. In \figref{f:d7w17}, flag ancillas are shown in alternating colors to highlight the order that flags are activated and deactivated. Distance-seven fault-tolerance was verified by computer simulation for stabilizer weight up to $32$. The number of flag ancillas needed to measure a weight-$w$ stabilizer is $w+1$.  

The techniques described in this section may also be used to develop resource-efficient circuits that are fault-tolerant to higher distance.

\begin{figure}
    \centering
    \subfloat[$w = 6$, $a = 7$]{
        \includegraphics[width = 0.45\textwidth]{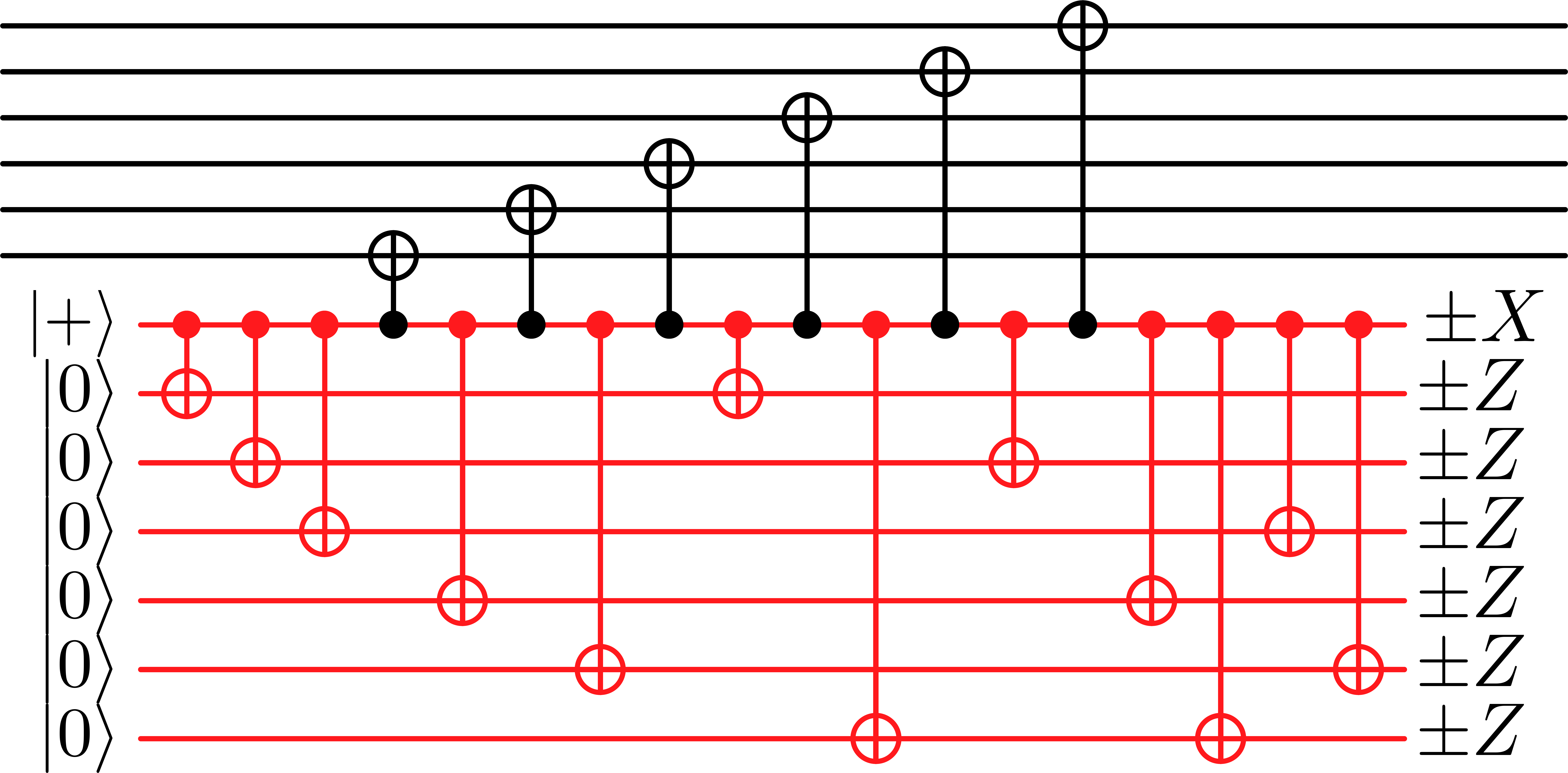}}
    \hspace{0.4cm}
    \subfloat[$w = 7$, $a = 7$]{
        \includegraphics[width = 0.45\textwidth]{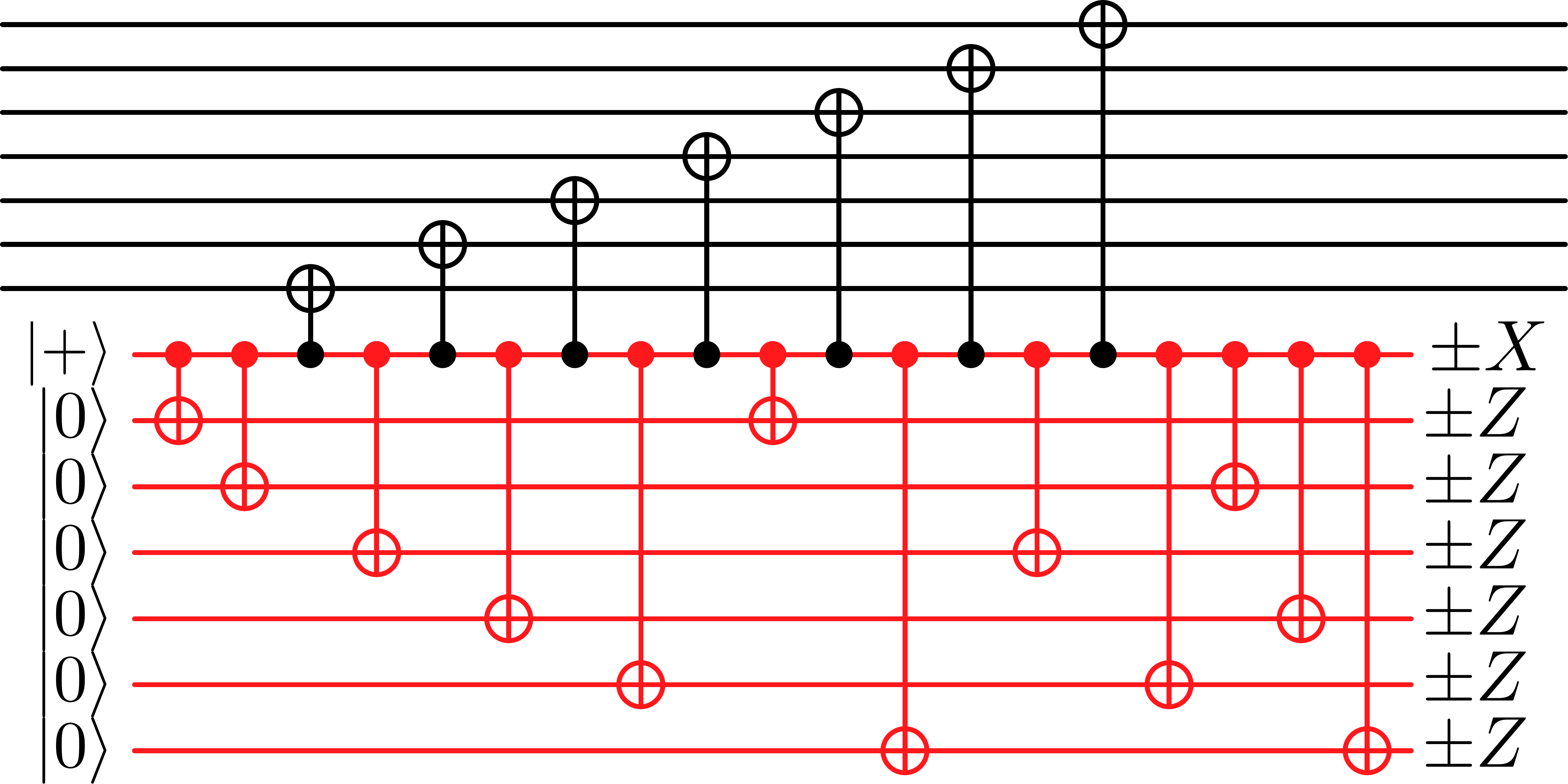}}
    \hspace{0.4cm}
    \subfloat[$w = 8$, $a = 7$]{
        \includegraphics[width = 0.45\textwidth]{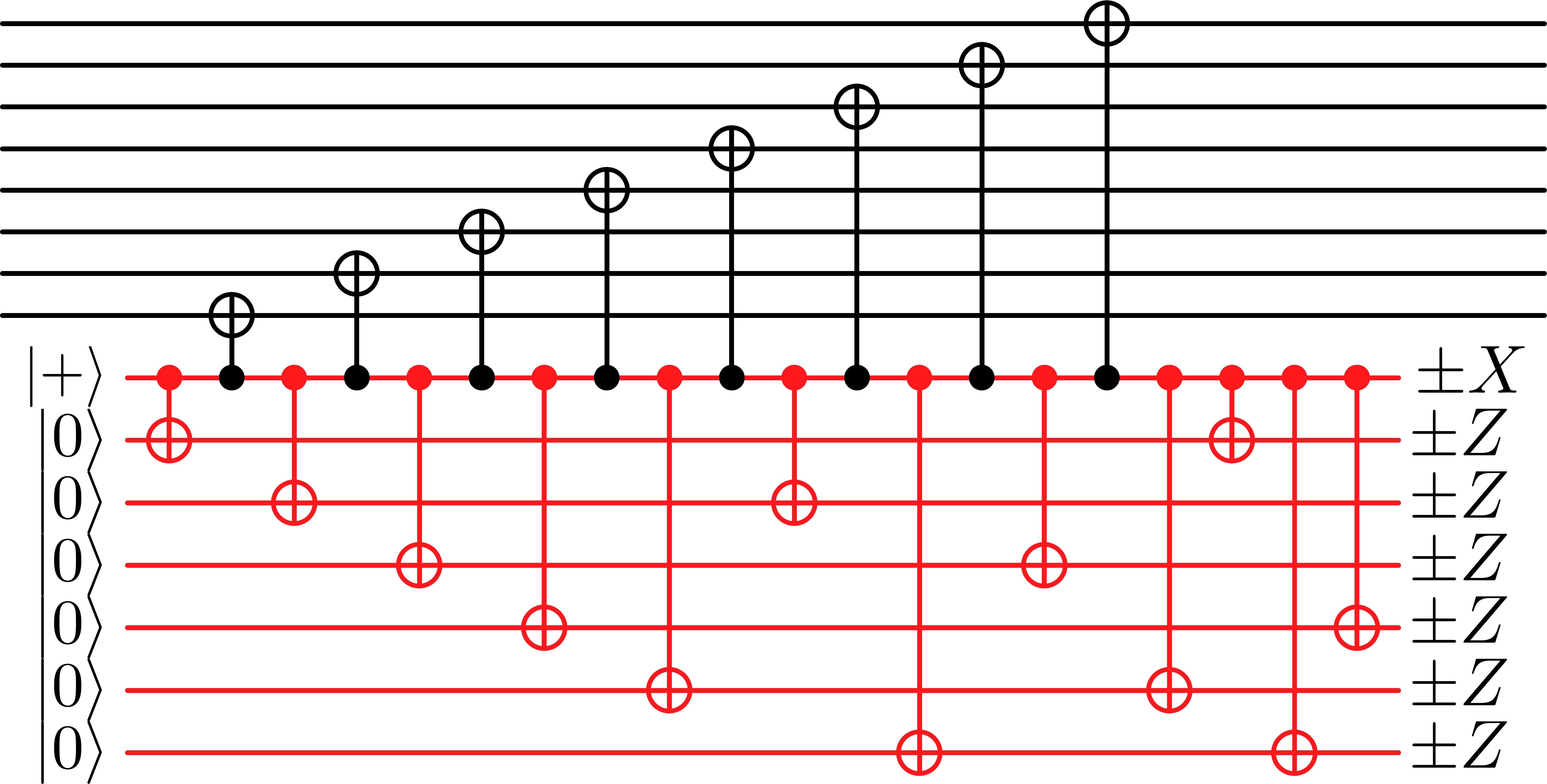}}
    \caption{Distance-five CSS stabilizer measurement with slow qubit reset for $w \in \{ 6, 7, 8\}$.  Red wires indicate syndrome and flag qubits.} 
    \label{f:syndromemeasurementd5}
\end{figure}

\begin{figure}
    \centering
    \includegraphics[width=.8\textwidth]{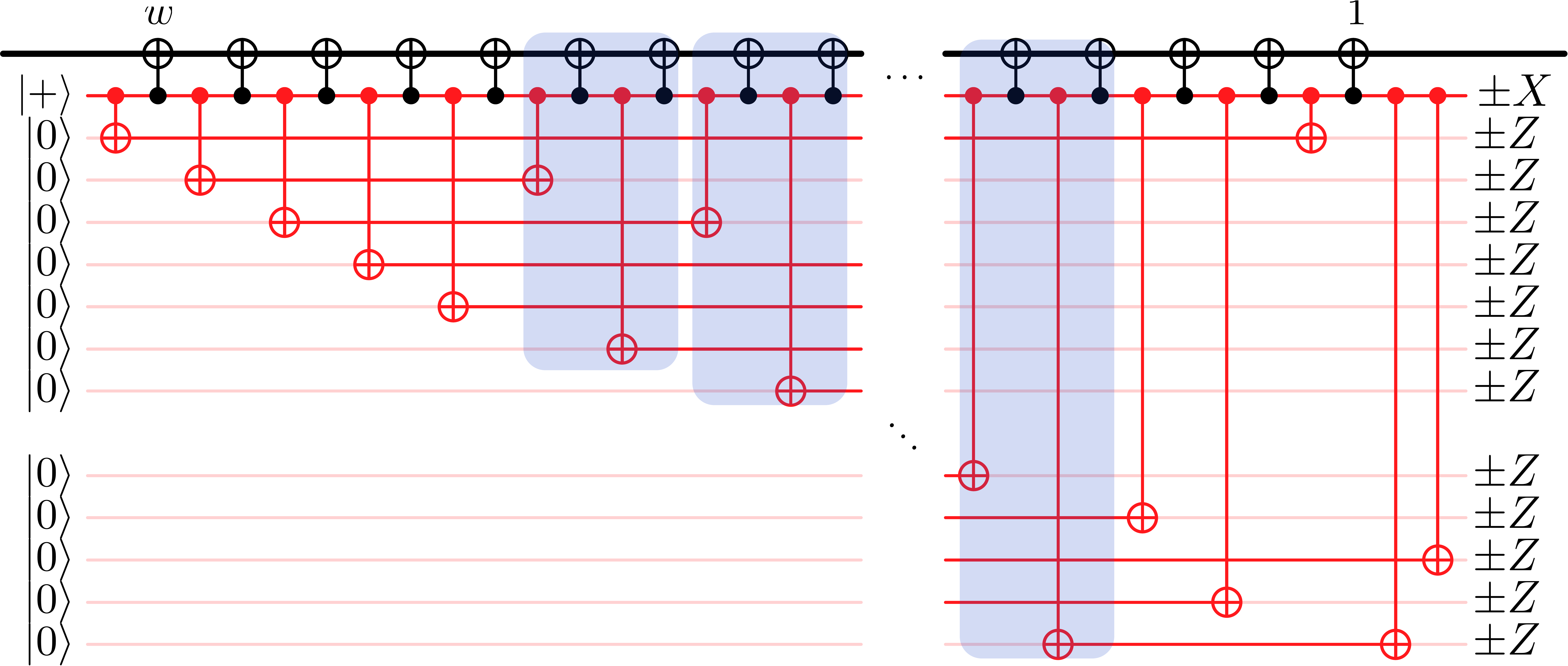} 
    \caption{Distance-five syndrome measurement with slow qubit reset for a weight-$w$ $X$ stabilizer.  The thick black wire indicates a register of $w$ qubits.  An opaque red wire implies the flag is currently inactive and not catching faults. The gates in the blue section can be repeated to construct stabilizer measurement circuits for arbitrary stabilizer weight $w$. At any instant, only five flags are active. Hence this circuit can be performed with fast qubit reset using only five flag qubits.} 
    \label{f:d5arbwsyndmeas}
\end{figure}

\begin{figure}
    \centering
    \includegraphics[width=.99\textwidth]{imagesChap2/dist7w17} 
    \caption{Distance-seven syndrome measurement with slow qubit reset for a weight-$17$ $X$ stabilizer. At any instant, only seven flags are active. Hence this circuit can be performed with fast qubit reset using only seven flag qubits.} 
    \label{f:d7w17}
\end{figure}

\chapter[Cat state preparation]{Fault-tolerant cat state preparation}
\label{chap:catstateprep}

In this section we show protocols for distance-three fault-tolerant cat state preparation with overhead that is logarithmic in the size of the cat state. Alternatively with fast reset, only one ancilla is needed, and it is used a logarithmic number of times. Cat states~\cite{Greenberger1989} have applications in many areas of quantum computing, including communication~\cite{Gisin07}, information processing~\cite{PanChen2012}, and error correction~\cite{Shor96}.  Besides practical applications, our results on cat state preparation are theoretically interesting since: $i$) we introduce the study of asymptotic estimates of qubit overhead for the fault-tolerant preparation of cat states of \textit{arbitrary} size, and, $ii$) ideas developed for cat state preparation may provide clues for the fault-tolerant preparation of logical states of more complex codes.  We explore these ideas further in \chapref{chap:encodedstateprep}.

We briefly summarize the results of this chapter. We first consider making cat state preparation deterministic by using flag techniques from \chapref{chap:SMcat}. Flag sequences developed in \secref{s:flagseq} are utilized to tolerate a single fault in a non-fault-tolerant preparation circuit. This leads to protocols requiring just one ancilla qubit, measured $m$ times, where $m$ is logarithmic in the size of the cat state. Flag circuits of higher distance, such as in \secref{s:SMcatSDSM5} or Ref.~\cite{chao2019flag}, can be used for arbitrary distance fault tolerance. We also considered preparing cat states by solely performing parity measurements on single-qubit states. Note that a stabilizer state may be ideally prepared by measuring a minimal set of stabilizer generators of the state. However fault tolerance usually requires the repetition of measurements to facilitate a majority voting. Here we consider cat states prepared by performing $ZZ$ measurements, and show that in addition to the $n-1$ generators that must be measured ideally, only $\lceil \log_2 (n/3) \rceil +1$ extra measurements are needed to tolerate a single fault.

\begin{table}
    \centering
    \caption{\label{f:resultsCSP} Cat state size for distance-$3$ preparation methods that use $m$ ancilla qubit measurements.}
    \begin{tabular}{ c @{\hspace{.65cm}} c }
        \hline \hline
        \textbf{Method} & \textbf{Cat state size~$w$} \\
        \hline 
        \textit{Deterministic} Error Correction & $w \leq 3 \, (2^m - 2m + 2)$ \\
        (\thmref{t:catstated3}) & $\depth = (w - 1) + 2^{m - 2}$ \\
        \textit{Adaptive} Error Correction &  $w \leq 3 \, (2^m - 2m + 3)$ \\
        (\thmref{t:adaptiveslowresetd3}) & \\[.15cm]
        Error Detection &  $w \leq 3 \cdot 2^{m -1}$ \\
        (\thmref{t:errordetcatd3}) & \\[.15cm]
        \textit{Parallelized} Error Correction & $w = 2m = 2 \cdot 2^j , j\in \mathbb{N}$ \\
        (\thmref{t:parallelcatd3s}) & $\depth = 2 + \log_2 w$\\
        \hline \hline
    \end{tabular}
\end{table}

\tabref{f:resultsCSP} contains bounds on the ancilla overhead for preparing weight-$w$ cat states fault-tolerantly with flags.  If the flag qubits can reset quickly, \thmref{t:catstated3} states that only one flag qubit is required and it needs to be reset and measured $m$ times.  Since the flag qubits operate independently, it is also possible to use $m$ flag qubits, with each one being measured once.  We further show how to use an adaptive circuit in \thmref{t:adaptiveslowresetd3} to marginally increase the number of flag patterns in use.

When preparing the cat states by stabilizer measurement, we consider two scenarios of qubit connectivity. First, we determine overhead bounds for state preparation with non-local gates. This facilitates parity measurements between data qubits that are far apart. Next, we consider a layout where qubits are connected as a 1-D chain. Along the chain, data and ancilla qubits alternate, and two-qubit gates are local. Under this model, the logarithmic overhead from the nonlocal scenario was lost, however a scalable construction for arbitrary distance fault tolerance was uncovered. We find an upper bound scaling as $O(nd)$ for $n$-qubit cat state preparation with distance-$d$ fault tolerance. We explicitly show the parity measurements needed for small cat state sizes, as this may be interesting from an experimental standpoint.

In the appendix, \secref{sec:SMcatPCSP} and \secref{sec:catparallelized} contain two additional protocols for distance-three weight-$w$ cat state preparation. In \thmref{t:errordetcatd3}, we show how to use postselection to prepare cat states while tolerating \textit{two} faults.  Finally \thmref{t:parallelcatd3s} details how to create low-depth circuits for distance-three fault-tolerant cat state preparation, which may be useful in technologies with many qubits or long two-qubit gate~times.

\section{Using flags to tolerate faults in a non-fault-tolerant circuit}
\label{sec:SMcatFDCSP}

We start by outlining the general procedure used to construct the fault-tolerant circuits in this section. An $n$-qubit cat state is a simple ancilla state defined with the stabilizer generators $\{X^{\otimes n}, Z_iZ_{i+1}$ $(\forall \, i<n)\}$.

\begin{procedure}
    \label{proc:gateSP}
    \textbf{Flags to tolerate faults for cat state preparation: }
    \begin{enumerate}
        \item Construct a non-fault-tolerant cat state preparation circuit by applying CNOT gates from a control $\ket +$ qubit on to a set of $n-1$ $\ket 0$ target qubits.
        \item Using flag sequences from \secref{s:flagseq}, perform parity measurements on targeted qubits to detect if a fault has caused an $X$ error of high weight. 
        \item Apply corrections based on the syndromes.
    \end{enumerate}
\end{procedure}

For preparing a two- or three-qubit cat state, any preparation circuit is automatically fault-tolerant, because every error has weight zero or one.  For example, on three qubits $XXI \sim IIX$, since $XXX$ is a stabilizer.  Fault tolerance becomes interesting for preparing cat states on $w \geq 4$ qubits.  

\begin{theorem} \label{t:catstated3}
For $m \geq 2$, one ancilla qubit, measured $m$ times, is sufficient to prepare a cat state on $w$ qubits fault-tolerantly to distance three, for
\begin{equation*}
w \leq 3 \, \big( 2^m - 2 m + 2 \big)\,.
\end{equation*}
\end{theorem}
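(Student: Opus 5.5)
The plan is to follow \procref{proc:gateSP}. Prepare the $w$-qubit cat state non-fault-tolerantly with a single $\ket +$ control qubit and $w-1$ sequential CNOTs onto $\ket 0$ targets. Then use the one ancilla, reset between rounds, to measure $m$ $Z$-parities $Z_{S_1},\dots,Z_{S_m}$ on chosen subsets $S_k$ of the data.

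The first step is to classify the damaging faults. An $X$ fault on the control wire after the $j$-th CNOT leaves an error $X$ on the suffix $\{j+1,\dots,w\}$. Modulo the stabilizer $X^{\otimes w}$, this is equivalent to $X$ on the prefix. Any other single fault in the preparation leaves an error of weight at most one. As $j$ runs from $0$ to $w$, the error runs through a closed cycle that starts and ends at the identity.

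The second step is to read this cycle off on the hypercube. Bit $k$ of the syndrome of a suffix error is the parity of $\abs{S_k\cap\{j+1,\dots,w\}}$. Moving the cut past qubit $i$ therefore flips exactly those bits $k$ with $i\in S_k$. Choose the $S_k$ so that most qubits lie in no check, and at every third qubit exactly one bit flips. Then the syndromes of the cut positions trace a Hamming path in $\{0,1\}^m$, and each syndrome vertex is shared by three consecutive cut positions. For that vertex, apply the correction of the middle cut. The two neighbouring cuts then differ from it by a single $X$, so the residual error has weight at most one, exactly as in the proof of \thmref{t:fastresetd3syndromemeasurement}.

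For the path, take the sequence of \lemref{t:slowresetdistance3flagsequences} with $a=m$: it has $2^m-2m+3$ vertices from $10^{m-1}$ to $0^{m-1}1$, and all interior vertices have weight $\geq 2$. Close it up through $0^m$ at both ends, because cuts near either end of the chain are equivalent to low-weight errors and need no correction. The two weight-one endpoints receive weight-one corrections. Counting three cut positions per usable vertex, after absorbing the boundary vertex that is shared between the two ends, gives $w\le 3(2^m-2m+2)$. I would pin down this constant by matching the first and last blocks against small cases such as $w=6$ with $m=2$ in \figref{f:slowresetd3w6CSP}.

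The third step is to check fault tolerance against single faults in the checks themselves. In a check, CNOTs go from data to ancilla. An $X$ fault on the ancilla flips only that measurement outcome, so the syndrome has weight one and the applied correction has weight $\le 1$. This is why interior vertices must have weight $\ge 2$. A $Z$ fault on the ancilla is harmless. An $X$ fault on a data qubit during the checks is a weight-one error. It flips only the bits of the checks containing that qubit that are measured after the fault. Because each checked qubit lies in exactly one $S_k$, this syndrome has weight $\le 1$, and again the correction has weight $\le 1$.

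The main obstacle I expect is making this last argument airtight. A weight-one data error plus a weight-one correction can give weight two, unless the correction coincides with the error up to the stabilizer $X^{\otimes w}$ or the pair is otherwise harmless. My first attempt would be to fix the weight-one corrections for $10^{m-1}$ and $0^{m-1}1$ to be $X$ on the extreme qubits, and to place the checked qubits so that a flipped single bit $e_k$ that is not an endpoint of the path is assigned the trivial correction. This is also why the $m-2$ interior weight-one vertices cannot be used, which accounts for the $-2m$ term. If that fails, I would exploit freedom in ordering the checks and in placing each boundary qubit between the blocks, aiming for a depth of $(w-1)+2^{m-2}$ by interleaving the check CNOTs with the preparation CNOTs.
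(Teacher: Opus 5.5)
Your construction matches the paper's: cut errors from the single control wire, checks on isolated qubits so that the cut syndromes trace the path of Lemma~\ref{t:slowresetdistance3flagsequences}, and the middle-of-three correction on each interior vertex. The gap is your rule that the endpoints $10^{m-1}$ and $0^{m-1}1$ receive weight-one corrections. This is exactly where your third step fails, and no choice of extreme qubits or ordering of checks can rescue it.

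Each $Z_{S_k}$ must commute with $X^{\otimes w}$, so $\abs{S_k}$ is even and at least two. A weight-one error $X_q$ with $q\in S_k$ yields precisely the syndrome $e_k$. Such an error comes from a single fault on the preparation CNOT targeting $q$, or from an $X\otimes X$ fault on $q$'s check CNOT. Suppose $e_k$ is corrected by some $X_a$. Then any $q\in S_k\setminus\{a\}$ leaves $X_aX_q$, which has weight two (weight $w-2$ up to $X^{\otimes w}$). Hence every weight-one syndrome, endpoints included, must receive the trivial correction. This is the rule the paper uses: no correction for weight-one patterns.

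This rule is also what fixes the count, which you left to be matched against small cases. The only cut errors of weight at most one are $\identity$, $X_1$ and $X_w$, i.e.\ your $j\in\{0,1,w-1,w\}$. So make the first two checked qubits adjacent, and likewise the last two. Then $0^m$, $10^{m-1}$ and $0^{m-1}1$ each cover exactly one of these three cuts and need no correction. Each of the $2^m-2m+1$ interior vertices covers three consecutive cuts and gets the middle correction. This gives
\[
w=3+3(2^m-2m+1)=3(2^m-2m+2).
\]
Fault tolerance now closes. Every single fault in the check sub-circuit or on a target gives a syndrome of weight at most one, with no correction applied. A vertex of weight at least two can only arise from a control-wire fault, which the middle correction reduces to weight at most one. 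For instance, in the paper's prefix labelling, $w=12$, $m=3$ is realized by the checks $\{1,8\},\{2,11\},\{5,12\}$. There $X_{[1]}\mapsto 100$ and $X_{[11]}\mapsto 001$ are left uncorrected, while $X_{[5]},X_{[6]},X_{[7]}\mapsto 111$ are corrected by $X_{[6]}$.
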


\noindent
Let $[m] = \{1, 2, \ldots, m\}$ and $X_S = \prod_{j \in S} X_j$.

\begin{proof}[Proof of \thmref{t:catstated3}]
\cref{f:catstated3} illustrates our construction for the cases $m = 3$ and $m = 4$.  In general, we prepare a $w$-qubit cat state using CNOT gates from the first qubit, so that the possible $X$ errors from a single fault are $\identity, X_1, X_{[2]}, X_{[3]}, \ldots$.  
We then compute parities of subsets of the qubits into the ancillas, following the flag sequence from \lemref{t:slowresetdistance3flagsequences} and \figref{f:slowresetdistance3flagsequences}.  Although for clarity \figref{f:catstated3} shows the $m$ parity checks being made in parallel, they can also be made sequentially with just one ancilla qubit.  

\begin{figure}
    \centering
    \subfloat[\label{f:catstatew12}]{
        \includegraphics[width = 0.7\textwidth]{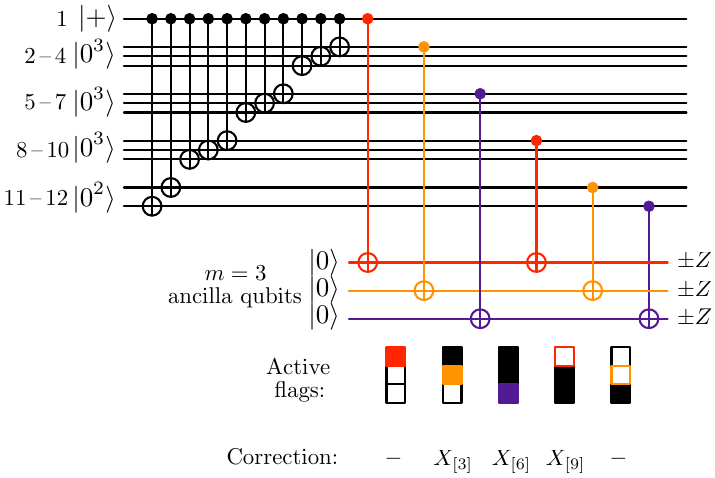}}
    \hspace{0.25cm}
    \subfloat[\label{f:catstatew30}]{
        \includegraphics[width = 0.95\textwidth]{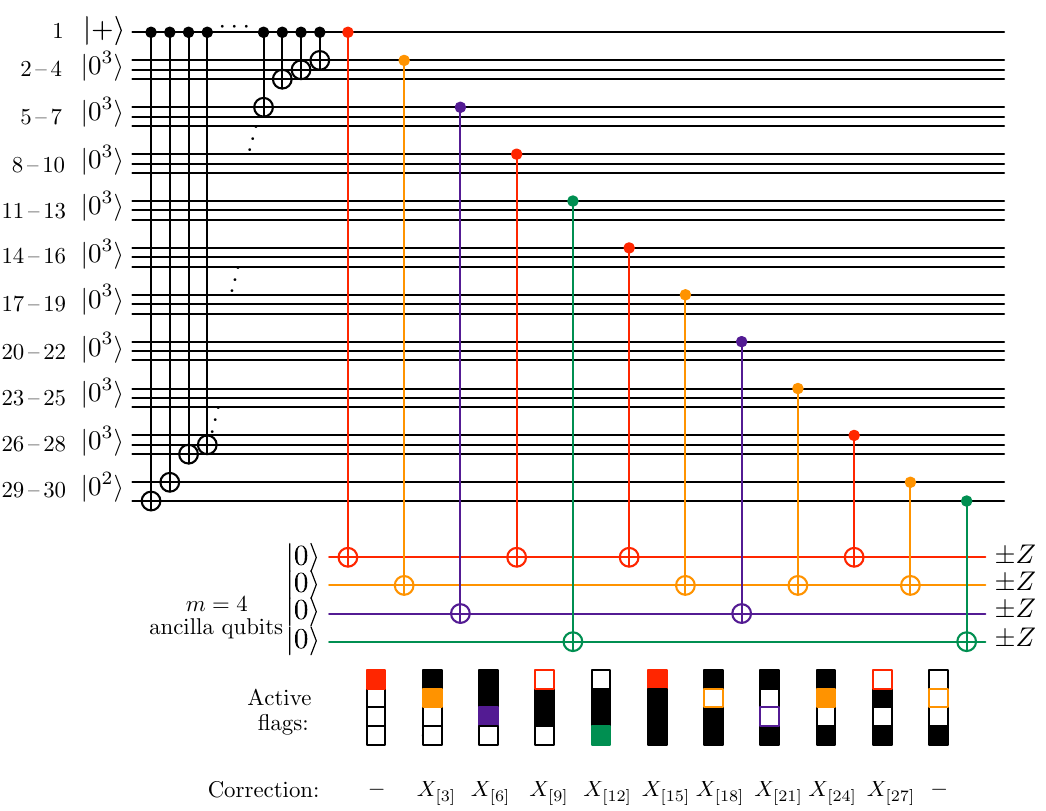}}
    \caption{Distance-three fault-tolerant cat state preparation circuits.  Note that, with fast reset, only one ancilla qubit is required.} 
    \label{f:catstated3}
\end{figure}

With the given correction rules, errors due to single faults are corrected up to possibly a weight-one remainder.  (For example, in \figref{f:catstatew12}, errors $X_{[5]}$, $X_{[6]}$ and $X_{[7]}$ all result in the parity checks $111$, for which the correction $X_{[6]}$ is applied.)  The circuit also tolerates faults within the parity-check sub-circuit, because a single fault here can flip at most one parity, and no correction is applied for the weight-one patterns.  
\end{proof}

By this method, the cat state is prepared in depth $w - 1$.  The depth of the parity check circuit increases exponentially as $2^{m - 2}$ for $m \geq 3$ if we consider slow reset ($a = m$).  This is evident from the flag sequences in \figref{f:slowresetdistance3flagsequences} as the maximum number of times any flag bit is switched.  The total depth of the circuit is then $(w - 1) + 2^{m - 2}$.  

Note that the construction from \thmref{t:catstated3} does not help for syndrome measurement, because the parity checks would in general become entangled with the data. However the ideas of Theorems~\ref{t:fastresetd3syndromemeasurement} and~\ref{t:syndromemeasurementd3slowreset} can also be applied to cat state preparation.  For example, just as in \figref{f:slowresetd3w6CSP} a circuit for measuring $X^{\otimes 6}$ with three ancilla qubits corresponds to a circuit to prepare a six-qubit cat state with two ancillas, similarly adapting the construction of \thmref{t:syndromemeasurementd3slowreset} allows preparing a $2 (2^a - 2 a + 3)$-qubit cat state using $a$ ancilla qubits each measured once.  \thmref{t:catstated3} shows a protocol needing just one ancilla qubit hence that is better for state preparation. This technique can however be very useful for the preparation of more complex ancilla states, as in general we cannot perform parity measurements after preparation as we do with cat states.

We can do slightly better than \thmref{t:catstated3} if we allow an \emph{adaptive} circuit, in which the parity checks are chosen based on the outcome of a flag qubit measurement.  For example, \figref{f:adaptiveslowresetd3w15} gives a circuit to prepare a $15$-qubit cat state using $m = 3$ measurements.  Here, the result of measuring the red ancilla determines how the other two ancillas are used.  

\begin{figure}
    \centering
    \includegraphics[width=.999\linewidth]{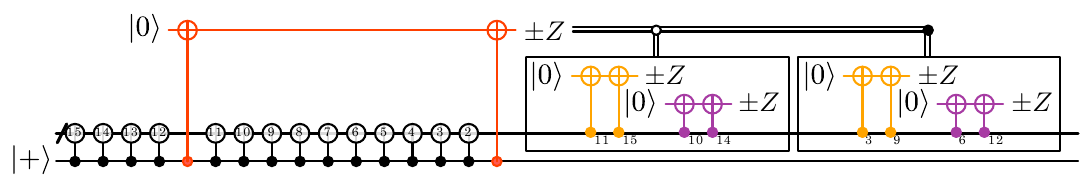}
    \caption{Circuit to prepare a $15$-qubit cat state by adaptive error correction, fault-tolerant to distance three.  Labels on the thick black wire indicate which data qubit in the block is being addressed as the control or target of the CNOT.  If a fault occurs while preparing the cat state on the $\ket +$ qubit, it is partially localized by the red flag ancilla.  The measurement result of this flag then determines a set of parity checks to completely localize a possible fault.  After all the ancilla qubits have been measured, corrections are applied based on \tabref{f:adaptiveslowresetd3w15corrections}.}
    \label{f:adaptiveslowresetd3w15}
\end{figure}

%\pagebreak%DEBUG

\begin{theorem} \label{t:adaptiveslowresetd3}
Using an adaptive circuit, for $m \geq 2$, one ancilla qubit, measured $m$ times, can be used to prepare a cat state on $w$ qubits fault-tolerantly to distance three, for
\begin{equation*}
w \leq 3 \, \big( 2^m - 2 m + 3 \big)\, .
\end{equation*}
\end{theorem}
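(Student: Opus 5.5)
The plan is to generalize the $15$-qubit circuit of \figref{f:adaptiveslowresetd3w15}, reusing the counting of \thmref{t:catstated3}. As there, prepare the $w$-qubit cat state by CNOTs from the first qubit. Then a single $X$ fault on the control wire leaves one of the prefix errors $\identity, X_1, X_{[2]}, \ldots, X_{[w-1]}$. Since $X^{\otimes w}$ is a stabilizer, these are equivalent to suffix errors, and a residual of weight one is tolerated. So it is enough to localize the fault to within a window of three consecutive prefixes, e.g.\ $X_{[k-1]}, X_{[k]}, X_{[k+1]}$, and apply the middle correction $X_{[k]}$. This is why the bound counts groups of three qubits: we must show that $2^m - 2m + 3$ windows can be distinguished with $m$ measurements.

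The first measurement (the red ancilla) is a flag on the control wire during preparation. It is active over a central stretch of the CNOT sequence, so its outcome $b \in \{0,1\}$ says whether the fault lies in that stretch or outside it. Conditioned on $b$, the remaining $m-1$ measurements are parity checks on the data, chosen adaptively as in \procref{proc:gateSP}. Their outcome $p \in \{0,1\}^{m-1}$ pins the fault down within the region selected by $b$. The key observation is that the two branches can reuse the same parity patterns $p$ for different windows, since they are told apart by $b$. This is the source of the gain over \thmref{t:catstated3}.

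The fault-tolerance constraints are these. A single fault in the checking circuitry flips at most one bit of the full outcome $(b,p)$, leaving the data error trivial or of weight one. Hence every full pattern of Hamming weight at most one must receive a correction of weight at most one. This holds in particular for $b=1,\ p=0$, which arises from a faulty flag measurement alone. Every full pattern of weight at least two may carry an arbitrary multi-qubit window correction. In addition, a fault on the control wire near the flag's activation or deactivation CNOT must not produce an ambiguous $b$. I would handle this as in \thmref{t:syndromemeasurementd3slowreset}: place the flag CNOTs between data CNOTs so that both boundary windows get weight-one corrections, making either value of $b$ harmless.

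The main obstacle is the explicit allocation of windows to patterns, and checking that it reaches exactly $2^m - 2m + 3$ windows. This is one more than the non-adaptive count $2^m - 2m + 2$. The parity checks in each branch must be realizable as subset parities that are monotone along the prefix order, and the region boundaries must not create collisions. I would first try the following assignment. Reuse the parity checks underlying \lemref{t:slowresetdistance3flagsequences}, in dimension $m-1$, in the $b=0$ branch for the outer prefixes. In the $b=1$ branch, assign the inner windows to nonzero patterns $p$, including the weight-one ones, since these are safe once $b=1$. Then count the weight-$\le 1$ patterns that serve as the two boundary windows. I would check the resulting count against the $m=3$ table \tabref{f:adaptiveslowresetd3w15corrections} ($w=15$, five windows) before doing the induction on $m$.
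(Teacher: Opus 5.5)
Your architecture is the paper's: one flag on the control wire, then $m-1$ parity checks chosen according to its outcome $b$, with the sequence of \lemref{t:slowresetdistance3flagsequences} in the $b=0$ branch. But the proof stops exactly where the bound is decided, and the choices you do commit to fall short.

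First, the flag must not sit on a central stretch. Switch it on just before $\CNOT_{1,k+1}$ and off just after the \emph{last} gate $\CNOT_{1,2}$, with $k=3\cdot 2^{m-1}-2$. Then for $b=1$ the possible errors are $\identity,X_1,X_{[2]},\ldots,X_{[k+1]}$, which are exactly $2^{m-1}$ consecutive triples. \emph{All} $2^{m-1}$ patterns $p$ are then used, via Gray-code parities between qubits $3j$, including $p=0$. The outcome $b=1,p=0$, which a lone flag fault produces, is assigned the triple $\{\identity,X_1,X_{[2]}\}$ with the weight-one correction $X_1$. Reserving $p=0$, as you do, wastes a window.

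Second, your boundary fix cannot work at the activation end. An $X$ fault on the control wire just before or just after the flag's first CNOT gives the same high-weight error $X_{[k+1]}$, with $b=0$ or $b=1$ respectively. No weight-one correction absorbs it. What is needed is that $X_{[k+1]}$ lie in a window in \emph{both} branches: the last $b=1$ triple $\{X_{[k-1]},X_{[k]},X_{[k+1]}\}$ and the first $b=0$ triple $\{X_{[k+1]},X_{[k+2]},X_{[k+3]}\}$. With the flag ending at the last CNOT, the other boundary error is just $X_1$, which is harmless when $b=0$. With a central flag, both boundaries are far from $\identity$ and each costs a shared prefix. Your allocation then gives only $3(2^m-2m+3)-2$. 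At $m=3$ that is $13$ rather than $15$, and no central placement helps there, since $11$ is the only weight-$\ge 2$ pattern on two bits.

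Once the flag is moved, the $b=0$ branch must handle $X_{[k+1]},\ldots,X_{[w]}\sim\identity$. Parity checks on qubits $k$, $k+1+3j$ ($0\le j\le J$) and $k+2+3J$ trace the lemma's path in dimension $m-1$ as a closed walk from $0$. Each of its $J=2^{m-1}-2m+3$ interior vertices covers a triple, and the final weight-one vertex covers $X_{[w-1]}\sim X_w$. Weight-one patterns in this branch must get \emph{no} correction, not merely a weight-$\le 1$ one, because an $X$ fault on any qubit of the flipped check produces them. This gives $w=(k+2)+3J=3(2^m-2m+3)$ directly from the lemma, with no further induction on $m$.

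Note that this uses the lemma in dimension $m-1\ge 2$, i.e.\ $m\ge 3$. For $m=2$ the $b=0$ branch has no weight-$\ge 2$ pattern, so the construction gives only $w=6$, not the $9$ the statement claims.
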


\begin{figure}
\centering
    \subfloat[$w = 15, a = 3$]{
        \includegraphics[width=0.37\textwidth]{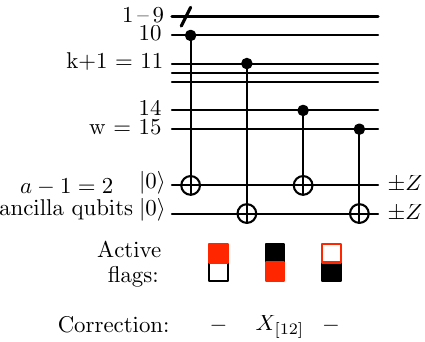}}
    \hspace{0.25cm}
    \subfloat[$w = 33, a = 4$]{
        \includegraphics[width=0.45\textwidth]{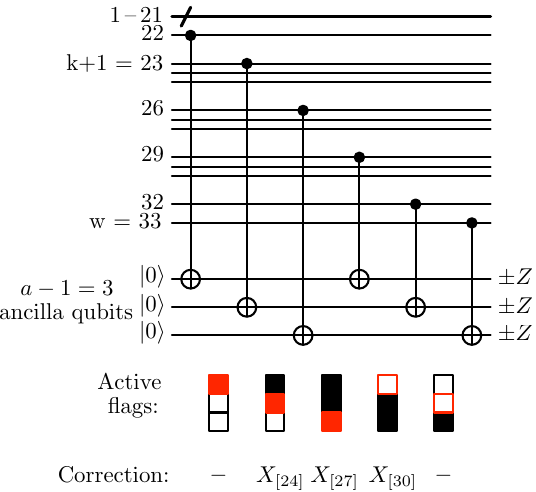}}
    \hspace{0.25cm}
    \subfloat[$w = 75, a = 5$]{
        \includegraphics[width=0.76\textwidth]{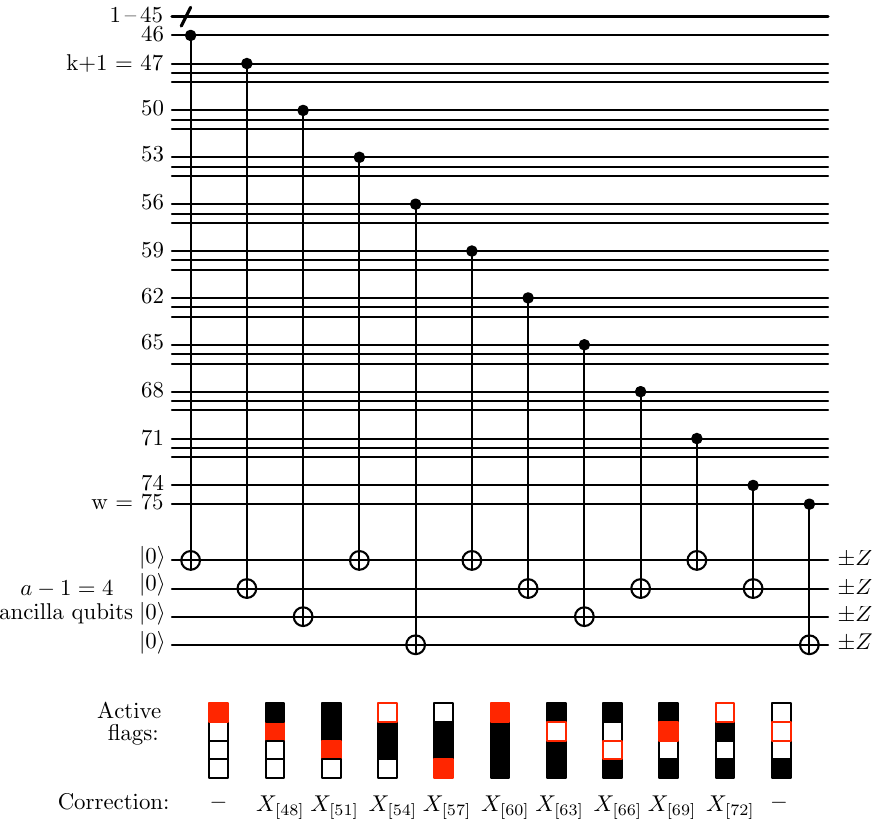}}
    \caption{If the red ancilla flag in \figref{f:adaptiveslowresetd3w15} is not triggered, these circuits are used to find and correct a possible error.  The flag sequences (from \figref{f:slowresetdistance3flagsequences}) and corresponding corrections are listed at the bottom.   Note that these sequences are nonadaptive, and can be used either with $a$ ancilla qubits in a slow reset model, or with just one ancilla qubit in a fast reset model, since all the CNOT gates commute.  
    } 
    \label{f:adaptived3noflagcorrections}
\end{figure}

\begin{proof}
Our construction will follow the same basic structure as the circuit in \figref{f:adaptiveslowresetd3w15}.  
Prepare the $w$ data qubits as $\ket{{+} 0^{w-1}}$, then apply $\CNOT_{1,w},$ $ \CNOT_{1,w-1}, \ldots, \CNOT_{1,2}$ to get a cat state.  Let $k = 3 ( 2^{ m - 1 }) - 2 $.  Just before $\CNOT_{1,k+1}$ and just after $\CNOT_{1,2}$, apply CNOTs into the first ancilla qubit, the red qubit in \figref{f:adaptiveslowresetd3w15}, and measure it.  

The remainder of the circuit depends on the measurement result.  If it is~$1$, then a fault has been detected.  The error on the cat state can be one of 
\begin{equation*}
\identity, X_1, X_{[2]}, \;\;\; X_{[3]}, X_{[4]}, X_{[5]}, \;\;\; \ldots, \;\;\; X_{[k-1]}, X_{[k]}, X_{[k+1]} \, .
\end{equation*}
The correction procedure needs to determine in which of the above $1 + \tfrac{k-1}{3}$ groups-of-three the error lies; then for any error in $\{X_{[3 j]}, X_{[3 j + 1]}, X_{[3 j + 2]}\}$ the correction $X_{[3 j + 1]}$ works.  Perhaps the easiest way to locate the error is by binary search using the Gray code in \lemref{t:graycode}, e.g., by computing parities between qubits $3 j$ for~$j \in \{ 1, 2, \ldots, 1 + \tfrac{k-1}{3} \}$.  Since the measurement of the red ancilla could have been incorrect, it is important that the all-$0$s outcome of the binary search correspond to the $\identity, X_1, X_{[2]}$ error triple, as in~\tabref{f:adaptiveslowresetd3w15corrections}.  Using $m - 1$ measurements, we can search $2^{m-1}$ possibilities, which indeed is $1 + \tfrac{k-1}{3}$.  (The search circuit can also be made nonadaptive, as in \figref{f:adaptiveslowresetd3w15}.)  

%\begin{table}
%\caption{Parity checks and correction rules when the red flag ancilla in \figref{f:adaptiveslowresetd3w15} is measured as~$1$.} \label{f:adaptiveslowresetd3w15correctionsflag}
%\begin{tabular}{ @{\hspace{.2cm}}c @{\hspace{.4cm}} c @{\hspace{1.5cm}} c@{\hspace{1.1cm}} c}
%\hline
%\hline
% & & & \\[-0.35cm]
%$3 \oplus 9$ & $6 \oplus 12$ & Possible errors & Correction \\
%\hline
% & & & \\[-0.3cm]
%$0$ & $0$ & $\identity, X_1, X_{[2]}$ & $X_1$ \\ 
%$1$ & $0$ & $X_{[3]}, X_{[4]}, X_{[5]}$ & $X_{[4]}$ \\ 
%$1$ & $1$ & $X_{[6]}, X_{[7]}, X_{[8]}$ & $X_{[7]}$ \\ 
%$0$ & $1$ & $X_{[9]}, X_{[10]}, X_{[11]}$ & $X_{[10]}$ \\ 
% & & & \\[-0.35cm]
%\hline 
%\hline
%\end{tabular}
%\end{table}

Next consider the case that the first measurement result is~$0$, so no fault has been detected.  The error on the cat state can be one of $X_{[k+1]}, X_{[k+2]}, \ldots, X_{[w]} \sim \identity$.  We again use the remaining $m - 1$ ancilla qubits to measure parities of subsets of cat state qubits.  Since there is no guarantee of a fault having occurred yet, we use flag sequences from \lemref{t:slowresetdistance3flagsequences}, where the length of the weight-at-least-two flag sequence is $J = 2^{m - 1} - 2 ( m - 1 ) + 1$.  The parity checks are now done between qubits $\{ k, k + 1 + 3 j, k + 2 + 3 J \} $ for $ j \in \{ 0, 1, \ldots , J \}$, as shown in \figref{f:adaptived3noflagcorrections} and \tabref{f:adaptiveslowresetd3w15corrections}.  We do not allow weight-one flag patterns to be able to correct any errors since they may also be triggered by a measurement fault on any one of the data qubits involved in the parity check.

\begin{table}
    \caption{Possible data errors and associated corrections for the different observed flag patterns in \figref{f:adaptiveslowresetd3w15}. } 
    \label{f:adaptiveslowresetd3w15corrections}
    \centering
    \begin{tabular}{c @{\hspace{0.35cm}} c c c @{\hspace{0.35cm}} c @{\hspace{0.35cm}} c}
        \hline \hline
         % & & & & & \\[-0.33cm]
        Red flag & \multicolumn{2}{c}{Parity checks} & & Possible errors & Correction \\
        \hline
         % & & & & \\[-0.3cm]
         $1$ & $3 \oplus 9$ & $6 \oplus 12$ & & &  \\
         \cline{2-3}
         % & & & & & \\[-0.3cm]
         & $0$ & $0$ & &  $\identity , X_1, X_{[2]}$ & $X_1$ \\ 
         & $1$ & $0$ & &  $X_{[3]}, X_{[4]}, X_{[5]}$ & $X_{[4]}$ \\ 
         & $1$ & $1$ & &  $X_{[6]}, X_{[7]}, X_{[8]}$ & $X_{[7]}$ \\ 
         & $0$ & $1$ & &  $X_{[9]}, X_{[10]}, X_{[11]}$ & $X_{[10]}$ \\[0.25cm]
         $0$ & $11 \oplus 15$ & $10 \oplus 14$ & & &  \\
         \cline{2-3}
         % & & & & & \\[-0.3cm]
         & $0$ & $0$ & & $\identity$ & None \\ 
         & $1$ & $0$ & &  $X_{11}, X_{15}, X_{[14]}$ & None \\
         & $1$ & $1$ & & $X_{[11]}, X_{[12]}, X_{[13]}$ & $X_{[12]}$ \\
         & $0$ & $1$ & & $ X_{10}, X_{14}$ & None \\
         % & & & & & \\[-0.3cm]
        \hline \hline
          % & & & & & \\[-0.7cm] %DEBUG
    \end{tabular}
\end{table}

Consolidating, we are allowed up to $3 J + 1$ CNOTs before the red ancilla is initialized, and up to $k$ CNOTs in the monitored region of the red ancilla. In total we can create a cat state on up to 
\begin{equation*}
w \leq 3 J + k + 2 = 3 \, \big( 2^m - 2 m + 3 \big)
\end{equation*}
qubits, with $m$ total measurements.
\end{proof}

We also tested protocols where multiple flags are used for the initial partial localization of a fault (in place of the red flag qubit).  We found no improvement to our bounds on ancilla overhead.  It appears that ancillas are better used in the parity checks than for partial fault localization. 

In this section we show multiple cat state preparation circuits that have been made tolerant to a single fault by using flags. Moreover, given the non-local connectivity, the number of ancillas needed is shown to be logarithmic in the cat state size. We derive these upper bounds via combinatorial proof techniques, and hope our results inform the theoretical analysis of the asymptotic resource requirements of general state preparation. In the next section, we consider the preparation of cat states without two-qubit gates between data qubits. Instead, cat states are prepared solely by measuring stabilizers.

\section{State preparation by measurement} 
\label{sec:catbymeas}

As before, we outline the general procedure we use in this section to prepare $n$-qubit cat states by stabilizer measurements.
\begin{procedure}
    \label{proc:measureSP}
    \textbf{Overcomplete sequences to tolerate measurement faults: }
    \begin{enumerate}
        \item Prepare all the qubits in the $X$-basis, ensuring that we can then project into the $+1$-eigenspace of the $X^{\otimes n}$ operator.
        \item Measure $Z$-type stabilizers to project into one of the unique $X$ error spaces. For fault tolerance, an overcomplete sequence of stabilizers is chosen according to the techniques developed in Refs.~\cite{delfosse2020short, Delf22}.
        \item Based on the observed syndrome, apply an $X$ correction to return to the desired codespace.
    \end{enumerate}
\end{procedure}
Careful gate scheduling and the use of extra ancillas allows measurements to be parallelized, reducing circuit depth. In \figref{fig:catsavings}, we show the reduced number of $ZZ$ parity measurements needed for our determinstic circuits, compared to a Shor-type postselective protocol. The first graph shows the number of parity measurements needed with all-to-all connectivity, and the subsequent graph considers cat state preparation on a 1-D chain. 

An alternative to \procref{proc:measureSP} is to initialize all the qubits in the $Z$ basis and measure the $X^{\otimes n}$ operator fault-tolerantly. Only one operator needs to be measured, however performing the measurement fault-tolerantly can be quite expensive. Flag techniques that are fault-tolerant to arbitrary distance can require many extra qubits~\cite{chao2019flag,Anker22}. To cut costs, we consider measuring the $Z$-type stabilizers after initializing all the qubits in the $X$-basis. This requires only one quickly resetting ancilla qubit if the measurements are performed sequentially, which also reduces the circuit complexity.

Note that the fault tolerance setting considered here is not entirely the same as that in Refs.~\cite{delfosse2020short, Delf22}. In those references, an overcomplete sequence of measurements was determined for error correction, where it is clear that logical errors must be suppressed. With cat state preparation, the objective is simply to reduce the residual error weight at the end of the state preparation circuit. Moreover from a technical standpoint, a distance-$d$ fault-tolerant protocol from Refs.~\cite{delfosse2020short, Delf22} tolerates up to $t=\lfloor \frac{d-1}{2} \rfloor$ combined input errors and internal faults. In our state preparation circuits, we must tolerate any number of input errors, while also tolerating up to $t$ internal faults.

\subsection{Non-local measurements}
\label{subsub:nonlocal}

Qubit layouts with non-local (especially all-to-all) connectivity are great tools for analytic resource estimates, however are difficult to construct experimentally. In this subsection, we first attempt to understand the limitations of non-local connectivity by deriving tight upper bounds for the number of parity measurements needed to tolerate one fault. We then apply these techniques to optimize the overhead when considering local connectivity. In doing so, we observe a scalable method of choosing parity measurements that may be fault-tolerant to arbitrary distance.

\begin{figure}
    % \hspace{-0.4cm}
    \centering
    \includegraphics[width=.75\textwidth]{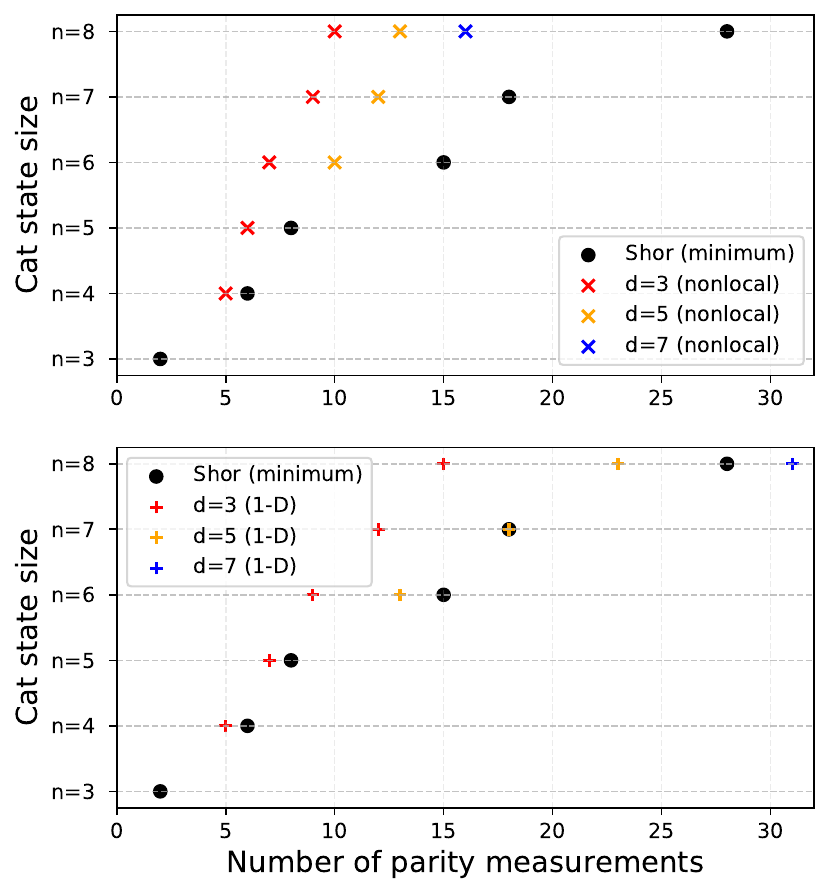}
    \caption{Reduced number of $ZZ$ parity measurements required to prepare a cat state fault-tolerantly. We first consider non-local connectivity, proving that $n + \lceil log_2(n/3)\rceil$ measurements are sufficient to tolerate one fault. By random search, we found sequences of parity checks that can tolerate two or three faults too. In the second graph, qubits are laid on a 1-D chain, and CNOT gates are local. The number of parity measurements is generally larger than with non-local connectivity. The $n=8$, $d=7$ solution for local measurements is conjectured but not proved.}
    \label{fig:catsavings}
\end{figure}

\begin{theorem}
Using~\procref{proc:measureSP}, length-$n$ cat state preparation tolerating a single fault needs at most $n + \lceil log_2(n/3)\rceil$ parity measurements. 
\end{theorem}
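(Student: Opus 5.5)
The plan is to sequence the $ZZ$ measurements so that the first $n-1$ of them form a spanning set of $Z$-type generators, and then append $\lceil \log_2(n/3)\rceil + 1$ redundant parities. These extra parities play the role of the parity checks in the proof of \thmref{t:catstated3}: they localize a single fault to one of about $n/3$ groups of three consecutive ``cut points''.

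\textbf{Step 1: the main block of measurements.} After preparing $\ket{+}^{\otimes n}$, measure $Z_1Z_2, Z_2Z_3, \ldots, Z_{n-1}Z_n$ in order, using one quickly reset ancilla. Without faults, the outcomes $s_1,\ldots,s_{n-1}$ determine a unique $X$ correction modulo the stabilizer $X^{\otimes n}$, so arbitrary input error spaces (the random projection) are handled automatically.

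\textbf{Step 2: reduce single faults to a canonical family.} A flip of the outcome $s_j$ makes the decoder apply a correction that differs from the correct one by $X_{[j]}$. This is exactly the family $\identity, X_1, X_{[2]}, \ldots$ that appears in \thmref{t:catstated3}. An $X$ fault on data qubit $i$ occurring between two measurements (for instance, propagated from the ancilla during a CNOT) is seen only by the later checks on qubit $i$. It is therefore equivalent to an input $X_i$ plus a flip of one earlier outcome, and is again of the form weight one times $X_{[j]}$. A $Z$ fault has no effect, since it commutes with $Z$ checks and the target is the $X^{\otimes n}$ eigenspace. So after the main block, every single fault leaves a residual error $X_{[j]}$ for some unknown $j$, up to weight one. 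The goal is now to learn $j$ only up to a group $\{3k, 3k+1, 3k+2\}$, because applying $X_{[3k+1]}$ then leaves weight at most one.

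\textbf{Step 3: the extra measurements.} Append parities of the form $Z_{3a}Z_{3b}$ that act as Gray-code bits (\lemref{t:graycode}) over the $\lceil n/3\rceil$ groups. The bits are compared with the values predicted by $s_1,\ldots,s_{n-1}$, and the mismatch pattern identifies the group of $j$. Since $\lceil n/3\rceil \le 2^{\lceil\log_2(n/3)\rceil}$ groups must be distinguished, this needs $\lceil \log_2(n/3)\rceil$ parities. One further parity, spanning the whole chain, detects whether any fault occurred, mirroring the red flag in \thmref{t:adaptiveslowresetd3}. The group containing $j=0$ (no fault, or a flip near the start) is assigned the all-zero mismatch pattern. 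Adding the $n-1$ generators gives $n + \lceil\log_2(n/3)\rceil$ measurements in total.

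\textbf{Step 4 and main obstacle.} The hardest part is showing that a single fault inside the extra block is harmless. Such a fault can flip one redundant outcome, or inject a weight-one $X$ that only some of the extra checks see. The first attempt is to follow \thmref{t:catstated3} and apply no multi-qubit correction for mismatch patterns of weight one: a single flipped redundant bit yields exactly such a pattern. A genuine fault in the main block must then be shown to always produce either the all-zero pattern (when its group needs no correction) or a pattern of weight at least two. This requires choosing the ordering of the Gray-code parities together with the global check, in the spirit of \lemref{t:slowresetdistance3flagsequences}, so that the weight-one patterns are never needed as group labels. I expect this compatibility to hold, with the extra global parity separating the ambiguous cases, but it has to be checked explicitly. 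The small cases should also be verified exhaustively.
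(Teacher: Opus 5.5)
Your measurements are the same as the paper's. The path together with the spanning check $Z_1Z_n$ is exactly its cycle of $n$ checks, followed by Gray-code parities, and your count is right. But the proof has a genuine hole at the step you flag as the main obstacle, and the criterion you propose for closing it is not the right one.

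Requiring labels of weight at least two, in the style of \lemref{t:slowresetdistance3flagsequences}, does not control a fault \emph{inside} the extra block that leaves an $X$ on a data qubit read by several later extra checks. Qubit $n$ is such a qubit: it lies in $Z_1Z_n$, and as padding in every Gray parity whose bit is $1$ on the last group.

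For example, take $n=18$ with Gray bits $Z_6Z_{18}$, $Z_{12}Z_{18}$, $Z_3Z_9Z_{15}Z_{18}$, measured in this order and then $Z_1Z_{18}$. A CNOT fault in the second check that leaves $X_{18}$ (after qubit $18$ is read) flips exactly the last two outcomes. That pattern is the weight-two label of the group $\{3,4,5\}$, so your rule applies $X_{[4]}$ and leaves a weight-five error.

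The missing idea, which is how the paper argues, is to measure $Z_1Z_n$ immediately after the path. The first $n$ checks then form a cycle in which every qubit is checked twice, so input errors give even-weight syndromes. Any single fault among these $n$ checks does one of three things:
\begin{itemize}
\item it acts like an input error;
\item it leaves a weight-one residual without changing the syndrome;
\item it makes the observed syndrome differ in exactly one bit from that of the actual error, so the syndrome is odd.
\end{itemize}

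Then decode by cases.
\begin{itemize}
\item If the syndrome is even, apply the standard correction and ignore the extra parities altogether. Any fault among them then leaves weight at most one.
\item If it is odd, the one allowed fault has been spent and the extra parities are noiseless. All $2^{\lceil\log_2(n/3)\rceil}$ Gray patterns, weight-one ones included, may then label the groups of three cyclically consecutive cut points.
\end{itemize}
It is this gating, not a weight constraint on labels, that makes $\lceil\log_2(n/3)\rceil$ extra checks suffice.

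Separately, Step 3 fails as written for larger $n$. A two-qubit check $Z_xZ_y$ only reports whether the cut point lies in one arc. So $a$ such checks cut the cycle of $n$ cut positions into at most $2a$ arcs of constant outcome pattern. Each arc must fit in a window of three consecutive positions for one correction to serve it, which forces $n\le 6a$. This is false for every $n\ge 19$ when $a=\lceil\log_2(n/3)\rceil$. The low-order Gray bits are unions of several intervals and must be measured as higher-weight $Z$-parities on the boundary qubits. Compare the weight-four checks $(2,6,11,16)$ for $n=19$ and $(2,8,14,20)$ for $n=24$ in \tabref{tab:nonlocalparities}.
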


\begin{proof}
We partition the proof into first determining the corrections for all the even-weight syndromes, followed by a rigorous analysis of the correction for odd-weight syndromes. 

Prepare all the qubits as $\ket+$. The first $n$ measurements are of the $n-1$ stabilizer generators of the code ($Z_i Z_{i+1}$ $(\forall i<n)$) followed by $Z_1 Z_n$. This ensures every qubit is checked twice. Every $X$ error space (in the absence of any faults) is now characterized by a unique even-weight syndrome. The correction for these syndromes are the $X$ errors that cause them.

We now prove that a single fault during these measurements causes an odd-weight syndrome. Note that a measurement fault on any of the $n$ syndrome qubits causes a weight-one syndrome. At this stage, there are also exactly $n$ possible locations for $X$ faults (on each qubit, after its first check), which result in a weight-one error and a weight-one syndrome. Since we already consider the syndrome of every $X$ error space, the effect of an $X$ fault is the same as that of a measurement fault in a different error space. This leaves us with only $n$ unique faults, affecting all the measurement error spaces. A fault at any of these $n$ locations converts an even-weight syndrome to odd weight. Each odd-weight syndrome thus occurs as a result of a projection onto an $X$ error space and one of the $n$ different measurement faults. 

Looking closer at the error spaces signaled by an odd-weight syndrome, they consist of a set of errors $\{e_i + cyc(f,i) | f \in F\}$, for an $X$ error space $e_i$ and 
\begin{align*}
F= \{ & 0 0 0\mathellipsis 0 0, \\
   & 1 0 0\mathellipsis 0 0, \\
   & 1 1 0\mathellipsis 0 0, \\
   & \mathellipsis, \\
   & 1 1 1\mathellipsis 0 0, \\
   & 1 1 1\mathellipsis 1 0 \}
\end{align*}
where $cyc(f,i)$ is a uniform right cyclic shift of the bitstring $f$ by $i$ indices, where the value of $i$ depends on the error space $e_i$. Note that $i$ does not index the error spaces. The resulting $n$ distinct error spaces can then be partitioned into $2^{\lceil log(n/3)\rceil}$ groups, using $\lceil log(n/3)\rceil$ parity measurements. Using techniques from \ref{t:catstated3}, and with an appropriate Gray code, we can choose the qubits taking part in the parity checks. Note that each syndrome due to these extra parity checks must signals one of up to three overlapping errors, allowing us to choose the middle of the three errors as a correction. \cref{tab:nonlocalparities} shows a choice for these parity checks up to $n=24$. For every odd-weight syndrome in the first $n$ parity measurements, we have found an assignment of corrections. Ignore all syndromes that do not trigger any of the first $n$ measurements. This completes the proof as all possible syndromes have assigned corrections.
\end{proof}

By performing measurements in parallel, all of the parity measurements can be implemented in three rounds, with $n/2$ extra qubits instead of one. Through exhaustive numerical search, we verified that for $n \in \{4,5,6 \}$, the minimum required number of non-local two-qubit parity measurements matches our upper bound. We conjecture that the lower bound to tolerate a fault is at least $n$ measurements.

\setlength{\tabcolsep}{2pt}
\begin{table}
    \centering
    \begin{tabular}{r| l}
   \multicolumn{2}{c}{$\mathbf{d=3}$} \\
    $n=4$ & $\{ 1 \mathellipsis n \},  (2,3)$ \\
    $n=5$ &  $\{ 1 \mathellipsis n \},  (2,4)$  \\
    $n=6$ &  $\{ 1 \mathellipsis n \},  (2,5)$  \\
    $n=7$ &  $\{ 1 \mathellipsis n \},  (2,4),(3,6)$  \\
    $n=8$ & $\{ 1 \mathellipsis n \},  (2,5), (3,7)$  \\[.2cm]    
    $n=12$ & $\{ 1 \mathellipsis n \},  (2,8), (5,11)$  \\
    $n=13$ & $\{ 1 \mathellipsis n \},  (2,8), (4,10), (6,12)$  \\[.2cm]
    $n=18$ & $\{ 1 \mathellipsis n \},  (2,11), (5,14), (8,17)$  \\
    $n=19$ & $\{ 1 \mathellipsis n \},  (2,6,11,16), (4,14), (8,18)$  \\[.2cm]
    $n=24$ & $\{ 1 \mathellipsis n \},  (2,8,14,20), (5,17), (11,23)$  \\[.4cm]
    \multicolumn{2}{c}{$\mathbf{d=5}$} \\
    $n=6$ &  $\{ 1 \mathellipsis n \},  (2,5), (1,5),(1,6),(3,5)$  \\
    $n=7$ &  $\{ 1 \mathellipsis n \},  (2,6),(3,5),(1,3),(2,4),(3,6)$  \\
    $n=8$ &  $\{ 1 \mathellipsis n \},  (2,5),(3,8),(2,8),(4,8),(2,7)$  \\
    $n=9$ &  $\{ 1 \mathellipsis n \},  (2,6),(2,8),(5,9),(4,8),(7,9)$  \\
    $n=10$ & $\{ 1 \mathellipsis n \},  (3,8), (5,10), (3,10), (1,6), (1,3)$  \\[.4cm]
    \multicolumn{2}{c}{$\mathbf{d=7}$} \\   
    $n=8$ & $\{ 1 \mathellipsis n \},  (2,5),(3,8),(2,8),(4,8),(2,7),(3,6),(1,8),(3,5)$  \\
    \end{tabular}
    \caption{Sequences of parity measurements needed to prepare cat states of size $n$ fault-tolerantly to distance $d$. We assume non-local gates are possible, permitting parity measurement of distant data qubits. The sequences for the distance-five and -seven cases were generated at random, while the corrections were calculated and fault tolerance was verified using Mathematica programs. $\{ 1 \mathellipsis n \}:= (1, 2), (2, 3), \mathellipsis (n-1, n), (n, 1)$.}
    \label{tab:nonlocalparities}
\end{table}

\subsubsection{Simulations}
\label{subsubsec:nonlocal}

To verify the correctness of our state preparation circuits and compare against previous work, we perform Monte Carlo simulations of our circuits using the Gottesman-Knill framework under an independent circuit-level noise model as described below.

\begin{itemize}[leftmargin=*]
\item With probability $p$, the preparation of $\ket 0$ is replaced by $\ket 1$ and vice versa---similarly $\ket +$ and $\ket -$.
\item With probability $p$, $\pm X$ or $\pm Z$ measurement on any qubit has its outcome flipped.
\item With probability $p$, the two-qubit CNOT gate is followed by a random two-qubit Pauli error drawn uniformly from $\{ I, X, Y, Z\}^{\otimes 2} \setminus \{I \otimes I \}$.
\end{itemize}

We perform simulations of a weight-$8$ cat state preparation circuit, and target fault tolerance to distance seven. We are interested in observing what the probability of a residual weight-$k$ error is, for $k<n$, using the different fault-tolerance techniques we have at hand. In \figref{fig:catwt8det}, we compare the residual error probabilities using five different circuits, three of which are fault-tolerant to distance-seven. These include the $n=8$, $d=7$ sequence from \tabref{tab:nonlocalparities}, the measurement of the $X^{\otimes n}$ stabilizer fault-tolerantly to distance-seven and finally the encoding of an $X^{\otimes n}$ operator. The last two techniques are adopted from techniques in \ref{chap:SMcat}. Additionally, we consider a $d=3$ and a $d=5$ sequence of parity measurements for completeness. The distance-seven techniques do suppress errors fault-tolerantly, with the stabilizer measurement method showing the lowest residual error rates for high-weight errors. Due to the reduced number of operations in the $d=3$ and $d=5$ sequences, low-weight errors are suppressed better than with $d=7$.

\begin{figure*}
    % \hspace{-0.5cm}
    % % \centering
    \includegraphics[width=\textwidth]{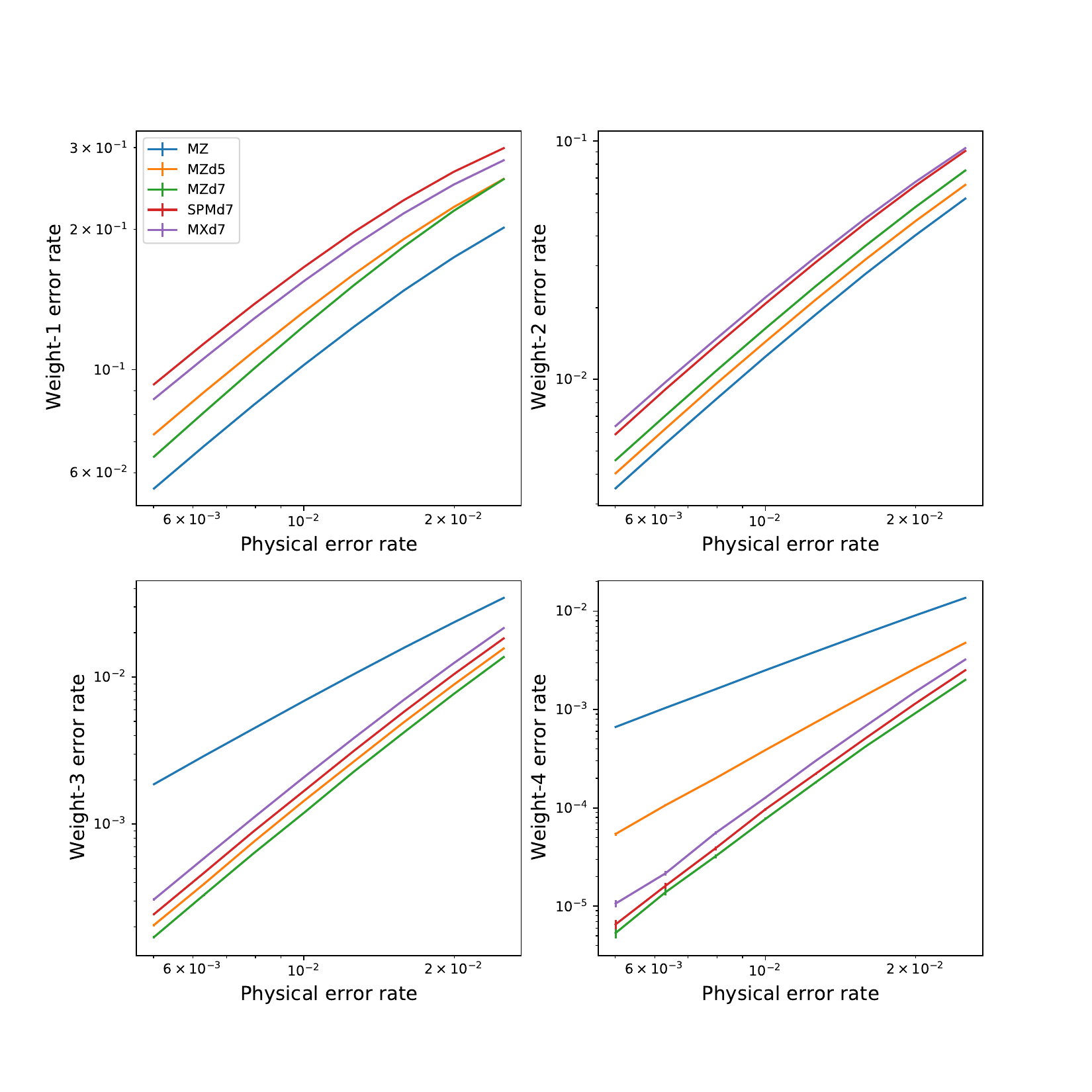}
    \caption{Rates of residual errors of weight $w \in \{1,2,3,4\}$ after size-eight cat state preparation, for physical error rates $p \in [5 \times 10^{-3}, 2.5 \times 10^{-2}]$.The methods used consist of sequences from \tabref{tab:nonlocalparities}, the measurement of the $X^{\otimes n}$ stabilizer fault-tolerantly to distance-seven (MXd7) and finally the encoding of an $X^{\otimes n}$ operator fault-tolerantly to distance-seven (SPMd7). The distance-seven cases are fully fault-tolerant, as opposed to the distance-five case where three faults can result in a weight-four error. The lowest residual error rates are observed with the distance-seven fault-tolerant stabilizer measurement sequence in \tabref{tab:nonlocalparities}.}
    \label{fig:catwt8det}
\end{figure*}

\subsection{Local measurements on a 1-D chain}
\label{subsub:local}

We finally consider the problem of preparing cat states on a line, using only local measurements. We show solutions that need fewer measurements than Shor-type schemes to achieve fault tolerance. This translates to fewer operations on the qubits and hence a higher fidelity output state.

\begin{claim}
    With $1$-D connectivity and for $n \geq 6$, deterministic length-$n$ cat state preparation fault-tolerant to distance-$d$ needs at most $d (n-4) +3$ measurements. Length-$4$ and $5$ cat states need $5$ and $7$ measurements respectively, and are fault-tolerant to distance-three.
    \label{claim:1Dcat}
\end{claim}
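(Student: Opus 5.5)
The plan is to give an explicit construction following \procref{proc:measureSP} and then check fault tolerance by tracking syndromes. On a 1-D chain with ancillas between neighbouring data qubits, the only available parity checks are the nearest-neighbour operators $Z_iZ_{i+1}$. A measurement sequence is therefore a word over the generators $g_i=Z_iZ_{i+1}$, $1\le i\le n-1$, and we must choose multiplicities and an ordering. Every qubit starts in $\ket+$. Afterwards the state is a cat state up to an $X$ error that is fixed by the true values of the $g_i$, and we correct by applying $X$ on a contiguous block of qubits (equivalently on its complement, since $X^{\otimes n}$ is a stabilizer).

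The proposed sequence for $n\ge6$ treats the interior and the ends differently.
\begin{itemize}
\item The $n-4$ interior generators $g_3,\dots,g_{n-2}$ are each measured $d$ times. This gives $d(n-4)$ measurements.
\item The boundary generators $g_1,g_2,g_{n-2},g_{n-1}$ receive only the $3$ extra measurements. This works because an $X$ error on qubit $1$ or $2$ can be moved to the other end using $X^{\otimes n}$, so there is only a small number of boundary error patterns to distinguish.
\end{itemize}
The measurements are interleaved in rounds $g_1,g_2,\dots,g_{n-1}$, repeated, so that any data $X$ fault occurring between two rounds shows up as a change in the repeated syndrome bits. Decoding is majority voting on each repeated generator, combined with a minimum-weight matching of syndrome changes along the time axis, as in the repetition-code picture of Refs.~\cite{delfosse2020short, Delf22}.

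The fault-tolerance check then has two parts.
\begin{itemize}
\item With $k\le t=\lfloor (d-1)/2\rfloor$ internal faults, at most $k$ of the $d$ copies of any interior generator are flipped. The majority vote therefore recovers the value each generator had at the end, up to the effect of late data faults.
\item Each such late data fault contributes a residual $X$ error of weight at most one per fault, as in the proof of the non-local theorem. So the residual error has weight at most $k$.
\end{itemize}
The small cases, $n=4$ with $5$ measurements (e.g.\ $g_1,g_2,g_3,g_2,g_1$ or similar) and $n=5$ with $7$ measurements, I would settle by exhaustive enumeration of single faults and their syndromes, as the paper did for the non-local bounds.

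The main obstacle is the boundary. I need to show that three extra measurements near the ends suffice for every $d$, rather than $d$ copies per boundary generator. First I would try to use the cat-state equivalence $X_{[j]}\sim X_{[j+1..n]}$ to argue that end errors have low weight modulo stabilizers. I would then verify the boundary decoding by computer for small $d$ and $n$ and extrapolate. Failing that, I would settle for a proof of the bound for $d=3$ together with a computer-verified table for $d=5,7$. This matches the status of the claim, whose $n=8$, $d=7$ case is only conjectured.
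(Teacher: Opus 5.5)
The step that fails is your allocation of the boundary budget. Read literally, your boundary block gives three measurements to four generators, so some $Z_iZ_{i+1}$ is never measured and the output is not a cat state. The only reading with the right count is $g_3,\dots,g_{n-2}$ measured $d$ times each, plus $g_1,g_2,g_{n-1}$ measured once each. Under that reading $g_2=Z_2Z_3$ is measured once, which is already fatal at $d=3$. On a chain no other check carries information about the domain wall between qubits $2$ and $3$. A single flipped outcome of $g_2$ therefore produces a record identical to a fault-free run in a different error space. The decoder is forced to apply the fault-free correction, which leaves $X_1X_2\sim X_3\cdots X_n$: weight two from one fault.

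Your justification via $X^{\otimes n}$ is also not the mechanism. Multiplying by $X^{\otimes n}$ complements the support; it does not shorten $X_1X_2$. What makes the ends cheap is that misidentifying $g_i$ leaves a residual of weight $\min(i,n-i)$:
\begin{itemize}
\item $g_1$ and $g_{n-1}$ cost weight one, so one measurement each suffices.
\item $g_2$ and $g_{n-2}$ cost weight two, so they must survive a single fault. Here $\frac{d+1}{2}\ge 2$ repetitions suffice. A split vote is resolved by applying one hypothesis's correction composed with $X_1$, which leaves $X_1$ or $X_2$ whichever hypothesis holds.
\item The remaining generators get $d$ each.
\end{itemize}
This is the paper's profile $1,\frac{d+1}{2},d,\dots,d,\frac{d+1}{2},1$. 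Its total $2+(d+1)+d(n-5)=d(n-4)+3$ is where the bound comes from; the ``$+3$'' is not three boundary measurements.

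Your fault-tolerance check also fails as written, because it decodes each generator by its own majority vote. Neighbouring majorities can straddle a single mid-sequence data fault. Take $d=3$ and your round-robin order, and put an $X$ on qubit $j$ between the second measurements of $g_{j-1}$ and $g_j$. The majorities then report the pre-fault value of $g_{j-1}$ and the post-fault value of $g_j$. The inferred syndrome is wrong only at $g_{j-1}$, so the residual is $X_1\cdots X_{j-1}$, of weight $\min(j-1,n-j+1)$ (e.g.\ $4$ for $n=8$, $j=5$). So ``at most one per late fault'' is false, and the bookkeeping from the non-local theorem does not transfer.

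What you need is a joint space-time decoder that reads simultaneous changes of adjacent generators as one data fault and handles the truncated repetition near the ends. You then need a proof that $k\le t$ faults leave residual weight at most $k$. That general argument is exactly what the paper lacks. It only fixes the multiplicity profile and an alternating two-group parallel schedule (at most $2d$ layers). It verifies the construction by computer for $n\le 12$ at $d=3$ and $n\le 8$ at $d=5$, and leaves the proof of the claim to future work. Your fallback of a $d=3$ argument plus computer tables is the right level of ambition, but only after you replace your sequence with one that repeats $g_2$ and $g_{n-2}$ enough.
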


First, we choose the number of times each parity is measured, and then define the parity checks in each layer. Measure the parity $i$ (on qubits ($i,i+1$)) the number of times specified by the entry of index $i$ in this length-$(n-1)$ string: $\{ 1, \frac{d+1}{2}, d, d, \mathellipsis, d, d, \frac{d+1}{2}, 1\}$. The parity measurements are executed in at most $2 d$ layers by executing them in parallel. This is done by splitting the $n-1$ unique parities into two groups, where a qubit appears no more than once in each group. Measurements in each layer are drawn from the two groups in an alternating fashion. Over the first layer (and second layer for even $n$), the measurements at the ends of the line will be completed and not need to be repeated. Proceed until all the $d (n-4) +3$ measurements are performed.

With this construction, we have computationally verified the preparation of cat states on a line up to size $12$ for distance-three fault tolerance, and size $8$ for distance-five fault tolerance. Due to CPU limitations we could not numerically verify sequences for larger cat states or higher distance. We leave the proof of \claimref{claim:1Dcat} to future work.

\chapter{Encoded state preparation}
\label{chap:encodedstateprep}

Building on improvements from the previous chapter, we now turn our attention to the preparation of more complex states. With cat states, fault tolerance is only needed to curb $X$ error spread. $ZZ$ stabilizers provides tolerance to $Z$ errors for free. In this chapter, we consider general CSS stabilizer states for which both $X$ and $Z$ fault tolerance is required. We attempt to prepare the circuits using two models. In one, flags are used to encode stabilizer operators fault-tolerantly. In the other, states are prepared solely by performing fault-tolerant stabilizer measurements. With both methods, we may choose to postselect. However we make these protocols deterministic. This is a first in stabilizer state preparation, and can greatly simplify the construction and scheduling of quantum state distillation factories. 

We first describe what it means to prepare an ancilla state fault-tolerantly. 
For fault tolerance to distance $d'$, the final condition is relaxed to $k \leq \frac{d'+1}{2}$.
\begin{condition}
    \label{cond:faulttolerance}
     \textbf{Fault-tolerant state preparation}: To fault-tolerantly prepare an ancilla state for an $\llbracket n,k,d\rrbracket$ quantum code, any $k$ faults in the circuit should propagate to a residual error of weight at most $k$, for $k \leq \frac{d+1}{2}$.
\end{condition}

Quantum error-correcting codes~\cite{Shor95decoherence, Steane96css, CalderbankShor96, Gaitan13, LidarBrun13} used fault-tolerantly can perform arbitrarily accurate computations with noisy components, provided the noise stays below a threshold as the system size is scaled up~\cite{Knill05nat, Aharonov97, AliferisGottesmanPreskill05, Reichardt06}. It is natural, therefore, to push the threshold up as high as possible. In recent years, topological codes such as the surface code~\cite{Fowler12surface} and floquet codes~\cite{Hastings21, gidney2021faulttolerant} gained prominence due to their high thresholds and simple hardware requirements. However Shor-style schemes were used to derive these thresholds~\cite{Shor96, Fowler12surface}. 

Steane-style error correction with low-error CSS stabilizer states may permit higher thresholds of operation~\cite{Steane97, Escobar24}. This has also been explored experimentally on ion trap and neutral atom systems~\cite{Postler23, Huang23b, Bluvstein24}. While the Shor scheme measures stabilizers individually, the Steane method extracts all the information for fault-tolerant error correction in a single round of transversal CNOT gates. This implies there are fewer potential locations for faults, leading to a very high-fidelity error correction routine. Additionally, it was shown that allied codestates could facilitate single-shot error correction with Knill's method~\cite{Knill05}, and Clifford computation in $O(1)$ time~\cite{Zheng_2020}. Universality can then be achieved by transversal non-Clifford gates and code switching~\cite{BevsUniversal, Paetznick13}.

The study of the preparation of CSS stabilizer states is fairly extensive. A lot of focus has been conferred to the preparation of logical states of the quantum Steane and Golay codes~\cite{Steane03, Goto16, Paetznickgolay2013}. However recent work has demonstrated a process to construct states of arbitrary CSS stabilizer codes fault-tolerantly~\cite{Brun15, Lai17stabstates, ZhengBrunFTancprep18}. For the codes we consider in this chapter, we benefit from a more granular analysis of fault propagation. Refs.~\cite{Huang21ShorSteane, Huang23a} perform a hybrid Shor-Steane scheme of error correction. Our results may aid in the development of stabilizer states for these protocols too.

Previous protocols for stabilizer state preparation relied on postselection for fault tolerance. If a non-trivial syndrome is observed, the erroneous state is discarded. Its major advantage is the simplicity of decoding, however, postselection raises many technical issues when actually constructing a fault-tolerant quantum computer. We consider two scenarios based on the availability of real-time control of quantum operations. Considering state preparation is one of the earlier protocols in a quantum computation, without real-time control, a lot of time is wasted in running computations without knowing if they should be rejected. On the other hand, acting on the measurements early and stopping the computation can save time. The drawback is that real-time scheduling of quantum operations in a large quantum computer can become computationally expensive in itself.

Due to these shortcomings, an interesting use case for deterministic protocols is in the first round of state preparation in a state distillation factory. Distillation is generally expensive and non-deterministic, which is remedied with our low-overhead deterministic protocol. In addition, as opposed to simple state injection techniques that encode information with error rate $O(p)$, the protocols we describe in this chapter inject into the code with a failure rate $O(p^\frac{d+1}{2})$. 

\section{Deterministic fault-tolerant preparation of CSS ancilla states}
\label{sec:SPdetCSS}

In this chapter we show how to prepare code states of quantum codes using two techniques for fault tolerance, similar to \chapref{chap:catstateprep}. The first is to use flags to tolerate faults in ideal state preparation circuits and the second involves preparing the state by repeatedly measuring stabilizers. This method of projecting into a codestate is the same as \procref{proc:measureSP}, with the exception that the stabilizers are larger and thus will need flag techniques to make their measurement fault-tolerant. In either case, we may choose to let the protocol be postselective (if a non-trivial syndrome arises, reject), however we attempt deterministic implementations. The side effect of $100 \%$ yield is that the logical error rate of the prepared states are higher than with postselection techniques. To make the  protocol deterministic, we determine what correction permits projection into the code state while satisfying \condref{cond:faulttolerance}, for every possible observed syndrome.

\begin{procedure}
    \label{proc:gateSPstab}
    \textbf{Flags to tolerate faults for stabilizer state preparation: }
    \begin{enumerate}
        \item Construct a non-fault-tolerant ancilla state preparation circuit using $\ket0,\ket1,\ket+,\ket-$ states and CNOT gates.
        \item Every qubit that acts as a control spreads $X$ errors. For a qubit that controls $(w-1)$ CNOT gates, use the flag circuit for weight-$w$ fault-tolerant cat state preparation from \figref{f:slowresetd3w6CSP} to prevent malignant $X$ error spread.
        \item Similarly, target qubits spread $Z$ errors. This is mitigated with the dual (``Hadamarding") of the flag circuits for controlling $X$ error spread.
    \end{enumerate}
\end{procedure}

Circuits that are fault-tolerant to larger distance can be constructed using the techniques developed in \secref{s:SMcatSDSM5} and Ref.~\cite{chao2019flag}. The lowest overheads were observed when the total number of qubits acting as control or target was lowest. Thus, non-fault-tolerant circuits constructed using the Latin rectangle method~\cite{Steane03} may be preferred over the overlap method of Ref.~\cite{Paetznickgolay2013}.

\section{Steane code}
\label{sec:steane}

% Next we look at the preparation of more complex states than the cat state. For example, we will consider the preparation of codestates for the $\llbracket 7,1,3 \rrbracket$ Steane code and the $\llbracket 23,1,7 \rrbracket$ Golay code.

\begin{figure}
    \centering
    \includegraphics[width=.35\textwidth]{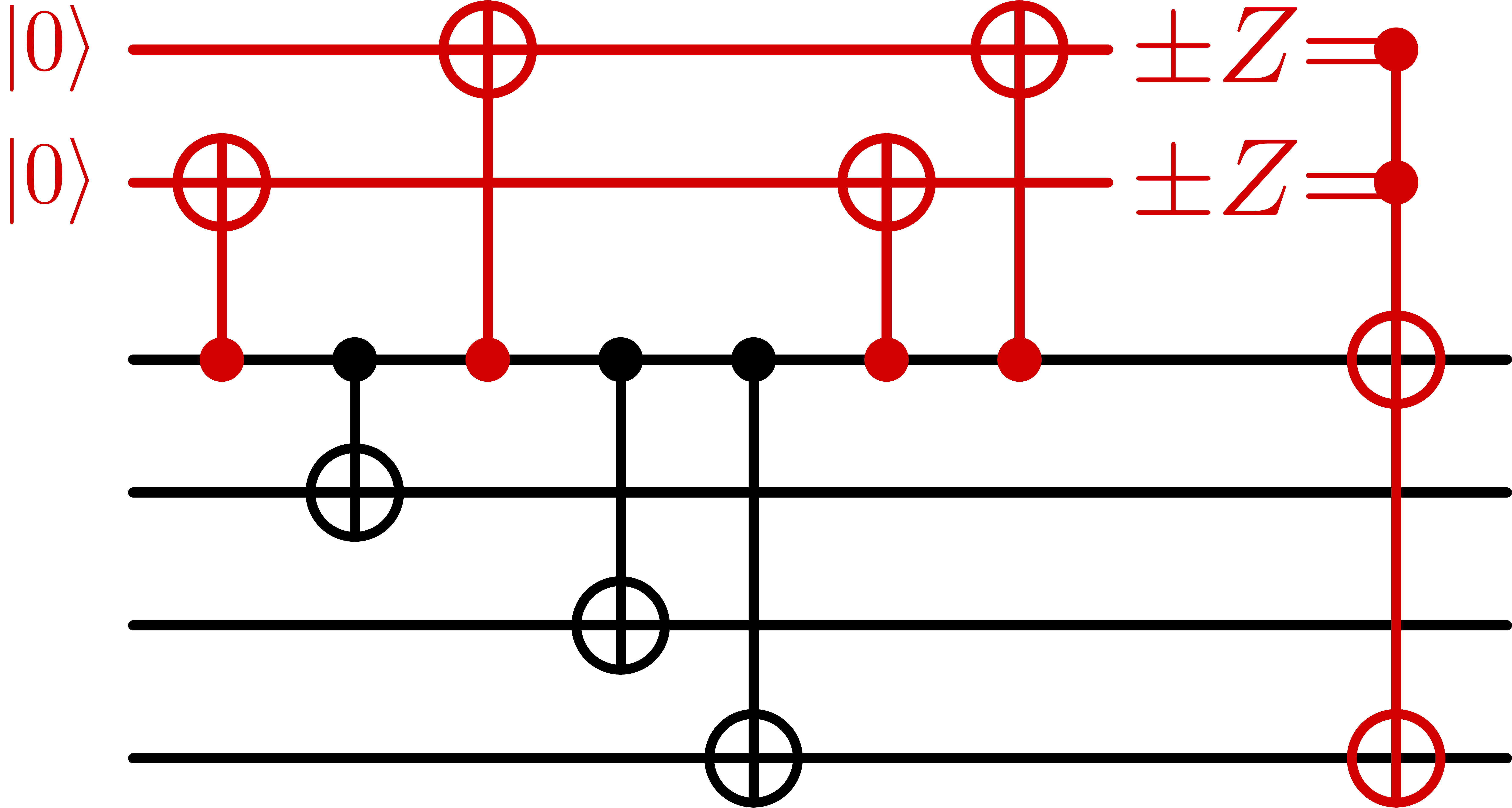}
    \caption{Fault-tolerant circuit for encoding the operator $XXXX$, given the first data qubit does not start in a $Z$ eigenstate. Data qubits are shown in black and ancilla qubits in red. Ancilla qubits flag correlated errors and apply appropriate corrections on to the data qubits. Note that this circuit is derived from the flag-fault-tolerant stabilizer measurement circuits in \chapref{chap:SMcat}.}
    \label{fig:wt4d3}
\end{figure}

\begin{figure*}
    \centering
     \subfloat[\label{fig:steaneprepnonft}]{\includegraphics[width=.23\textwidth]{imagesChap4/steanenonft2.pdf}}
     \hspace{1.cm}
     \subfloat[\label{fig:steaneprepshort}]{\includegraphics[width=.67\textwidth]{imagesChap4/steaneftshort.pdf}}
     \hspace{.1cm}
     \subfloat[\label{fig:steanepreplong}]{\includegraphics[width=.97\textwidth]{imagesChap4/steaneftlong.pdf}}
    \caption{Fault-tolerant preparation of a $\llbracket 7,1,3\rrbracket$ $\ket{0}_L$ state. (a)~Circuit to ideally prepare the $\ket{0_L}$ state of the Steane code, needing nine CNOT gates over three rounds. (b)~Condensed circuit to fault-tolerantly  and deterministically prepare the $\ket{0_L}$ state of the Steane code. This circuit uses the same number of CNOTs as in (c), but has circuit depth seven, as opposed to $21$ in (c). (c)~Circuit to fault-tolerantly prepare the $\ket{0_L}$ state of the Steane code, using the weight-$4$ operator encoding circuit of \cref{fig:wt4d3}. 
    Two ancillas are needed if qubits can be measured and reset quickly.
    }
    \label{fig:steanecircs}
\end{figure*}

Circuits for the preparation of encoded states of the Steane code go through periodic updates every 10 years~\cite{Steane97, Reichardt06, Goto16}. The update of this decade is a circuit to prepare $\ket{0}_L$ fault-tolerantly and deterministically, as shown in \figref{fig:steanecircs}. The circuit uses $21$ CNOT gates and either nine or ten qubits total, depending on the qubit measurement speed. It is constructed by using flags to make the circuit in \figref{fig:steaneprepnonft} fault-tolerant. Note that qubits $1$, $2$ and $4$ in \figref{fig:steaneprepnonft} contain the controls for $3$ CNOT gates each, similar to the circuit for preparing a weight-$4$ cat state. This permits the use of flag techniques from \chapref{chap:SMcat} and \chapref{chap:catstateprep}. Using the fault-tolerant $X^{\otimes 4}$ encoding circuit in \figref{fig:wt4d3}, we can construct the fault-tolerant state preparation circuit in \figref{fig:steanepreplong}.

Notice that for each pair of flags, only the weight-two flag configuration is used. No corrections are applied for any of the weight-one flag configurations. This provides a clue for further optimization. Consider that with three total flag qubits, there is access to three distinct weight-two flag configurations, and even one of weight three. We show in \figref{fig:steaneprepshort} that the flag circuits for the three stabilizer encodings can be merged to use a common pool of ancilla qubits. This has the effect of minimizing the space overhead and circuit depth. This technique of sharing flags has not been explored in great depth in this thesis, but this method shows very high potential for reducing overhead. For example, similar to encoding multiple stabilizers in parallel, stabilizers may also be measured in parallel with a common pool of flag qubits. Note that the number of flag configurations increases exponentially, but the number of faults to be tolerated only increases linearly in the number of stabilizers measured.

To prepare the $\ket{0}_L$ state of the Steane code fault-tolerantly, we must only ensure that weight$>2$ $X$ errors do not occur due to a single fault. Considering the $Z_L$ operator is included as a stabilizer, any $Z$ error is equivalent to a weight-$1$ $Z$ error. This is because the Steane code is perfect CSS, greatly simplifying the fault-tolerant circuit when preparing by stabilizer measurements. The $\ket{0}_L$ state can be prepared by initializing the seven physical qubits in the $\ket 0$ state, and subsequently measuring three of the $X$-type stabilizer generators. Note that it is fault-tolerant to just measure the stabilizer generators once since regardless of whether a fault occurs, any $Z$ error will be equivalent to a $Z$ error of weight at most one. If there is no fault, the projected $Z$ error space has been identified and a correction can return the state to the code space. If there was a fault, then it is fault-tolerant to project into any of the weight-$1$ error spaces. 

Alternatively, the state may be prepared by initializing the qubits in the $\ket +$ state and measuring the $Z$-type operators. This approach can have even lower overhead. Instead of measuring weight-four operators with flag qubits for fault tolerance, weight-three $Z$ operators can be measured fault-tolerantly using just one qubit. The circuit will require $8$ qubits overall, as in Ref.~\cite{Goto16}, if the ancilla can reset quickly. The caveat is that potentially more than four operators will need to be measured for distance-three $X$-type fault tolerance.

We can also prepare logical magic states of the Steane code fault-tolerantly using a transversal injection technique similar to that in Ref.~\cite{Gavriel23}. First, initialize all qubits in the $\ket H$ state to ensure the logical state is projected into the $+1$-eigenspace of the $H_L$ operator. To complete the logical state preparation, the state is projected into the code space of the code using the Shor-type measurements of individual stabilizers. In this scenario, $Z$ errors of weight two are possible, hence more than three $X$-type stabilizers will need to be measured for fault tolerance. We find that four weight-four stabilizers are sufficient for the Steane code for fault tolerance by postselection. It is unclear if this protocol can be made deterministic. 

This method of encoded magic state preparation is interesting because it possesses properties of both magic state injection and magic state distillation. In the former, individual magic states are injected from a physical qubit into a quantum error-correcting code with encoded error probability scaling as the error rate of physical magic state preparation $O(p)$. In the latter, many magic states with error rate $p$ are processed through a Clifford circuit to produce one magic state with error rate $O(p^k)$, physical gate fidelities permitting. Our preparation technique captures the better properties of both these methods. Magic states are injected from physical qubits into a quantum error-correcting code, while simultaneously improving the error probability to $O(p^2)$.

\subsection{Simulations}

We simulate the circuits for state preparation using a stabilizer circuit simulator to collect statistics on the residual weight-one and weight-two errors. These simulations are performed according to the model in \ref{subsubsec:nonlocal}, for $p \in [10^{-2.5}, 10^{-1.6}]$. In doing so, we also compare the post-selection protocols with the new deterministic ones.
We plot the probability of weight-one and weight-two $X$ errors, and the yield of the protocol. In \figref{fig:SteaneHeraldedsims}, we first compare results for different heralded methods as described in Fig.~1 of Ref.~\cite{Goto16}. The fourth method, labeled 'Goto (d)', replaces the measurement of the $X_0 X_5 X_6$ in (c) with $X_2 X_4 X_5$. With this measurement, the logical error rate is decreased with a small increase in the weight-one error rate. The yields of these postselection based protocols are very high, considering very few gates need to be performed.

We compare the Goto (c) method with deterministic methods in \figref{fig:SteaneDeterministicsims}. We consider four circuits. The first, from Ref.~\cite{Reichardt06}, prepares two Steane code states with different preparation circuits followed by a Steane-style error correction procedure to identify the locations of residual $X$ errors. These residual errors can eventually be corrected to ensure \ref{cond:faulttolerance} is satisfied. The two flag methods used are the circuits shown in \ref{fig:steanecircs}. Finally, the measurement-based (MB) method measures the three weight-four $X$ stabilizers to prepare the state. The best performing deterministic method is the Steane-style error correction based method of Ref.~\cite{Reichardt06}.

\begin{figure}
    \centering
    \includegraphics[width=.7\textwidth]{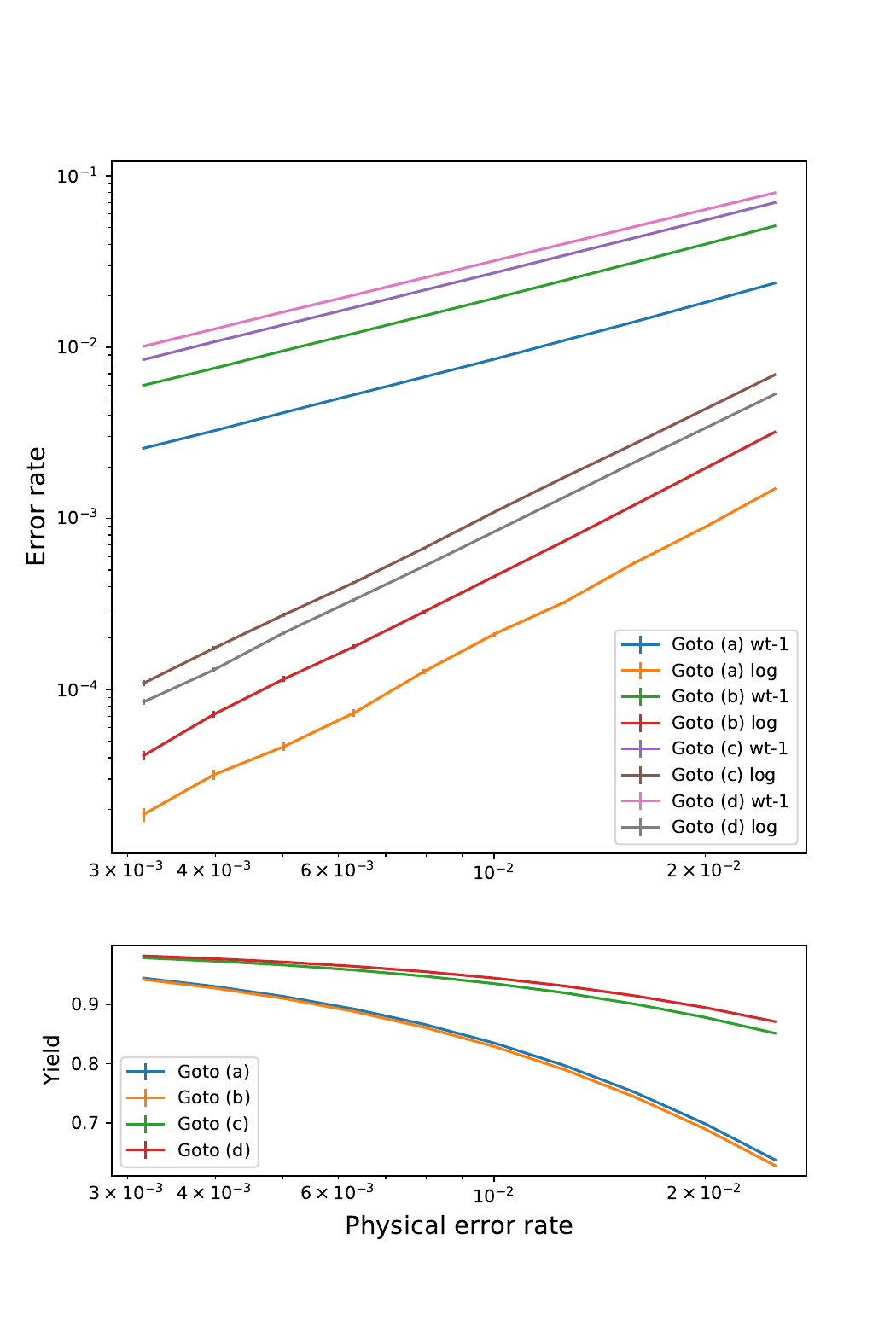}
    \caption{Rates of residual weight-1 and logical errors after preparing $\ket{0}_L$ of the Steane code using postselection-based fault-tolerant circuits. We also plot the postselection yield. The circuits considered here are found in Fig.~1 of Ref.~\cite{Goto16}. The method labeled `(d)' replaces the $X_0 X_5 X_6$ from (c) with $X_2 X_4 X_5$. This improves the logical error rate at the cost of a higher weight-one error rate.}
    \label{fig:SteaneHeraldedsims}
\end{figure}

\begin{figure}
    \centering
    \includegraphics[width=.7\textwidth]{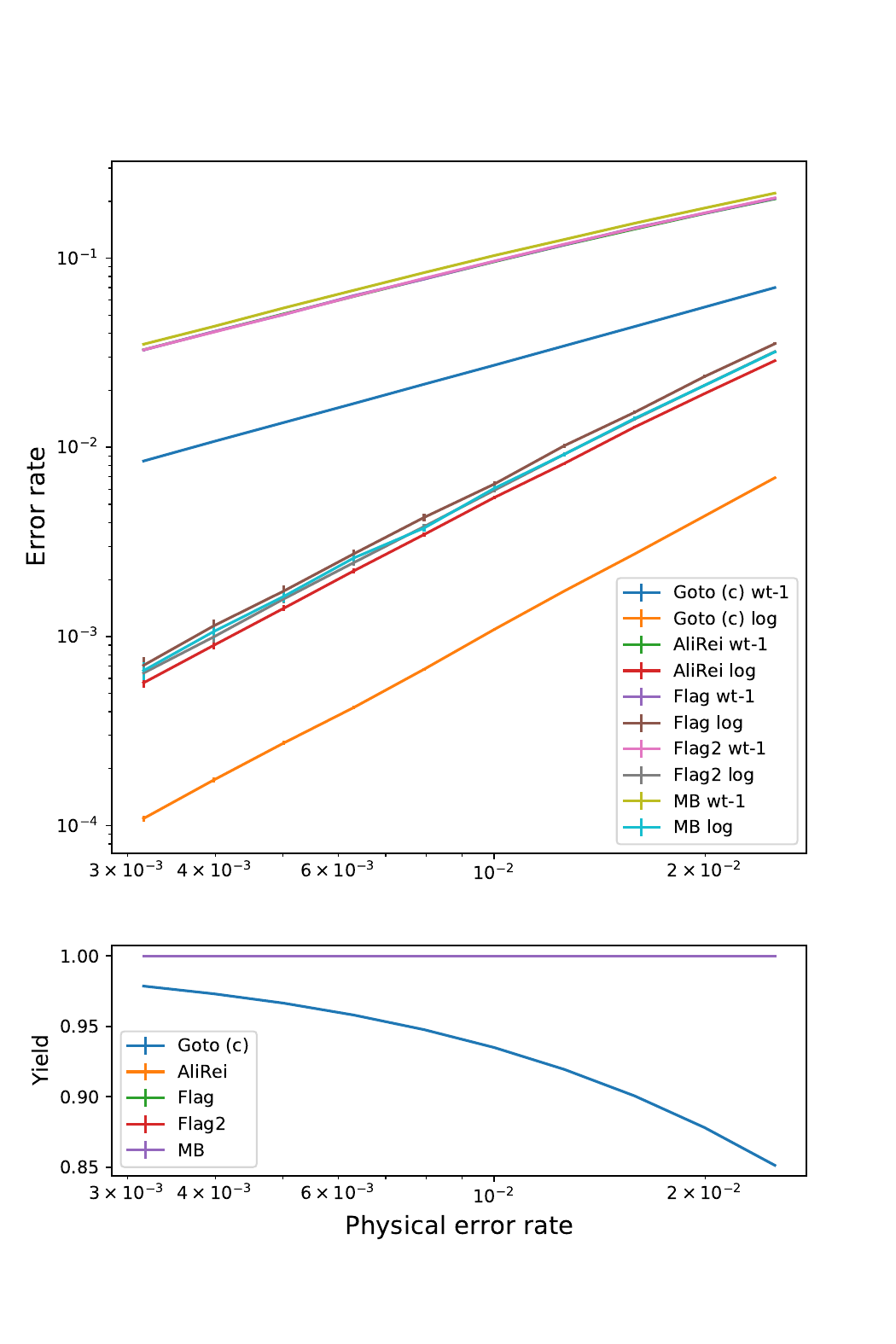}
    \caption{Rates of residual weight-1 and logical errors after preparing $\ket{0}_L$ of the Steane code using deterministic fault-tolerant circuits. We consider a method suggested by Aliferis and Reichardt in Ref.~\cite{Reichardt06}, the flag-based circuits in \figref{fig:steanecircs}, and a circuit to prepare the state by measuring $X$-type stabilizers. For physical error rates below $2 \times 10^{-2}$, the logical error rate of the best deterministic protocol is about five times worse than the postselection-based protocol.}
    \label{fig:SteaneDeterministicsims}
\end{figure}

\section{Golay code}
\label{sec:golay}

The Golay code has also enjoyed the same amount of attention as the Steane code, with work on state preparation circuits established from the early days of quantum fault tolerance~\cite{Steane97, Reichardt06, Paetznickgolay2013, ZhengBrunFTancprep18}. The protocol we demonstrate is the first, to our knowledge, to show that the $\ket{0}_L$ state of the Golay code can be prepared fault-tolerantly with $100\%$ yield. 

First, we consider state preparation by using flags to tolerate faults in operator encoding circuits.
Similar to the distance-three Steane code, the distance-seven Golay code is also a perfect CSS code. Since we prepare $\ket{0}_L$, the logical $Z$ operator is also included in the stabilizer group. This ensures that any $Z$ error of weight greater than three is equivalent to an error of weight at most three, implying only distance-five fault tolerance is required for $Z$-type errors. This fact was also observed in Refs.~\cite{Reichardt06, Paetznickgolay2013}. 

We choose the Latin rectangle circuit from Ref.~\cite{Paetznickgolay2013} with $77$ CNOT gates as the base non-fault-tolerant circuit for the state preparation. The encoding of the stabilizer operators in this circuit is then made fault-tolerant using flags. We require that the eleven weight-eight $X$ stabilizers are encoded fault-tolerantly to distance-seven and the twelve $Z$ stabilizers are encoded fault-tolerantly to distance-five. This can be done using the flag circuits shown in \figref{fig:highdcircs}. We perform all the $X$-type stabilizer encodings sequentially, allowing reuse of the nine flag qubits for each $X$-stabilizer. As a consequence, the flag qubits for the $Z$-type operator encodings will need to be kept active for the entire duration of the state preparation. In total, the number of qubits needed is $23 + 9 + 12*6 = 104$, with a total of $419$ CNOT gates. In contrast, the best performing postselection circuit, from Ref.~\cite{Paetznickgolay2013}, uses $69$ qubits simultaneously, with $297$ CNOT gates.

\begin{figure*}
    \centering
     \subfloat[\label{fig:wt8d7}]{\includegraphics[width=.75\textwidth]{imagesChap4/wt8Xoperatord7.pdf}}
     \hspace{5cm}
     \subfloat[\label{fig:wt8d5}]{\includegraphics[width=.45\textwidth]{imagesChap4/wt8Xoperatord5.pdf}}
     \hspace{1cm}
     \subfloat[\label{fig:wt7d5}]{\includegraphics[width=.42\textwidth]{imagesChap4/wt7Xoperatord5.pdf}}
    \caption{Flag-based circuits to encode high-weight $X$-type operators fault-tolerantly to distance five and seven. Corrections are computed based on the flag measurement outcomes based on the techniques in \ref{s:SMcatSDSM5}. (a)~A circuit to encode a weight-eight operator fault-tolerantly to distance seven. (b)~A circuit to encode a weight-eight operator fault-tolerantly to distance five. (c)~A circuit to encode a weight-seven operator fault-tolerantly to distance five. } 
    \label{fig:highdcircs}
\end{figure*}

We finally consider the preparation of the $\ket{0}_L$ state of the Golay code by measuring stabilizer operators. We first consider preparation by initializing all the physical qubits in the $\ket 0$ basis, followed by $X$-type stabilizer measurements. This is favored since we are only required to find a sequence of $X$-type stabilizers to measure that is distance-five fault-tolerant to faults corrupting measurements (any $Z$ error is at most weight-three). Even with this simplification, a fast Mathematica script struggled to verify the fault tolerance of sequences longer than $28$ stabilizers. Unfortunately, an appropriate distance-five sequence was not found. However, distance-three fault-tolerant sequences were determined at length $19$. 

Checking random sequences can be very time-consuming, as there are many options of stabilizers to consider measuring. We tried an alternative method to find a distance-five sequence found in Sec. $IX$ C of Ref.~\cite{delfosse2020short}. Stabilizers in the sequence are chosen as cyclic shifts of the following weight-eight stabilizer generator $$10000000000111110010010.$$ We verified that a sequence of $20$ stabilizers is sufficient for distance-three fault tolerance. Our computer could not completely verify $29$ stabilizers for distance-five fault tolerance. We conjecture that a sequence of $29$ to $31$ stabilizers will be required.

\subsection{Simulations}
\label{subsec:simsSteaneGolay}

Using a similar approach as in \secref{sec:steane}, we simulate the state preparation circuits under a depolarizing noise model to collect error rate statistics. The rates of residual $X$ errors of weight-$k$ are determined, for $k \leq 4$. We compare the performance of the postselection-based protocol of Ref.~\cite{Paetznickgolay2013} and our circuit based on the flag-fault-tolerant encoding of operators. The residual error rates with Paetznick's postselection protocol match those of the deterministic protocol, albeit at a physical error rate that is a factor of ten higher. At similar physical error rates, the deterministic protocol performs on par with some of the higher overhead postselection-based protocols of Ref.~\cite{ZhengBrunFTancprep18}. These protocols are much more useful for the distillation of codestates of much larger codes. Quantum codes as small as the $\llbracket 23,1,7 \rrbracket$ Golay benefit from a more granular analysis of fault propagation.

\begin{figure}
    \centering
     \label{fig:golay}
     \includegraphics[width=.7\textwidth]{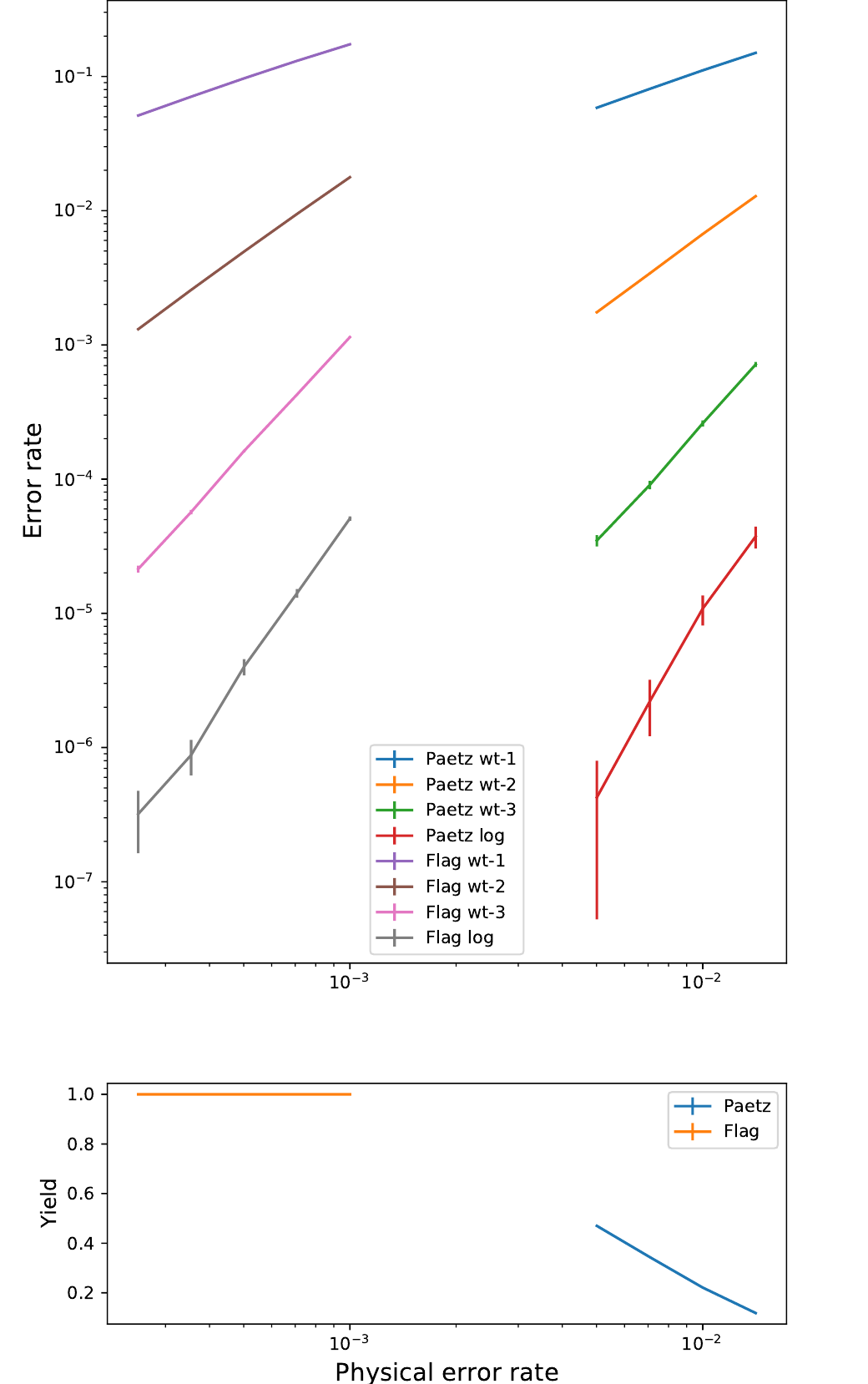}
    \caption{Comparison of the rates of residual errors between the $69$-qubit postselection-based protocol of Ref.~\cite{Paetznickgolay2013} and a circuit that uses flags to fault-tolerantly encode the stabilizer operators of the state. The flag-based corrections are deterministic, allowing us to demonstrate the first state preparation circuit for the Golay code that is deterministic. The postselection method is very robust and works well even at high physical error rates. The deterministic method is only really useful at a physical error rate below $10^{-3}$.}
\end{figure}

\chapter[Distance-four quantum codes]{Fault-tolerant error correction for distance-four quantum codes with postselection}
\label{chap:D4codes}

\textbf{Error correction and postselection.}
Noisy Intermediate-Scale Quantum (NISQ) algorithms for eigensolvers~\cite{Peruzzo2014, Wecker15eigen} and machine learning~\cite{Biamonte2017} are growing as popular applications for state-of-the-art few-qubit quantum systems.  Unfortunately, these devices are still prone to large amounts of noise~\cite{mooney2021wholedevice, Google2019, Pogorelov21ionexp}.  Although error correction can decrease error rates~\cite{Shor95decoherence, Gottesman97thesis, terhal15review}, current experiments encode only one logical qubit that is still fairly noisy~\cite{honeywell21steane, Google2021, duke21BScode, Wootton20ignis}.
In this paper we simulate storing multiple logical qubits in a lattice, as a first step toward modeling few-qubit computations. We repeatedly correct and remove single-qubit errors.  On detecting a more dangerous---and less common---two-qubit error, we reject and restart.  This ``postselection" technique allows distance-four codes to achieve similar logical error rates to distance-five codes.  For example, as shown in \tabref{f:tab6lsumm}, a distance-four code can correct errors on six logical qubits with a similar failure rate as the distance-five surface code, using only $10 \%$ as many physical~qubits.  The table also shows that acceptance rates are fairly high, so occasional restarts should not be a major issue for low-depth NISQ~\mbox{algorithms}~\cite{cerezo2020variational, bharti2021noisy}.

Postselection is a versatile tool in the quantum toolkit. 
In experiments, it has been used to decode the $\llbracket 4,2,2 \rrbracket$ error-detecting code~\cite{Linke17edexp, Takita17edexp} and the $\llbracket 4,1,2 \rrbracket$ surface code~\cite{Andersen2020, Google2021}.  In theoretical research, it has been used to reduce the logical error probability of state preparation~\cite{Paetznickgolay2013, ZhengBrunFTancprep18} and magic state distillation~\cite{Bravyi05, Meier13MSD}. Recently, postselection has been used to improve quantum key distribution~\cite{Yumang20QKDpost, Sekga21QKD} and learning quantum states~\cite{Scott18shadow, McClean2020}.  

\begin{table}
    \caption{Postselected error correction for $6$ logical qubits using $\llbracket 16,k,4\rrbracket$ codes on the $25$-qubit planar layout of~\figref{f:layout}. The probability of logical error, acceptance, and expected time to complete are shown for 300 time steps, with noise rate $p = 5 \times 10^{-4}$.  The $k = 6$ code achieves logical error rate close to the distance-$5$ surface code using only $10\%$ of the qubits.  In comparison, for $6$ physical qubits at memory error rate $p/10$, the probability of error is about~$6\times 300 \times p/10 = 0.09$.
    \label{f:tab6lsumm}}
    %\setlength{\tabcolsep}{1.9pt}
    %\begin{tabular}{cccccc}
    %\hline \hline
    %%t = 700, p = 0.0005, 6 \text{ qubits}
    %\multicolumn{3}{c}{}  & &\multicolumn{2}{c}{Postselection}\\ 
    %Code & Qubits & $P(\text{Logical error})$ &  & Acceptance  & $\mathbb{E}[time]$ \\
    %\hline
    %$k = 2$ & $25 \times 3 = \mathbf{75}$ & $.032$ & & $11.5\%$ & $1067 $ \\
    %$k = 4$& $\mathbf{50}$ & $.028$ & & $20.5\%$ & $725$\\
    %$k = 6$&  $\mathbf{25}$ & $\mathbf{.017}$ & & $35.5\%$ & $530$\\[.25 cm]
    %$d = 4$ & $150$ & $\mathbf{.009}$ & & $3.5\%$ & $2820$\\
    %$d = 3$ & $78$ & $.143$ & & $100\%$ &$300$\\
    %$d = 5$ & $246$ & $\mathbf{.016}$ &  & $100\%$ & $300 $\\
    %\hline \hline \\
    %\end{tabular}
    % \hspace{-30mm}
    \centering
    \includegraphics[width=.7\textwidth]{imagesChap5/introsummresults005}
\end{table}

Knill previously combined postselection with error correction, on concatenated distance-two codes, to show an impressive $3 \%$ fault tolerance threshold~\cite{Knill05nat}.  
We also combine postselection and error correction, but with distance-four codes.
 As~\tabref{f:disterrtab} indicates, distance-two codes can detect single errors and distance-three codes can correct them, meaning logical errors are due to second-order faults. Distance-three codes may alternatively be used to detect one or two errors, but then they lose the ability to correct and computations are very short-lived. We choose to use distance-four codes since they can simultaneously correct an error and detect two errors. Correcting some errors ensures restarts are less frequent, so longer computations can be run.  Since logical errors are caused only by third-order faults, logical error rates are very low.

\begin{table}
    \caption{Distance-four codes with postselection lead to $O(p^3)$ logical errors, much like distance-five codes. Even-distance codes require restarts, however, unlike odd-distance codes.
    \label{f:disterrtab}}
    \centering
    \includegraphics[width=.55\textwidth]{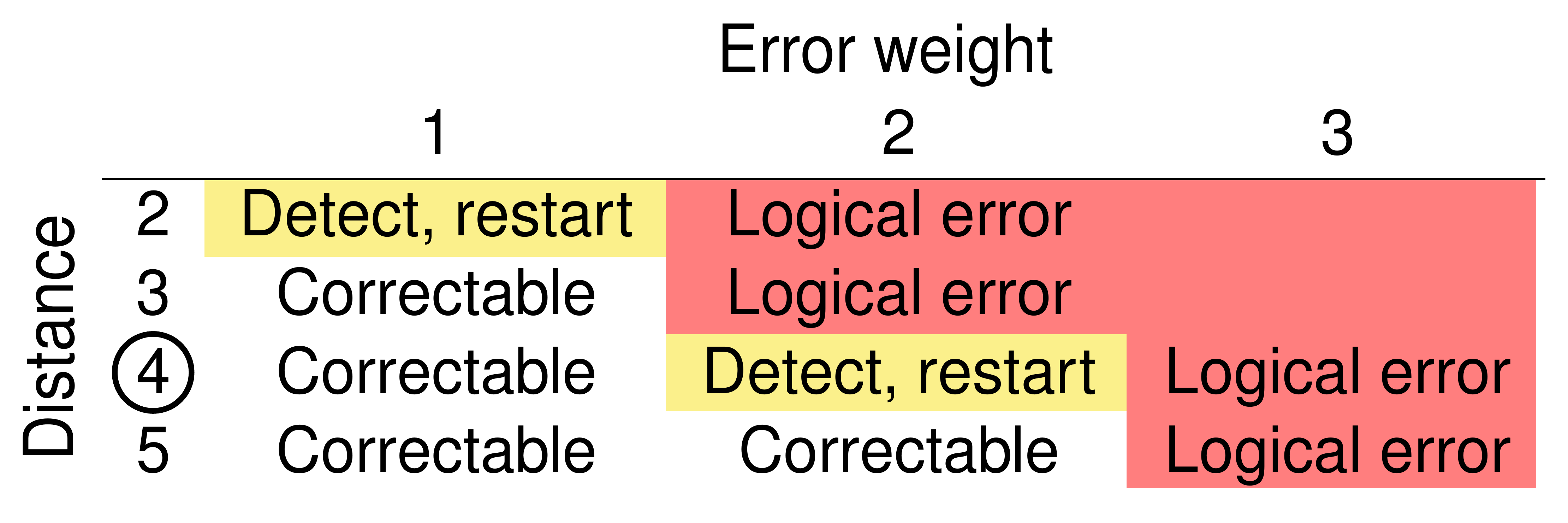}
\end{table}

\begin{figure}
\centering
\includegraphics[width=.99\textwidth]{imagesChap5/codes.pdf}
\caption{Codes considered in this paper, with associated distance-four fault-tolerant $Z$ or $X$ stabilizer measurement sequences.  (The last three codes are self-dual CSS.)  Time steps of parallel measurements are separated by ``$\mid$".  ($\ast$)~For the surface code, fault-tolerant $X$ and $Z$ error correction is carried out using a rolling window of four syndromes, each measured in two time steps.}
\label{f:codes}
\end{figure}

\begin{figure}
\centering
{\includegraphics[width=.35\textwidth]{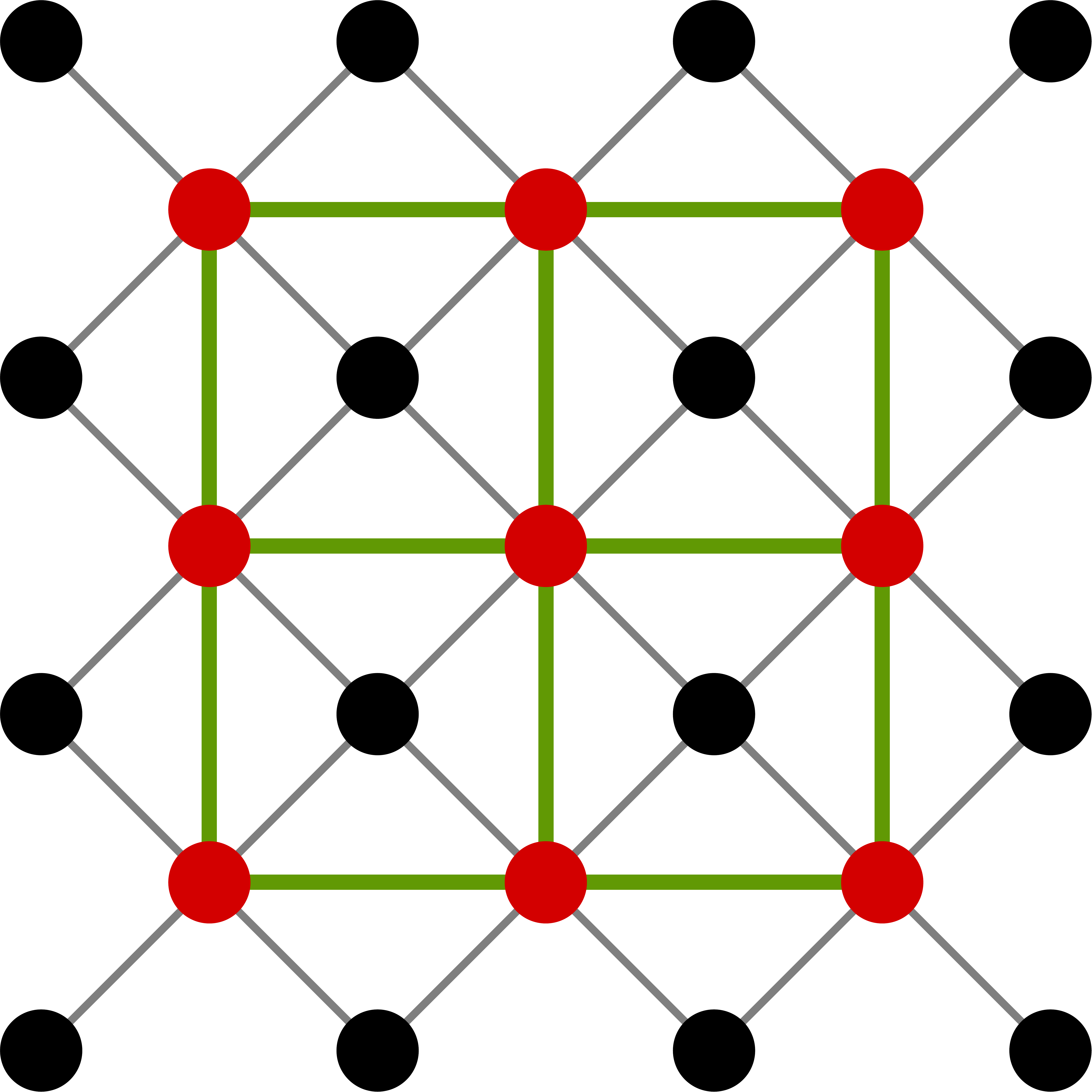}}
\caption{Planar layout of $16$ data and 9 ancilla qubits, in black and red respectively.  CNOT gates are allowed along the edges.  Grey edges are required for the surface code, and green edges between ancillas are required for the new codes in this paper.}
\label{f:layout}
% \vspace{-0.2cm}  %DEBUG
\end{figure}

\noindent
\textbf{Physical layout.}
In practice, it is difficult to build a quantum computer with native (fast, reliable) two-qubit gates between \textit{every} pair of qubits. Instead, qubits are placed on a one or two-dimensional lattice and two-qubit gates are mediated by local interactions, as in superconducting architectures and solid state systems. Current ion trap systems use long-range gates~\cite{Nigg14SteaneEC} and transport mechanisms~\cite{honeywell21steane} to connect all the qubits, but some degree of locality is required for larger systems.
In light of these connectivity constraints, it may be wiser to choose quantum codes that can be laid out on a lattice such that error correction requires the fewest number of local native gates.
The popular surface code has the attractive feature that it requires only nearest-neighbor interactions on a $2$D square lattice~\cite{Fowler12surface, CampbellroadstoFT2017}.  Similarly, error correction for topological codes has been investigated on sparser degree-three lattices~\cite{Chamberland20heavycodes, Chamberlandtriangularcodesflag2020, gidney2021faulttolerant}. 
But there is insufficient research on the performance gains of denser connected layouts.  

%\medskip

We suggest using $16$-qubit codes on the $25$-qubit rotated square lattice of~\figref{f:layout}, where ancilla qubits additionally interact with neighboring ancillas.  This allows the use of 
flag qubits for fault tolerance~\cite{ChamberlandBeverland17flagft, ChaoReichardt17errorcorrection, chao2019flag, Prabhu23}, in turn allowing measurement of large stabilizers. 
As shown in~\figref{f:codes}, we choose distance-four codes whose stabilizer generators are fairly local, with short Shor-style stabilizer measurement sequences that do not require any SWAP gates.
We consider two block codes and a color code that encode multiple qubits~\cite{delfosse2020short}, and the rotated surface code~\cite{BombinMartindelgado07surfaceoverhead} as a benchmark for postselection.  
In contrast, before the advent of topological codes, block codes were used for the simulation of $2$D local error correction~\cite{Svore07localSteane, Spedalieri09localBS, Lai14localKnill}.  These proposals performed Steane error correction on small distance-two and -three codes, and required many~swaps.

% \pagebreak %DEBUG

\medskip
\noindent
\textbf{Results.} 
We compare our $16$-qubit codes with the $25$-qubit, distance-five surface code.
 We show below in~\figref{f:scaling} that, with rejection, the normalized logical error rate of the proposed codes is less than that of the distance-five surface code by as much as one order of magnitude. The distance-four surface code actually achieves two orders of magnitude separation. 

However, the logical error rate per time step does not capture the drawback of restarts. Instead, a better metric is the cumulative probability of logical error.
\figureref{f:resultssumm} compares this metric between the different codes for short computations that do not restart too often (more information in~\figref{f:PLEt}).
For one logical qubit, the distance-four surface code vastly outperforms its distance-five counterpart, and the $k = 2$ and $k = 4$ codes achieve a good balance of low qubit overhead and low logical error rate.
 We also show that just $50$--$75$ physical qubits are sufficient for good protection of twelve logical qubits.  Overall, we obtain lower logical error rates with higher encoding rates, using postselection and multi-qubit codes.

In \appref{s:unenc}, we compare the storage error rate of unencoded qubits with the encodings in \figref{f:codes}. As expected, at error rates up to $10^{-3}$, fault-tolerant error correction is more robust than leaving qubits idle.

\begin{figure}
\centering
\includegraphics[width =0.99\textwidth]{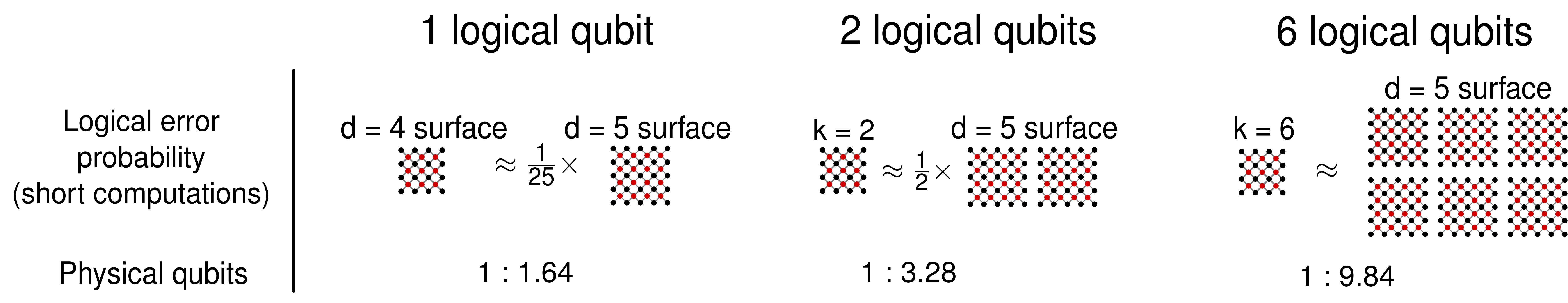}
\caption{Summary of results. For short computations, the probability of a logical error in the distance-$4$ rejection-based surface code is approximately $25$ times lower than that of the distance-$5$ variant. Further, for $6$ logical qubits, the $k=6$ code on one patch of $25$ qubits can match $6$ patches of the distance-$5$ surface code.}
\label{f:resultssumm}
\end{figure}

%\pagebreak. %DEBUG

\medskip
\noindent
\textbf{Future work.} 
In order to verify these results on current quantum systems, some work is required.  Dense qubit connectivity in ion trap systems may allow for simple measurement of high-weight stabilizers, but superconducting devices generally prefer low qubit degree due to high crosstalk errors.  It may be possible to modify the circuits in this work to allow maximum qubit degree at most five or six, such as in the IBM Tokyo device~\cite{Tan21optimality}.
Consequently, in~\figref{f:degree4k2} below, we show that error correction of the $k=2$ code is possible with degree-four connectivity, but requires many extra qubits.

We only show how to do fault-tolerant error correction, but the ultimate goal is to perform
quantum computation. 
Selective logical measurements could induce computation within a patch, and transversal gates between vertically stacked code patches could facilitate non-Clifford gates.  
If these operations introduce a low amount of error, it may be possible to execute relatively high-depth circuits. These tools can then be used to execute short NISQ and magic state distillation algorithms.  As an example, our results show that just $50$ physical qubits may be sufficient to demonstrate $10$-to-$2$ MEK distillation experimentally with $O(p^3)$ logical errors~\cite{Meier13MSD}.  

\medskip
\noindent
\textbf{Organization.}
In \secref{sec:d4codes} we provide more details about distance-four codes and the examples we choose in this paper. \secref{sec:d4FTEC} details the methods used for fault tolerance. In particular, stabilizer measurement circuits are dealt with in \secref{s:SMC} and sequences of stabilizer measurements are handled in \secref{s:SMS}. The noise model and results of simulations are contained in~\secref{sec:d4simresults}. \secref{sec:d4future} concludes with a discussion of future work and open questions.

\section{Codes}
\label{sec:d4codes}

We compare the error correction performance of six $\llbracket n,k,d \rrbracket$ stabilizer quantum codes, where $n$ is the number of physical data qubits and $k$ is the number of logical qubits.
A distance-$d$ quantum code should correct all errors of weight $j \leq t= \lfloor \frac{d-1}{2} \rfloor$, occurring at rate $O(p^j)$, for error rate $p$.
At low error rates, an unlikely error of weight-$(d-j)$ may be misidentified as the more likely weight-$j$ error, inducing a logical flip on recovery.
In even-distance codes, errors of weight $d/2$ 
can be detected, but applying a correction may induce a logical flip.
In this paper, we stop the computation instead of attempting to correct, ensuring logical flips only occur at rate $O(p^{t+2})$ and not $O(p^{t+1})$ as before.

For the distance-four codes shown in~\figref{f:codes}, we show that the logical error rate scales as $O(p^3)$ like a distance-five code.  
As a benchmark we first consider the rotated distance-four surface code~\cite{tomita14surface} on the layout of~\figref{f:layout}.  For a fair comparison of both the resource requirements and logical error rate, we consider additional benchmarks: the  distance-three and -five surface codes.  As in Ref.~\cite{tomita14surface}, each distance-$d$ surface code uses $d^2$ data qubits and $(d-1)^2$ ancilla qubits.  

The next three codes are the central focus of this work.  These self-dual CSS (Calderbank-Shor-Steane) codes were first considered in Ref.~\cite{delfosse2020short}, to show examples of codes that can be constructed to have single-shot sequences of stabilizer measurements. 
By fixing some of the logical operators, the $k=6$ code can be transformed into the $k=4$ and $k=2$ codes. Alternatively, puncturing the $k=6$ code yields the well-known $\llbracket 15,7,3 \rrbracket$ Hamming code.

\medskip
\noindent
\textbf{Improvements.}
Although these codes encode more logical qubits, they suffer from the difficult task of having to measure weight-eight stabilizers.  It is possible to construct a $\llbracket 16,2,4 \rrbracket$ subsystem code with only weight-four stabilizers and gauge operators.  Using the layout of~\figref{f:layout}, we compared  this code with the $k=2$ subspace code in this paper but found no significant improvements. This code is still useful, however, as we show in~\secref{sec:d4future}.

Many other codes can also be constructed with $16$ qubits. For a biased-noise system, a CSS code with two logical qubits can be constructed with $Z$-distance six and $X$-distance four.  For more logical qubits, a non-CSS $\llbracket 16,7,4\rrbracket$ code can be used~\cite{Grassl07codetable}.
Although its stabilizer generators are larger, flag-based measurement may still offer a low-overhead route to fault-tolerance.

\section{Fault-tolerant error correction}
\label{sec:d4FTEC}

A stabilizer measurement \textbf{circuit} is made fault tolerant to quantum errors by using extra physical qubits.  These ancillas are used to catch faults that may spread to high-weight errors.  In contrast, the bad faults in a syndrome extraction \textbf{sequence} flip syndrome bits.
Additional stabilizers are measured, essentially encoding the syndrome into a classical code.

%\pagebreak  %DEBUG

A circuit is fault-tolerant to distance $d$ if $j \leq t= \lfloor \frac{d-1}{2} \rfloor $ mid-circuit faults cause an output error of weight at most~$j$.  
Additionally for even distance fault tolerance, sets of $d/2$ faults spreading to weight $>d/2$ errors should be detected so the computation can be restarted.  
When these faults yield an error of weight $d/2$, the computation is restarted if the faults can be detected, else it is rejected in the next round of error correction.

\subsection{Stabilizer measurement circuits} \label{s:SMC}

Quantum error correction involves the measurement of a set of operators called stabilizers, to diagnose the location of errors. For fault-tolerant error correction, these stabilizers may be measured individually, as in Shor's scheme~\cite{Shor96}, or together, using Steane- or Knill-type syndrome extraction~\cite{Steane97, Knill05}.  

The flag method is a popular spin-off of Shor's scheme~\cite{ChamberlandBeverland17flagft, ChaoReichardt17errorcorrection, chao2019flag, Prabhu23}.  By connecting multiple data qubits to each flag qubit, large stabilizers can be measured with relatively low overhead.  In addition, flag circuits can be made fault-tolerant only up to a desired degree. For example, Shor-style measurement of a weight-$w$ stabilizer needs $w+1$ ancillas and is fault-tolerant to distance $w$, but we show a weight-eight stabilizer measurement circuit with six ancillas that is fault-tolerant to distance four.

For distance-three fault tolerance, we show in~\figref{f:smcrules} that one fault in the circuit should result in an error of $X$ and $Z$ weight at most one.  For distance four, if two faults occur and can be detected, the computation must be rejected and restarted.  If this detection is not possible, the circuit must be designed to ensure errors cannot spread to weight greater than two.  Note that a fault may alter the value of the measured syndrome bit; syndrome bit errors are dealt with in~\secref{s:SMS}.

\begin{figure}
    \centering
    \subfloat[\label{f:smc00} ]{\includegraphics[width=.24\textwidth]{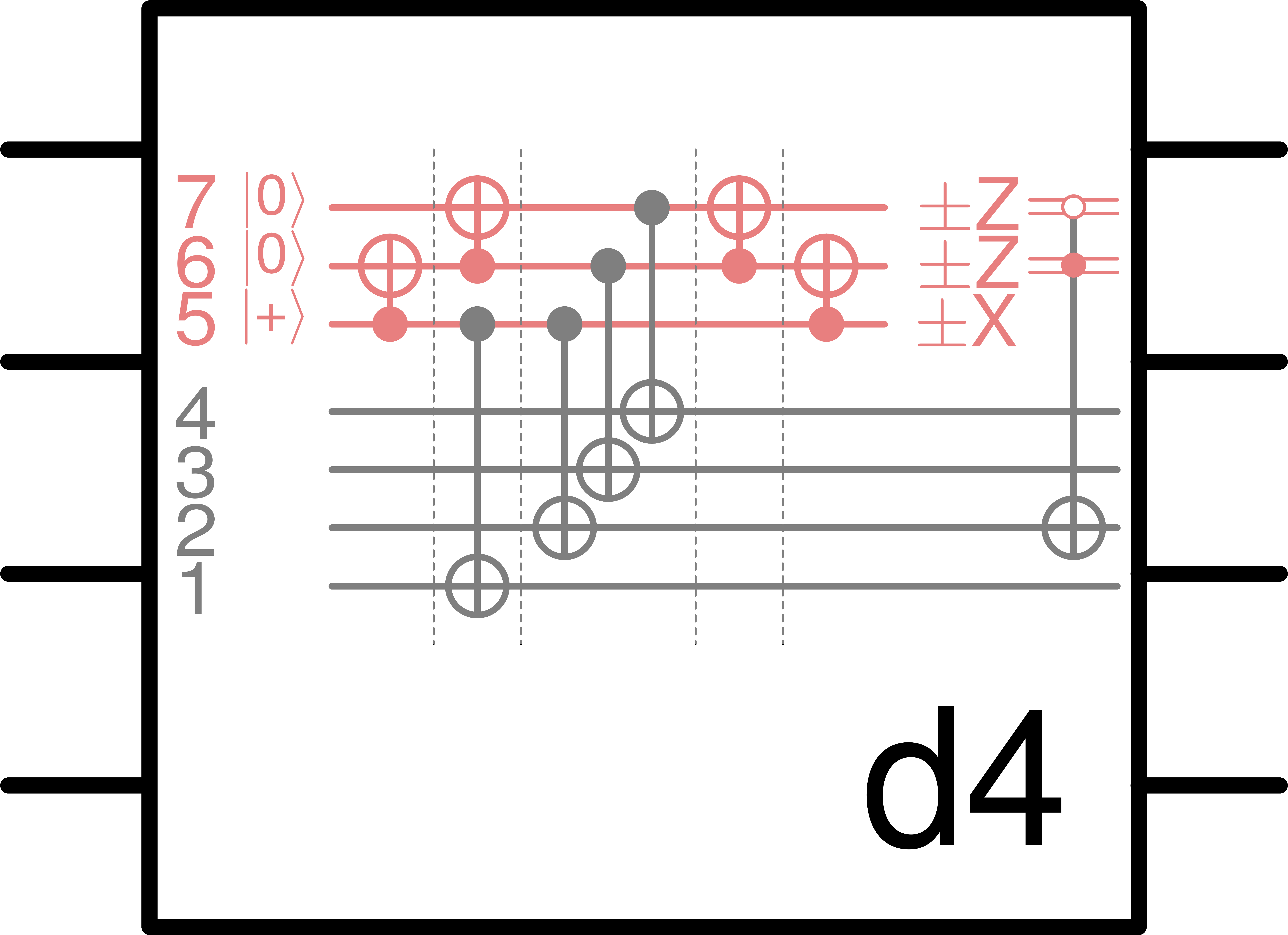}}
    \hspace{1cm}
    \subfloat[\label{f:smc0102} ]{\includegraphics[width=.62\textwidth]{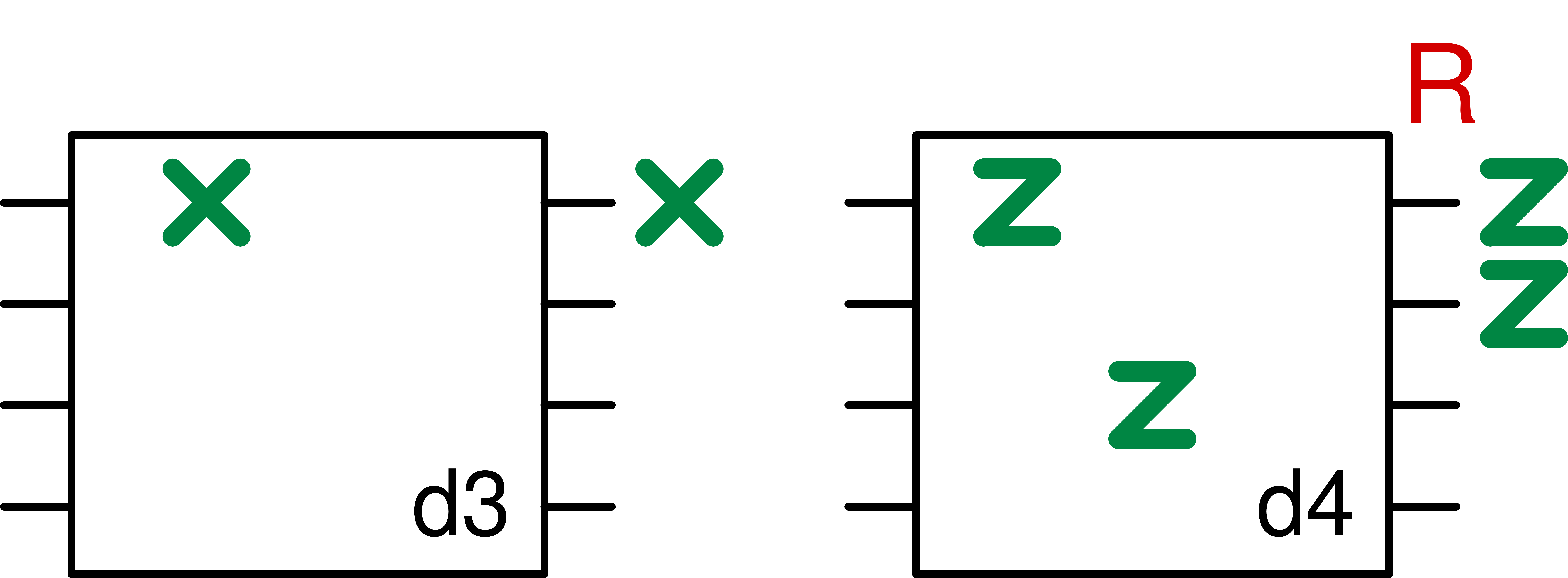}}
    \caption{(a)~A distance-$4$ stabilizer measurement circuit contains ancilla preparation, CNOTs, measurement and a recovery.  (b)~Rules for fault tolerance.  One fault should be corrected to an error of $X/Z$ weight at most one---this is sufficient for distance $3$.  Two faults should either be rejected (denoted by the red R) or result in an error of weight~two.}
    \label{f:smcrules}
\end{figure}

\begin{figure}
    \centering
    \subfloat[\label{f:wt4circ} ]{\includegraphics[width=.4\textwidth]{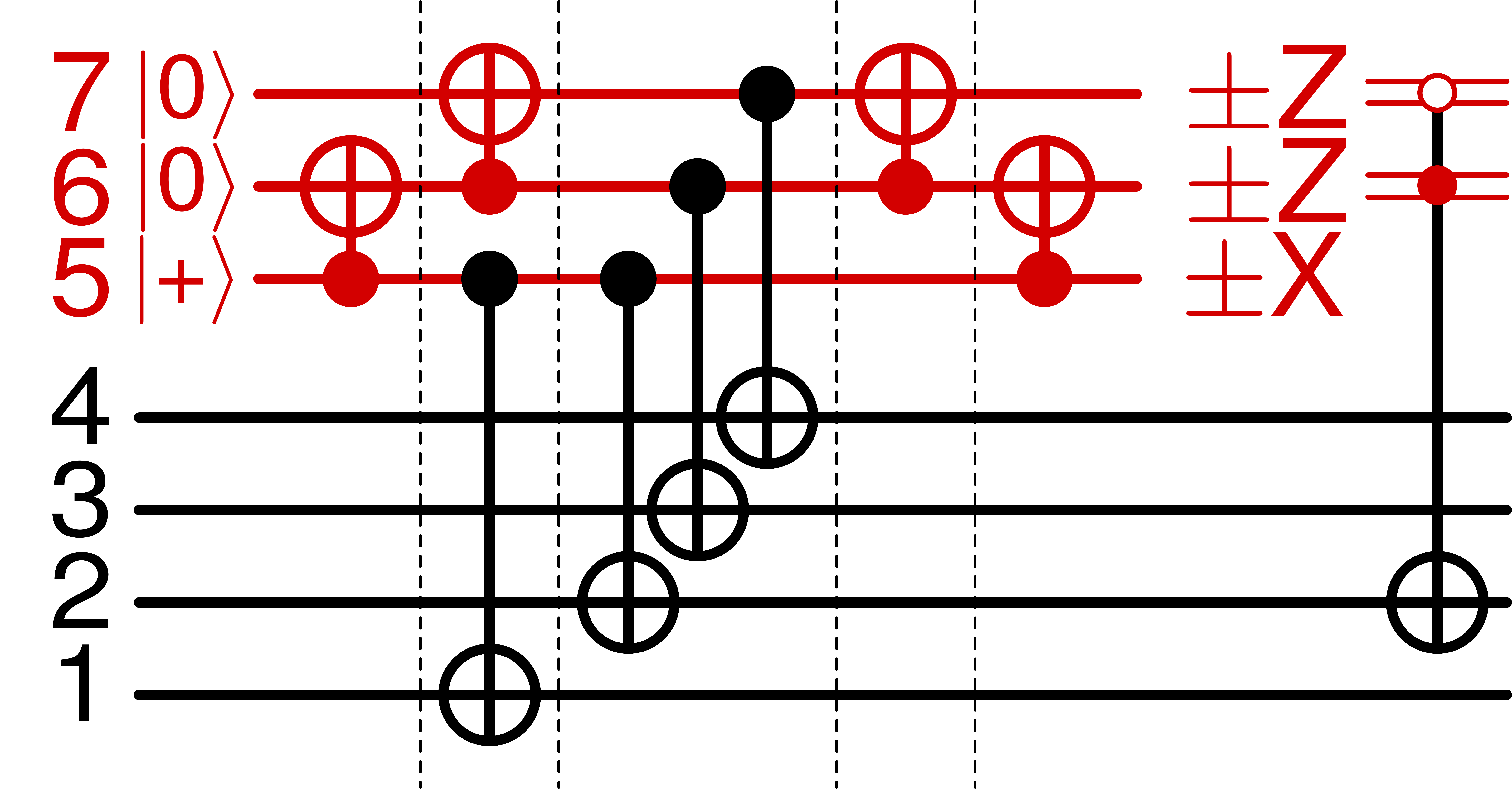}}
    \hspace{.8cm}
    \subfloat[\label{f:wt4layout} ]{\includegraphics[width=.51\textwidth]{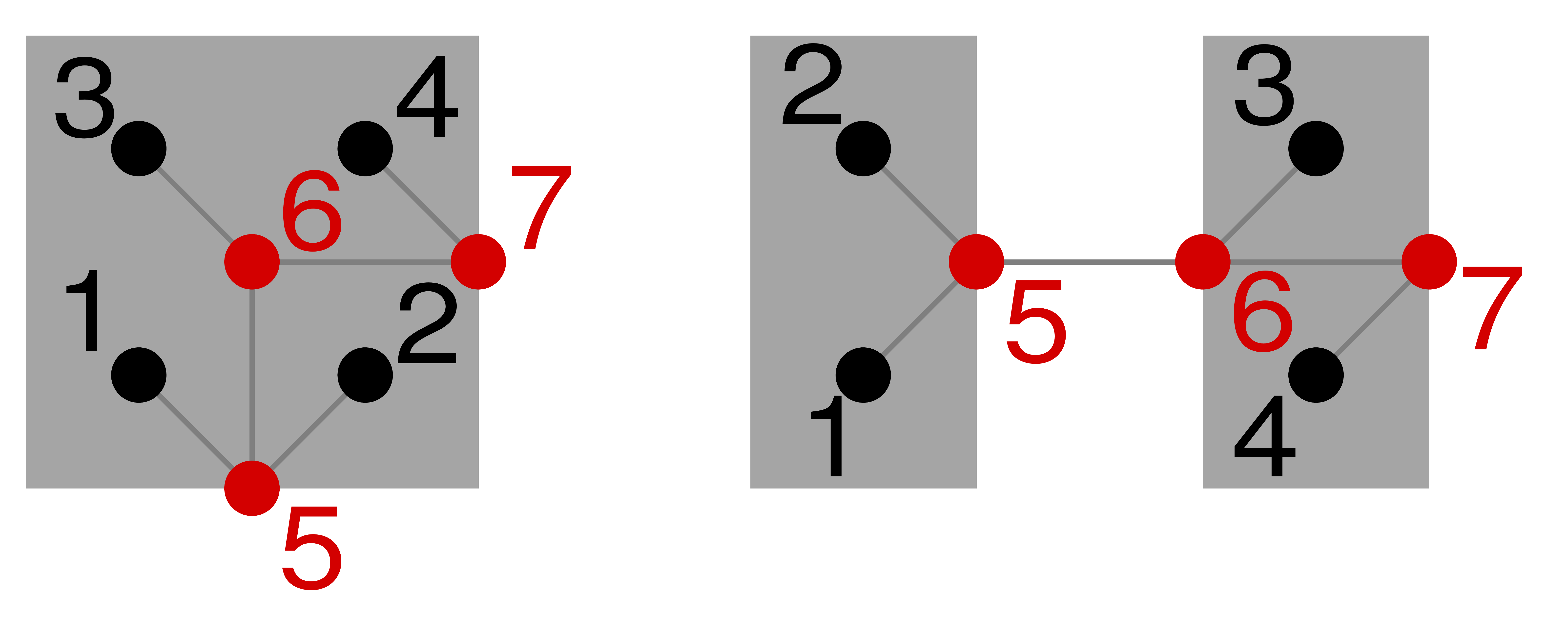}}
    \caption{(a) Circuit to measure a weight-four $X$ stabilizer fault-tolerantly to distance-four, satisfying the locality constraints in~(b).  The $\pm Z$ measurements are used to flag mid-circuit faults. Gates bunched together can be performed in parallel. (b)~Two layouts for measuring stabilizers in the sequences of~\figref{f:codes}.}
     \label{f:wt4circuit}
\end{figure}

\begin{figure}
    \centering
    \subfloat[\label{f:wt8circ} ]{\includegraphics[width=.7\textwidth]{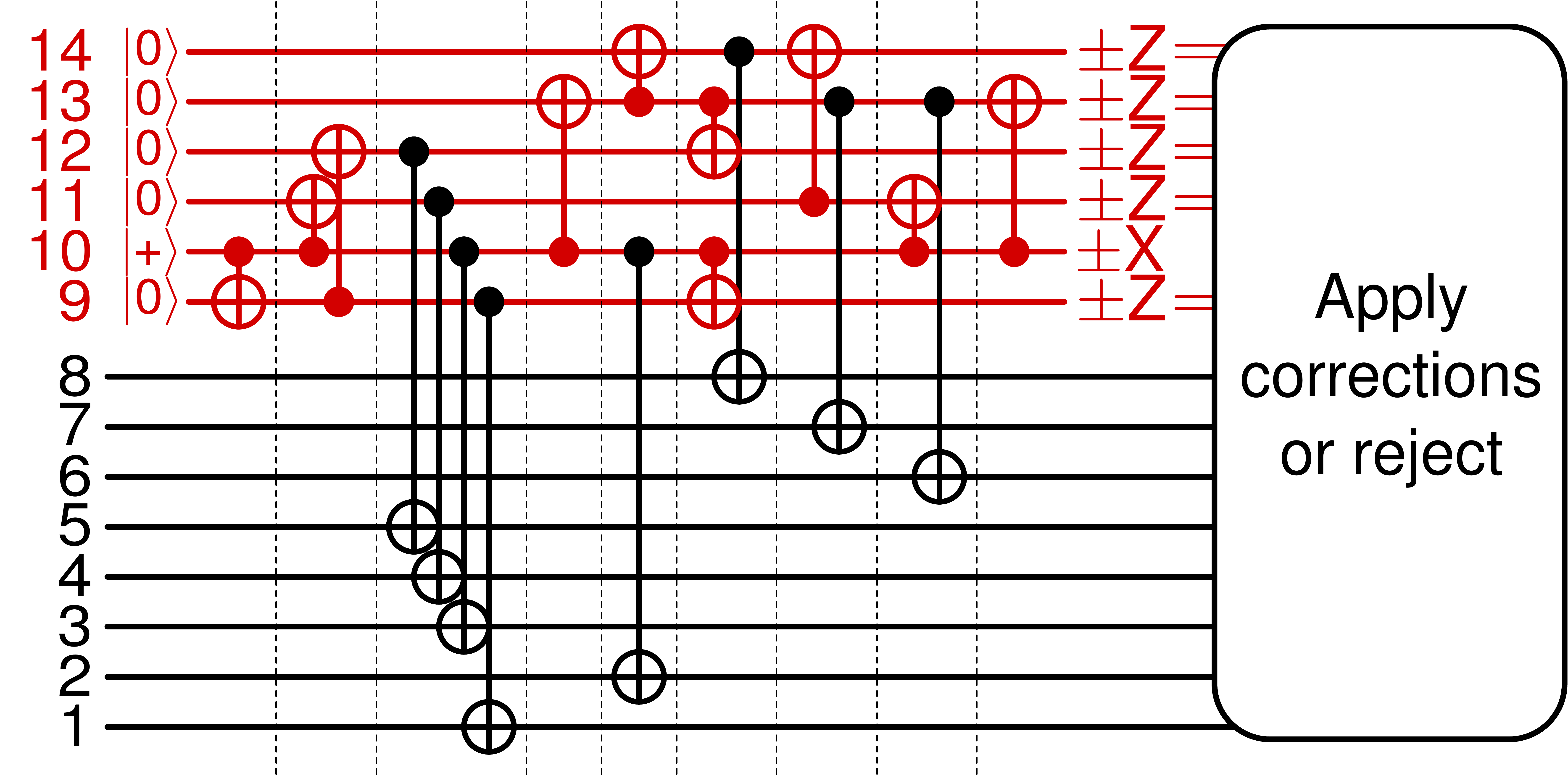}}
    \hspace{.8cm}
    \subfloat[\label{f:wt8layouta} ]{\includegraphics[width=.8\textwidth]{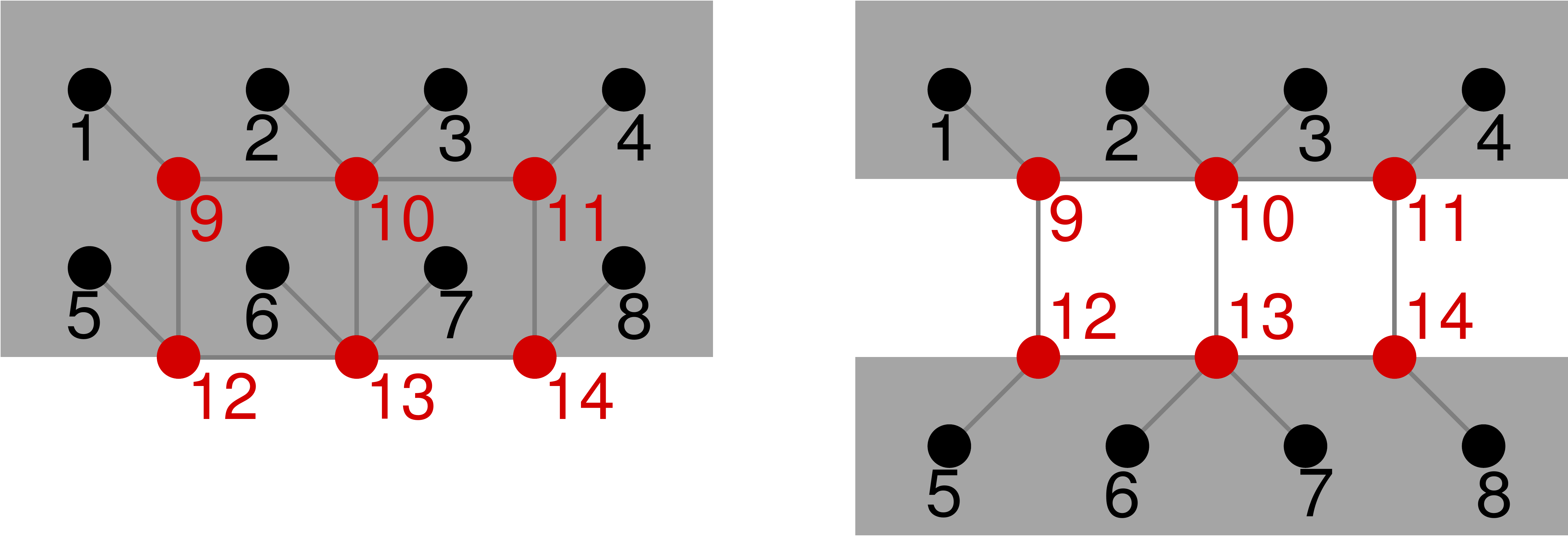}}
    \caption{(a) Circuit to measure a weight-eight $X$ stabilizer fault-tolerantly to distance-four, satisfying the locality constraints in~(b).  One fault is corrected to at most a weight-one error, but two or more faults may either be corrected, or detected resulting in rejection.  The resulting flag outcomes for corrections and rejection are tabulated in \appref{s:wt8corrs}.  (b)~Two layouts for measuring stabilizers in the sequences of~\figref{f:codes}.}
     \label{f:wt8circuit}
\end{figure}

We develop flag-based stabilizer measurement circuits.
For this, we use a randomized search algorithm constrained by the above fault tolerance rules and the geometric locality of~\figref{f:layout}.  With all six codes in this work, the stabilizers that are measured are of weight two, four and eight.  At the circuit level, the measurement of a weight-two stabilizer is automatically fault-tolerant (one fault causes an error of weight at most one).  
A weight-four stabilizer measured fault-tolerantly to distance three (i.e., one fault results in error of weight at most one) is automatically fault-tolerant to distance four, as two faults occurring in the circuit cannot create data errors with $X$ and $Z$ weight greater than two.  In~\figref{f:wt4circ}, the weight-four stabilizer measurement circuit applies a correction only for the $01$ ancilla measurement.
\figureref{f:wt8circ} shows a novel circuit to measure weight-eight stabilizers fault-tolerantly to distance four. This circuit uses different patterns of flag-qubit measurements to either correct an error, or reject---detecting an $O(p^2)$ fault event.  The flag patterns associated with corrections or with rejection have been tabulated in~\appref{s:wt8corrs}.  Figures~\ref{f:wt4layout} and~\ref{f:wt8layouta} show different ways of arranging the qubits to measure weight-four and weight-eight operators.

\medskip
\noindent
\textbf{Improvements.}
The benefit of measuring stabilizers individually is that error decoding is relatively simple. When stabilizers with overlapping support are measured in parallel, as in the surface code, more complicated decoding algorithms like minimum-weight perfect matching are required. However, we can still make small improvements for additional parallelism. In the $k=2$ and $k=4$ codes, only two of the corner weight-four stabilizers can be measured simultaneously, as each stabilizer requires three ancilla qubits. We conjecture that by sharing one ancilla qubit among all the corner stabilizers, it may be possible to fault-tolerantly measure all four of them using just nine ancilla qubits, like in Ref.~\cite{Reichardt18steane}.  Alternatively, in Steane-style syndrome extraction, subsets of stabilizers are measured in parallel using $n$-qubit resource states. Nine ancilla qubits are not sufficient for Steane's method, but Ref.~\cite{Huang21ShorSteane} shows that any subset of stabilizers can be jointly measured with specific resource states. If the fault-tolerant preparation of those resource states is possible on the nine-qubit ancilla sub-lattice of~\figref{f:layout}, it will be possible to develop faster and more efficient stabilizer measurement circuits.

The circuit of~\figref{f:wt8circ} uses six ancilla qubits for distance-four fault tolerance.  We found a distance-three fault-tolerant circuit using only four ancillas (qubits $9, 10,11,13$ in~\figref{f:wt8layouta}), which also requires fewer rounds of parallel gates. 
On the layout of~\figref{f:layout}, this may free up enough ancillas to measure two weight-eight stabilizers in each time step. The result is that more stabilizers can be measured faster and data qubits in an error correction block experience less idle noise.  Since these circuits are only fault-tolerant to distance three, a future avenue of research could use techniques in Ref.~\cite{ChaoReichardt17errorcorrection} to look at their performance in adaptive distance-four error correction. 

% \vspace{-0.3cm} %DEBUG
\subsection{Stabilizer measurement sequences}  \label{s:SMS}

% \noindent
% \textbf{Introduction.} 
The correction of errors in a quantum code requires a syndrome built from the measurement results of a sequence of stabilizers.
Since syndrome extraction is noisy, it is generally not sufficient to measure just a set of stabilizer generators, as shown in~\figref{f:smsintro1}.
Even one erroneous collected syndrome bit can result in an incorrect recovery, pushing the code into a state of logical error.  Instead, 
more stabilizers are redundantly measured to protect from quantum faults that cause syndrome bit flips.  The distance-$d$ surface code does this by measuring the stabilizer generators sequentially $\lceil \tfrac{d}{2} \rceil$ times, in a syndrome repetition code.  We may port this technique to the $k=2, 4$ and $6$ codes, but recent research has shown that these codes have very small stabilizer~\mbox{measurement sequences~\cite{delfosse2020short}.  }

\begin{figure}
\centering
\includegraphics[width =0.75\textwidth]{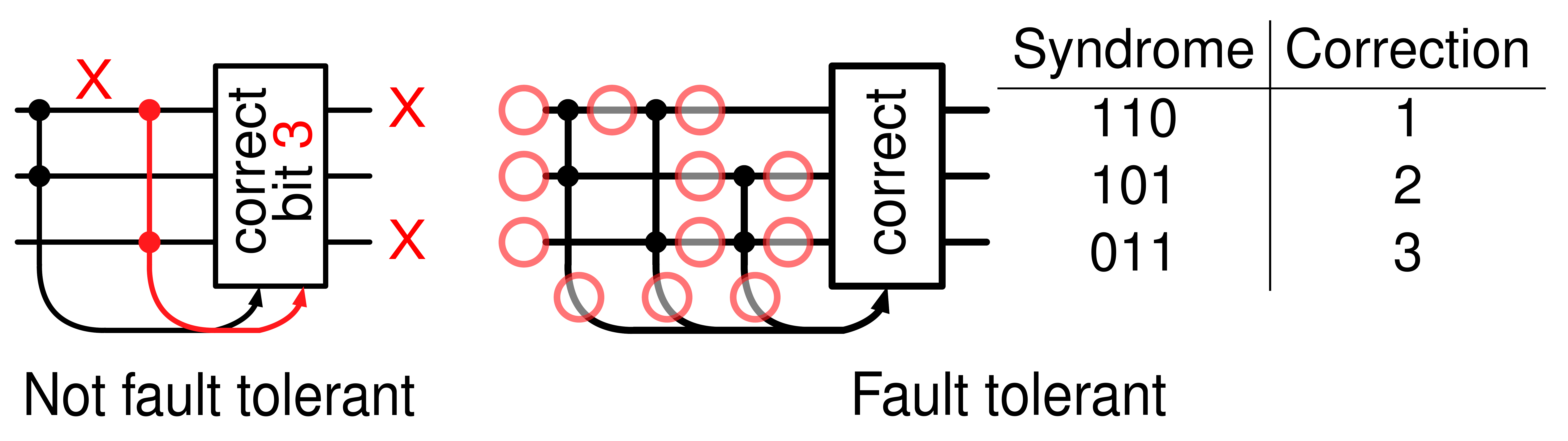}
\caption{Fault-tolerant error correction with the three bit repetition code $\{ 000, 111\}$ (adapted from~Fig. 1 of Ref.~\cite{delfosse2020short}).  It is not fault tolerant to correct errors based on the two parity measurements $1 \oplus 2$ and $1 \oplus 3$. An internal fault on bit $1$ can be mistaken for an input error on bit $3$, as they yield the same syndrome.  Errors can be corrected fault-tolerantly by adding another parity check, $2 \oplus 3$. Now for up to one fault at any of the circled locations, an input error is corrected, and an internal fault leaves an output error of weight $0$ or $1$.}
\label{f:smsintro1}
\end{figure}

%\pagebreak %DEBUG

The depth of a quantum circuit is generally calculated as the number of rounds of parallel two-qubit gates, since single-qubit gates are trivially short.  However, in current systems, the time needed for measurement dominates over the length of a CNOT
~\cite{Google2021, duke21BScode, honeywell21steane, riste20bitflipEC}.
Hence the focus shifts from minimizing CNOT depth to reducing the rounds of measurements needed for error correction.  We therefore denote by ``time step'' the time needed to measure a set of stabilizers in parallel, as shown in~\figref{f:smsintro}.
In addition to finding short fault-tolerant sequences of stabilizers, we carefully parallelize their measurement circuits to~\mbox{further speed up error correction.}

\begin{figure}
    \centering
    \subfloat[\label{f:smsintro} ]{\includegraphics[width=.55\textwidth]{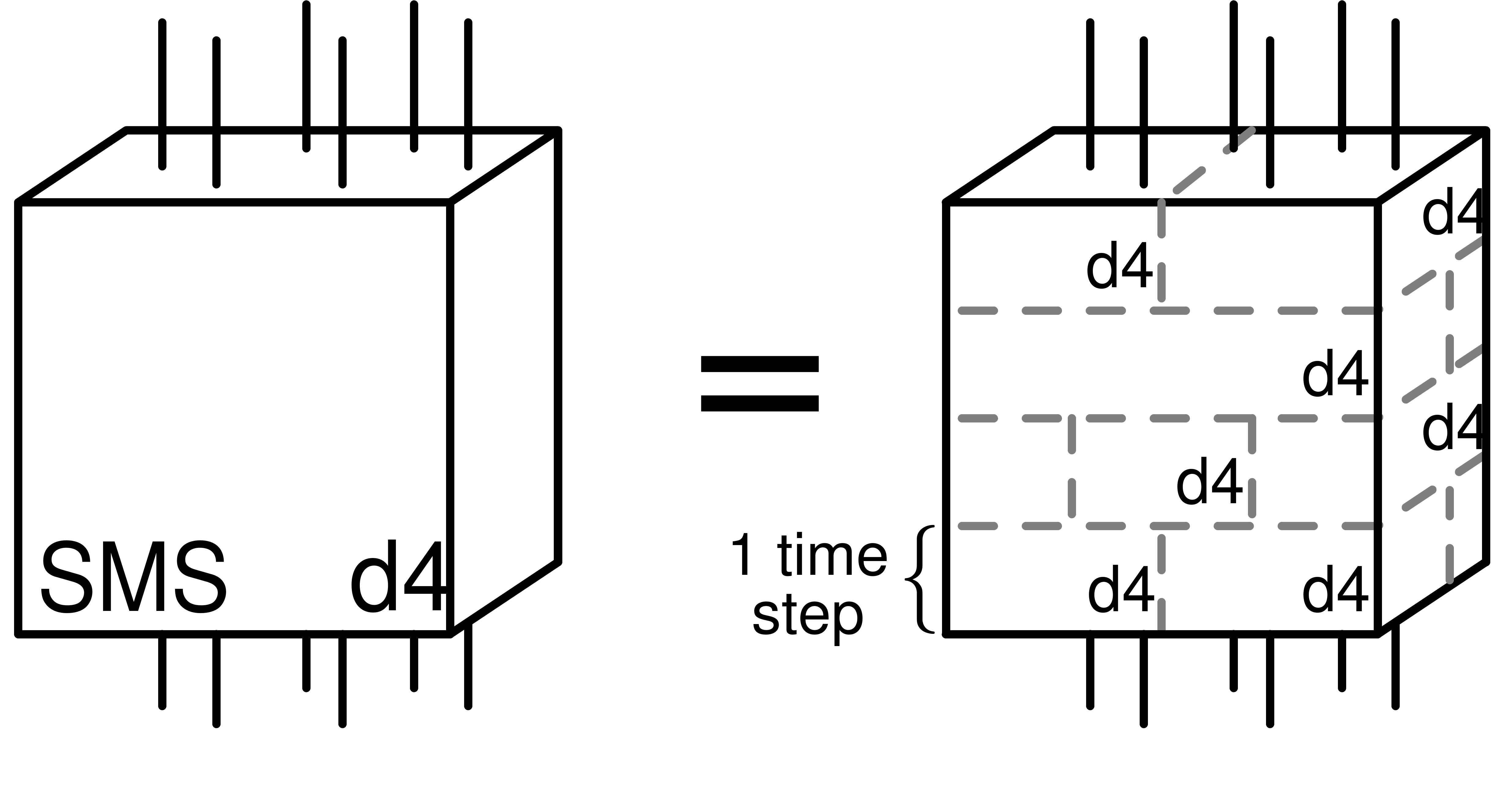}}
    \hspace{1cm}
    \subfloat[\label{f:smsrules} ]{\includegraphics[width=.85\textwidth]{imagesChap5/smsrules.pdf}}
    \caption{(a) A stabilizer measurement sequence (SMS) consists of multiple time steps of parallel stabilizer measurement circuits, where the end of a time step denotes the simultaneous measurement of all the ancilla qubits. (b) Rules for distance-$4$ fault tolerance---first two are sufficient for distance $3$: ($i$) An input $1$-qubit error must be corrected. ($ii$) $1$ internal fault must be corrected to an error of weight at most $1$. ($iii$) A $2$-qubit input error is rejected. ($iv$) $1$ input error and $1$ internal fault should be corrected to an error of weight at most $1$ or rejected. (v) $2$ internal faults must be rejected or propagate to an error of weight at most $2$.}
     \label{f:smspics}
\end{figure}

% \begin{figure}
%     \begin{subfigure}{0.45\linewidth}
%         \includegraphics[width=0.99\textwidth]{imagesChap5/smsintro.pdf}
%         \caption{\label{f:smsintro}}
%     \end{subfigure}
%     \hspace{8mm}
%     \begin{subfigure}{0.45\linewidth}
%         \includegraphics[width=0.99\textwidth]{imagesChap5/smsrules.pdf}
%         \caption{\label{f:smsrules}}
%     \end{subfigure}
% \caption{(a) A stabilizer measurement sequence (SMS) consists of multiple time steps of parallel stabilizer measurement circuits, where the end of a time step denotes the simultaneous measurement of all the ancilla qubits. (b) Rules for distance-$4$ fault tolerance---first two are sufficient for distance $3$: ($i$) An input $1$-qubit error must be corrected. ($ii$) $1$ internal fault must be corrected to an error of weight at most $1$. ($iii$) A $2$-qubit input error is rejected. ($iv$) $1$ input error and $1$ internal fault should be corrected to an error of weight at most $1$ or rejected. (v) $2$ internal faults must be rejected or propagate to an error of weight at most $2$.}
% \end{figure}

% \pagebreak %DEBUG
\medskip
\noindent
\textbf{Fault tolerance rules.}
We follow the `exRec' formalism of Ref.~\cite{AliferisGottesmanPreskill05} to determine rules for fault-tolerant error correction, as shown in \figref{f:smsrules}.  For distance-three fault tolerance, only two rules are needed.  If the input to an error correction block has a weight-one error and there are no internal faults, the syndrome must be sufficient to correct back to the codespace. This is actually the basic rule for an ideal error correction block.  If there are no input errors and one internal fault occurs, the weight of the output error after recovery should be at most one.

For distance-four fault tolerance, we must consider the effect of up to two input errors or internal faults. If the input error has weight two and there are no internal faults, then the stabilizer measurement sequence must detect the error and restart the computation.  
If there is a weight-one input error and an internal fault, either the computation is restarted, or the output of the error correction block must have error of $X$ and $Z$ weight at most one. 
Finally, if two internal faults occur with no input error, either the computation is restarted, or the output error must have weight at most two. 

(In the last four rules, the resulting syndrome must never be equivalent to a weight-one input error on a different qubit. This ensures that every weight-one input error can be reliably corrected.)

\medskip
\noindent
\textbf{Solutions.} To perform distance-four fault-tolerant error correction on the layout of~\figref{f:layout}, we consider measuring stabilizers only of the form given in~\secref{s:SMC}. The goal is then to devise short parallel stabilizer measurement sequences occupying the fewest time steps while satisfying the fault tolerance rules above.  For each of the newly proposed codes, the sequences in~\figref{f:codes} were found using randomized search and subsequent minor alterations.

The $k = 2$ code measures ten $X$ (or $Z$) stabilizers over five time steps, hence recovery occurs every ten time steps. On the other hand, the $k = 6$ code contains no stabilizers of weight less than eight, so parallelism is difficult. Here, seven $X$ (or $Z$) stabilizers are measured over seven time steps, for a total $14$ time steps between recoveries. The distance-four surface code measures all its stabilizer generators in two time steps. The first is used to measure the nine weight-four stabilizer generators using the nine ancilla qubits, and the second time step is used to measure the boundary weight-two stabilizers. Recovery occurs after two fresh syndrome layers are measured, at a frequency of four time steps. 

% Also talk about the heuristics for how this was developed.

\section{Simulation results}
\label{sec:d4simresults}

\noindent
\textbf{Noise model.} For simulation, we consider independent circuit-level noise, as described below:

\begin{itemize}[leftmargin=*]
\item With probability $p$, the preparation of $\ket 0$ is replaced by $\ket 1$ and vice versa---similarly $\ket +$ and $\ket -$.
\item With probability $p$, $\pm X$ or $\pm Z$ measurement on any qubit has its outcome flipped.
\item With probability $p$, a one-qubit gate is followed by a random Pauli error drawn uniformly from $\{ X, Y, Z\}$.
\item With probability $p$, the two-qubit CNOT gate is followed by a random two-qubit Pauli error drawn uniformly from $\{ I, X, Y, Z\}^{\otimes 2} \setminus \{I \otimes I \}$.
\item After each time step, with probability $p(1+m/10)$, each data qubit is acted upon by a random one-qubit Pauli error drawn uniformly from $\{X, Y, Z\}$.  (A time step denotes one round of parallel stabilizer measurements of maximum CNOT depth $m$, as in~\secref{s:SMS}.)
\end{itemize}

The rest error rate models the observed performance of current-day quantum systems, where the time taken to measure an ancilla qubit is long compared to the CNOT gate time.  We model the rest error rate during measurement as $p$, and during CNOT gates as $p/10$.  Even with dynamical decoupling~\cite{ViolaKnillLloyd99DD}, the error incurred by the idle data qubits can be quite high.  

\medskip
\noindent
\textbf{Normalized logical error rate.} 
The logical error rate of fault-tolerant storage can be estimated by checking for a logical error after each block of error correction.
However, different codes correct errors at different frequencies---once every four time steps for the surface code, but fourteen for the $k=6$ code. 
To compare the codes on a similar time scale, we normalize the logical error rates~\mbox{with respect to time step.}

We plot the logical error rate per time step in~\figref{f:scaling}, where we show that a distance-four surface code has a storage error rate of $O(10^{-9})$, for a CNOT gate error rate of just $10^{-4}$.  Even with the infidelity of current day CNOTs, $\sim 10^{-3}$, we show logical error rates approaching $10^{-6}$.  These results demonstrate the benefits of postselection.

\begin{figure}
\centering
\includegraphics[width =0.75\textwidth]{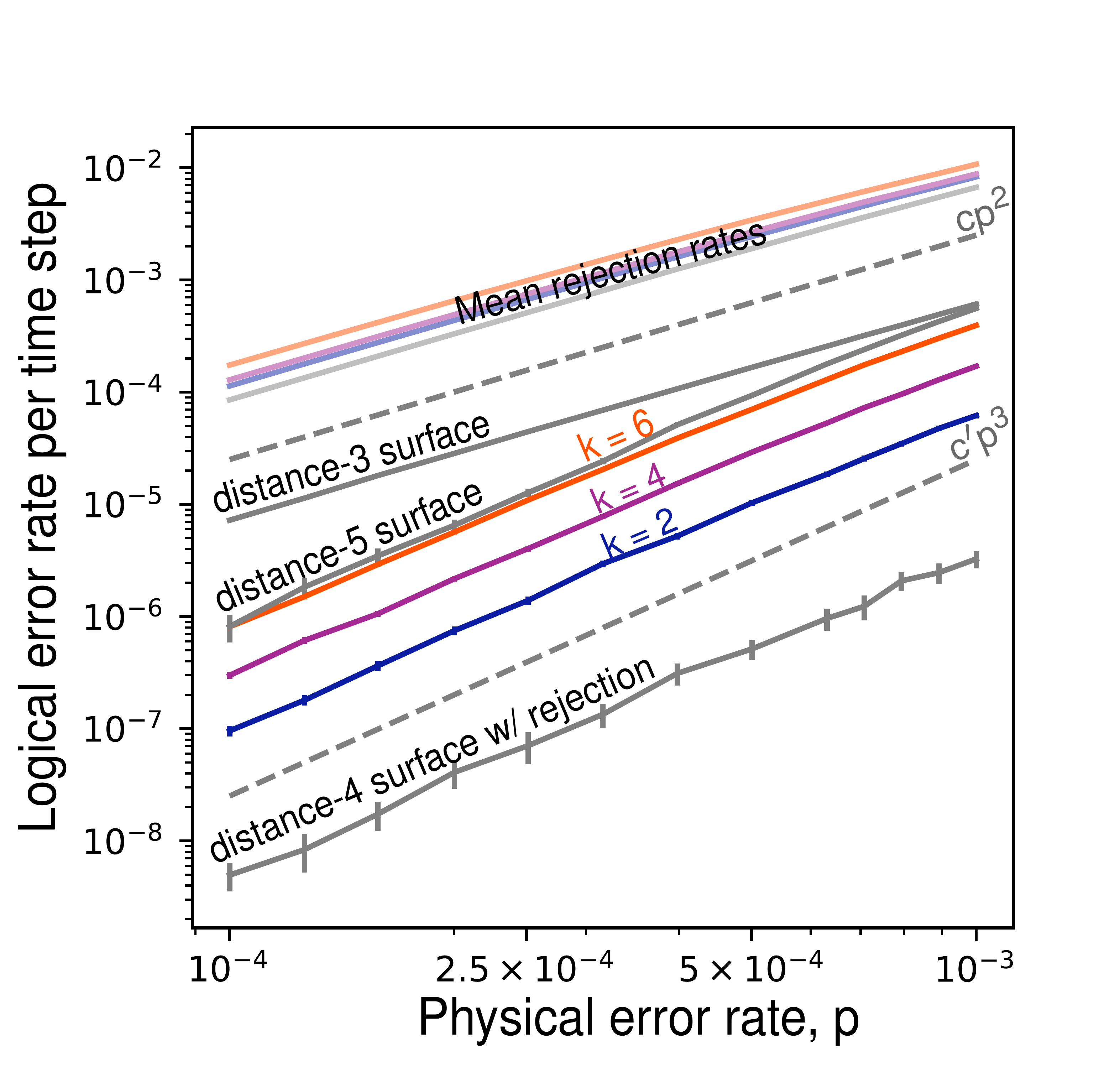}
\caption{$O(p^3)$ scaling of $X$ logical error rate and $O(p^2)$ scaling of rejection rate, with error bars, for the distance-four codes. The distance-three and distance-five surface codes are shown for comparison. The new codes have logical error rate per time step as low as $1/10$th the distance-five surface code.
The distance-four surface code is as low as $1/100$.}
\label{f:scaling}
\end{figure}

\medskip
\noindent
\textbf{Cumulative logical error probability.} 
The mean rejection rates in~\figref{f:scaling} provide a good comparison of how often the different codes reject, but do not accurately describe behavior for bounded-length computation.  Here, a more useful metric is the probability of acceptance, $P_a(t)$, which is how often a $t$-time-step computation completes.  This quantity can be estimated empirically by simulating the application of noisy error correction to an initial state for bounded time, which we denote as a simulation `run'.  If $R$ is the total number of executed runs and $R_{a}(t)$ is the number of runs that have not rejected until time step $t$, 
\begin{equation}
P_a(t) = \frac{R_{a}(t)}{R}  \, .
\end{equation}

Similar to the rejection rate, the logical error rate per time step is indicative of the frequency of logical errors, but does not help to understand the drawbacks of postselection. We again refer to a cumulative metric, the probability of a logical error after $t$ time steps of error correction, empirically given by
\begin{equation}
P_{L}(t)= \frac{R_{L}(t)}{R}  \, ,
\end{equation}
where $R_{L}(t)$ is the number of runs in a state of logical error at time step $t$.  For even distance codes, one must instead look at the probability of logical error conditioned on acceptance, which is calculated as
\begin{equation}
P_{L | a}(t) =  \frac{P_{L}(t)}{P_{a}(t)}  = \frac{R_{L}(t)}{R_{a}(t)}  =  \frac{R_{L}(t)}{R \, P_{a}(t)} \, .
\end{equation}
For even distance codes with postselection, $P_a(t) < 1$ and so $P_{L | a}(t) > P_{L}(t)$.  For odd distance codes that do not reject, $P_{L | a}(t) = P_{L}(t)$.

The above formula holds only for a single code patch.  The probability of logical error while using multiple patches can be upper bounded from the data for a single patch as 
\begin{equation}
P_{L | a}(t,c) \leq \frac{c \, R_{L}(t)} {R \, P_a^c(t)} \, ,
\end{equation}
where $c$ is the number of code patches used. Note that the number of logically incorrect runs grows linearly with the number of patches, but the probability of acceptance of multiple patches is the probability that every patch has accepted. 

\medskip
\noindent
\textbf{Discussion.} 
We simulated fault-tolerant error correction of the codes in~\figref{f:codes} for up to 12000 time steps at error rate $p \in \{ 0.001, 0.0005, 0.00025,$ $0.0001\}$. Using the empirical formulae above, we then plotted in~\figref{f:PLEt} the probability of $X$ logical error conditioned on acceptance and the probability of acceptance for one, two, six or twelve logical qubits.  Note that some plots look discontinuous. This is because we only check for logical errors and reject at the end of an error correction block.  Above the graphs of~\figref{f:PLEt}, we compare the number of physical qubits required for each code.

\begin{table}
\caption{\label{f:k1p0_001}
Error correction for $1$ logical qubit at $p = 0.001$. The probability of logical error and acceptance are shown for $80$ and $200$ time steps.  Each code uses one patch of qubits.  The distance-four surface code has the lowest logical error probability for short computations.  
}
%\setlength{\tabcolsep}{6.5pt}
%\begin{tabular}{cccccc}
%\hline \hline
%& &\multicolumn{2}{c}{$t = 80$} &\multicolumn{2}{c}{$t = 200$}\\ 
%Code & Qubits & $P_{L|a}$ & $P_a$ & $P_{L|a}$ & $P_a$ \\
%\hline
%$k = 2$ & $25$ & $.0037(2)$ & $53.9\%$ & $.0099(6)$ & $19.6\%$ \\
%$k = 4$& $25$ & $.0114(4)$ & $50.5\%$ & $.0283(11)$ & $17.2 \%$\\
%$k = 6$&  $25$ & $.0275(7)$ & $44.1\%$ & $.0687(21)$ & $11.1 \%$\\[.25 cm]
%$d = 4$ & $25$ & $.0001(1)$ & $59.4\%$  & $.0003(1)$ & $25.9 \%$\\
%$d = 3$ & $13$ & $.0242(6)$ & $100\%$ & $.0581(8)$ &$100 \%$\\
%$d = 5$ & $41$ & $.0062(3)$ & $100\%$ & $.0135(4)$ & $100 \%$\\
%\hline \hline
%\end{tabular}
% \vspace{-0.9mm}
% \hspace{-2.1mm}
\centering
\includegraphics[width=0.7\textwidth]{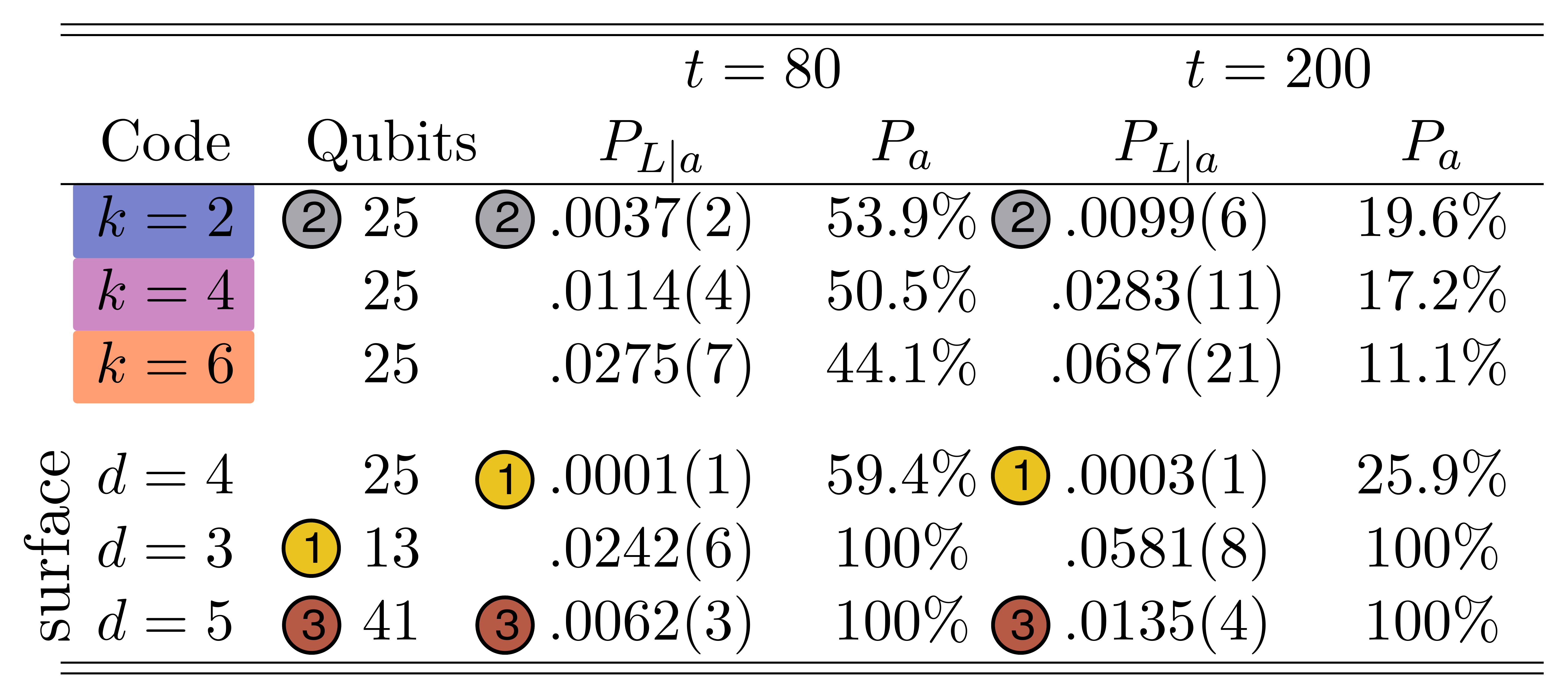}
% \vspace{-0.5cm} %DEBUG
\end{table}

There are many things to be learned from~\figref{f:PLEt}. To start, the first column of graphs shows how a single patch of each code fares against the others, for different error rates.  The $d = 4$ surface code with rejection boasts the lowest logical error probability overall and has the highest acceptance rates among all the even-distance codes.  The logical error probability of the $k = 2$ code actually matches the distance-five surface code, even though it encodes twice as much information.  This is also apparent from~\tabref{f:k1p0_001}, where we show the probability of acceptance and logical error for one logical qubit at $p = 10^{-3}$.

\begin{figure}
\centering
\includegraphics[width=.94\textwidth]{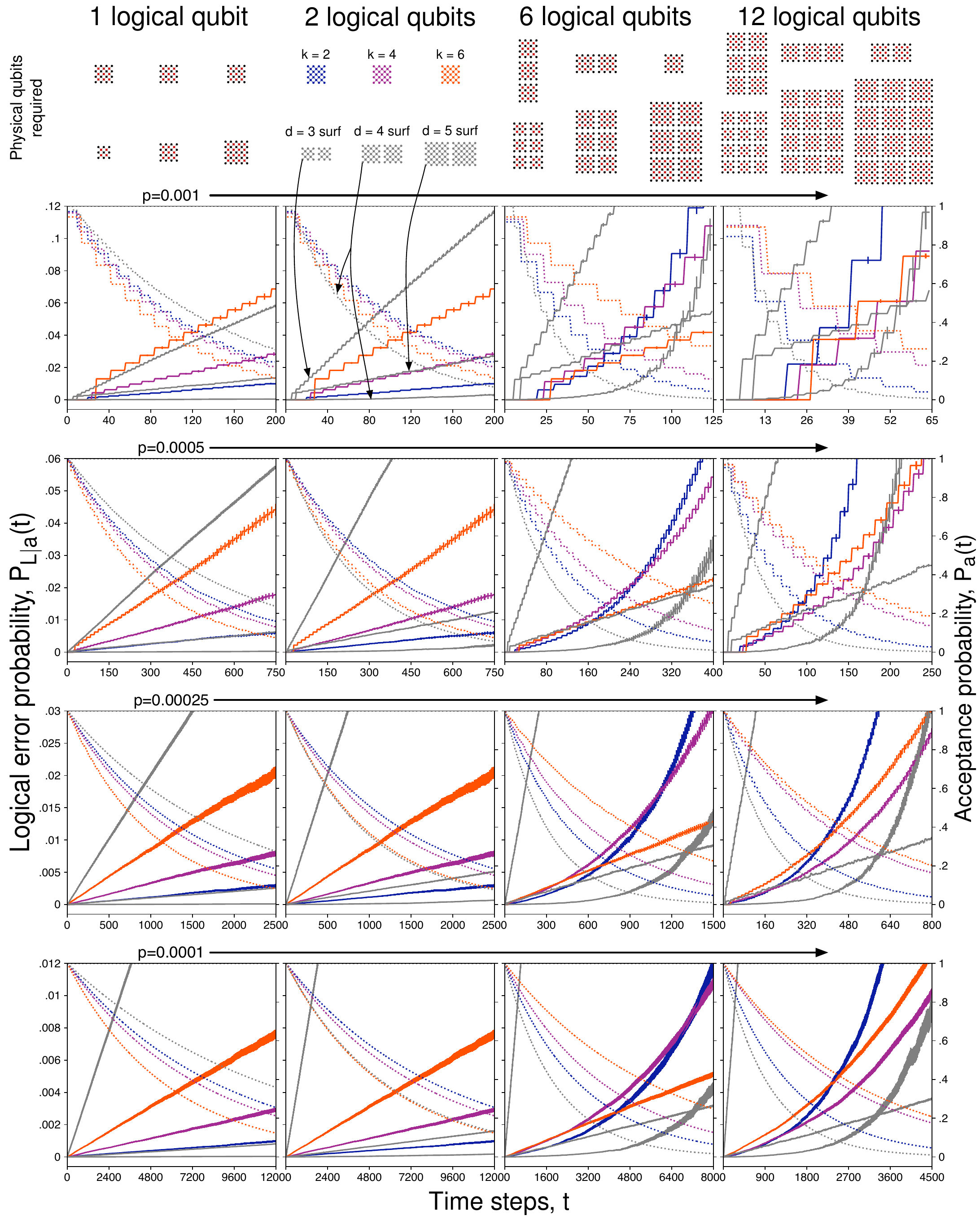}
\vspace{-.1cm} %DEBUG
\caption{Probability of $X$ logical error (solid) and acceptance (dotted) for $t$ time steps of error correction on six codes, as a function of physical error rate (row) and desired logical qubits (column).  
The three colored curves correspond to the $k = 2$, $ k = 4$ and $k = 6$ codes and the three gray curves are the surface codes.  The graphs for few time steps look like step functions because the code patches are checked for logical errors only after blocks of error correction, not time steps.  The top row compares the number of physical qubits required to achieve the desired number of logical qubits.}
\label{f:PLEt}
\end{figure}

\begin{table}
\caption{\label{f:k2p0_0005}
Error correction for $2$ logical qubits at $p = 0.0005$.  The probability of logical error and acceptance are shown~for $300$ and $750$ time steps.  The surface codes require more than one patch of physical qubits.  Among the new codes, the $k = 2$ color code has few large stabilizers and a fast sequence. These advantages help it achieve the lowest logical error probability at the highest acceptance rates.}
%\setlength{\tabcolsep}{6.5pt}
%\begin{tabular}{cccccc}
%\hline \hline
%& &\multicolumn{2}{c}{$t = 300$} &\multicolumn{2}{c}{$t = 750$}\\ 
%Code & Qubits & $P_{L|a}$ & $P_a$ & $P_{L|a}$ & $P_a$ \\
%\hline
%$k = 2$ & $25$ & $.0025(2)$ & $48.5\%$ & $.0060(5)$ & $16\%$ \\
%$k = 4$& $25$ & $.0066(3)$ & $45.4\%$ & $.0178(9)$ & $13.1 \%$\\
%$k = 6$&  $25$ & $.0172(6)$ & $35.7\%$ & $.0441(18)$ & $7.6 \%$\\[.25 cm]
%$d = 4$ & $50$ & $.0003(1)$ & $31.9\%$  & $.0021(4)$ & $5.6 \%$\\
%$d = 3$ & $26$ & $.0477(5)$ & $100\%$ & $.1145(8)$ &$100 \%$\\
%$d = 5$ & $82$ & $.0053(2)$ & $100\%$ & $.0125(3)$ & $100 \%$\\
%\hline \hline 
%\end{tabular}
% \vspace{-0.9mm}
% \hspace{-2.1mm}
\centering
\includegraphics[width=0.7\textwidth]{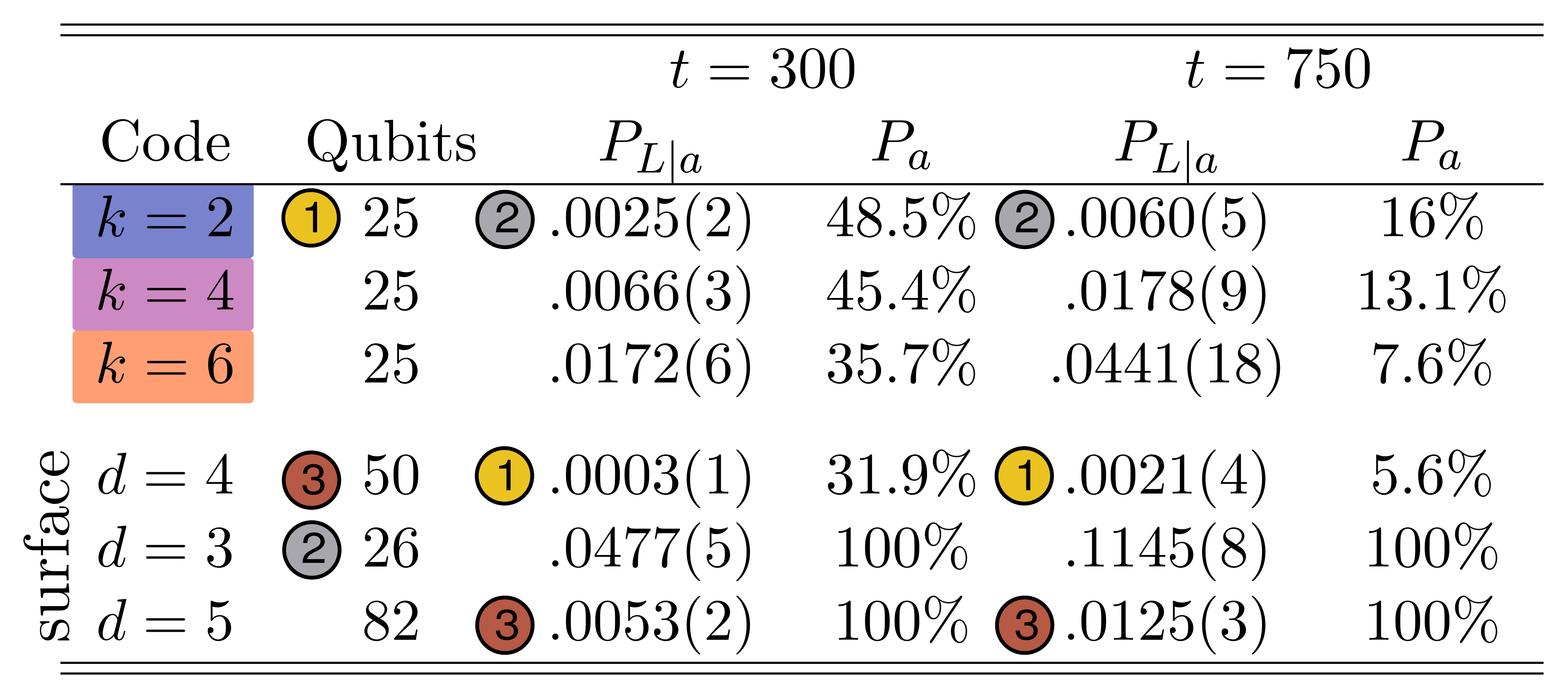}
% \vspace{-0.6cm}  %DEBUG
\end{table}

For two logical qubits, (second column of graphs), the surface codes need two patches of qubits, hence the probability of logical error doubles and the acceptance is squared. The distance-four surface code now has the lowest acceptance probability among the distance-four codes.  We keep the range of time steps consistent between the first and second columns to show that the curves for the multi-qubit codes are unchanged.  As shown in~\tabref{f:k2p0_0005}, for two logical qubits, the $k = 2$ code halves the logical error probability of the $d = 5$ surface code, using fewer than one-third as many physical qubits. 

Going further, we have analyzed error correction for six and twelve encoded qubits, as shown in the last two columns.
With only one-tenth of the physical overhead, a single $k = 6$ code patch rivals the performance of six patches of the $d = 5$ surface code.  The single patch of the $k=6$ code even outperforms the $k = 2$ and $k = 4$ codes, but this is precisely because only one code patch is used. When multiple code patches are used for many logical qubits, the acceptance rate of the distance-four codes drops exponentially.  This is also observed in~\tabref{f:k6p0_00025} and~\tabref{f:k12p0_0001}, as the acceptance probability of the distance-four surface code quickly approaches zero.

In the last column of graphs, we compare statistics for twelve logical qubits. Although current NISQ systems only protect one logical qubit, our results show that just $50$--$75$ physical qubits are sufficient for twelve logical qubits. 
In this regime, the $k = 4$ code achieves lower logical error probability than the $k=6$ code with only $50\%$ more overhead. Unfortunately at longer time scales, postselection sharply increases the logical error probability, rendering the distance-four codes much less useful.

All simulations in this paper, developed in Python, were executed on the USC Center for Advanced Research Computing (CARC) high-performance computing cluster. The simulations used over one million minutes of CPU core time on Intel Xeon processors operating at $2.4$ GHz.

%\medskip
\noindent
\textbf{Take-home message.} Postselection can play a crucial role in reducing logical error rates.  
However, when logical information is stored for too long, it is likely to be wiped and reset. 
This is okay for some algorithms: applications with low depth, like variational algorithms~\cite{cerezo2020variational, bharti2021noisy}, or those that are designed with rejection, like magic state distillation~\cite{Bravyi05}.
If only one or two qubits are required, the distance-four surface code and the $k=2$ code offer very low probability of logical error. For more qubits, we advise using the $k=4$ or $k=6$ codes, as they use far fewer physical resources to achieve competitively low logical error.  
We show that $50$--$75$ good physical qubits are sufficient to correct errors on twelve logical qubits. Even at a CNOT error rate as high as $5\times 10^{-4}$, error correction up to $100$ time steps can be run with error probability as low as $1\%$.

\begin{table}
\caption{\label{f:k6p0_00025}
Error correction for $6$ logical qubits  at $p = 0.00025$.  The probability of logical error and acceptance are shown~for $700$ and $1500$ time steps.  The $k = 6$ code requires one-tenth the physical qubits as the distance-$5$ surface code, while nearly matching the logical error probability.}
%\setlength{\tabcolsep}{6.5pt}
%\begin{tabular}{cccccc}
%\hline \hline
%& &\multicolumn{2}{c}{$t = 700$} &\multicolumn{2}{c}{$t = 1500$}\\ 
%Code & Qubits & $P_{L|a}$ & $P_a$ & $P_{L|a}$ & $P_a$ \\
%\hline
%$k = 2$ & $75$ & $.0061(3)$ & $24.5\%$ & $.0412(17)$ & $4.8\%$ \\
%$k = 4$& $50$ & $.0078(3)$ & $34.7\%$ & $.0306(10)$ & $10.4 \%$\\
%$k = 6$&  $25$ & $.0060(2)$ & $50.1\%$ & $.0129(5)$ & $22.3 \%$\\[.25 cm]
%$d = 4$ & $150$ & $.0001(1)$ & $11.5\%$  & $.0137(12)$ & $1 \%$\\
%$d = 3$ & $78$ & $.0847(5)$ & $100\%$ & $.1796(8)$ &$100 \%$\\
%$d = 5$ & $246$ & $.0043(1)$ & $100\%$ & $.0092(2)$ & $100 \%$\\
%\hline \hline 
%\end{tabular}
% \vspace{-0.9mm}
% \hspace{-2.1mm}
\centering
\includegraphics[width=0.7\textwidth]{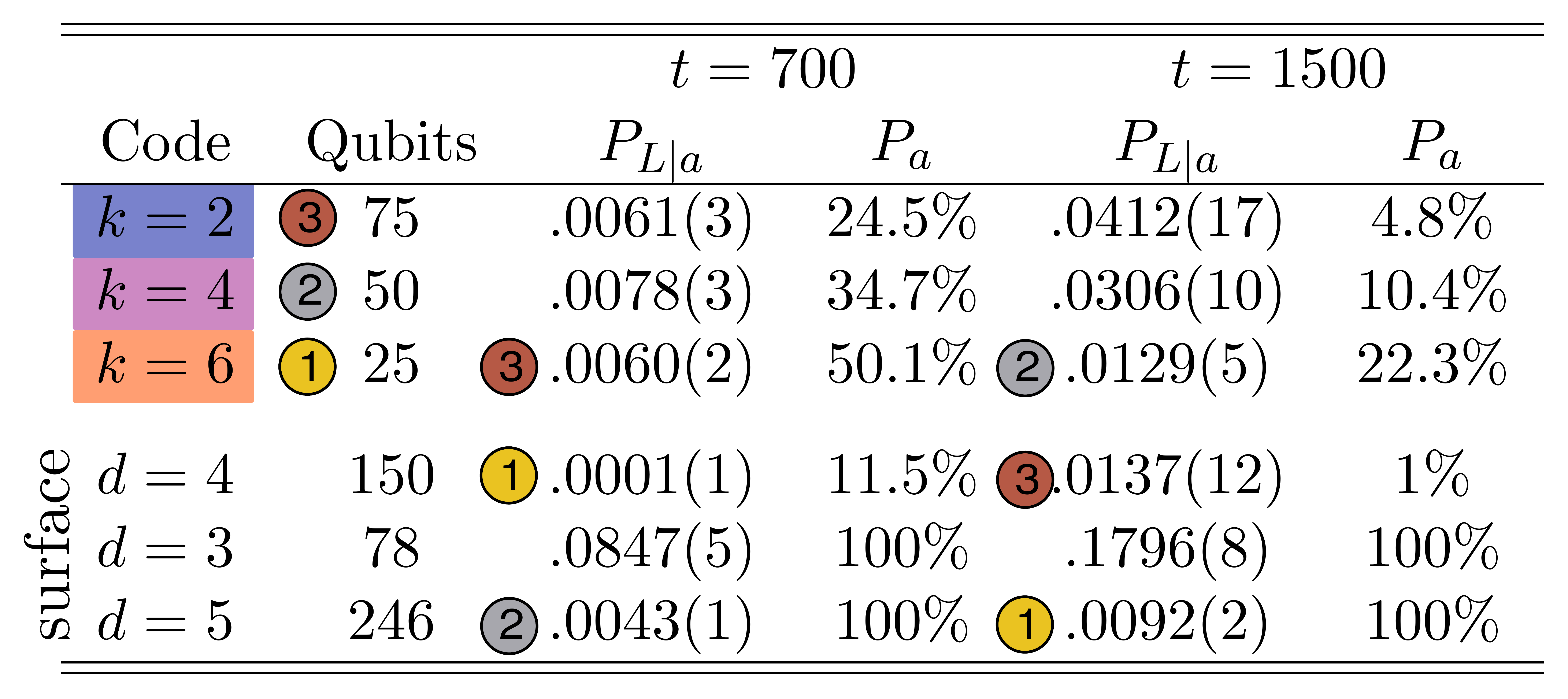}
% \vspace{-0.6cm} %DEBUG
\end{table}

\begin{table}
\caption{\label{f:k12p0_0001}Error correction for $12$ logical qubits at $p=0.0001$.  The probability of logical error and acceptance are shown~for $1800$ and $4500$ time steps.  The $k=4$ code is well-balanced, achieving competitive logical error rates with low~\mbox{qubit overhead.}}
%\setlength{\tabcolsep}{6.5pt}
%\begin{tabular}{cccccc}
%\hline \hline
%& &\multicolumn{2}{c}{$t = 1800$} &\multicolumn{2}{c}{$t = 4500$}\\ 
%Code & Qubits & $P_{L|a}$ & $P_a$ & $P_{L|a}$ & $P_a$ \\
%\hline
%$k = 2$ & $150$ & $.0025(1)$ & $29.3\%$ & $.0285(9)$ & $4.6\%$ \\
%$k = 4$& $75$ & $.0022(1)$ & $50\%$ & $.0101(3)$ & $17.5 \%$\\
%$k = 6$&  $50$ & $.0032(1)$ & $53.4\%$ & $.0127(3)$ & $20.8 \%$\\[.25 cm]
%$d = 4$ & $300$ & $.0003(1)$ & $15.8\%$  & $.0094(7)$ & $1 \%$\\
%$d = 3$ & $156$ & $.0708(2)$ & $100\%$ & $.1761(4)$ &$100 \%$\\
%$d = 5$ & $492$ & $.0013(1)$ & $100\%$ & $.0036(1)$ & $100 \%$\\
%\hline \hline 
%\end{tabular}
% \vspace{-0.9mm}
% \hspace{-2.1mm}
\centering
\includegraphics[width=0.7\textwidth]{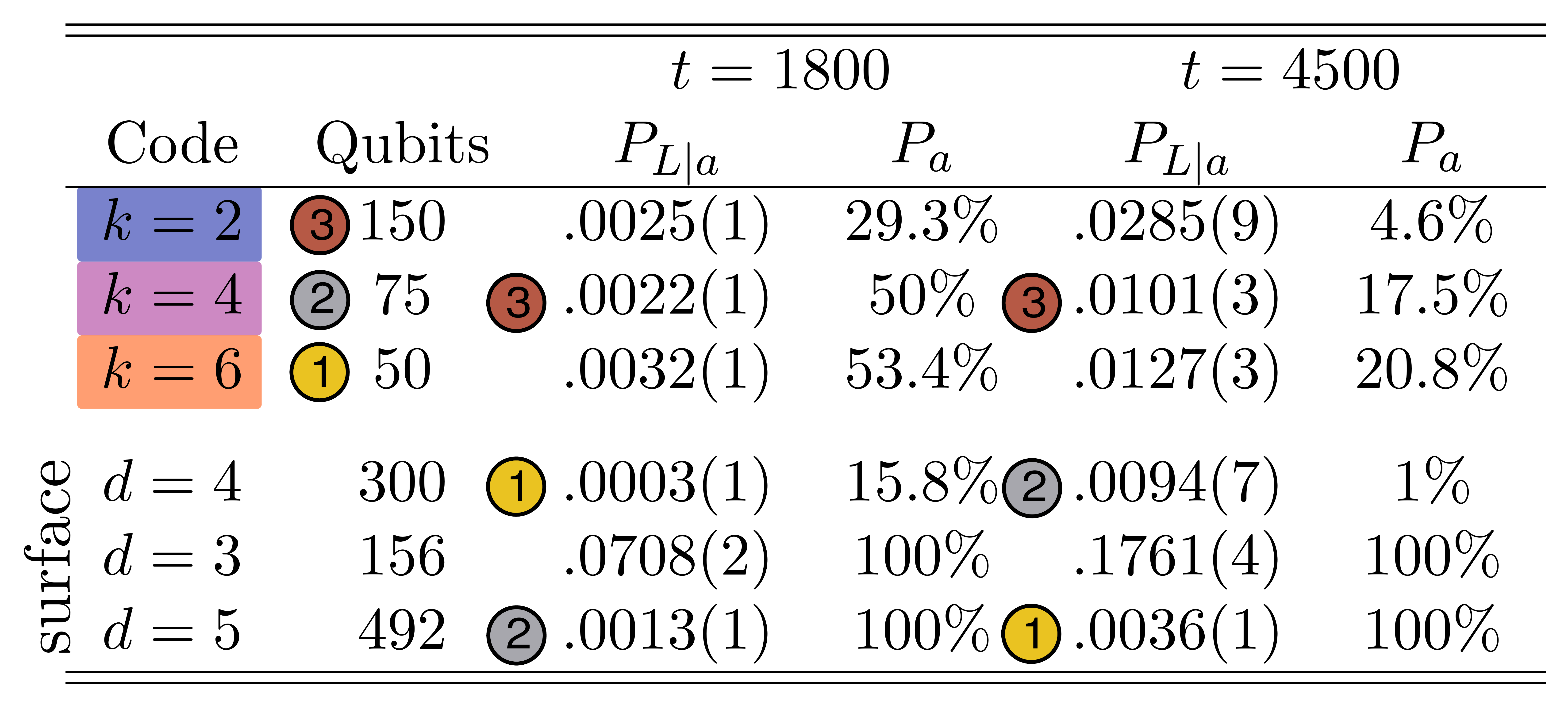}
%\vspace{-0.8cm} % DEBUG
\end{table}

\subsection{Comparing memory against unencoded qubits}
\label{s:unenc} 

In this section, we determine whether information protected by fault-tolerant error correction is more reliable than being stored in an unprotected qubit.
For the postselection codes, \figref{f:unencoded} plots the CNOT depth at which qubits have accumulated $1\%$ probability of logical error.
The unprotected qubit is modeled to accumulate errors only through rest noise, whereas logical errors in the encoded qubits are due to circuits for fault-tolerant error correction.  For the error rates we consider ($\leq 10^{-3}$), it is clear that the encoded qubits are better preserved for much longer than an unprotected qubit.

% \parbox[c]{1cm}{} %DEBUG

%\medskip
% \vspace{6cm} %DEBUG

\begin{figure}
% \hspace{-1cm}
\centering
\includegraphics[width=.65\textwidth]{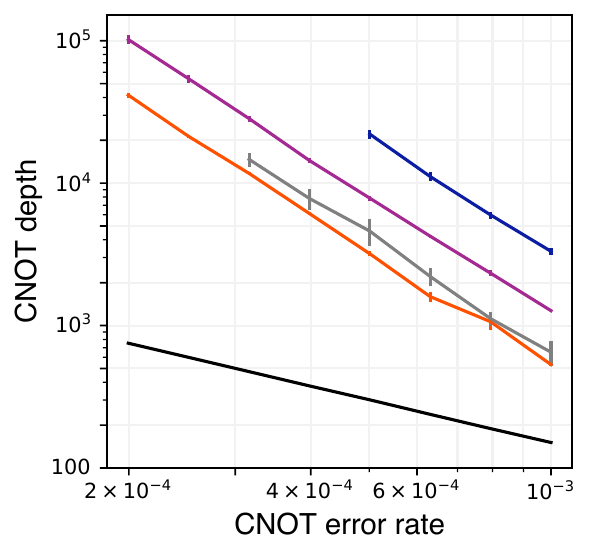}
\caption{CNOT depth at which each code has accumulated $1\%$ probability of $X$ logical error. In black, the depth is plotted for one unencoded qubit, at rest error rate one-tenth the CNOT error rate. Plots are shown for the $ k = 2$ code in blue, $k = 4$ code in purple, $k = 6$ code in orange and the $d=4$ surface code in grey. We assume the depth of ancilla qubit measurement is ten times the depth of a CNOT gate. The CNOT depth shown for the surface code is for $0.01\%$ probability of $X$ logical error.  All data points shown for the postselection-equipped codes have acceptance $> 5 \%$.}
\label{f:unencoded}
% \vspace{-0.3cm}
\end{figure}

\section{Potential future work}
\label{sec:d4future}

In this paper, we show how to perform fault-tolerant storage with $16$-qubit codes. There are two immediate roadblocks en route to universal fault-tolerant quantum computation. Currently, no devices exist with the layout of~\figref{f:layout}, so until they are fabricated, we turn to other layout improvements.
The middle ancilla qubit in~\figref{f:layout} is connected to eight neighboring qubits.  
However, careful analysis and modification of the stabilizer measurement routines may yield solutions that only require maximum qubit degree five or six.
This may not be interesting for densely-connected ion trap quantum computers, but is necessary in superconducting architectures to maintain low cross-talk.  Alternatively, if we are allowed extra ancilla qubits, we show that maximum degree four is possible, as in the Google Sycamore lattice of~\figref{f:degree4k2}.  The stabilizer measurement circuits are all fault-tolerant to distance four, but since all the stabilizer generators are measured simultaneously, error decoding will require new strategies.  The weight-four stabilizers can be measured using the circuit in~\figref{f:wt4circ}, but the weight-eight stabilizer requires a new circuit, as we detail in~\appref{s:wt8corrs}.  In~\figref{f:degree4k2}, the only qubits with degree-four connectivity are the ancillas used for measuring the weight-eight stabilizer. For systems with high crosstalk, qubits of degree three may be sufficient to correct errors on the $k=2$ subsystem code, since errors can be corrected by measuring only weight-four~operators. 

Conversely, we may consider subsystem codes to simplify the measurements performed for error correction. In \secref{sec:gaugecode}, we show a $\llbracket 16,4,2,4 \rrbracket$ code with gauge operators of weight four. We show a process for fault-tolerant error correction that only requires fast single-shot measurement of the gauge operators on a planar square lattice. We observe improved logical error rates, which can be attributed to the speed of error correction and low overhead.

On the theoretical front, we must develop encoding circuits and a universal logical gate set.  States may be prepared by either using flags for fault-tolerance, or by combining patches of distance-two code states into a distance-four state.  For fault-tolerant universal computation, one possible route is teleportation and logical measurements with distilled magic states.  In fact, logical measurements can be performed along with error correction~\cite{delfosse2020short}.  Another route to universality is to use transversal multi-qubit gates between vertically stacked code patches.  It may be possible for gates like the CCZ to induce magic~\cite{Paetznick13}, as the required $\llbracket 15,7,3 \rrbracket$ code can be obtained by puncturing the $\llbracket 16,6,4 \rrbracket$ code. If the error introduced by logical operations is kept low, many logical gates can be applied every time step, allowing high-depth logical circuits.

\begin{figure}
\centering
\includegraphics[width=.5\textwidth]{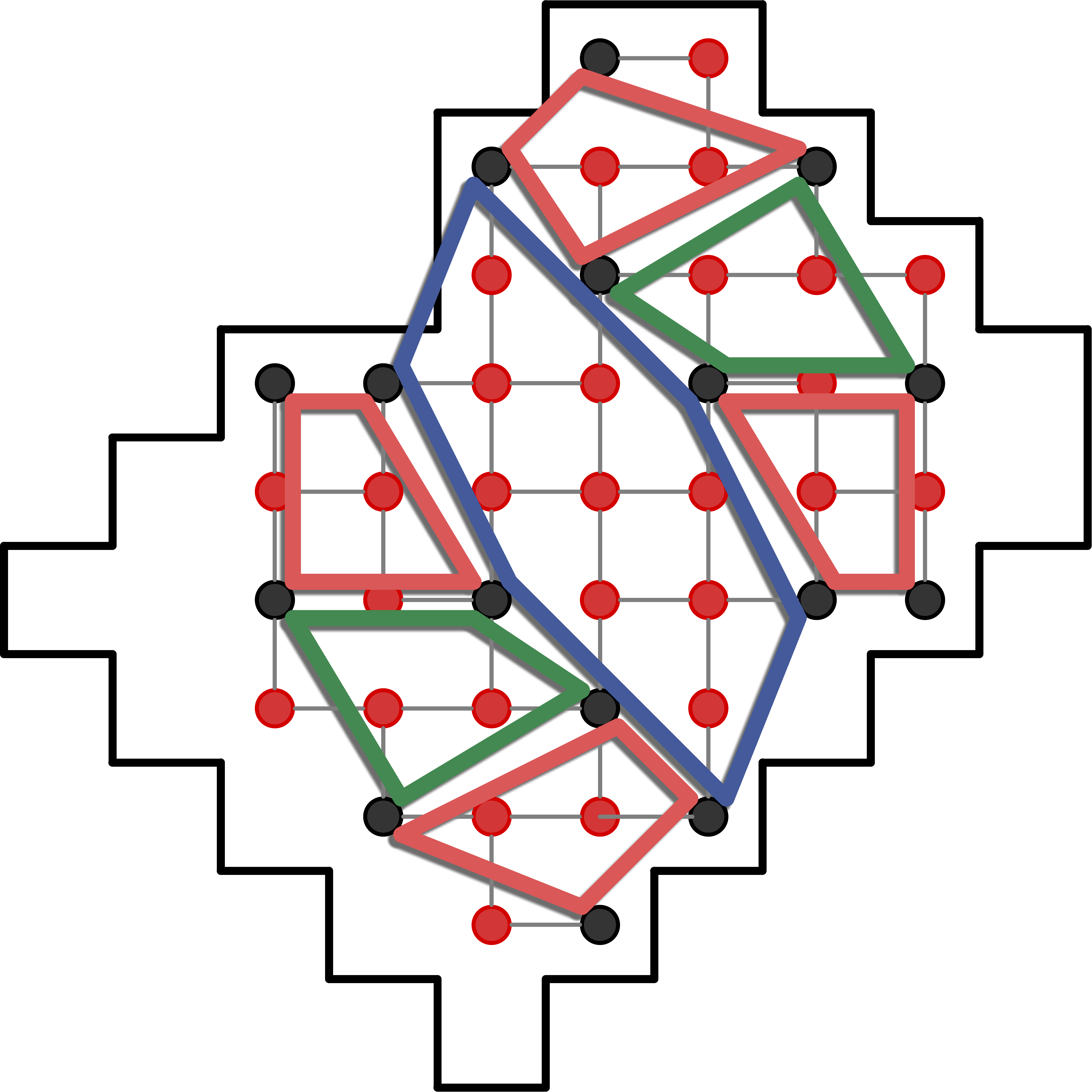}
\caption{A degree-four layout for flag-fault-tolerant error correction of the $k = 2$ code, using $43$ of the $53$ qubits on the Google Sycamore lattice. The stabilizer generators of the code are overlaid. Note that qubits have degree $4$ only in the ancillas measuring the weight-$8$ stabilizer, but elsewhere the maximum qubit degree is $3$. It may be possible for the $k=2$ subsystem code with only weight-$4$ stabilizers and gauges to fit on a layout of maximum degree $3$.} 
% \vspace{-0.4cm} %DEBUG
\label{f:degree4k2}
\end{figure}

Near the threshold of the odd distance surface codes, postselection on the distance-two and -four  variants shows reduced logical error rates.  With higher distance, the compounding effect is larger, meaning distance-eight or -ten surface codes may be sufficient for very precise computations.  At higher distance, rejections also become exceedingly rare, increasing the possible duration of computations.  
For larger patches on lattices like~\figref{f:layout}, more qubits can be encoded at high distance.

The biggest difficulty will then be in performing operations on or between different logical qubits in the same patch. 
Another avenue to pursue is concatenation. This technique can combine the low logical error rates of the surface code with the high encoding rates of block codes.  

% \section{Additional distance-$4$ codes and error correction sequences}
% \label{sec:addcodes}

% Talk about the other codes that we looked at and the alternate sequences of stabilizer measurements that we used to use. 

% \subsection{Codes}
% \label{subsec:d4codes}

% \subsection{Alternate error correction protocols}
% \label{subsec:d4altECP}

% \subsection{Alternate distance-$4$ error correction sequences}
% \label{subsec:d4altECS}

% \section{Distance-three error correction}
% \label{sec:d4d3}

\section{Surface code error correction with the union-find decoder}
\label{sec:d4surfcodes}

When performing error correction for a finite time interval~\cite{DennisKitaevLandahlPreskill01topological}, the classical error decoding algorithm is performed on one of two types of inputs. Either the entire syndrome history is used for one shot error decoding, or the process can be repeatedly applied on $\sim d$ rounds of syndromes, as and when they are collected. With the distance-four surface code, we consider repeated decoding with four rounds of syndromes to best model an experimental implementation. 

With the goal of generalizing to larger surface codes, we implemented a version of the union-find decoding algorithm~\cite{delfosse2017almostlinear}. Here, corrections are only applied to faults identified in the older layers of syndromes, as in the ``overlap recovery" method of Ref.~\cite{DennisKitaevLandahlPreskill01topological}. It is also important to ensure that the input syndrome graph contains horizontal boundary vertices for each layer (as shown in \figref{fig:Xsyndgraph}) and a temporal boundary vertex above the most recently collected layer of syndromes (as shown in \figref{fig:3Dsyndgraph}). All the boundary vertices may be identified together as one vertex, to make the syndrome graph simpler for the decoder.

\begin{figure}
    \centering
    \subfloat[\label{fig:Xsyndgraph}]{\includegraphics[width=.38\textwidth]{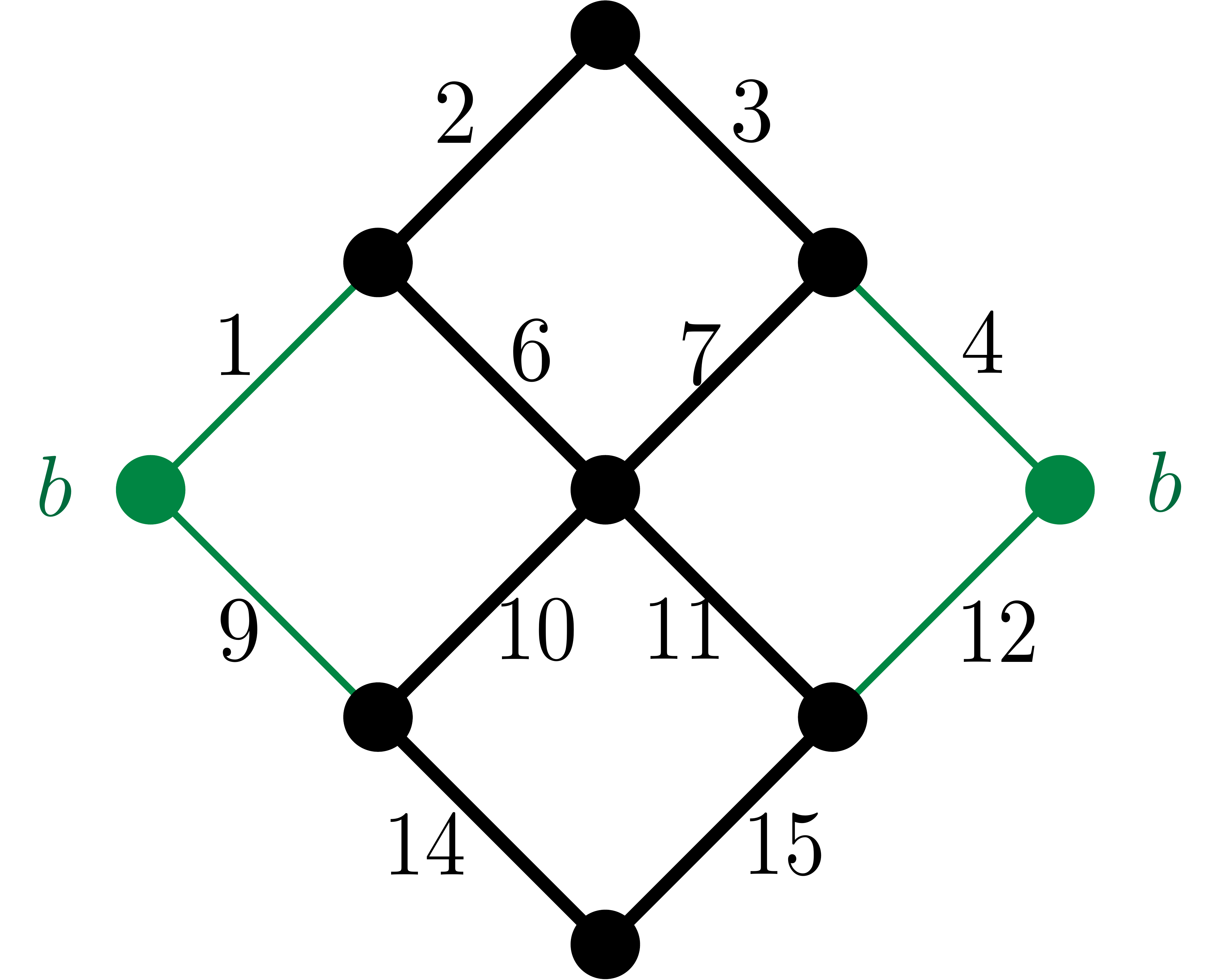}}
     % \hspace{1cm}
    \subfloat[\label{fig:Zsyndgraph}]{\includegraphics[width=.3\textwidth]{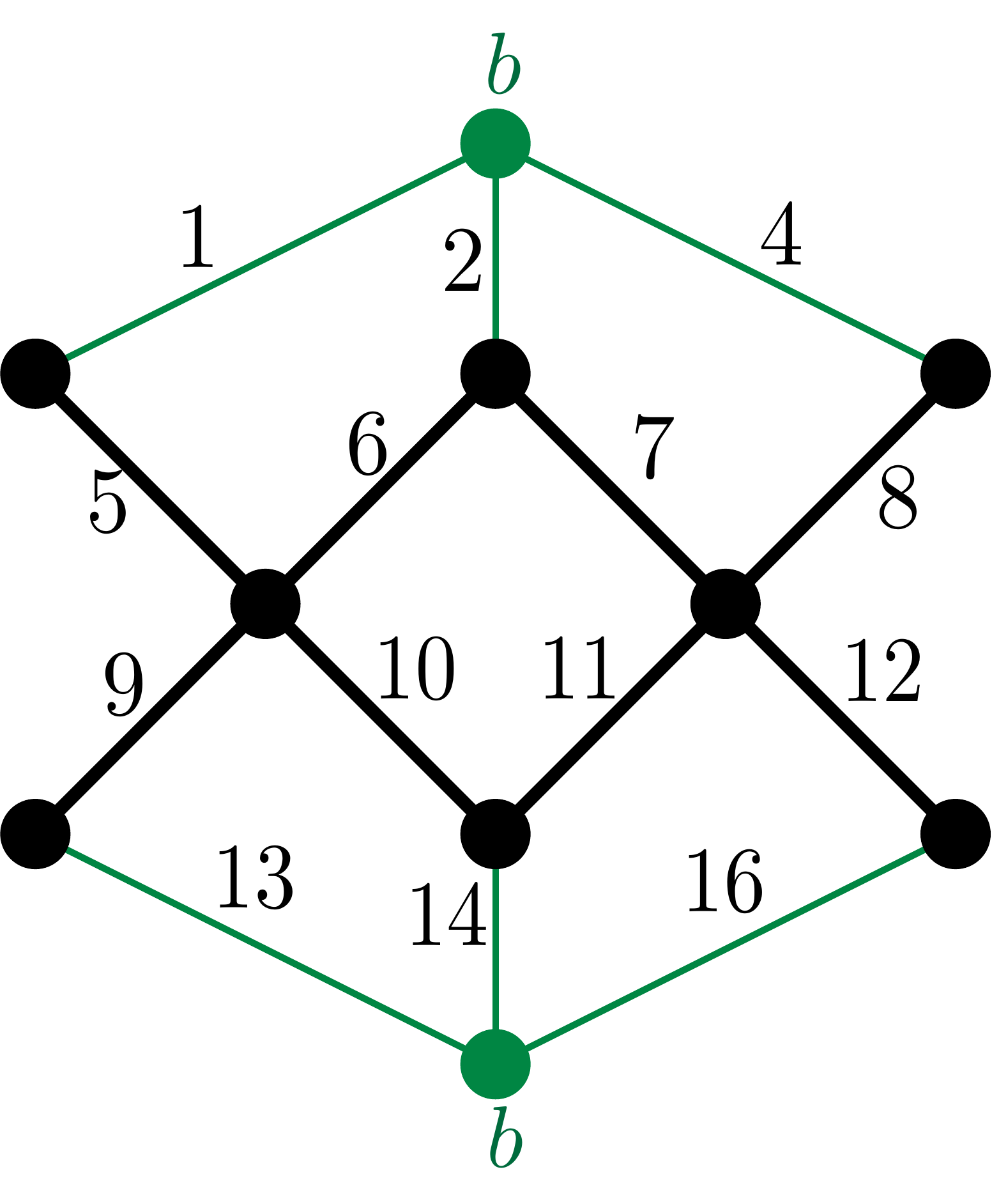}}
    \hspace{.2cm}
    \subfloat[\label{fig:3Dsyndgraph}]{\includegraphics[width=.65\textwidth]{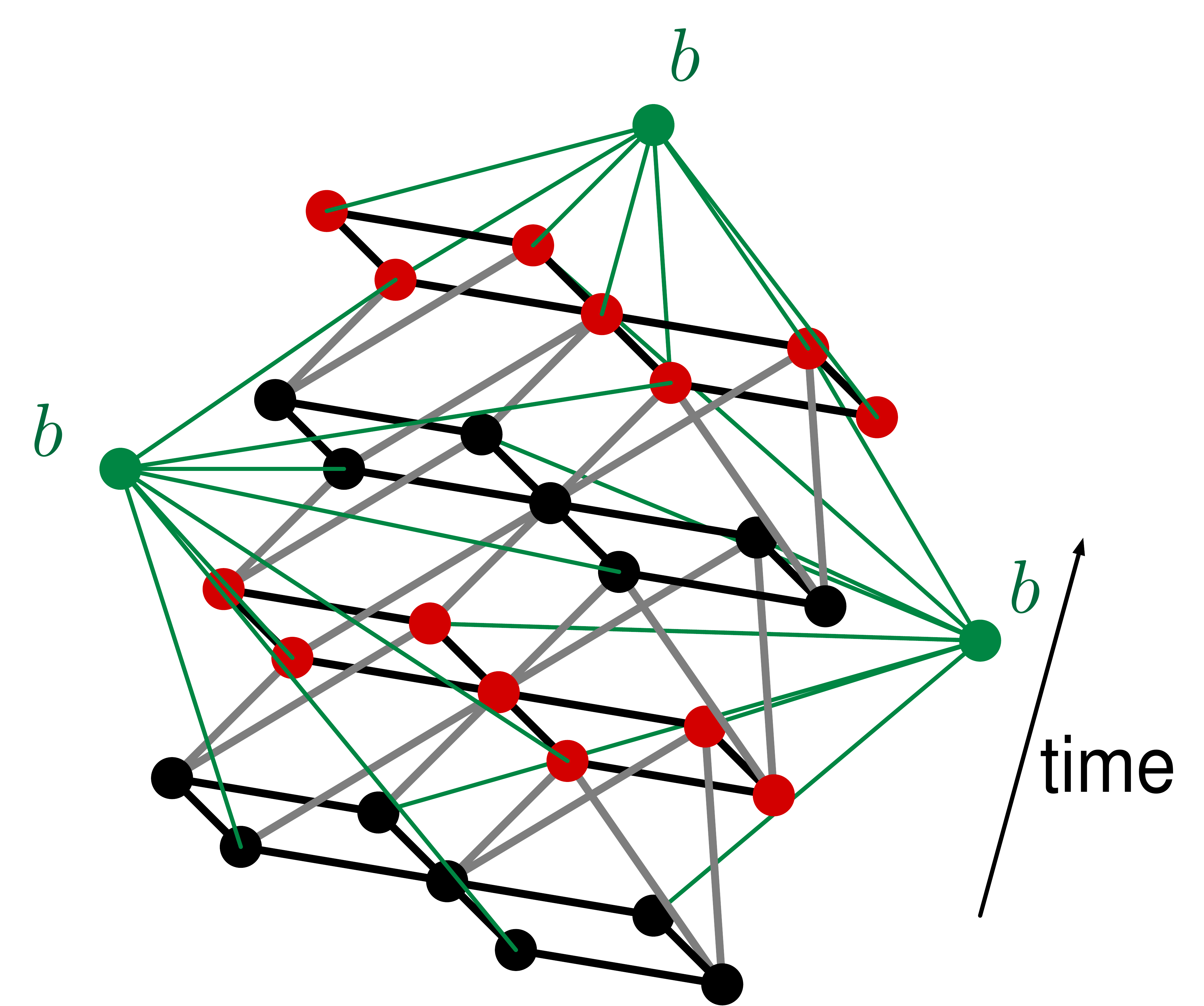}}
    \caption{Syndrome graphs passed to the union-find decoder. All the vertices labeled $b$ are identified as the same boundary vertex. (a)~Graph of vertices representing $X$ stabilizer measurement outcomes. The numbers indicate the index of the data qubit $Z$ correction for each edge of the graph. (b)~Syndrome graph for $Z$ stabilizers. (c)~($2+1$)-D syndrome graph, used for fault-tolerant decoding with circuit-level errors. We show four layers of syndromes in alternating colors red and black. Edges in the same syndrome layer are shown in black, with diagonal edges between layers shown in grey. For clarity, we do not show the vertical edges between syndrome vertices of the same stabilizer. Corrections are only applied for edges which have at least one vertex in the bottom two layers.}
    \label{fig:surfrej}
\end{figure}

Finally, we reject or apply corrections based on the rules for distance-four fault tolerance in \figref{f:smsrules}. The union-find decoding algorithm presents a set of edges in the syndrome graph of \figref{fig:3Dsyndgraph} denoting the presence of faults that caused $Z$-type errors.
 
\begin{procedure}
    \label{proc:surfcoderejdecoding}
    \textbf{Rejection decoding with the surface code: }
    \begin{enumerate}
        \item If $\abs{E} \geq 3$, reject and restart.
        \item If $\abs{E} = 2$ and the two faults are data errors that are sufficiently separated in time, correct the fault that occurred first. 
        \item If $\abs{E} = 2$ and $E$ contains vertical edges, remove them from $E$ and apply a correction to resolve any remaining edges. 
        \item If $\abs{E} = 1$, apply a correction if the edge has a syndrome vertex in either of the older two layers.
        \item If the removed/corrected edge has a syndrome vertex in the third layer, ensure syndrome vertices are flipped before the next round of decoding.
    \end{enumerate}
\end{procedure}

Note that there is a trade-off in choosing which cases to reject for and which cases to attempt a correction. We choose a separation that rejects as rare as possible while maintaining the $O(p^3)$ logical error rate. It is also possible to reduce the edge count at the start, by projecting measurement faults away before applying any of the rules. Evidently, there is a lot of scope for further improvements to the postselection rules. Research also needs to be done on how to generalize these rules to higher-distance surface codes.

\section{A subsystem code with four useful logical qubits}
\label{sec:gaugecode}

The new codes presented in \figref{f:codes} all suffer from one main caveat, the requirement that at least one weight-eight stabilizer must be measured. High-degree connectivity and extra ancillas are needed to measure these stabilizers fault-tolerantly to distance four. However, weight-four stabilizers are measured fault-tolerantly to distance three and only need three ancillas, with sparse connectivity. In \figref{fig:k4gaugecode}, we consider a $\llbracket 16, 4, 2, 4 \rrbracket$ code that is derived from the $\llbracket 16, 6, 4 \rrbracket$ code of \figref{f:codes}. We assign two of the logical qubits as gauge operators to simplify error correction. By measuring different weight-four representations of the gauge operators, we can construct syndromes that permit distance-four fault-tolerant error correction. Furthermore, performing only weight-four measurements simplifies the required connectivity and permits low-overhead implementation on a planar square lattice. Here, we show an implementation of error correction on a degree-four lattice requiring only $44$ qubits, with logical fidelity on par with the distance-five surface code needing $41$ or $49$ qubits.

In addition to low overhead and high-fidelity error correction, this code boasts more exciting properties. The $\llbracket 16,6,4\rrbracket$ code is a highly symmetric code, allowing for many types of simple fault-tolerant logical operations. These operations naturally pass down to the $\llbracket 16, 4, 2, 4\rrbracket$ code too. In \secref{subsec:k4gaugeloggates}, we show how to perform some Clifford operations. 
% Additionally, as we show in appendix, by switching to a $16,4,2$ code, we may also perform a non-Clifford CCCZ gate.

\begin{figure}
    \centering
    \subfloat[\label{fig:stabs}]{\includegraphics[width=.17\textwidth]{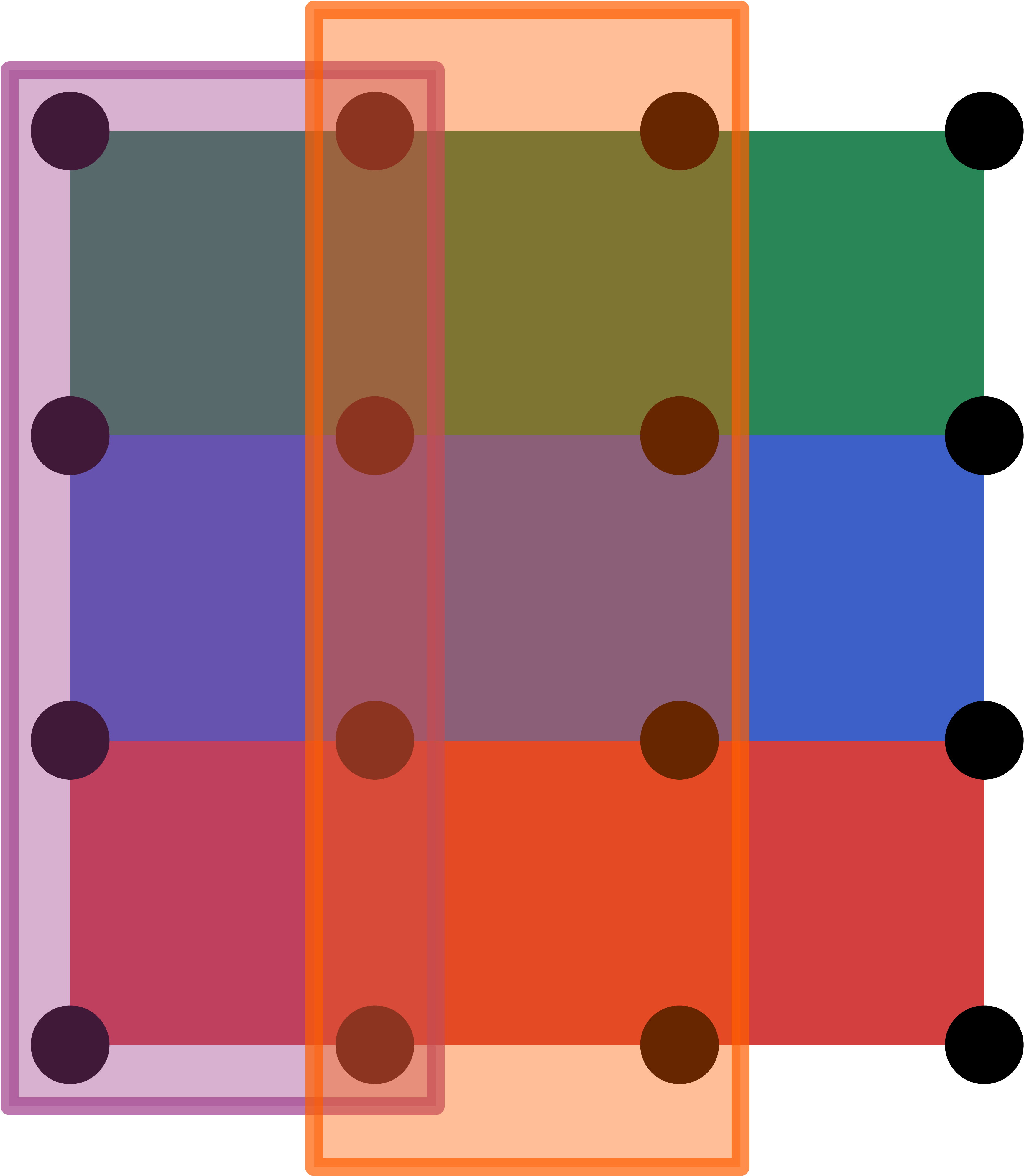}}
    \hspace{.4cm}
    \subfloat[\label{fig:gauges}]{\includegraphics[width=.235\textwidth]{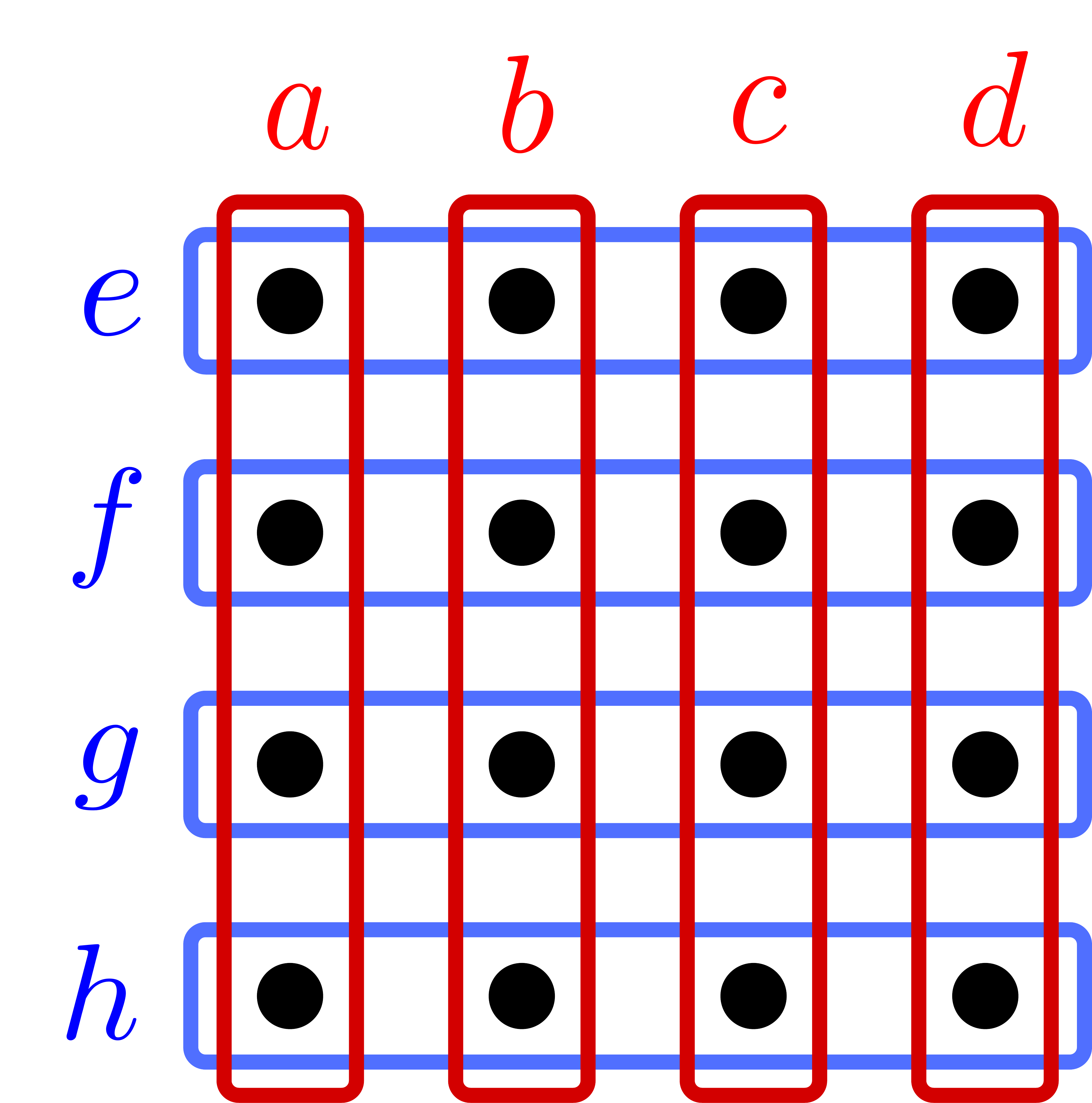}}
    \hspace{.2cm}
    \subfloat[\label{fig:logical1}]{\includegraphics[width=.45\textwidth]{imagesChap6/16qlogicals.pdf}}
    \caption{(a)~Stabilizers of the $\llbracket 16, 6, 4\rrbracket$ code. The $\llbracket 16, 4, 2, 4\rrbracket$ code is derived by assigning two of the logical qubits as gauge qubits. (b)~The two logical qubits chosen as gauges are the horizontal and vertical weight-four operators. By stabilizer equivalence, every column (row) of four qubits is a representation of the vertical (horizontal) gauge. These groups of data qubits are denoted by the letters $a$-$h$. (c)~$X$ and $Z$ operators specifying the four logical qubits.}
    \label{fig:k4gaugecode}
\end{figure}

\subsection{Fault-tolerant error correction and detection}

Fault-tolerant error correction is performed by measuring the four weight-four representations of each gauge operator. A deterministic routine takes four rounds of measurements. In the first round, measure all the $X$-type row operators. In the second round, $X$-type columns. In the third and fourth, measure the $Z$-type operators. Errors are detected or corrected according to the following rules. Consider $X$ error correction for example, 
\begin{enumerate}
    \item If either the row or column syndrome is of weight two, reject.
    \item For a weight-one (or equivalently, a weight-three) syndrome among the rows \textit{and} among the columns, apply a weight-one $X$ correction on the qubit indexed by the row and column with a $1$ (or $0$ in case the syndrome is weight-$3$).
    \item Ignore all other syndromes.
\end{enumerate} 

\begin{procedure}
\label{proc:noswaps}
Distance-$4$ fault-tolerant error correction with the $\llbracket 16,4,2,4 \rrbracket$ code.

\begin{enumerate}
    \item Measure $X_a$, $X_b$, $X_c$ and $X_d$ on the qubit layout of \cref{fig:exp5layout} using the stabilizer measurement circuits of \cref{fig:exp5smca} and \cref{fig:exp5smcb}. 
    \item Measure $X_e$, $X_f$, $X_g$ and $X_h$ with \cref{fig:exp5smca} and \cref{fig:exp5smcb}. This completes the circuits for $Z$ error correction. 
    \item For $X$ error correction, repeat the process with $Z$-type operators.
\end{enumerate} 
\end{procedure}

It may be possible to further reduce the number of rounds of measurements needed for error correction. For example, one could find a way to measure all eight operators of one type in one round. The difficulty is in the four intersecting regions of the ancilla circuits for the middle column and row gauge operator measurements. This could be made fault-tolerant using a shared flag ancilla scheme or a scheme that prepares ancilla states for Steane-style error correction~\cite{Huang21ShorSteane}. Alternatively, an adaptive routine may reduce the number of measurements needed.

\subsubsection{Numerical simulations}

\begin{figure}
    \centering
     \subfloat[\label{fig:exp5layout}]{\includegraphics[width=.37\textwidth]{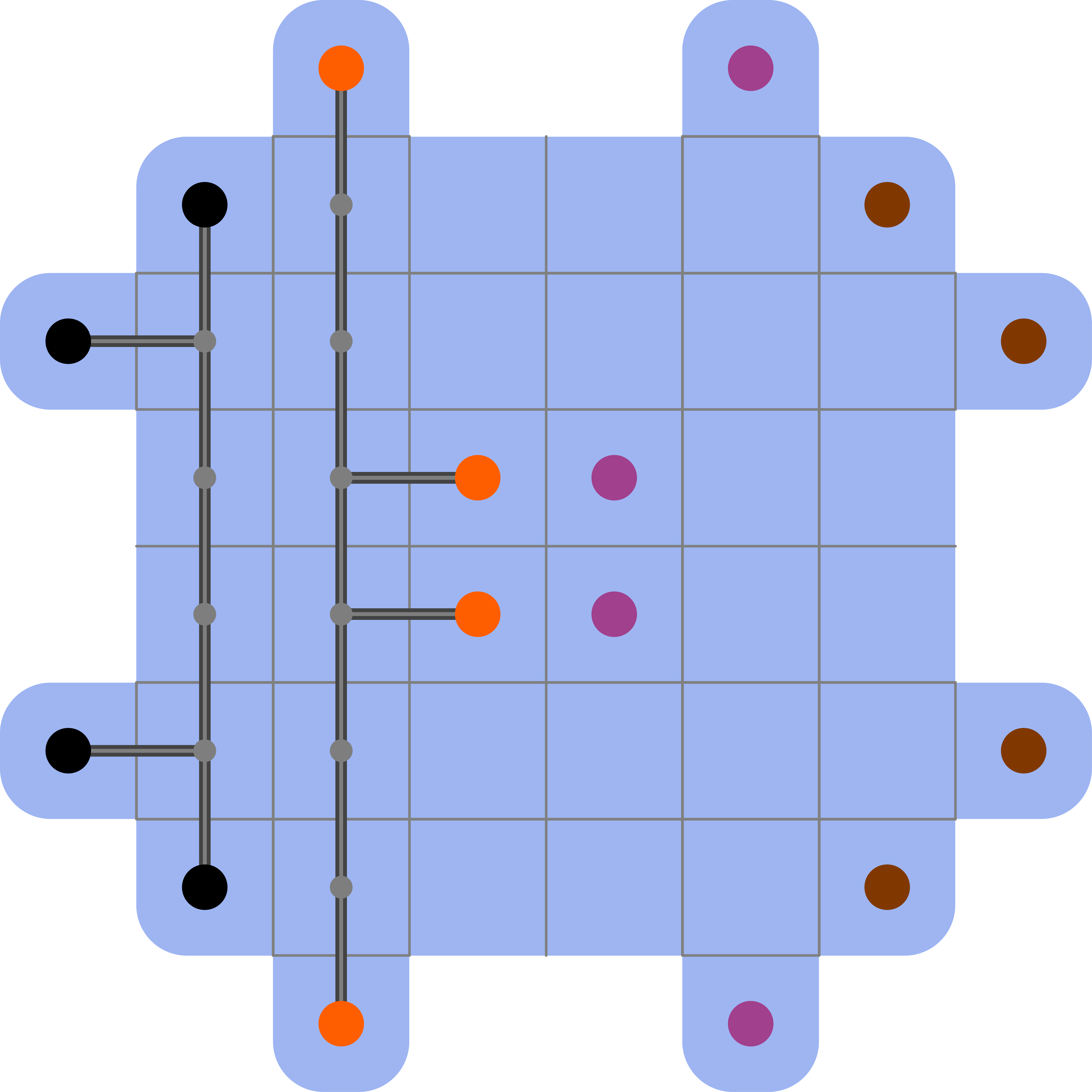}}
     \hspace{5cm}
     \subfloat[\label{fig:exp5smca}]{\includegraphics[height=.22\textwidth]{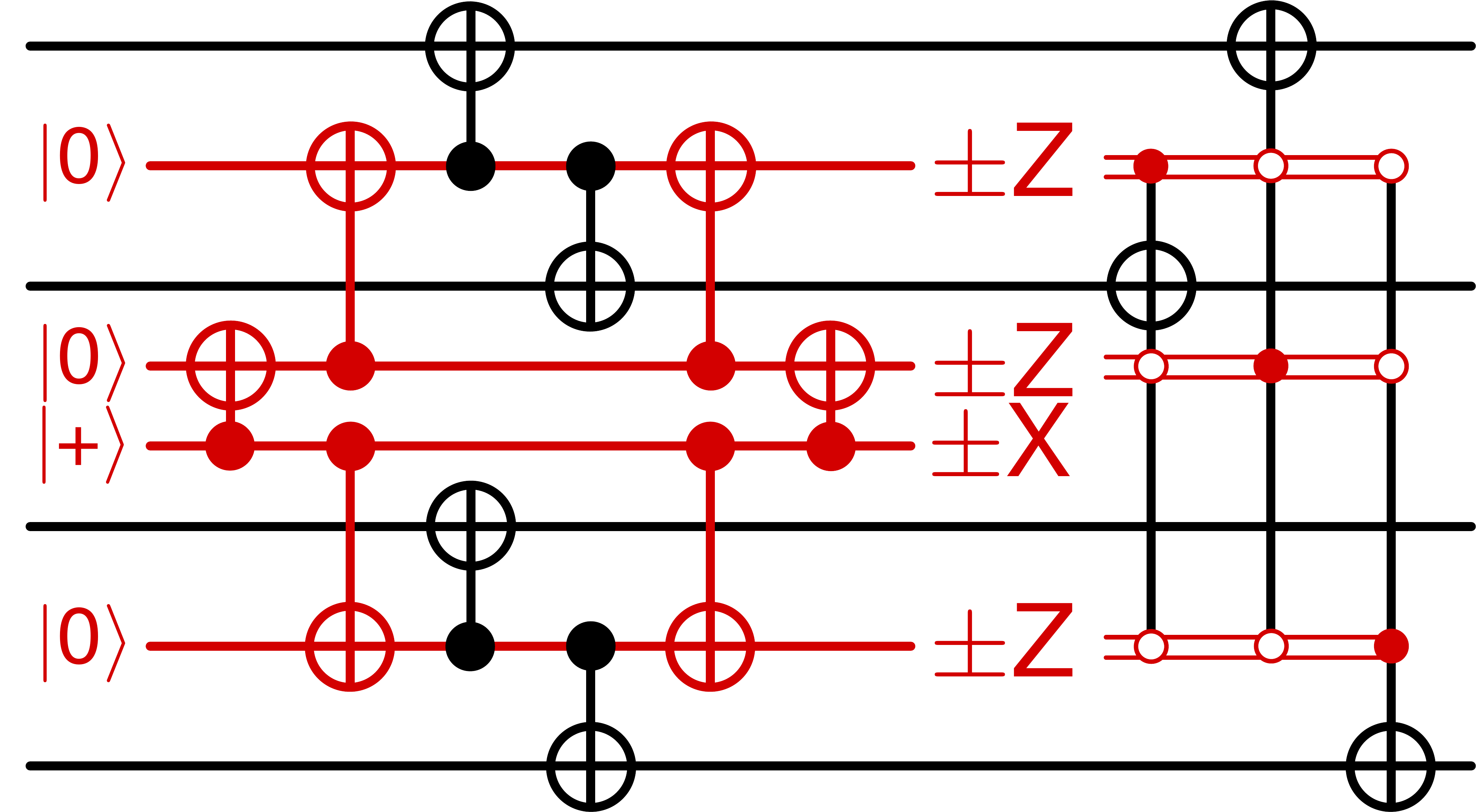}}
     \hspace{2.5cm}
     \subfloat[\label{fig:exp5smcb}]{\includegraphics[height=.22\textwidth]{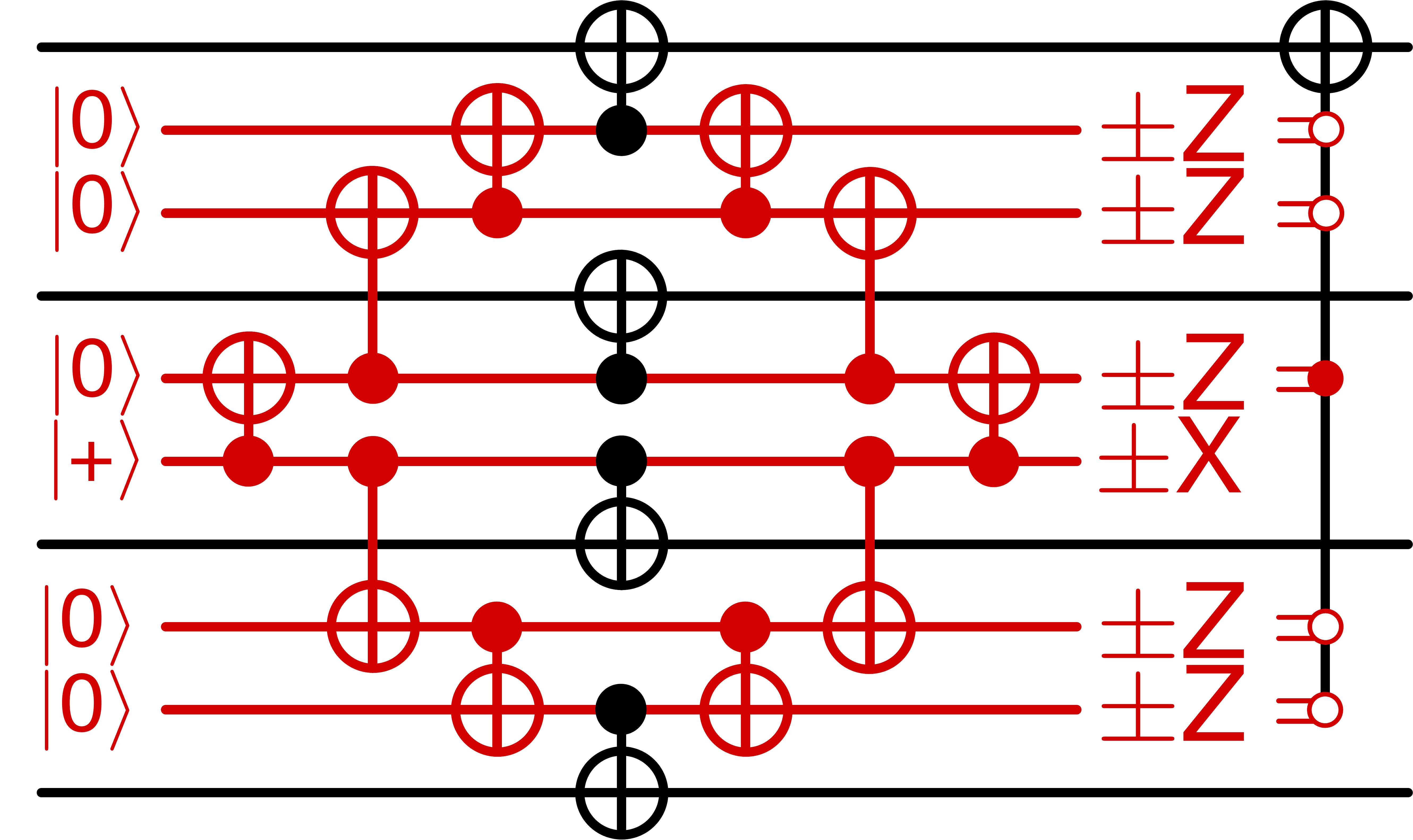}}
     \caption{Implementation of quantum error correction with a $\llbracket 16,4,2,4\rrbracket$ code on a square planar lattice of qubits. (a) Layout of qubits in \procref{proc:noswaps}. (b,c) Associated gauge operator measurement circuits.}
     \label{fig:exp5}
\end{figure}

We perform simulations of error correction with \procref{proc:noswaps} and compare the results with the distance-five and distance-four surface codes in \figref{f:scalingk4}. Note that while the logical error rate per time step for the $k=4$ subsystem code is larger than that of the $k=4$ stabilizer code discussed in \figref{f:codes}, the subsystem code only requires four time steps per round of error correction whereas the stabilizer code requires twelve. Hence the logical error rate per round of error correction will be much lower with the subsystem code. In practice, the subsystem code will be far more favorable.

% We use a noise model inspired by the Google device~\cite{Acharya23}. Note that the parameter $p$ in the actual device is $6 \times 10^{-3}$. The noise model is detailed below.

% \begin{itemize}
%     \item With probability $\frac{p}{6}$, a one-qubit gate is followed by a Pauli $X$, $Y$ or $Z$ error uniformly at random.
% 	\item With probability $p$, a two-qubit gate is followed by a two-qubit  Pauli error drawn uniformly at random from $\{I,X,Y,Z\}^{\otimes 2} \setminus \{ I \otimes I \}$.
% 	\item With probability $\frac{p}{6}$, the preparation of the $\ket{0}$ state is replaced by $\ket{1}=X\ket{0}$. Similarly for $\ket{+}$ and $\ket{-}$.
% 	\item With probability $3p$, a single-qubit $Z$ or $X$ basis measurement outcome is flipped.
% 	\item With probability $p/10$, each idle gate location is followed by a Pauli $X$, $Y$ or $Z$ error uniformly at random.
% 	\item With probability $4p$, each data qubit, during ancilla measurement, is acted upon by a Pauli $X$, $Y$ or $Z$ error chosen uniformly at random. This simulates noisy dynamical decoupling.
% \end{itemize}

\begin{figure}
\centering
\includegraphics[width =0.75\textwidth]{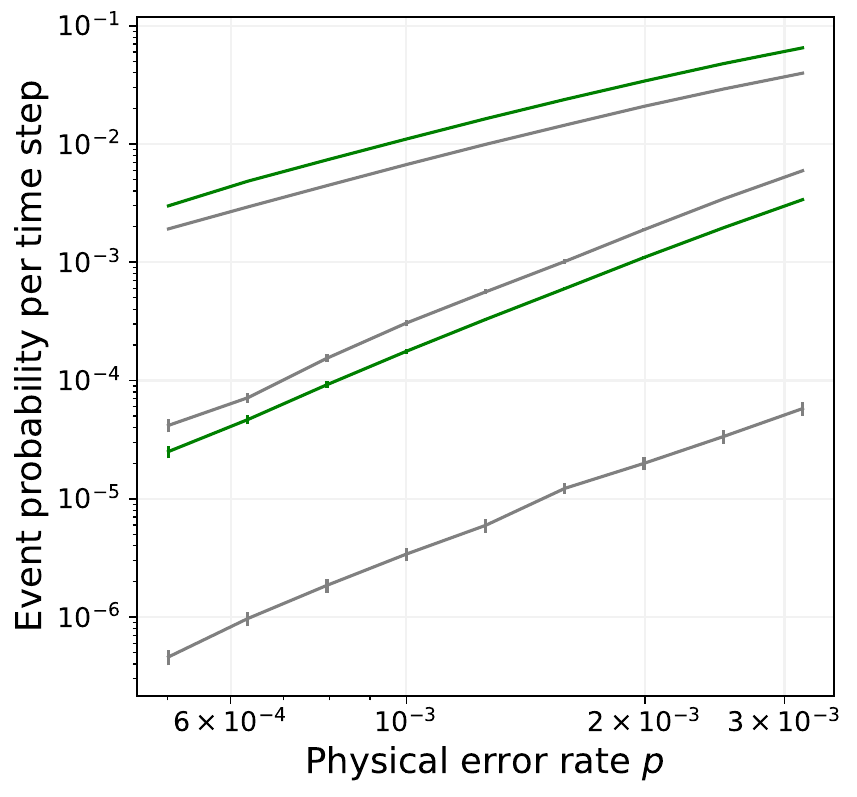}
\caption{Rejection rates (top two) and logical error rates (bottom three) per time step, when performing error correction with the $k=4$ subsystem code. The noise model is the same as that in \figref{f:scaling}. Here, the probability estimates are compared with the distance-four and distance-five surface codes, with the distance-five surface code exhibiting a larger error probability than distance four.}
\label{f:scalingk4}
\end{figure}

\subsection{Logical gates}
\label{subsec:k4gaugeloggates}

There exists a wide variety of logical Clifford operations that are relatively easily to implement with the $\llbracket 16, 4, 2, 4\rrbracket$ code. This includes transversal gates, permutation automorphisms, multi-qubit logical measurements, and CZ automorphisms.

For the logical qubits of the $k=6$ code, we show the logical Cliffords that result from different physical operations. We show simple examples for each case, leaving a full characterization of all the possible Cliffords to future work.

\begin{figure}
    \centering
    \hspace{-.8cm}
     \subfloat[\label{fig:nftswap}]{\includegraphics[height=.13\textwidth]{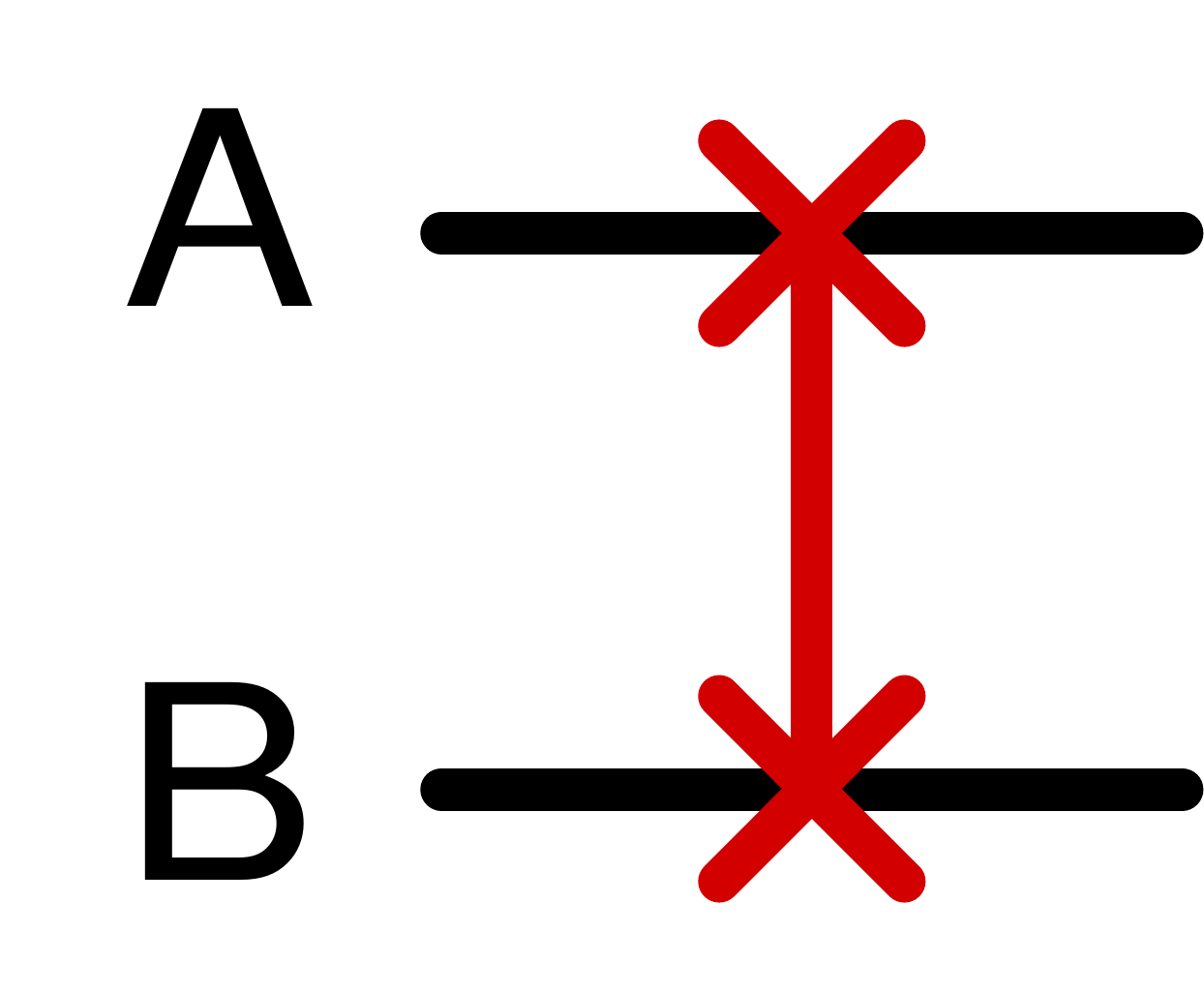}}
     \hspace{2.5cm}
     \subfloat[\label{fig:nftcz}]{\includegraphics[height=.13\textwidth]{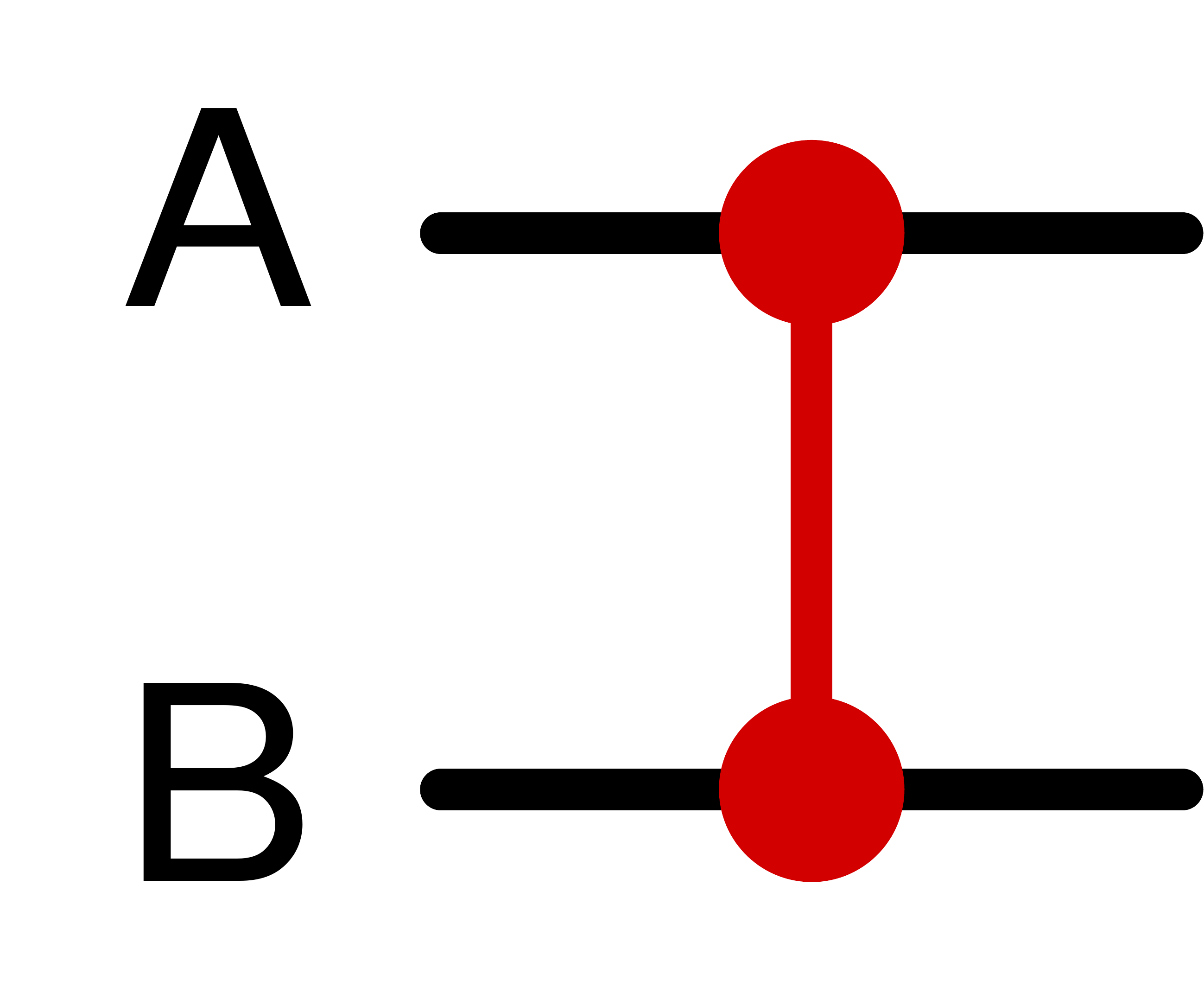}}
     \hspace{5cm}
     \subfloat[\label{fig:ftswap}]{\includegraphics[height=.2\textwidth]{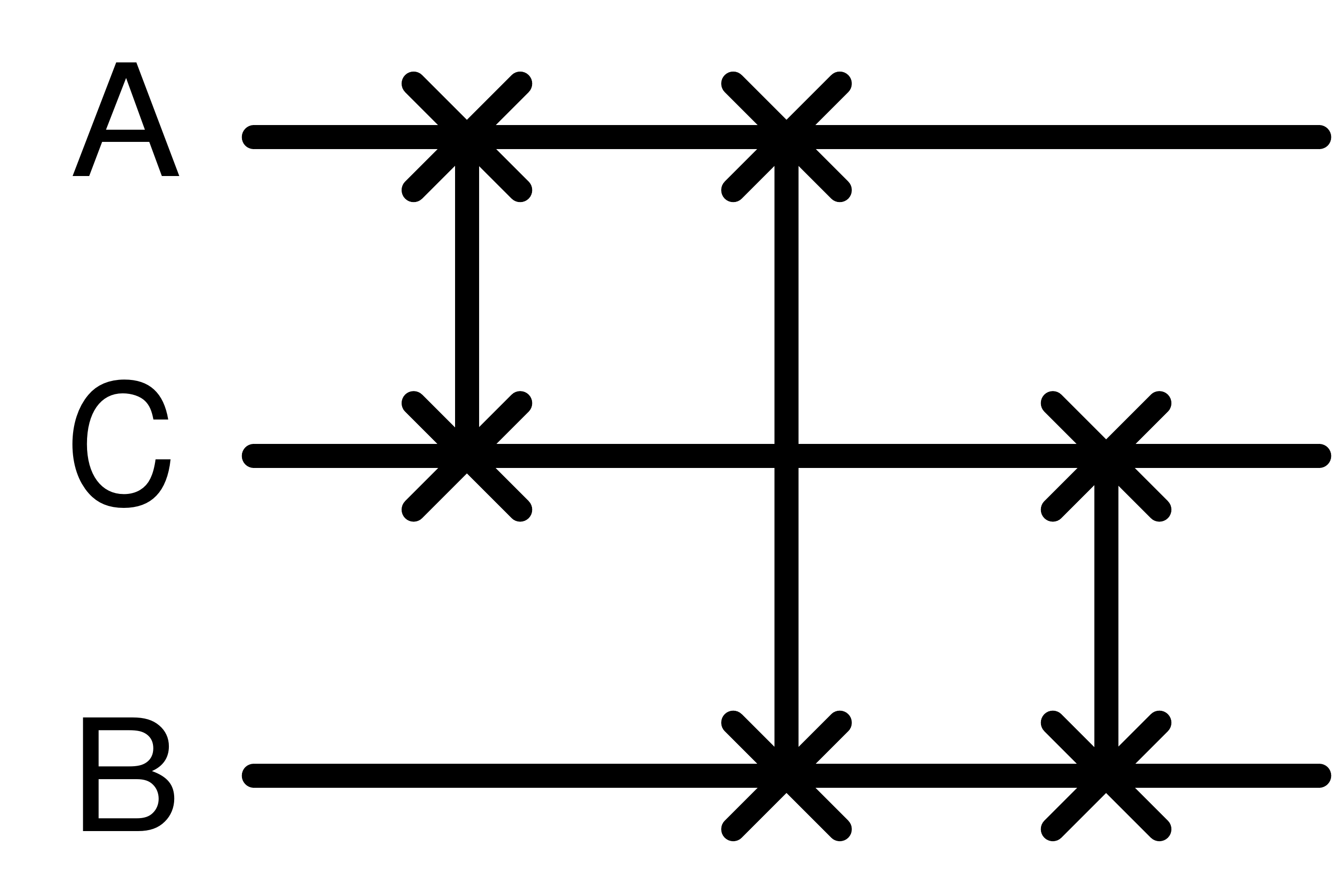}}
     \hspace{1cm}
     \subfloat[\label{fig:ftcz}]{\includegraphics[height=.2\textwidth]{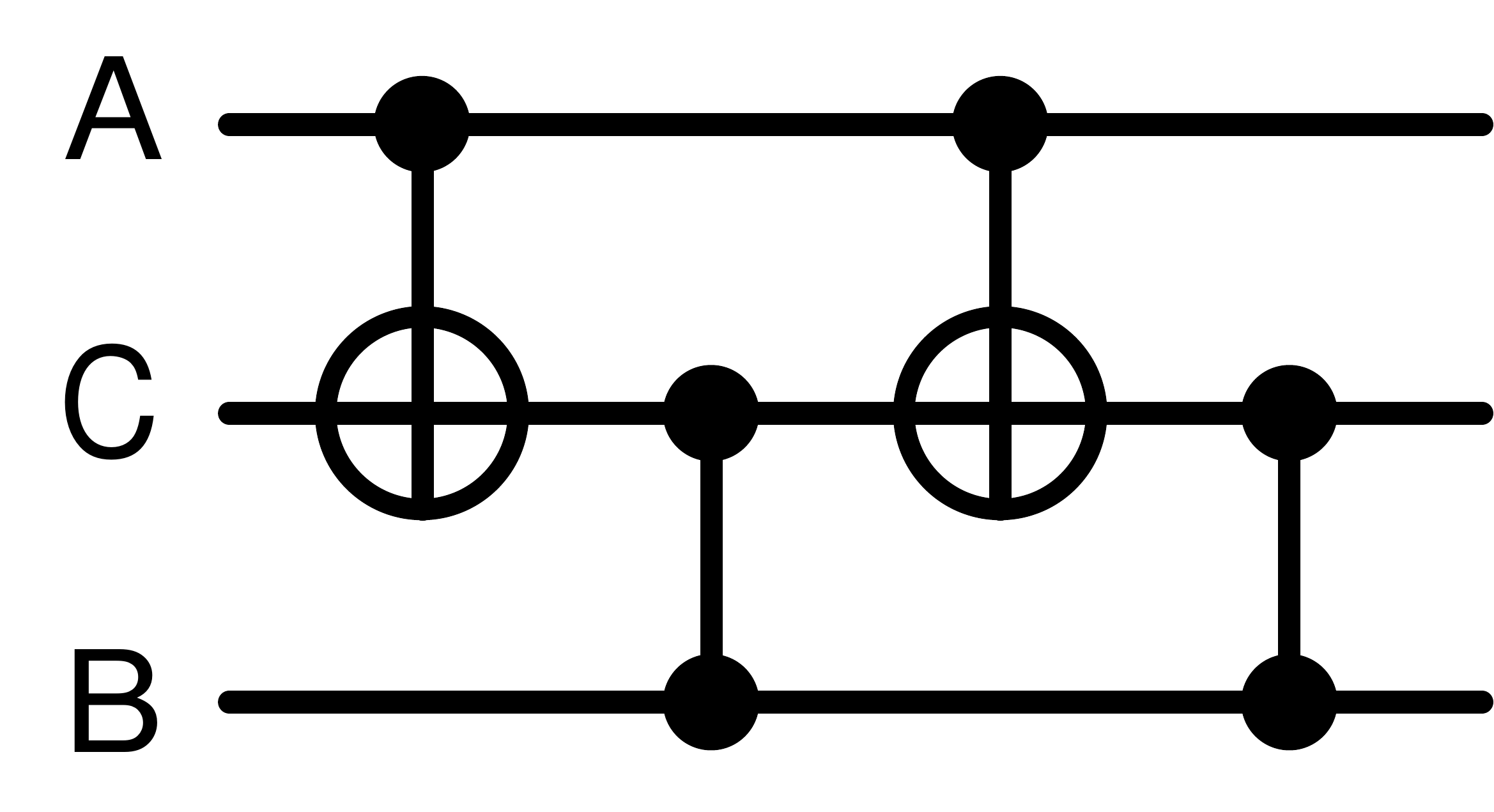}}
    \caption{Fault-tolerant versions of physical CZ and SWAP gates. (a)~Performing a SWAP between two qubits between qubits A and B is not fault-tolerant, even when broken into a sequence of three CNOTs. (c)~It can be made fault tolerant by swapping through an ancilla qubit, C. (b)~Similarly, a CZ is not fault tolerant, but can be made fault-tolerant using an ancilla, as in (d).}
    \label{fig:CZSWAP}
\end{figure}

\subsubsection{Transversal gates}

\begin{itemize}
    \item $\SWAP_{12} \SWAP_{34} \SWAP_{56} H^{\otimes 6}  := H^{\otimes 16}$.
    \item $\CZ_{12} \CZ_{34} \CZ_{56} := S^{\otimes 16}$.
\end{itemize}

\subsubsection{Permutation automorphisms}

Permutations of the physical qubits may leave the codespace invariant while modifying the logically encoded information. These operations are called permutation automorphisms. On the layout of \figref{fig:exp5layout}, permutations can be performed via $\SWAP$ gates. However, physical SWAPs between two data qubits of the code are not fault-tolerant. They must be mediated by an ancilla, as shown in \figref{fig:CZSWAP}.

Below, we note the logical effects of some permutation autmorphisms.

\begin{itemize}
    \item $\CNOT_{3,5} \CNOT_{6,4} := (1,5)(2,6)(3,7)(4,8)$
    \item $\SWAP_{3,6} \SWAP_{4,5} := (2,3)(6,7)(10,11)(14,15)$
    \item $\CNOT_{2,4} \CNOT_{2,5} \CNOT_{3,1} \CNOT_{6,1} := (1,5)(4,8)(9,13)(12,16)$
    \item $\CNOT_{2,4} \CNOT_{2,5} \CNOT_{3,1} \CNOT_{3,5}
    \CNOT_{6,1} \CNOT_{6,4} := $ \\
    $ (1,5)(4,8)(10,14)(11,15)$
    % \item $\CNOT_{1,3} \CNOT_{1,4} \CNOT_{1,5} \CNOT_{1,6}
    % \CNOT_{3,2} \CNOT_{4,2}
    % \CNOT_{5,2} \CNOT_{6,2} :=$ $ (1,4)(2,3)(13,16)(14,15)$
\end{itemize}

\subsubsection{Multi-qubit logical measurements}

In \tabref{tab:multqublogmeas}, we list the sequences of operators needed to measure the logical operators fault-tolerantly to distance four. These types of measurements allow targeted logical Cliffords, such as CNOTs or Hadamards. When assisted by magic states, they can also be used to perform non-Clifford gates (see, for example, \chapref{sec:MSDMSD}).

\begin{table}
\caption{\label{tab:multqublogmeas} Sequences of weight-four operators needed to measure different logical operators fault-tolerantly to distance four. Majority voting decides the measurement outcome. If the measured results are split with equal probability, the measurement result is rejected.}
\centering
\setlength{\tabcolsep}{4pt}
\begin{tabular}{c | c}
\hline \hline
Operator & Sequence for fault-tolerant measurement \\
\hline
\multirow{1}{*}{$X_3$} & \includegraphics[width=.4\textwidth]{imagesChap5/mX3.pdf} \\
\multirow{1}{*}{$X_4$} & \includegraphics[width=.4\textwidth]{imagesChap5/mX4.pdf} \\
\multirow{1}{*}{$X_5$} & \includegraphics[width=.4\textwidth]{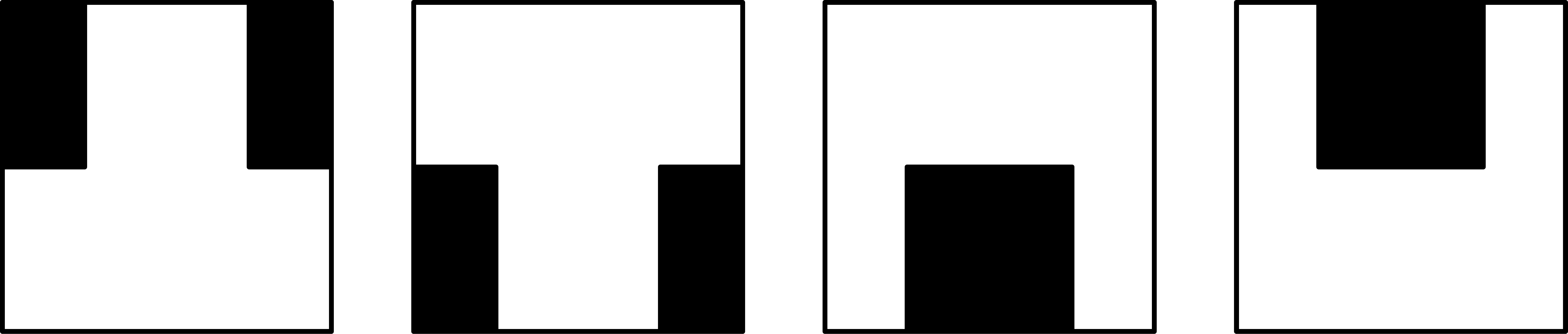} \\
\multirow{1}{*}{$X_6$} & \includegraphics[width=.4\textwidth]{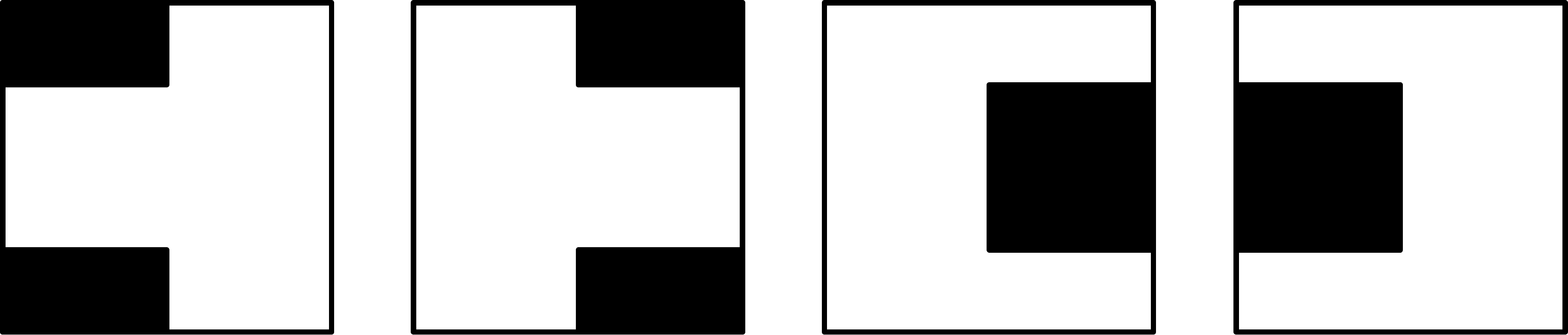} \\ 
\multirow{1}{*}{$X_1 X_3$} & \includegraphics[width=.4\textwidth]{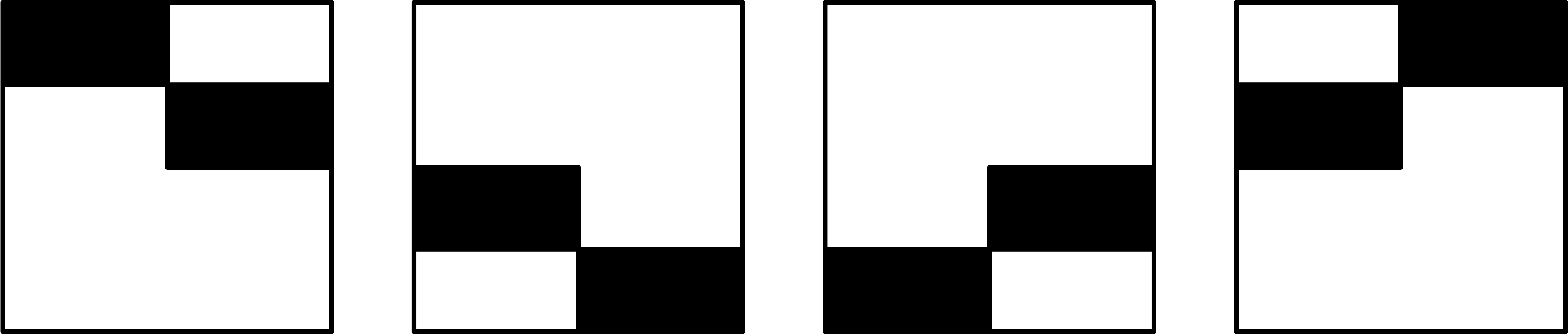} \\
\multirow{1}{*}{$X_1 X_3 X_5$} & \includegraphics[width=.4\textwidth]{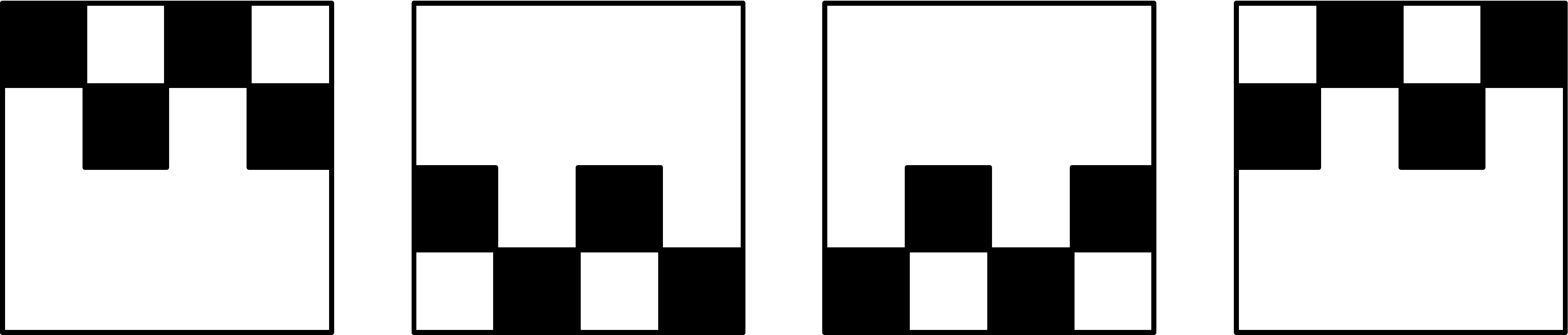} \\
\multirow{1}{*}{$X_3 X_4 X_5 X_6$} & \includegraphics[width=.4\textwidth]{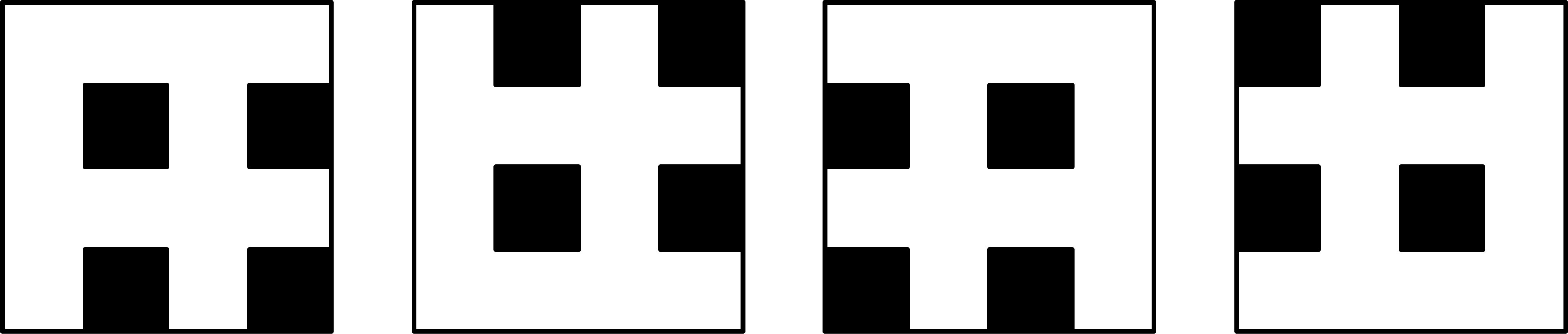} \\
\multirow{1}{*}{$X_1 X_3 X_4 X_5 X_6$} & \includegraphics[width=.4\textwidth]{imagesChap5/mX13456.pdf} \\
\hline \hline 
\end{tabular}
\end{table}

\subsubsection{CZ automorphisms}
Similar to permutation automorphisms, by applying CZ gates between specific pairs of physical qubits, we can perform logical CZ transformations while leaving the codespace invariant.

Up to a logical $\CZ_{1,2} \CZ_{3,4} \CZ_{5,6}$, which can be undone by the transversal gate $S^{\otimes 16}$, the following logical $\CZ$ transformations can be achieved.
\begin{itemize}
    \item $\CZ_{1,4} \CZ_{4,5} := \CZ_{1,2}\CZ_{5,6}\CZ_{3,7}\CZ_{4,8}\CZ_{9,10}\CZ_{13,14}\CZ_{11,15}\CZ_{12,16}$
    \item $\CZ_{3,5} \CZ_{1,6}  := \CZ_{1,5}\CZ_{2,3}\CZ_{4,8}\CZ_{6,10}\CZ_{7,11}\CZ_{9,13}\CZ_{14,15}\CZ_{12,16}$
    \item $\CZ_{1,4} \CZ_{2,4} \CZ_{4,5} \CZ_{4,6} := \CZ_{1,2}\CZ_{5,6}\CZ_{3,7}\CZ_{4,8}\CZ_{9,13}\CZ_{10,14}\CZ_{11,12}\CZ_{15,16}$
\end{itemize}

\chapter[Temporally encoded lattice surgery]{New magic state distillation factories optimized by temporally encoded lattice surgery}
\label{chap:MSD}

\begin{figure}
\hspace{-1.25cm}
\includegraphics[width=1.05\textwidth]{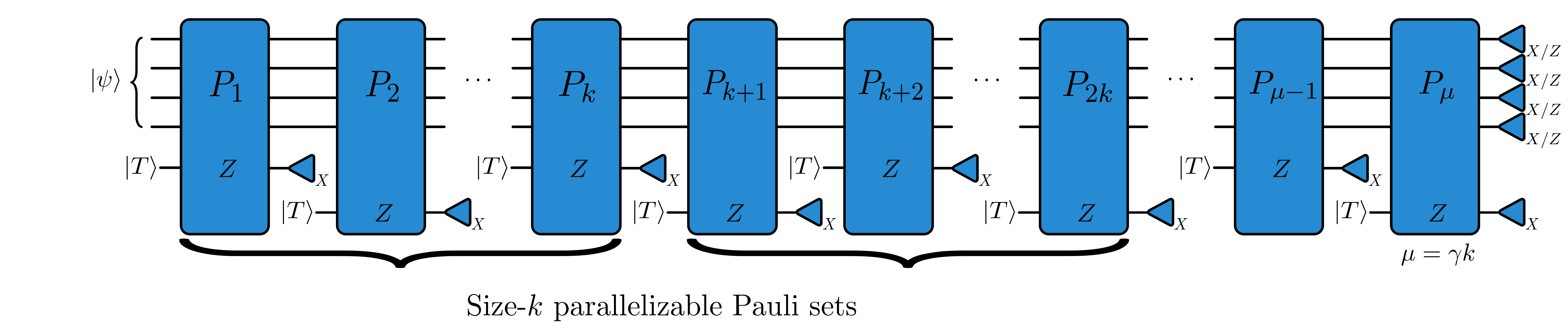}
\caption{General model of Pauli-based computation. A quantum algorithm can be written as a sequence of multi-qubit Pauli measurements which perform both Clifford and non-Clifford operations (here we show the implementation of non-Clifford gates). In general the multi-qubit Pauli operations can be ordered into sets of commuting Pauli operators, where Clifford corrections can be conjugated to the end of each set. Such sets are called parallelizable Pauli (PP) sets. A logical $T$ gate (which is non-Clifford and forms a universal gate set when combined with Clifford operations) can be implemented via a multi-qubit Pauli measurement acting on a set of data qubits and an ancillary magic state $\ket T = (\ket{0} + e^{i \pi / 4}\ket{1}) / \sqrt{2}$. }
\label{fig:PBC}
\end{figure}

\begin{figure}
    \centering
    \subfloat[\label{fig:TELSprotocol} ]{\includegraphics[width=.55\textwidth]{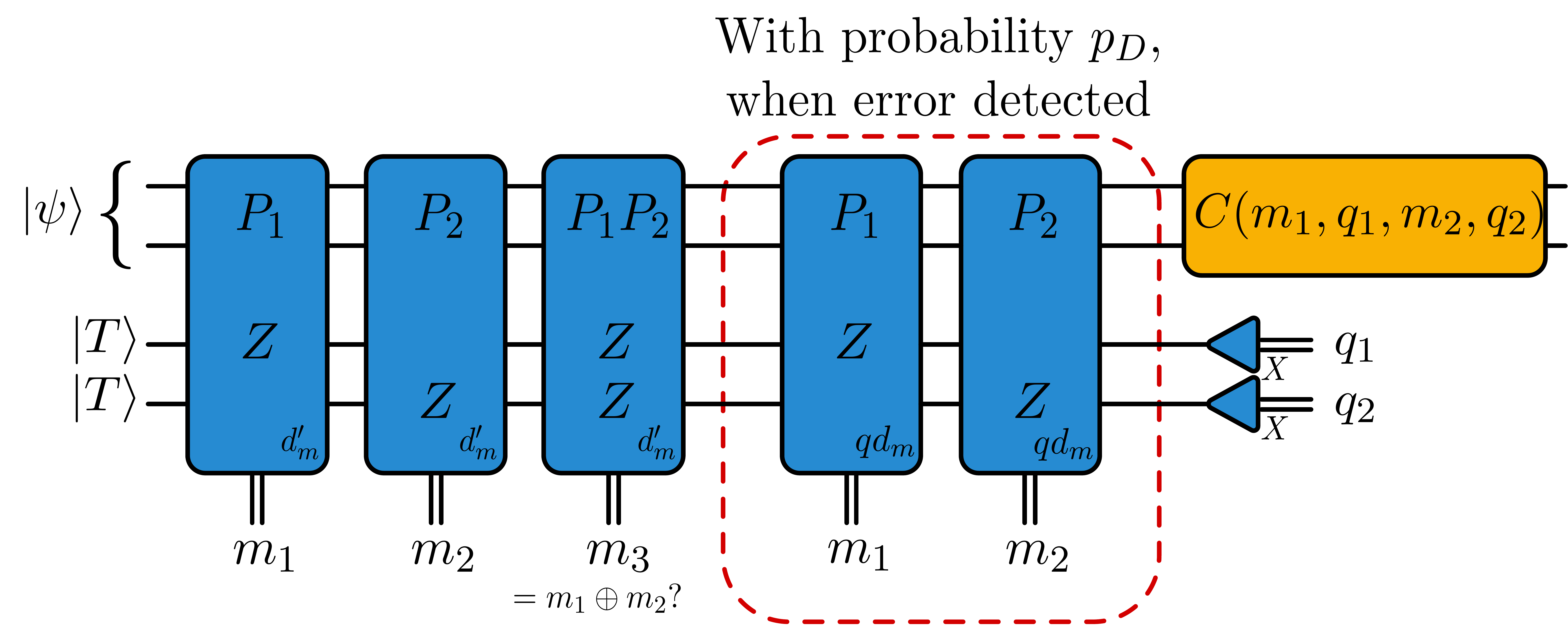}}
\hspace{0.3cm}
\subfloat[\label{fig:TELSprotocolnew} ]{\includegraphics[width=.415\textwidth]{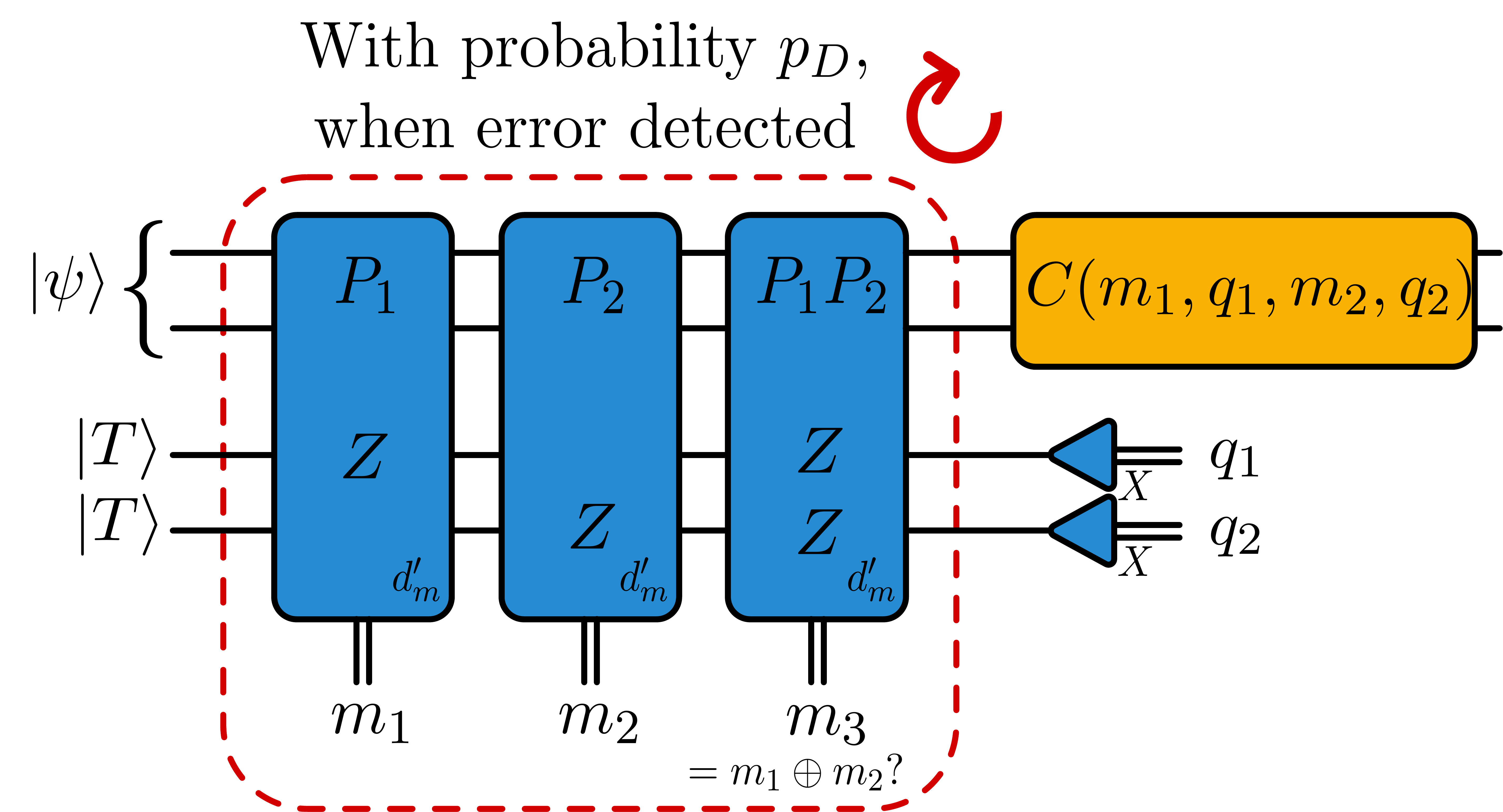}}
    \caption{(a) Old protocol for temporally encoded lattice surgery of a PP set of size-2, where $P_1P_2$ is a redundant measurement which is used to detect failures in the measurements of $P_1$ or $P_2$. If the measurement results of the three multi-qubit Pauli operators are inconsistent, the original multi-qubit Pauli operators $P_1$ and $P_2$ are measured again. Blue boxes correspond to multi-qubit Pauli measurements, and blue triangles correspond to logical single-qubit measurements. Orange boxes correspond to Clifford corrections. (b) New protocol for temporally encoded lattice surgery of a PP set of size-$2$. The operators $P_1$, $P_2$ and $P_1P_2$ are repeatedly measured until no logical timelike failures are detected. In \cref{subsec:MSDTELSnew} we show that such a scheme results in smaller average runtimes for the implementation of a PP set. Orange boxes denote Clifford corrections that result from applying non-Clifford gates.}
\end{figure}

Universal fault-tolerant quantum computers will be required in order to implement large scale quantum algorithms. However, both the space and time costs due to the implementation of fault-tolerant quantum error correction protocols can be quite high \cite{Fowler12surface,Bravyi12,JonesToffoli,Yoder2016,chamberland2017overhead,CJO17Reed,BevsUniversal,Chamberland2019Magic1,Litinski19,Litinski19magic,chamberland2020very,ChambsCat,Chamberland22,SC22Modulo}. For fault-tolerant quantum computers where qubits are encoded in topological quantum error-correcting codes, lattice surgery paired with magic state distillation protocols provides an efficient way to implement universal gate sets while being compatible with the locality requirements of two-dimensional planar hardware architectures \cite{Bravyi05,Bravyi12,Meier13MSD,BombinTopoDistill,Campbell2018magicstateparity,Chamberland2019Magic1,chamberland2020very,fowler2018low,Litinski2018latticesurgery,Litinski19,Litinski19magic,Chamberland22,Chamberland22b,PsiQuantumLatticeSurgery}. In addition to the extra space-costs associated with lattice surgery protocols, there are also additional time costs arising from the required protection against timelike failures (which can result in logical parity measurement failures) \cite{Chamberland22,Chamberland22b,Gidney2022stability}. The timelike distance of a lattice surgery protocol is given by the number of syndrome measurement rounds which need to be performed during the measurement of a multi-qubit Pauli operator. 

In Ref.~\cite{Chamberland22}, a protocol called temporal encoding of lattice surgery (TELS) was introduced in order to reduce the required timelike distance of lattice surgery protocols, thus reducing algorithm runtimes. The first step in a TELS protocol is to divide all multi-qubit Pauli measurements $\{ P_1, P_2, \cdots, P_{\mu} \}$ required to run a quantum algorithm into sequences of parallelizable Pauli (PP) sets. For a PP set of size $k$, the multi-qubit Pauli operators in a PP set $P_{[t+1,t+k]} = \{ P_{t+1}, P_{t+2}, \cdots, P_{t+k} \}$ commute, and any necessary Clifford corrections can be conjugated to the end of the sub-sequence. This general model of Pauli-based computation is shown in \cref{fig:PBC}. In this work we consider a universal gate set generated by $\langle T, H, S, \text{CNOT} \rangle$ where $T = \text{diag}(1, e^{i \pi / 4})$ and $H$ and $S$ are the Hadamard and phase gates. For such a universal gate set, a $T$ gate can be implemented using multi-qubit Pauli measurements and the resource magic state $\ket T = (\ket{0} + e^{i \pi / 4}\ket{1}) / \sqrt{2}$. The main idea behind TELS is that for a given PP set, one can measure a larger over-complete set of multi-qubit Pauli operators, where each multi-qubit Pauli operator in the over-complete set is a product of multi-qubit Pauli operators from the original PP set. As was shown in Ref.~\cite{Chamberland22}, each multi-qubit Pauli operator in the new over-complete set is associated with a codeword of a classical $[n,k,d]$ error-correcting code. The measurement results denote the $n$ bits of the classical code. Applying the parity check matrix of the code to these measurement outcomes enables the detection of logical timelike failures. This in turn allows fewer rounds of syndrome measurements for each multi-qubit Pauli, due to the extra protection offered by the overlaying classical~code.

In this work, we introduce new TELS protocols that further reduce the timelike distance required for lattice surgery protocols. In previous TELS protocols, if a lattice surgery failure was detected while measuring the over-complete set, the multi-qubit Pauli operators from the original PP set would be remeasured. We show that better speedups can be obtained if, during the re-measure step, the operators from the over-complete set are repeatedly measured until no logical timelike failures are detected. We also show that in some cases, even larger speedups can be achieved when using the classical error-correcting codes to correct a subset of errors of smaller weight and detect all possible errors of higher weight. We also consider a large number of classical error-correcting codes for various sizes of PP sets. Such considerations enable more efficient TELS protocols to be applied to a wider range of quantum algorithms, where the average PP set sizes depend on the particular algorithm being implemented. 

We then focus on implementing TELS protocols in the context of magic state distillation using Clifford frames. In doing so, we consider a biased circuit-level noise model, where the logical qubits are encoded using asymmetric surface codes. We consider asymmetric surface codes since such codes require fewer qubits to achieve a desired logical failure rate for physical error rates $p = 10^{-3}$ and $p = 10^{-4}$ compared to other topological codes such as $XZZX$ and $XY$ codes \cite{HiggottBP}. By developing new layouts for magic state distillation tiles which are adapted to TELS protocols, we show that the space-time costs of such distillation protocols can be reduced compared to magic state distillation tiles that do not use TELS.

The manuscript is structured as follows. In \cref{subsec:PauliBasedReview} we review the notion of Pauli-based computation implemented via lattice surgery and in \cref{subsec:TELSreview} we review the implementation of previously proposed TELS protocols. Next, in \cref{subsec:MSDTELSnew} we show how repeated encoded multi-qubit Pauli measurements using TELS can result in smaller algorithm runtimes. We then show in \cref{subsec:MSDTELScorrect} how the simultaneous correction and detection of errors  with the classical codes used in a TELS encoding can lead to reduced runtimes compared to a pure error detection scheme. In \cref{subsec:ManyCodes}, we present the best average runtimes of TELS protocols using a wide range of classical error-correcting codes. In \cref{app:malignantsets}, we provide different methods to count the number of malignant fault sets for a given classical code. This is useful in analyzing the performance of TELS protocols with various classical codes. In \cref{app:codeconstruction}, we show how to construct the different classical codes used in \cref{subsec:ManyCodes}. \cref{app:speedups} contains details about the performance of TELS in additional noise regimes and target logical failure rates. 

Next in \cref{sec:MSDMSD}, we show how to apply TELS protocols to magic state distillation factories. Specifically, \cref{subsec:MSDMSDCliff} contains a new circuit for executing lattice-surgery-based magic state distillation. In this protocol, magic states are distilled up to a known Clifford correction. In \cref{app:CliffpartTraceproof}, we discuss in detail how Clifford frames are incorporated in our TELS protocols when applied to magic state distillation schemes. In \cref{sec:MSDMSDdesign}, we analyze the space-time costs of various distillation protocols that use TELS with the distillation circuit of \cref{subsec:MSDMSDCliff}. In particular, for physical error rate $p=10^{-4}$, we use $15$-to-$1$ and $116$-to-$12$ distillation protocols to distill magic states with logical failure probabilities $10^{-10}$ and $10^{-15}$, where $p$ is the noise parameter of a biased circuit-level noise model described in \cref{subsec:PauliBasedReview}. For $p=10^{-3}$, we consider $114$-to-$14$ and $125$-to-$3$ distillation protocols to produce magic states with logical failure rate $10^{-10}$ and $10^{-15}$ respectively. For each distillation protocol, we compute the minimum distances and space time costs using different hardware layouts that are dependent on the chosen lattice surgery mechanism.  \cref{app:golaycodechoice} shows a specific choice of codewords derived from the classical Golay code for use in a $15$-to-$1$ distillation protocol with a TELS implementation. This classical code allows for very low average runtime per Pauli. In \cref{app:algoSTcosts}, we show the general procedure for determining the minimum space-like and time-like distances for different distillation protocols and layouts. \cref{app:constants} contains additional information regarding each of the layouts proposed in this work, which is used in conjunction with \cref{app:algoSTcosts} to determine the minimum distances and space-time costs. We conclude with a note on how to schedule the operation of a distillation factory which contains many distillation tiles. Using a round-robin scheduling algorithm, we optimize the number of tiles that are required to output enough distilled magic states as required for seamless operation of a quantum core.

\section{Review of Pauli-based computation and lattice surgery}
\label{sec:MSDreview}

\subsection{Pauli-based computation and lattice surgery}
\label{subsec:PauliBasedReview}

There exist many models of computing to execute an algorithm on a quantum computer. The most well-studied is the circuit model of quantum computation~\cite{Deutsch89,Bennett95}. Examples of some other models are adiabatic quantum computation~\cite{Farhi01}, measurement-based models~\cite{Gottesman99,Aliferis04,Briegel09}, and fusion-based computation~\cite{Bartolucci21} which may be more suitable for specific hardware architectures. For a universal quantum computing architecture which uses two-dimensional planar topological codes (such as the surface code), the most natural way to implement a quantum algorithm is to use the Pauli-based model of quantum computation~\cite{Bravyi16}. In a Pauli-based computation (PBC), all logical gates can be applied by measuring multi-qubit Pauli operators (potentially using additional ancillas), as shown in \cref{fig:PBC}. Quantum algorithms are then executed using a pool of purified magic states and a sequence of multi-qubit Pauli measurements. In the interest of speeding up algorithms, we first note that the sequence of multi-qubit measurements can be grouped into subsequences of mutually commuting measurements. In \cref{fig:PBC}, the set $\{ P_1,P_2, \mathellipsis P_k\}$ is such a subsequence, and is called a PP set. An algorithm executing $\mu$ $T$-gates with $T$-depth $\gamma$ can in general be broken down into $\gamma$ PPs of average size $k = \mu / \gamma$. In some algorithms of interest like quantum chemistry simulations~\cite{Kim22}, the size of a PP set is between $9$ and $14$. We also note that for magic state distillation protocols that are expressed as sequences of multi-qubit Pauli measurements, all the non-Clifford gates commute and thus form PP sets. For instance, the $15$-to-$1$ distillation protocol has a PP set of size $11$, and for the $125$-to-$3$ distillation protocol, the PP set is of size $99$.

\begin{figure}
    \centering
    \hspace{-1mm}
    \includegraphics[width=.65\textwidth]{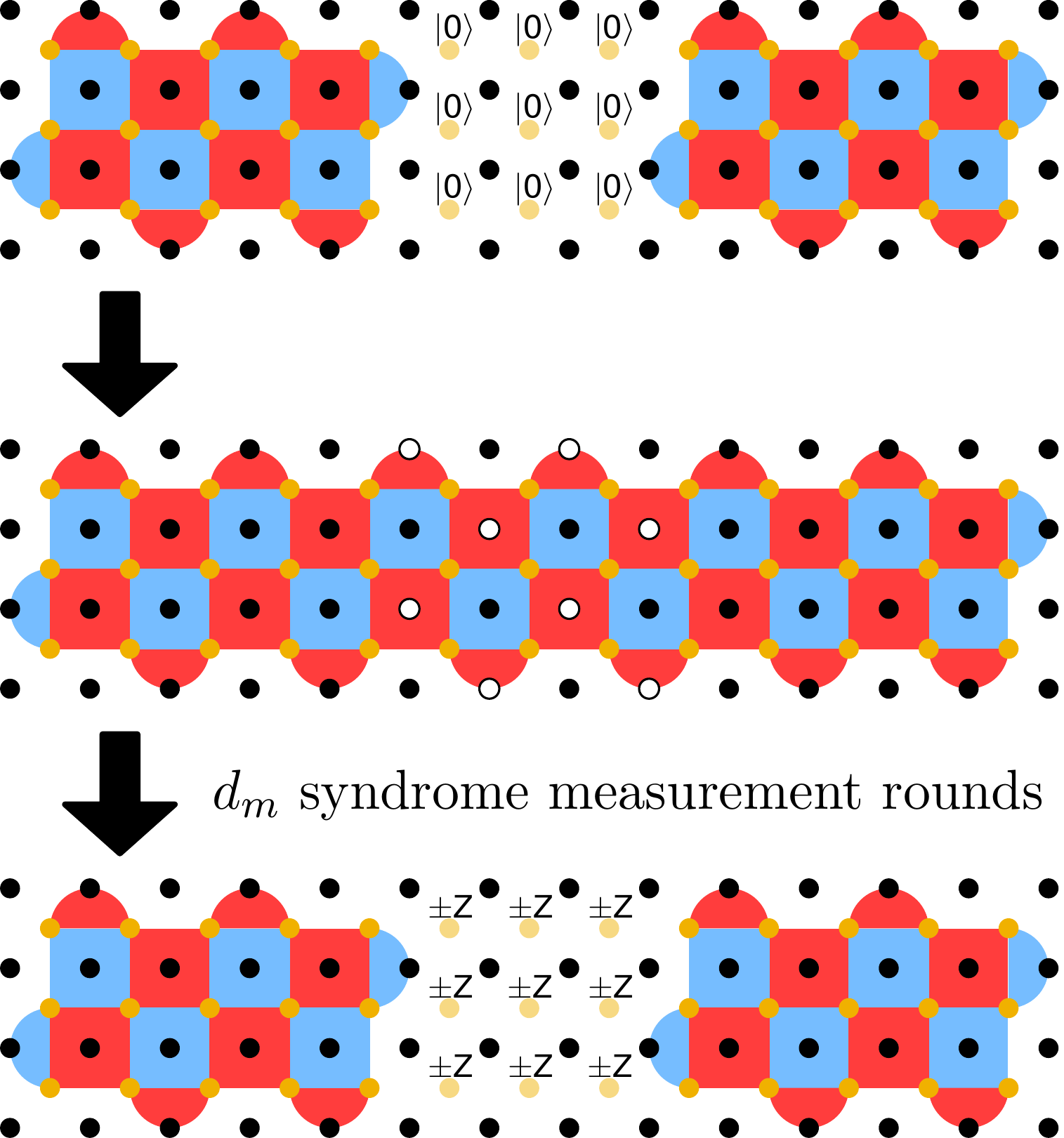}
    \caption{Lattice surgery implementation of an $X \otimes X$ measurement between two logical qubits encoded in $d_x=3$, $d_z=5$ surface code patches. Note that $X$ ($Z$) stabilizers are represented by red (blue) plaquettes. Prior to measuring $X \otimes X$, yellow data qubits in the routing region are prepared in the $\ket 0$ state. The $X \otimes X$ measurement outcome is then obtained by measuring the $X$-stabilizers (shown with white ancillas) in the routing space. The stabilizers of the merged surface code patch are measured for $d_m$ syndrome measurement rounds in order to correct timelike failures which can occur in the first round of the merge resulting in the wrong parity of $X \otimes X$. In the first syndrome measurement round of the merged patch, the individual measurement outcomes of $X$ stabilizers in the routing space region are random, but their product gives the result of the $X \otimes X$ measurement outcome. At the end of the $d_m$ syndrome measurement rounds, the data qubits in the routing space are measured in the $Z$ basis. Since measurement and reset of qubits typically takes a much longer time than the implementation of the physical CNOT gates used to measure the stabilizers, in this work we assume that the qubits in the routing space used to measure $X \otimes X$ are only available one syndrome measurement round after the split. Hence the merge/split operation takes a total of $d_m+1$ syndrome measurement rounds. }
    \label{fig:latsurg}
\end{figure}

In topological codes such as the surface code and color code, lattice surgery is the dominant mechanism used to perform multi-qubit Pauli measurements~\cite{Landahl14,Horsman12}. In \cref{fig:latsurg}, we show how to perform the $X\otimes X$ lattice surgery measurement between two logical qubits encoded in the surface code. Prior to measuring $X \otimes X$, the data qubits in the routing space are initialized in the $\ket 0$ state. Subsequently, the $X \otimes X$ measurement outcome is obtained by measuring $X$ operators in the routing space region shown in \cref{fig:latsurg} (note that a minimum-weight representative of the logical $X$ operator of the surface code is given by the product of $X$ operators along the vertical boundaries of the patch). The measurement outcomes of the $X$ stabilizers in the routing space are random (such stabilizers are illustrated with white ancillas in \cref{fig:latsurg}). However the product of all such stabilizers give the parity of the $X \otimes X$ measurement outcome. The stabilizers of the merged surface code patch are measured for $d_m$ rounds after which the surface code patches are split by measuring the qubits located in the routing space in the $Z$ basis. A logical timelike failure occurs during a lattice surgery protocol when the wrong parity of the multi-qubit Pauli measurement is obtained. Note that the logical timelike failure rate is exponentially suppressed with the number of syndrome measurement rounds $d_m$ (see Ref.~\cite{Chamberland22}).

All numerical results in this work are obtained using the following biased circuit-level noise model:
\begin{enumerate}
    \item Each single-qubit gate location is followed by a Pauli $Z$ error with probability $\frac{p}{3}$ and Pauli $X$ and $Y$ errors each with probability $\frac{p}{3 \eta}$.
	\item Each two-qubit gate is followed by a $\{ Z \otimes I, I \otimes Z, Z \otimes Z \}$ error with probability $p/15$ each, and a $\{ X \otimes I, I \otimes X, X \otimes X, Z \otimes X, Y \otimes I, Y \otimes X, I \otimes Y, Y \otimes Z, X \otimes Z, Z \otimes Y, X \otimes Y, Y \otimes Y \}$ each with probability $\frac{p}{15 \eta}$.
	\item With probability $\frac{2p}{3 \eta}$, the preparation of the $\ket{0}$ state is replaced by $\ket{1}=X\ket{0}$. Similarly, with probability $\frac{2p}{3}$, the preparation of the $\ket{+}$ state is replaced by $\ket{-}=Z\ket{+}$.
	\item With probability $\frac{2p}{3 \eta}$, a single-qubit $Z$ basis measurement outcome is flipped. With probability $\frac{2p}{3}$, a single-qubit $X$-basis measurement outcome is flipped.
	\item Lastly, each idle gate location is followed by a Pauli $Z$ with probability $\frac{p}{3}$, and a $\{ X, Y \}$ error each with probability $\frac{p}{3 \eta}$.
\end{enumerate}
We also set $\eta =100$. Using the above noise model and a minimum-weight perfect matching decoder, in Ref.~\cite{Chamberland22} it was shown that the timelike logical failure rate of an $X \otimes X$ multi-qubit Pauli measurement with $d_m$ syndrome measurement rounds is given by
\begin{equation}
    p_m(d_m) = 0.01634 A (21.93p)^{(d_m+1)/2},
    \label{eq:TimelikePoly}
\end{equation}
when $p$ is below threshold. In \cref{eq:TimelikePoly}, $A$ corresponds to the area of the routing space connecting the various surface code patches that take part in a multi-qubit Pauli measurement. In \cref{sec:NewTELS}, we set $A = 100$ in order to directly compare our results to those in Ref.~\cite{Chamberland22}. Further, we also use \cref{eq:TimelikePoly} when considering multi-qubit Pauli measurements containing $Z$ and $Y$ terms. 

Depending on the accuracy requirements of the algorithm, the target logical error rate per Pauli $\delta$ sets an upper bound on the maximum tolerable noise. This condition is $p_m < \delta$. To achieve low $p_m$, the measurement distance $d_m$ must be accordingly increased. Surface codes with $X$ and $Z$ boundaries can perform $X \otimes X$, $Z \otimes Z$ and $Z \otimes X$ measurements similar to \cref{fig:latsurg}. However to access the $Y$ boundary, twist defects are required. They have been studied extensively in surface codes~\cite{Horsman12,Litinski19,Chamberland22b} and in color codes~\cite{Gowda21}. We use the methods of Ref.~\cite{Chamberland22b} to implement $Y$-type measurements.
As shown in \cref{fig:latsurg}, lattice surgery requires routing space between different logical qubit patches. In the interest of reducing routing space for lattice surgery, Ref.~\cite{Herr19} considered extremely thin data busses that were only one qubit wide. The caveat of this method is that the measurements need to be performed for $d_m^2$ syndrome measurement rounds, instead of $d_m$ in the regular case. In this manuscript we consider routing space of dimensions which are functions of $d_x$ and $d_z$, which are the $X$ and $Z$ distances of the code respectively. For instance, the core of a quantum processor would have a layout given by Fig.~14 (e) in Ref.~\cite{Chamberland22}.

\subsection{Temporally encoded lattice surgery} 
\label{subsec:TELSreview}

It has been shown that encoding the measurement results of lattice surgery in an error-detecting code allows one to effectively reduce the measurement distance $d_m$ of each multi-qubit Pauli measurement~\cite{Chamberland22}. In particular, the Pauli operators of a given PP set $\mathcal{P} = \{ P_{t+1}, P_{t+2}, \cdots, P_{t+k} \}$ can be replaced by a new set $\mathcal{S} = \{Q[\mathbf{x}^1], Q[\mathbf{x}^2], \cdots, Q[\mathbf{x}^n] \}$ where 
\begin{align}
    Q[\mathbf{x}] = \prod_{j=1}^{k} P_{t+j}^{x_j} \,,
\end{align}
and $\mathbf{x}$ is a binary vector of length $k$. Such a replacement is allowed since $\langle \mathcal{S} \rangle = \langle \mathcal{P} \rangle$. In this encoding, the vectors $\{\mathbf{x}^1, \mathbf{x}^2, \cdots, \mathbf{x}^n \}$ form the columns of the generator matrix $G$ of some classical code $\mathcal{C}$, where the rows of $G$ are the codewords of $\mathcal{C}$. Note that the encoding of a TELS protocol takes place entirely in the time domain since additional multi-qubit Pauli measurements are performed without requiring additional qubits. By multiplying the measurement outcomes of all operators in $\mathcal{S}$ by the parity check matrix of $\mathcal{C}$, timelike lattice surgery failures will be detected if the result is equal to 1 instead of 0. 

In \cref{fig:TELSprotocol}, we show a TELS protocol used in Ref.~\cite{Chamberland22} for a PP set given by $\mathcal{P} = \{P_1, P_2 \}$, and with $\mathcal{S} = \{P_1, P_2, P_1P_2 \}$ (so that $k=2$ and $n=3$). The generator matrix for the set $\mathcal{S}$ is given by
\begin{align}
    G = \begin{bmatrix} 
101 \\
011
\end{bmatrix},
\end{align}
with the rows of $G$ generating codewords of the classical $[3,2,2]$ code. Note that in a general-purpose quantum computer, the algorithms that are executed may have different sizes of PP sets. Depending on the size of the PP set, a suitable classical code would be chosen (the classical code chosen for a PP set of size-2 in \cref{fig:TELSprotocol} was chosen due to its simplicity). 

When measuring the operators in $\mathcal{S}$, the multi-qubit Pauli operators are measured using $d_m'<d_m$ syndrome measurement rounds during the merge step of the lattice surgery protocol (since the ability to detect logical timelike failures allows for noisier lattice surgery operations). In the example of \cref{fig:TELSprotocol}, if the measurement outcome of $P_1P_2$ is inconsistent with the measurement outcome of $P_1$ and $P_2$, a logical timelike failure has been detected. If timelike logical failures are not detected, Clifford corrections are applied based on the measurement results of the $k$ original Paulis, which are obtained by decoding the $n$ measurement bits according to a classical code (here, the [3,2,2] code). Note however that if two logical timelike failures occur, the protocol in \cref{fig:TELSprotocol} will be unable to detect such a timelike failure (this can also be seen by noting that the distance of the classical code $\mathcal{C}$ is 2). 

If a logical timelike failure during the TELS protocol is detected with probability $p_D$, the original $k$ Paulis are measured with measurement distance $q d_m$, where $q$ is a constant that can be optimized to reduce the overall measurement distance. This offsets the fact that the remeasure round only occurs with probability $p_D$. Note that if the Paulis in the original set $\mathcal{P}$ are measured (due to the detection of a logical timelike failure), timelike failures cannot be detected since only the original $k$ Paulis are measured. The circuits used in the TELS protocol described in this section are performed in the Clifford frame, hence any Clifford corrections that may be required can be conjugated through to the end of the circuit.

To perform a TELS protocol involving Clifford or non-Clifford gates that use resource states,  hardware must be allocated to hold the resource states in memory for the entire TELS protocol. They cannot, in general, be measured out after each Pauli measurement. For example, in \cref{fig:TELSprotocol} the $\ket T$ states used to measure $P_1$ and $P_2$ need to be held in memory until the end of the protocol. Note that a common approach for designing the architecture for a quantum computer is to use a central processing unit (also referred to as a ``core'') and a set of distillation factories. If a quantum computer is built according to this model, the use of TELS results in a small additive factor to the total number of logical qubits needed instead of a multiplicative factor. This does not result in a substantial increase to the space-time cost of implementing algorithms on such an architecture, as in general the algorithms runtime is reduced by a multiplicative factor of between $2 \times$ and $5 \times$, as we show in this manuscript. In contrast, in Ref.~\cite{Litinski19}, Litinski showed that a runtime reduction of approximately $2 \times$ required a $6 \times$ increase in qubit overhead cost. 

We now calculate the total time taken by a TELS protocol to execute a PP set of size $k$, using an $[n,k,d]$ classical code $\mathcal C$. To preface, in a regular lattice surgery protocol, each measurement takes time $(d_m + 1)$, for a total time of $k(d_m + 1)$. Following the TELS protocol described in this section, the total time taken to measure all Paulis in a PP set using TELS is, on average,
\begin{equation}
T_{\text{old}} = n(d_m' + 1) + p_D k (\lceil q d_m \rceil + 1) \: ,
 \label{eq:OldTELStime}
\end{equation}
where the second term is due to the contribution from measuring the Paulis of $\mathcal{P}$ if timelike logical failures are detected. We use the subscript `old' to refer to the TELS protocol described in this section, since in \cref{sec:NewTELS}, the remeasure part of the TELS protocol has a different time cost. Similarly for this protocol, the logical error rate per Pauli is the sum of the logical error rate contributions of the temporally encoded set and the remeasure set,
\begin{align}
p_L = &\frac{1-p_D}{k} \sum_{i=d}^{n} l_i (p_m(d_m'))^i  (1-p_m(d_m'))^{n-i} \nonumber \\
 &\quad+ p_D p_m(\lceil q d_m \rceil) \nonumber \\
 \approx &\frac{1-p_D}{k} l_d (p_m(d_m'))^d (1-p_m(d_m'))^{n-d} \nonumber \\
 &\quad + p_D p_m(\lceil q d_m \rceil) \: ,
 \label{eq:PLfirst}
\end{align}
where $l_i$ is the number of weight-$i$ timelike logical failures that cause trivial syndromes when multiplying the lattice surgery measurement outcomes by the parity check matrix of the classical code $\mathcal{C}$. The variable $p_m(d_m')$ is the probability of a logical timelike failure of a single lattice surgery measurement with measurement distance $d_m'$ (obtained from \cref{eq:TimelikePoly}),
and $p_m(\lceil q d_m \rceil)$ is the probability of a logical timelike failure in the remeasure round, where the measurement distance is $\lceil q d_m \rceil$. Note that the timelike logical error rate is due to wrong Clifford corrections being applied after the non-Clifford gates. Hence if TELS fails without detecting a timelike error, the probability that there is a logical error is scaled by $1-p_D$.
There are various ways that the $l_i$ term in \cref{eq:PLfirst} can be calculated. In \cref{subapp:monte,subapp:bernoulli}, we show how $l_i$ is calculated using sampling methods, where the computational complexity of sampling grows with `$d$`. In \cref{subapp:macwilliams}, we show how the $l_i$ coefficients can be computed determinstically using MacWilliams identities. The advantage of using the MacWilliams identities is that the computational complexity only grows exponentially with $k$ as opposed to $d$.

Finally, the probability that an error is detected during the first stage is 
\begin{equation}
p_D \geq \sum_{i=1}^{d-1} \binom{n}{i} (p_m(d_m'))^i (1-p_m(d_m'))^{n-i}.
\label{eq:PDgreaterThan}
\end{equation}
Note that in \cref{eq:PDgreaterThan}, we use the $\geq$ sign since some sets of $\geq d$ logical timelike failures will also be detected.

\section{New TELS encoding protocol}
\label{sec:NewTELS}

\subsection{Improvements arising from repeated temporally encoded measurements}
\label{subsec:MSDTELSnew}

In this section we describe an improved TELS protocol compared to the one described in \cref{subsec:TELSreview}. An example of the application of the new protocol is given in \cref{fig:TELSprotocolnew}. The main difference is that if a logical timelike failure is detected when measuring the Paulis in the set $\mathcal{S}$, operators in $\mathcal{S}$ (instead of $\mathcal{P})$ are measured anew. In particular, operators in $\mathcal{S}$ are repeatedly measured until no logical timelike failures are detected. Only at this point can we determine the Clifford corrections that must be applied. 

Although it is clear that the protocol described above is different from the protocol described in \cref{subsec:TELSreview}, it is not clear that the new protocol takes fewer syndrome measurement rounds. First, let us determine the time taken by the protocol described above, which is given by
\begin{align}
    T_{\text{new}} = & n (d_m' + 1) + p_D T_{\text{new}} \nonumber \\
     = & \frac{n (d_m' + 1)}{1-p_D} \: ,
     \label{eq:TEnew}
\end{align}
since the lattice surgery implementation of each multi-qubit Pauli measurement in $\mathcal{S}$ is performed using $d_m'$ syndrome measurement rounds during the merge step.

To compare \cref{eq:TEnew} to the protocol described in \cref{subsec:TELSreview}, we can rewrite \cref{eq:OldTELStime} as
\begin{align}
     T_{\text{old}} = & n (d_m' + 1) + p_D u n (d_m' + 1) \nonumber \\
      = & n (d_m' + 1) (1 + p_D u) \: ,
\end{align}
where $u = \frac{k (\lceil q d_m \rceil + 1)}{n (d_m' + 1)}$.

If $\frac{1}{1-p_D} < (1 + p_D u)$, the revised protocol is more efficient than the one described in \cref{subsec:TELSreview}.
Simplifying this, we can get a constraint on $u$,
\begin{equation}
    u > \frac{1}{1-p_D} \: .
    \label{eq:ucheck}
\end{equation}

We computed the value of $u$ for all the classical codes considered in this paper that were used in a TELS protocol. If the condition in \cref{eq:ucheck} was satisfied (which was true for all classical codes considered in the work except the distance-$2$ Single Error Detect code (see \cref{subsubsec:SED})), we used the new protocol for TELS. An interesting note is that the time difference between the old protocol and the new is only exacerbated when $p_D$ is larger. 

The probability of a logical error in this new protocol is the probability that an undetectable set of measurement errors occurs during the last (or successful) iteration of temporally encoded lattice surgery. Let $p'_L \equiv \frac{1}{k} \sum_{i=d}^{n} l_i \, (p_m(d_m'))^i (1-p_m(d_m'))^{n-i}$, which corresponds to the probability (per Pauli) that a series of timelike logical failures during the execution of the measurements in $\mathcal{S}$ results in a trivial syndrome when multiplied by the parity check matrix of the classical code $\mathcal{C}$. The TELS protocol failure probability is then given by the following equation
\begin{align}
p_L &= (1-p_D)p'_L + p_D(1-p_D)p'_L + p_D^2(1-p_D)p'_L + \cdots \nonumber \\
 &= (1-p_D)p'_L(1 + p_D + p_D^2 + \cdots) \nonumber \\
 &= p'_L \nonumber \\
 &= \frac{1}{k} \sum_{i=d}^{n} l_i \, (p_m(d_m'))^i (1-p_m(d_m'))^{n-i} \: .
\end{align}

\subsection{Improvements from correcting classical errors}
\label{subsec:MSDTELScorrect}

In Ref.~\cite{Chamberland22}, the classical codes used in a TELS protocol were for a pure error detection scheme (as described in \cref{subsec:TELSreview}). Further, it was argued that performing error correction using TELS would result in worse speedups compared to a pure error detection scheme. In this section we show that performing a hybrid scheme using TELS, where errors of low weight are corrected and errors of higher weight are detected, can result in further performance improvements compared to a pure error detection scheme. The overall effect is to reduce the average time per Pauli measurement, while staying under the logical error rate threshold set by $\delta$. This effect is more dominant when using classical codes with high $k$ and $d$, and at higher tolerable noise rates $\delta$.  At larger $\delta$, it is possible to correct classical errors of higher weight than for smaller $\delta$ since there is more of an error budget and $p_L$ can be increased to match $\delta$. In addition, $p_D$ tends to be higher when $d_m$ is small, and $d_m$ is smallest at large delta. As we show in \cref{tab:ECED}, the benefit of using error correction is that  $p_D$ can be made smaller and that in turn allows smaller average runtimes per Pauli.

When using a distance-$d$ classical code in an error detection scheme, an error is detected with probability $p_D \approx O(p_m(d_m'))$. When using classical codes of high distance, the target logical timelike failure rate of a lattice surgery protocol may be achieved using measurement distances $d_m$ close to 1. However, with such small measurement distances, $p_m(d_m')$ is inevitably large, and so~is~$p_D$.

% \pagebreak %DEBUG

If we instead use the classical code to correct all errors up to some weight $c$, a detection event is only triggered by errors of weight at least $c+1$. In other words, $p_D \approx O\big ((p_m(d_m'))^{c+1} \big)$. A lower value of $p_D$ thus requires fewer repeated measurements of the Paulis in the set $\mathcal{S}$. Although time is now saved by performing fewer remeasure rounds, the logical error rate increases relative to a pure error detection scheme. In a pure error detection scheme, the logical error rate scales as $O \big ( (p_m(d_m'))^d \big )$. However, some errors of weight $d-c$ will have the same syndrome as errors of weight-$c$, since they may differ by a logical operator. The most probable correction for this syndrome is the weight-$c$ error, but applying this correction yields a logical error with probability $O\big ((p_m(d_m'))^{d-c}\big )$. Consequently, using the classical code in a TELS protocol to correct low-weight errors leads to a tradeoff between the logical error rate of the encoded measurements and the time saved due to fewer remeasure rounds. 
Such tradeoffs have also been considered for magic state distillation schemes (see for instance Ref.~\cite{Haah18}).

By incorporating the ability to correct errors up to weight $c$, the probability of observing an uncorrectable but detectable error becomes
\begin{equation}
p_D \geq \sum_{i=c+1}^{d-c-1} \binom{n}{i} (p_m(d_m'))^i (1-p_m(d_m'))^{n-i} \: ,
\end{equation}
which is significantly smaller than in \cref{eq:PDgreaterThan}.
The probability of a logical error per Pauli due to the TELS protocol described in this section is
\begin{align}
p_L =& \frac{1}{k} ( \sum_{i=d-c}^{d-1} l_i ( p_m(d_m'))^i (1-p_m(d_m'))^{n-i} + \nonumber \\ 
& \sum_{i=d}^{n} l_i (p_m(d_m'))^i (1-p_m(d_m'))^{n-i}) \nonumber \\
 \approx & \frac{1}{k} l_{d-c} ( p_m(d_m'))^{d-c} (1-p_m(d_m'))^{n-d+c} \: .
\label{eq:p_LtotECED}
\end{align}
Note that in \cref{eq:p_LtotECED}, $l_i$ in the second term of the sum includes contributions from both undetectable errors and errors of weight greater than $d$ which have the same syndrome as correctable errors. We only include the leading order term since higher order terms have very small contributions. Note however that for classical codes with very large values of $d$, a larger value of $p_m(d_m')$ may be tolerated. In such cases, including higher order terms in \cref{eq:p_LtotECED} may be required.

To show the benefits of including error correction in a TELS protocol, we consider TELS with a classical $[127,92,11]$ BCH code (see \cref{subsubsec:BCH}) at a physical error rate of $10^{-4}$ and a target logical error rate of $\delta = 10^{-15}$ per Pauli. In \cref{tab:ECED}, we show that by correcting errors up to weight three, it is possible to reduce the average time per Pauli measurement by $36 \%$ .
 
\setlength{\tabcolsep}{8pt}
\begin{table}
    \centering
    \begin{tabular}{c c c c c c}
    $d_m$ & $c$ & $p_{L}$ & $p_D$ & $T$  & $T/k$ \\
    \hline
    $1$ & $0$ & $1 \times 10^{-24}$ & $0.37$ & $400.71$  & $4.36$  \\[.2cm]
    $1$ & $1$ & $2.5 \times 10^{-21}$ & $0.076$ & $275.08$ & $3$  \\
    $1$ & $2$ & $1 \times 10^{-18}$ & $0.011$ & $256.83$ & $2.79$  \\
    $1$ & $3$ & $2.5 \times 10^{-16}$ & $0.0012$ & $254.3$  & $2.76$  \\[.2cm]
    $11$ & $-$ & $1.8 \times 10^{-16}$ & $-$ & $1104$ & $12$
    \end{tabular}
    \caption{Comparison between the performance of a TELS protocol implemented using pure error detection versus protocols implemented using combined error detection and correction with the $[127,92,11]$ BCH code. The last line of the table shows the performance of unencoded lattice surgery. Here, $d_m$ is the measurement used when measuring multi-qubit Pauli operators using lattice surgery, and $c$ is the maximum weight of errors that can be corrected by the classical code used in the TELS protocol. The objective is to minimize the average time taken per Pauli measurement (last column) while ensuring the logical error rate is less than $10^{-15}$ per Pauli. The logical error rate per Pauli $p_L$ is calculated using \cref{eq:p_LtotECED}, where the routing space area is $A=100$. The results in the first row are for the TELS protocol implemented using pure error detection. By correcting weight-one, -two and -three errors, the average measurement time is reduced to $2.76$ syndrome measurement rounds, as opposed to $4.36$ when using TELS with pure error detection, or $12$ without TELS. Note that a pure error detection scheme with $d_m \ge 3$ results in a larger runtime than those obtained in the first four rows of this table since the total number of syndrome measurement rounds will be at least $127 \times 4 = 508$.}
    \label{tab:ECED}
\end{table}

% \pagebreak %DEBUG
\subsection{Protocols for PP sets of size up to one hundred}
\label{subsec:ManyCodes}

When performing temporally encoded lattice surgery, it is unclear which classical $[n,k,d]$ code will give the best speedup. \cref{eq:TEnew} shows that the total time to measure all the multi-qubit Pauli operators using a TELS protocol is clearly proportional to $n$. Also, the time is inversely proportional to the distance of the classical code used since a higher value of $d$ results in a lower value of $d_m'$. However the contribution arising from $p_D$ needs to be determined, as well as the tradeoffs between a pure error detection scheme and an correction + detection scheme. 

Ref.~\cite{Chamberland22} evaluated the speedups arising from using TELS protocols at a physical error rate of $p=10^{-3}$, $\delta = 10^{-15}$ and $A=100$ (see \cref{eq:TimelikePoly}). Three classical codes were considered, and the speedups arising from each code were computed. In the analysis, the distance-$4$ Extended Hamming code resulted in the best performance improvements.
For the TELS protocols considered in this work, it was unclear at the outset which classical codes gave the best performance improvements. As such, we collected results for an expanded list of classical codes, which are listed in \cref{tab:codes}. Note that in our analysis, we only consider odd values of $d_m'$, as \cref{eq:TimelikePoly} only applies to odd measurement distances. Since the calculation of the number of malignant fault sets is computationally expensive, we could not consider classical codes of distances higher than those shown in \cref{tab:codes}. The code constructions for the codes in \cref{tab:codes} are provided in \cref{app:codeconstruction}.

\begin{table}
    \centering
    \begin{tabular}{c c c}
        Code / Code family  &  Distances \\
        \hline
        Single Error Detect(SED)  & $2$\\
        Hamming (Hamm) & $3$ \\
        Concatenated SED (CSED) & $4$ \\
        Extended Hamming (EHamm) & $4$ \\
        Golay (Gola) & $7$ \\
        Extended Golay (EGol) & $8$ \\
        Doubly Concatenated SED (DCSED) & $8$ \\
        Reed-Muller (RM) & $8$ \\
        Polar (Pol) & $4,8$ \\
        Zetterberg (Zett) & $5,6$ \\
        Bose–Chaudhuri–Hocquenghem (BCH) & $3,5,7,9,11$ 
    \end{tabular}
    \caption{Classical codes (and their associated distances) used in the TELS protocols considered in this work. In different noise regimes, some codes perform better than others, as we show in \cref{fig:delt10}, \cref{fig:p3delt152025} and \cref{fig:p4delt1520}. We provide explicit constructions of the best performing codes in \cref{app:codeconstruction}.}
    \label{tab:codes}
\end{table}

\begin{figure}
    \centering
    \includegraphics[width=.98\textwidth]{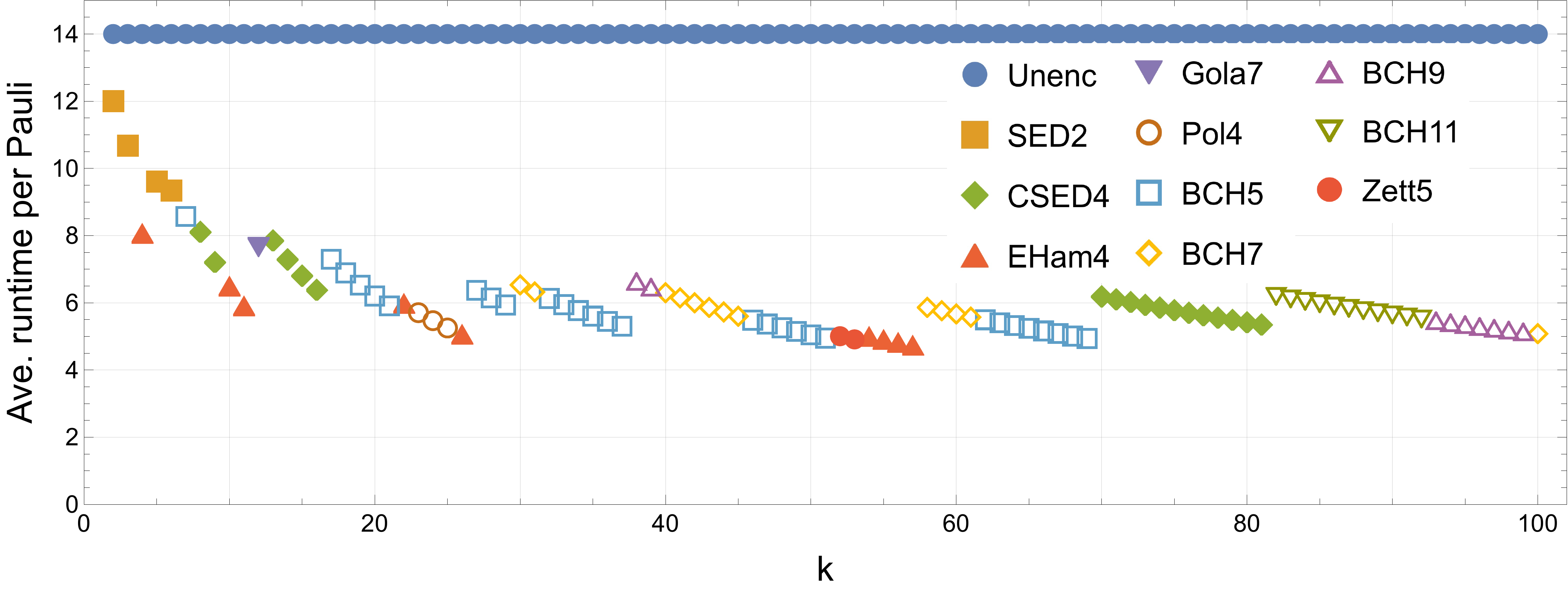}
    \caption{The best average runtime per Pauli (in units of syndrome measurement rounds) for all classical codes considered in this work when using the TELS protocol of \cref{sec:NewTELS} for PP sets of size $k \in \{ 2,3, \mathellipsis, 100\}$ at $p=10^{-3}$, $\delta = 10^{-10}$ and with maximum routing space $A = 100$. For example, for PP sets of size $k=20$, a distance-$5$ BCH code achieves the lowest average runtime per Pauli among all the classical codes considered. To calculate $p_L$, we used \cref{eq:p_LtotECED}, with $p_m$ given by \cref{eq:TimelikePoly} and $d_m$ chosen to minimize the runtime while keeping $p_L<\delta $. We compare the results of TELS with un-encoded lattice surgery, which is shown here to take $14$ syndrome measurement rounds per Pauli. The legend labels correspond to codes from \cref{tab:codes}. Low distance codes perform better for small values of $k$, whereas for larger values of $k$, the high rate of larger-distance codes enables smaller measurement distances.}
    \label{fig:delt10}
\end{figure}

For $k \in \{ 2,3, \mathellipsis 100\}$, the lowest achievable average time per Pauli measurement for TELS protocols described in this section and which use the codes of \cref{tab:codes} are shown in \cref{fig:delt10}. The average runtime per Pauli is $T_{\text{new}} / k$ where $T_{\text{new}}$ is given by \cref{eq:TEnew}. Such a calculation allows us to compare the performance of a TELS protocol relative to the time taken to measure multi-qubit Pauli operators in the original PP set $\mathcal{P}$ without using TELS. Our results are obtained using the parameters $p=10^{-3}$ and $\delta = 10^{-10}$. Other regimes are considered in \cref{app:speedups}. In particular, results are obtained for the parameters $\delta \in \{ 10^{-15}, 10^{-20}, 10^{-25}\}$ with $p=10^{-3}$ and $\delta \in \{  10^{-15}, 10^{-20}\}$ with $p=10^{-4}$. 
We also show the average runtimes per Pauli for unencoded lattice surgery protocols. 

With a classical code that is defined to encode $k$ logical bits, we can implement TELS protocols for PP sets of any size \textbf{up to} $k$. This allows us to use good classical codes for many different sizes of PP sets. If the PP set is of size $k-j$, the Paulis associated with the remaining $j$ logical bits of the code are simply set to the identity. When decoding the temporally encoded measurement results, information corresponding to the extra logical bits is thrown away.

The biggest insight from our findings is that small codes with low distances perform well at small values of $k$, whereas at larger values of $k$, the larger and higher distance codes perform better. Moreover, we notice that for values of $k$ higher than $30$, codes from the BCH family generally give the largest speedups. Of course, the list of codes considered in \cref{tab:codes} is not all-encompassing. There may be many codes that perform better than the codes considered here. To find codes that work well for a specific value of $k$, the primary target should be to ensure that the rate of the code ($\tfrac{k}{n}$) is not too low. In our observations, codes with rate less than $1/2$ generally did not give the best speedups. The second biggest consideration is the distance of the classical code. High-distance classical codes admit very low values of $d_m'$ at the cost of much higher probabilities of detecting a logical timelike failure. Error correction for smaller weight errors can be used to reduce the detection rate as long as the logical failure rate per Pauli is below the target rate $\delta$.

% \subsection{Time analysis for parallelizable Pauli sets of size up to 100}
% \label{subsec:MSDTELsk100}

\chapter{TELS for improving magic state fidelity}
\label{sec:MSDMSD}

Quantum gates can be classified into those that can be efficiently simulated by classical computers (such as Clifford gates), and those that cannot~\cite{Aaronson04}. As discussed in the introduction to \cref{chap:MSD}, a universal gate set can be achieved by combining Clifford gates with at least one non-Clifford gate. However, the implementation of logical non-Clifford gates with topological codes is not as straightforward as implementing gates from the Clifford group. Examples of non-Clifford gates include rotations about one of the Bloch sphere axes by angles of $j\pi/8$ where $j\in \{ 1,1/2,1/4, \mathellipsis\}$  or multiply-controlled $Z$ gates $C^jZ$ where $j\geq 2$. One common method used to implement non-Clifford gates is to use magic states as resource states along with stabilizer operations to perform gate teleportation.

A logical magic state can be created by first preparing a physical magic state, followed by a gauge fixing step which encodes the state into a logical qubit patch such as the surface code~\cite{Vuillot19}. However such operations are not fault-tolerant and lead to encoded magic states with unacceptably high physical noise rates. To get high fidelity magic states, the prepared magic states are then injected into a magic state distillation protocol where a quantum error-correcting code uses stabilizer operations to detect logical failures present on the injected magic states. Such protocols can be concatenated to achieve any desired target logical failure rate.

\subsection*{Magic state injection}
\label{subsec:MSDMSDinjection}

Here we briefly describe various magic state injection methods used to prepare magic states prior to performing a distillation protocol. Ref.~\cite{Lao22} considers a state injection protocol on the rotated surface code afflicted by a standard depolarizing noise model. On the other hand, Ref.~\cite{Singh22} considers an injection protocol under a biased depolarizing noise model with $\eta = 10^3$ and $\eta = 10^4$. Further, the magic states are prepared in the $XZZX$ code rather than the rotated surface code. In the implementation of the protocol, the magic states are initialized into an effective two-qubit repetition code using low-error $ZZ$ rotations, and the stabilizers of this code are measured twice. Afterwards, the error detecting code is merged into the final $XZZX$ code, where $d_m$ syndrome measurement rounds are performed. The prepared magic states are then shown to have failure probability $O(p^2)$. In this work, we conjecture that a similar injection protocol can be devised using rectangular surface codes in the presence of biased noise. However, we consider the entire protocol to require only two syndrome measurement rounds since the $d_m$ syndrome measurement used after merging the error detecting code with the final code can be part of the lattice surgery operations used in the magic state distillation schemes described in \cref{subsec:MagicDist}. 

Recall that the noise model used in this work has bias $\eta = 10^2$, as opposed to $\eta = 10^3$ or $\eta = 10^4$ used in Ref.~\cite{Singh22}. Further, in Fig. 3 of Ref.~\cite{Singh22}, it can be seen that for values of the physical noise rate parameter $p \le 10^{-3}$, the injected magic states have logical error rate less than $p$ (with a quadratic scaling as a function of $p$). Since we use a rectangular surface code with bias $\eta = 10^2$, we take the injected magic states to have a logical error rate 
\begin{align}
    \epsilon_{\text{L,X}} = \epsilon_{\text{L,Y}} = \frac{p}{3 \eta},
    \epsilon_{\text{L,Z}} = \frac{p}{3},
    \label{eq:InjectProbs}
\end{align}
where $\epsilon_{\text{L,P}}$ is the probability of a logical Pauli $P$ error when injecting the prepared magic state. Note that the previous expression may be optimistic depending on the hardware implementation of the $ZZ(\theta)$ rotation used in Ref.~\cite{Singh22} and given that our noise bias $\eta = 100$ is lower than what was considered in Ref.~\cite{Singh22}. On the other hand, if the hardware architecture allows for the implementation of the $ZZ(\theta)$ as in Ref.~\cite{Singh22} and values of $p$ in the range $10^{-4} \le p \le 10^{-3}$ are below threshold at $\eta = 100$, then the expressions for $\epsilon_{\text{L,P}}$ in \cref{eq:InjectProbs} can be quite pessimistic. Note that if the underlying hardware architectures enables the implementation of the methods described in Refs.~\cite{chamberland2020very,SC22Modulo}, \cref{eq:InjectProbs} may be improved even further\footnote{Implementing the schemes described in Refs.~\cite{chamberland2020very,SC22Modulo} would result in a higher space-time cost compared to the other injection protocols described in this section, since color code patches would be used, and $\mathcal{O}(d)$ rounds of error detection would be required. However the injected magic states would have failure probabilities that scale as $\mathcal{O}(p^{(d+1)/2})$ instead of $\mathcal{O}(p)$ or $\mathcal{O}(p^2)$. Such low failure rates could then result in much smaller magic state distillation factories (see for instance Ref.~\cite{ChambsCat}). }. In what follows, we define 
\begin{align}
    \epsilon_L = \epsilon_{\text{L,X}} + \epsilon_{\text{L,Y}} + \epsilon_{\text{L,Z}} = \frac{p}{3} + \frac{2p}{3 \eta}.
    \label{eq:epsL}
\end{align}

Using the results of Fig.~$3$ of Ref.~\cite{Singh22}, we approximate the success probability of the state injection protocol to be greater than $99 \%$ for $p \leq 10^{-3}$. Hence, we take the time taken to prepare a magic state to be approximately $T_{\text{inj}} = 2/0.99$ syndrome measurement rounds for $p=10^{-3}$ and $T_{\text{inj}} = 2/0.995$ syndrome measurement rounds for $p=10^{-4}$. It is especially important to consider the time taken for injection since in TELS protocols, the time for each multi-qubit Pauli measurement can be as low as two syndrome measurement rounds (if we include the time required for ancilla reset). Note that the precise success probabilities require a more careful analysis, however we don't expect such analysis to have a significant effect~on~$T_{\text{inj}}$.

\begin{figure*}
    \centering
    \includegraphics[width=0.98\textwidth]{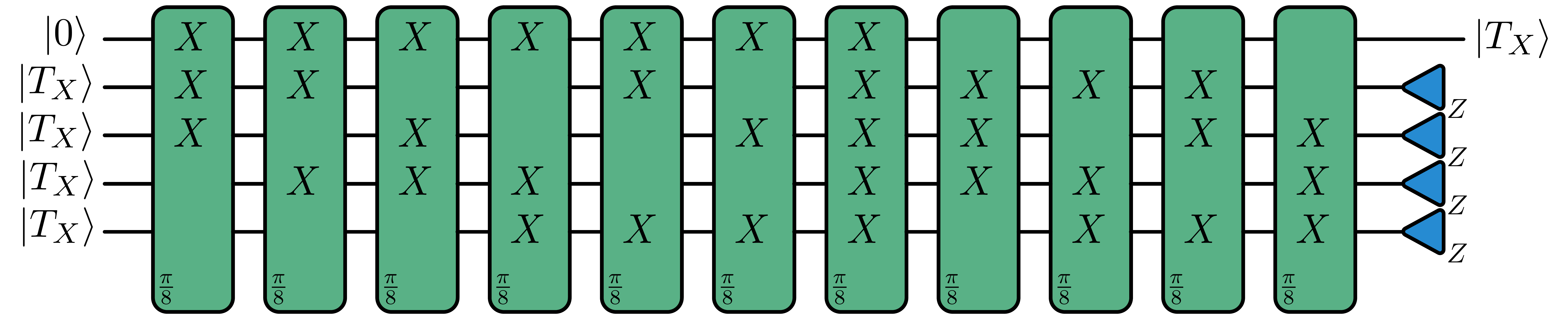}
    \caption{Circuit used in a $15$-to-$1$ magic state distillation protocol expressed as a sequence of multi-qubit non-Clifford gates. The circuit above is the Hadamard-transformed version of Fig.~15 of Ref.~\cite{Litinski19}, which produces the state $H \ket T$. This state can be used in the same way as the regular magic state, with only a change of Pauli basis while measuring.}
    \label{fig:Haddist}
\end{figure*}

\section{Magic state distillation}
\label{subsec:MagicDist}

A magic state distillation protocol takes several noisy magic states as input (which are encoded in some code such as the surface code), and by performing stabilizer operations which are part of an error detection protocol, yields fewer magic states of much higher fidelity. The output yield and magic state error probability depend entirely on the quantum code used for the error detection protocol. Bravyi and Haah suggested a class of distance-2 quantum codes for distilling $k$ magic states from $3k+8$ input magic states~\cite{Bravyi12}. These codes offer minimal protection, but have good yield. One of the protocols we consider in this work is the popular $15$-to-$1$ distillation protocol~\cite{Bravyi05}, which can be transformed into a series of $11$ commuting multi-qubit Pauli measurements~\cite{Litinski19}. These Pauli measurements form a PP set of size $11$, and thus a TELS protocol can be used to reduce the runtime required to measure each multi-qubit Pauli operator. In search of magic state distillation protocols with good output-to-input ratio, we found infinite families of protocols with near constant rate, where the focus is not on concatenating protocols but instead on measuring the stabilizers of an outer code using operations that are transversal for an inner code~\cite{Haah17}. Other techniques construct more complex codes for distillation by generalizing triorthogonal codes, or by puncturing Reed-Muller codes~\cite{Haah18}. In addition to the $\llbracket 15,1,3\rrbracket$ code, in this work we also consider several triorthogonal quantum codes that are constructed by puncturing a $[128,29,32]$ classical Reed Muller code, as described in Ref.~\cite{Haah18}. In particular, the distillation protocols we consider are derived from $\llbracket 114,14,3\rrbracket$, $\llbracket 116,12,4\rrbracket$ and a $\llbracket 125,3,5\rrbracket$ quantum codes. It is beneficial to consider quantum codes of different distances as this allows distilling magic states across a range of target logical error~probabilities.

Since our focus of applying TELS to magic state distillation is to reduce space-time costs, we comment on some previous work by Litinski~\cite{Litinski19,Litinski19magic}. First it was shown that the time costs of distillation algorithms can be reduced, however this leads to a disproportionate space increase, leading to overall larger space-time costs~\cite{Litinski19}. In further work, it was shown how to reduce space-time costs by performing the distillation using surface code patches of reduced distance, and using faulty $T$ measurements instead of $T$ state injection~\cite{Litinski19magic}. 
Improvements due to TELS protocols may also be applied to the work in Ref.~\cite{Litinski19magic}, allowing even smaller space-time costs than shown in \cref{tab:STcosts} of \cref{sec:MSDMSDdesign} (where we apply TELS to various distillation protocols and compute the space-time costs). The main objective of the results given in \cref{tab:STcosts} is to show that space-time costs can be reduced when using TELS as opposed to when there is no temporal encoding. In addition, we make a careful assessment of the time and space required for magic state injection, which makes the results in \cref{tab:STcosts} look more pessimistic when compared to results such as in Ref.~\cite{Litinski19,Litinski19magic}.

Traditional distillation algorithms, such as those in Ref.~\cite{Litinski19} perform only pure multi-qubit Pauli $Z$ measurements. In the remainder of this work, we apply Hadamard transformations to the distillation circuits to produce circuits consisting of pure multi-qubit $X$ measurements. Such a transformation reduces the space-time costs of the distillation factories, as only the shorter logical $X$ boundary of the asymmetric surface code patch will need to be accessed. The Hadamard-transformed version of the $15$-to-$1$ distillation circuit of Ref.~\cite{Litinski19} is shown in \cref{fig:Haddist}.

\subsection{Distillation in the Clifford frame}
\label{subsec:MSDMSDCliff}

To obtain the space-time costs of various magic state distillation protocols implemented with TELS, we first design an appropriate distillation protocol which outputs the desired magic state up to a Clifford correction. For simplicity, the general protocol can be separated into two steps. First the non-Clifford gates are applied using temporally encoded lattice surgery. If a non-trivial lattice surgery measurement failure is detected, all the physical qubits in the distillation tile are reset and the protocol restarts. We allow for the TELS protocol to also use the developments of \cref{subsec:MSDTELScorrect}, where classical errors of low weight may be corrected before signaling a lattice surgery measurement failure. Alternatively, if we follow the TELS protocol of \cref{subsec:MSDTELSnew}, more magic states would need to be simultaneously held in memory, and hence the hardware requirements would be larger. If TELS was successful, we are left with a distilled magic state up to a Clifford frame, prior to performing the single-qubit measurements. In the second part of the protocol, the Clifford frame is conjugated through the final single-qubit measurements. This changes the single-qubit measurements into multi-qubit $\pi/2$ Pauli measurements implemented via lattice surgery. Note that these measurements may now be tensor products of arbitrary Paulis and not just $\Id$ and $X$. These multi-qubit measurements may also be sped up using TELS, since they all commute. In particular, we use the TELS protocol of \cref{subsec:MSDTELSnew} for these final multi-qubit Pauli measurements.

After the Clifford frame is conjugated through the single-qubit measurements, the output distilled magic states are correct up to a Clifford correction. In fact, for the example in \cref{fig:Haddist}, the resulting state is exactly one of $\ket{T_X},X_{\pi/4}\ket{T_X},X_{\pi/2}\ket{T_X}$ ,or $X_{3\pi/4}\ket{T_X}$, as we prove in \cref{app:CliffpartTraceproof}.  When using the distilled state in an algorithm, the final magic state measurement axis is modified depending on the Clifford frame. If magic states were prepared in the Pauli frame (see \cref{fig:PauliAutocorr}) rather than Clifford frames, such magic states would be measured in the $Z$ basis using transversal single-qubit measurements (see \cref{fig:Pnoncliff}). However, the measurement basis may now be $-Y, -Z, Y$ given that
\begin{align}
    X_{\pi/4} Z & X^{\dagger}_{\pi/4} = -Y \: , \nonumber \\
    X_{\pi/2} Z & X^{\dagger}_{\pi/2} = -Z \: , \nonumber \\
    X_{3\pi/4} Z & X^{\dagger}_{3\pi/4} = Y \: .
\end{align}
For protocols that distill multiple magic states, the Clifford frame of the distilled states may contain multi-qubit operators. After the distilled states are used by non-Clifford gates in a core of a quantum computer, the Clifford frame must be further conjugated through the remaining single-qubit $Z$ measurements. This may result in further multi-qubit Pauli measurements and additional routing space area in order to access the $Y$ and $Z$ logical boundary of the distilled states. A caveat from using the Clifford frame is that $Y$ basis measurements may require an extra ancilla (depending on the chosen hardware implementation). As such, the design of magic state distillation factories may require additional routing space to store the ancillas needed for $Y$ measurements.

\begin{figure}
    \centering
    \subfloat[\label{fig:Pnoncliff} ]{\includegraphics[width=.46\textwidth]{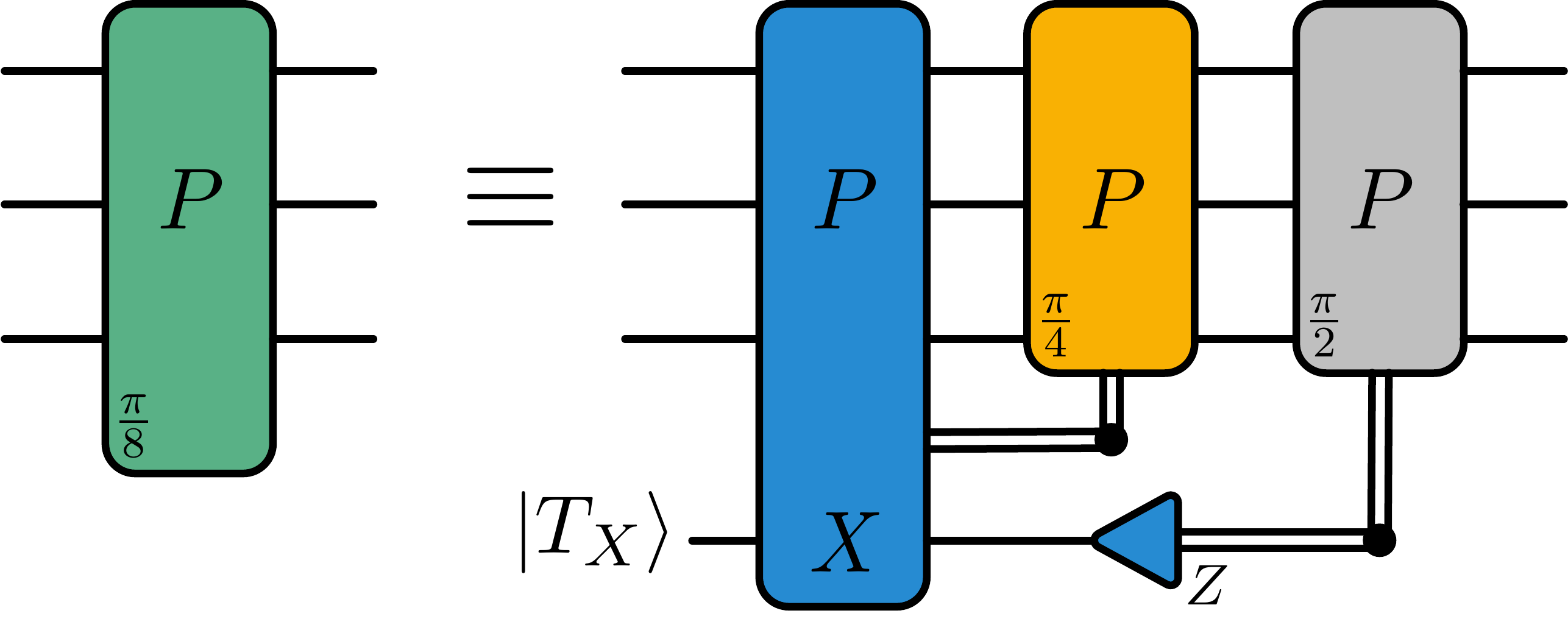}}
    \hspace{0.2cm}
\subfloat[\label{fig:Pcliff} ]{\includegraphics[width=.42\textwidth]{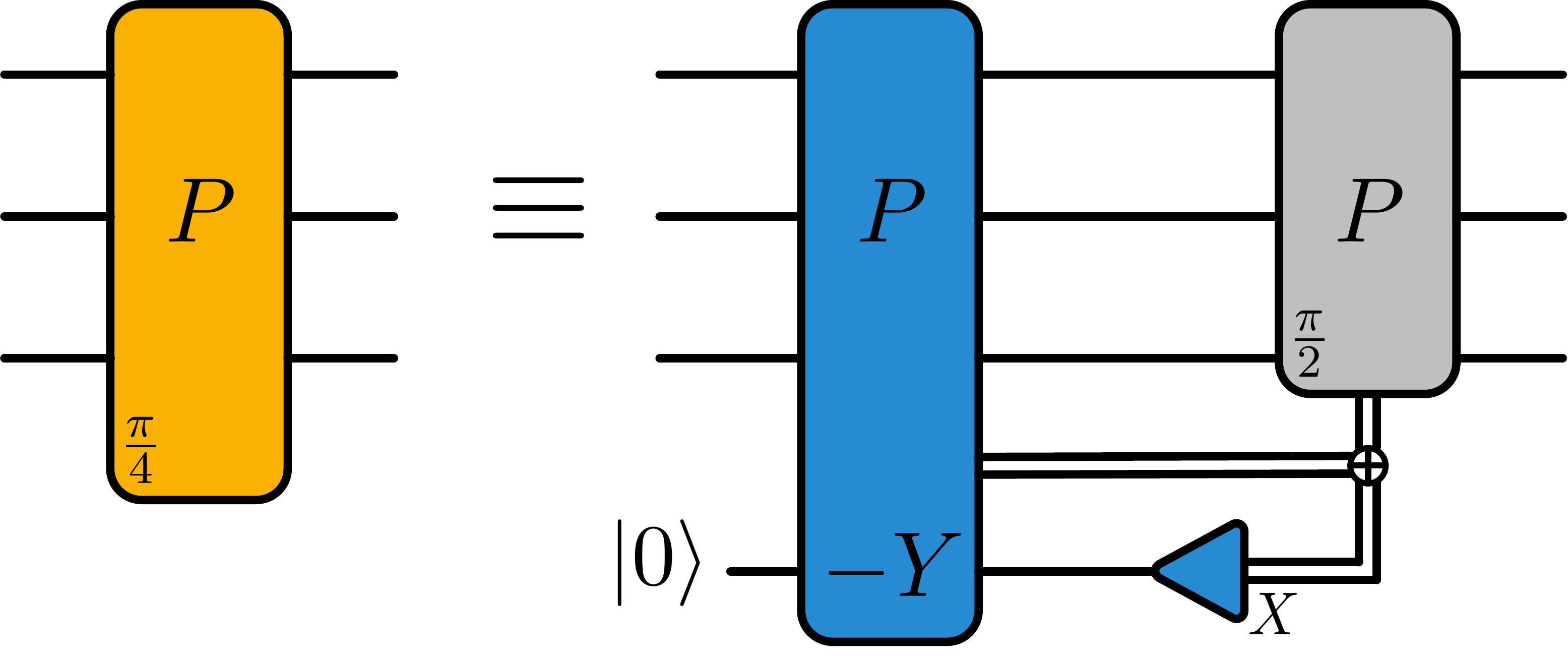}}
    \caption{(a) Circuit for performing a $\pi / 8$ multi-qubit Pauli measurement. The circuit requires a $\ket{T_X} = H \ket{T}$ resource state, and a Clifford correction may be required depending on the $P \otimes X$ measurement outcome. (b) Circuit for performing a Clifford gate using an ancilla prepared in $\ket{0}$. Both circuits are adapted from Ref.~\cite{Litinski19}.}
    \label{fig:Cliffgadgets}
\end{figure}

We now address the additive space cost of TELS and how it may be minimized in a distillation protocol. Distillation tiles are essential building blocks of fault-tolerant universal quantum computers, so it is worth finding the smallest, most optimal qubit layouts for them. Consider the distillation circuit in \cref{fig:Haddist}. Each of the $11$ non-Clifford gates requires an input $\ket{T_X}$ state, as shown in the non-Clifford circuit gadget of \cref{fig:Pnoncliff}. If we use TELS to perform the $11$ measurements, $11$ magic states will need to be held in memory, as indicated in \cref{fig:TELSprotocolnew}. In contrast, when performing lattice surgery without temporal encoding, only one cell is assigned to repeatedly prepare magic states for non-Clifford gates~\cite{Litinski19, Litinski19magic}. Upon close inspection of the multi-qubit Pauli measurements being performed in a TELS protocol, we notice that a magic state is stored only for as long as the Pauli it was associated with from the original PP set $\mathcal P$ appears in the sequence of new measurements $\mathcal S$.  Consequently, magic states do not need to be stored for the entire protocol. Consider performing TELS for a PP set of size $3$, where each Pauli measurement consumes a magic state. We use the $[4,3,2]$ error-detect code with the cyclic codeword matrix
\begin{equation}
    G = \begin{bmatrix} 
1100\\
0110 \\
0011
\end{bmatrix}.
\label{eq:Gmat432}
\end{equation}
Here, the choice of a cyclic representation is what allows us to reduce space requirements. We may read off the new multi-qubit Pauli measurements from  $\mathcal S$ as $\{ P_1, P_1 P_2, P_2 P_3, P_3\}$. Notice that after the measurement $P_1 P_2$, the magic state associated with $P_1$ does not need to be accessed again. At this point the  hardware holding the magic state used for the $P_1$ multi-qubit Pauli measurement can be reset to prepare the magic state required for the $P_3$ measurements. Since $P_1$ does not appear in any of the measurements after the first occurrence of $P_3$, the magic states associated with $P_1$ and $P_3$ are never simultaneously accessed. Continuing with this argument, it can be seen that space for only two magic states is required to perform all the measurements given by \cref{eq:Gmat432}.

% \pagebreak %DEBUG
Hence for every code, a particular choice and ordering of codewords can result in a smaller quantity of $\ket{T_X}$ states that need to be stored. For classical codes that are cyclic, a cyclic description of the codeword generator matrix allows for a reduced number of magic states needed to be held in memory. To determine exactly how many, note that for any column in this matrix (corresponding to a Pauli measurement from $\mathcal S$), the number of rows between the first $1$ and the last $1$ denotes the number of magic states that need to be held in memory for that Pauli measurement. The maximum number of $\ket{T_X}$ states required for any of the columns of the codeword generator matrix is the maximum for the entire PP set. For codes that do not have a natural cyclic set of codeword generators, the codewords must be chosen and ordered very carefully. In general, finding a sequence that minimizes the space requirements of magic states is an $\NP$ problem as there are exponentially many orderings of codewords. For instance, in the 15-to-1 distillation protocol of \cref{fig:Haddist}, one of the TELS protocols we considered (see \cref{sec:MSDMSDdesign}) uses the classical Golay code. For this code, there is a natural cyclic representation of codewords as we show in \cref{subsubsec:Golay}. According to this construction, $12$ magic states will need to be held in memory. However in \cref{app:golaycodechoice}, we show a specific choice of codewords that can minimize the space-requirements to $10$ magic states.

To execute the non-Clifford and Clifford gates in \cref{fig:Haddist}, we use the gate gadgets in \cref{fig:Cliffgadgets}. Since we conjugate the Clifford frame through to the final single-qubit measurements, we do not actually need to perform any $\pi/4$ Clifford gates, since such gates are converted to $\pi/2$ multi-qubit Pauli measurements~\cite{Litinski19}.

\subsubsection{Time-cost analysis}
\label{subsubsec:MSDMSDClifftime}

The time taken to successfully complete one round of distillation is calculated as follows. Let $T_1$ be the time taken to implement TELS on the non-Clifford gates and $T_2$ be the variable time associated with the final multi-qubit $\pi/2$ Pauli measurements. $T_2=0$ if there are no updates to the Clifford frame. If a lattice surgery logical timelike failure is detected during TELS with probability $p_D$,
\begin{align}
    T_1 = & T_{\text{inj}} + n (d_m'+1) + p_D \Big( T_{\text{inj}} + n (d_m'+1) \Big) + \nonumber \\
    &p_D^2 \Big(T_{\text{inj}} + n (d_m'+1) \Big) + p_D^3 \mathellipsis \nonumber \\
    =& (T_{\text{inj}} + n (d_m'+1))(1 + p_D +p_D^2 + \cdots) \nonumber \\
    =& \frac{T_{\text{inj}} + n (d_m'+1)}{1-p_D} \: .
\end{align}
Here, $T_{\text{inj}}$ is the time taken to inject a magic state into a cell, calculated using the analysis in \cref{subsec:MSDMSDinjection}. Note that in the above equation, we assume that all the Pauli measurements after the first one do not wait any extra time for newly injected magic states. This assumption is validated by the fact that, for $p=10^{-3}$, magic state injection requires two syndrome measurement rounds (with probability $99\%$), or four (with probability $99.99\%$). However for all the distillation protocols we consider, TELS requires at least four syndrome measurement rounds per Pauli measurement.

The final multi-qubit $\pi/2$ Pauli measurements are non-deterministic, implying we may need to perform $k'$ Pauli measurements, where $0 \leq k' \leq \kappa$.  In the previous inequality, $\kappa$ is the number of single-qubit measurements on the input $\ket{T_X}$ states of the distillation protocol (not the ones used for the $\pi / 8$ measurements) before the Clifford frame is conjugated through (for instance, in \cref{fig:Haddist}, $\kappa=4$). To execute these Pauli measurements, we perform TELS according to the method of \cref{sec:NewTELS}. If there are $k'$ measurements to perform, this takes time $T_2 = k' (d_m + 1)$ without TELS. If we use TELS with measurement distance $d_m''$ and an $[n', k', d']$ code with detection probability $p_D'$,
\begin{align}
    T_2 &= n' (d_{m}''+1) + p_{D}' (n' (d_{m}''+1) ) + (p'_{D})^2 ... \nonumber \\
    &= \frac{ n' (d_{m}''+1) }{1-p_{D}' } \: .
\end{align}
Note that when measuring the final multi-qubit Paulis, if a detection event is observed, the entire protocol does not need to be restarted. Instead, it is sufficient to just redo the Pauli measurements associated with the TELS protocol, as is done in \cref{fig:TELSprotocolnew}. 

The time to successfully distill the magic state also relies on whether the distillation protocol itself detected an error in any of the input magic states. This is modeled by the probability that the magic state protocol detects an error on an input magic state, which we denote $p_D^{(M)}$. Thus the total time required to successfully distill a magic state is
\begin{align}
\label{eq:timecostCliff}
    T &= T_1 + T_2 + p_D^{(M)} T \nonumber \\
      &= \frac{T_1 + T_2}{1- p_D^{(M)}} \: .
\end{align}

\subsection{Challenges of extending TELS protocols to Pauli frames}
\label{sec:pauliMSD}

\begin{figure}
    \centering
    \hspace{-0.2cm}
    \includegraphics[width=0.48\textwidth]{imagesChap7/PNCAuto.pdf}
    \caption{Circuit gadget for an auto-corrected non-Clifford gate. The circuit does not require the application of conditional Clifford gates to the logical data qubits. However, an extra ancilla prepared in $\ket{0}$ is required. }
    \label{fig:PauliAutocorr}
\end{figure}

The multi-qubit Paulis associated with the encoded TELS measurements in a Clifford-frame distillation circuit are performed using the circuit shown in \cref{fig:Pnoncliff}. However this results in a Clifford frame which eventually must be implemented using the Clifford gate gadgets in \cref{fig:Pcliff}. Keeping track of Clifford frames can be avoided by using the auto-corrected $T$ gadgets shown in \cref{fig:PauliAutocorr}. In such an implementation, the time associated with the Clifford correction can be traded for the extra space used by the additional $\ket 0$ ancilla. However, using auto-corrected $T$ gadgets in a TELS protocol leads to additional challenges. When the $k$ $P \otimes X$ measurements are performed using TELS, there will be an additional space cost associated with holding some magic state cells in memory. In order to benefit from the time speedups provided by TELS, a TELS protocol may need to also be performed on the $\ket 0$ ancilla states. Such considerations would result in the space cost being tripled (relative to the Clifford frame scheme). To see this, note that the $P \otimes X$ and $X \otimes Y$ measurements occur simultaneously since they have the same measurement distance and both $X$ boundaries of the $\ket{T_X}$ state can be accessed simultaneously. In such a protocol, the number of $\ket{0}$ and $\ket{T_X}$ states are identical. Furthermore, each $\ket{0}$ state requires an additional cell in order to access its $Y$ boundary. 

If instead, we do not perform TELS on the $X \otimes Y$ measurements, there are two options. The lattice surgery operations (with large measurement distance) can either be performed sequentially, which will result in a speed mismatch between the $X \otimes Y$ measurements and the $P \otimes X$ measurements, leading to a backlog of $\ket{T_X}$ states and $\ket{0}$ states that will need to be held in memory (and so there would be no time improvement due to TELS). Another option is to perform the slow lattice surgery operations in parallel, but this also admits an additional space cost to hold all the $\ket 0$ cells.

Given the above considerations, and the challenges associated with the design of distillation tiles for the inclusion of $\ket{0}$ ancillas, we leave the analysis of TELS protocols applied to magic state distillation protocols in the Pauli frame to future work.

\section{Precise design of distillation tiles}
\label{sec:MSDMSDdesign}

\setlength{\tabcolsep}{6pt}
\begin{table*}
    \centering
    \vspace{0.5cm} %DEBUG
    \begin{tabular}{c c c c c c c c c}
        $p$ & $\delta^{(M)}$ & Distillation  &  Circuit type & $d_x$ & $d_z$ & Space & Time & Space-time cost \\
         &  & code  &  &  &  &   & (NS)  & ($\#$ qubits $\times$ NS)\\
        \hline
        $10^{-4}$  & $10^{-10}$ & $\llbracket 15, 1, 3 \rrbracket$  & No encoding & $7$ & $9$ & $1360$ & $110.06$ & $1.5 \times 10^5$ \\
         &  &  &No enc., Par & $7$ & $9$ & $3300$ & $60.03$ & $1.98 \times 10^5$ \\
         &  &  & \textbf{Cliff-SED}& $7$ & $9$  & $1120$ & $104.05$ & $\mathbf{1.17 \times 10^5}$ \\
         &  &  & Cliff-SED, Par& $7$ & $9$ & $2352$ & $68.03$ & $1.6 \times 10^5$ \\
         &  &   &  Cliff-BCH& $7$ & $9$ & $1344$ & $92.08$ & $1.24 \times 10^5$ \\
         &  &  &  Cliff-BCH, Par& $7$ & $9$ & $2688$ & $64.06$ & $1.72 \times 10^5$ \\
         &  &  &  Cliff-Golay& $7$ & $9$ & $1792$ & $124.06$ & $ 2.23 \times 10^5$ \\
         &  &  &  Cliff-Golay, Par& $7$ & $9$ & $3024$ & $80.04$ & $2.42 \times 10^5$ \\[0.25cm]
         
        $10^{-4}$ &  $ 10^{-15}$ & $\llbracket 116,12,4 \rrbracket$  & No encoding & $9$ & $15$ & $9440$ & $1391.48$ & $1.31 \times 10^7$  \\
         &  &   & No enc., Par & $9$ & $15$ & $14934$ & $702.77$ & $1.05 \times 10^7$ \\
         &  &  & Cliff-BCH9, Par & $9$ & $15$ & $22800$ & $431.82$ & $9.85 \times 10^6$ \\
         &  &   & \textbf{Cliff-Zett5, Par} & $9$ & $15$ & $18600$ & $442.41$ & $\mathbf{8.23 \times 10^6}$ \\[.25cm]
         
        $10^{-3}$ & $10^{-10}$ & $\llbracket 114,14,3 \rrbracket$  &  No encoding & $9$ & $19$ & $11750$ & $1646.61$ & $1.93 \times 10^7$  \\
         &  &   & No enc., Par &  $9$ & $17$ & $16836$ & $831.62$ & $1.4 \times 10^7$ \\
         &  & & Cliff-BCH7, Par &  $9$ & $17$ & $22400$ & $595.31$ & $1.33 \times 10^7$ \\
         &  &  & \textbf{Cliff-Zett5, Par} &  $9$ & $17$ & $20480$ & $623.27$ & $\mathbf{1.28 \times 10^7}$ \\[.25cm]
         
        $10^{-3}$  & $10^{-15}$ & $\llbracket 125, 3, 5 \rrbracket$  & No encoding & $13$ & $25$  & $17514$ & $2479.17$ & $4.34 \times 10^7$  \\
         &  &  & No enc., Par &  $13$ & $25$ & $29548$ & $1252.11$ & $3.7 \times 10^7$ \\
         &  &  & Cliff-BCH7, Par&  $13$ & $25$ & $38640$ & $859.8$ & $3.32 \times 10^7$ \\
         &  &  & \textbf{Cliff-BCH9, Par}&  $13$ & $25$ & $42504$ & $729.66$ & $\mathbf{3.1 \times 10^7}$  
    \end{tabular}
    \vspace{0.3cm} %DEBUG
    \caption{Space-time costs of different distillation protocols on a biased-noise planar surface code. $\delta^{(M)}$ is the target logical error rate per output magic state. TELS protocols are labeled ``Cliff-xxx'', with ``Par'' implying that measurements are performed two at a time (i.e., with lattice surgery measurements which can access the two $X$ logical boundaries of surface code patches simultaneously). The number of physical qubits is two times the space cost, since the space cost counts only the number of data qubits of the surface code. The probability that a distillation algorithm rejects due to an error in an injected magic state is  $p_D^{(M)} = 1-(1-\epsilon_L)^n$ where $\epsilon_L$ is given by \cref{eq:epsL}. For the $15$-to-$1$ distillation protocol, the space time cost of a protocol using TELS is approximately $30\%$ smaller ($1.17 \times 10^5$) than a protocol that does not use TELS ($1.5 \times 10^5$). For the $125$-to-$3$ distillation protocol, the space time cost is decreased by approximately $20\%$ with TELS. The label NS refers to the number of syndrome measurement rounds required for the entire distillation protocol.}
    \label{tab:STcosts} 
\end{table*}

In this section we analyze space-time costs of various distillation protocols in different noise regimes. At a physical error rate of $p=10^{-4}$, one round of $15$-to-$1$ distillation with robust lattice surgery operations is sufficient to distill magic states with final error probability $\delta^{(M)} \le 10^{-10}$. For $\delta^{(M)}= 10^{-15}$ (which is relevant for larger algorithms), or for distillation protocols with $p=10^{-3}$, we considered $100+$ qubit quantum codes, as suggested in Ref.~\cite{Haah18}. To the best of our knowledge, our work is the first to analyze space-time costs of distillation protocols using these larger codes. For the noise rate regimes $p=10^{-4}$ and $p=10^{-3}$, we estimate the space-time costs of the various distillation protocols using different implementations of lattice surgery. We first consider protocols that do not use TELS. These protocols will execute non-Clifford gates using the auto-corrected non-Clifford gates of \cref{fig:PauliAutocorr}. Subsequently, we consider distillation protocols that perform TELS, using the methods developed in \cref{subsec:MSDMSDCliff}. In contrast to Ref.~\cite{Litinski19}, the distillation tiles developed in this paper are all rectangular, minimizing wasted space when tiled on a $2$D grid of qubits. Note also that in Ref.~\cite{Litinski19magic}, the logical qubits are designed to have different space-like distances $d_x$ and $d_z$ even without a biased noise model. This improvement is permitted due to the specific function of each qubit in the distillation protocol. When applying these improvements to the distillation protocols and layouts in this work, the space-time costs may be further reduced. 

The time-like distance of lattice surgery $d_m$ and the space-like distances $d_x$ and $d_z$ of the logical qubits and routing regions are computed using a procedure detailed in \cref{app:algoSTcosts}. Essentially, a set of distances $\{ d_x,d_z,d_m\}$ must be determined that minimize the overall space-time cost, while ensuring the output magic states have logical errors with probability at most $\delta^{(M)}$. In solving \cref{eq:conditiondelta}, we must first determine certain constants related to each hardware layout. This includes the area of the distillation tile used in each protocol (which we denote as the space cost), the number of logical qubits used in a tile $N$, the worst-case routing space area $A$ and maximum area used during any lattice surgery measurement. We develop $20$ different layouts in this work, and the associated constants for each of them are tabulated in \cref{app:constants}. 

In \cref{tab:STcosts}, we display the space-time costs of the various distillation protocols considered in this work. Using TELS protocols, it is possible to achieve lower space-time costs than using protocols without any temporal encoding. Although there are only minor improvements to the space-time cost of TELS-assisted distillation tiles, many of these tiles will be needed in each distillation factory. As a result, the improvements add up and the quantum computer as a whole will have a lower space-time volume. Moreover with reduced time costs, fewer distillation tiles may be required altogether, as we show in \cref{subsec:MSDMSDscheduling}. This in turn further reduces the space-time cost of distillation factories. Interestingly, for the $15$-to-$1$ distillation protocol, circuits that perform TELS do not produce tiles that have smaller time costs. Instead, TELS-assisted tiles require fewer qubits, and this can be attributed to the use of non-Clifford gate gadgets as shown in \cref{fig:Pnoncliff}. Below, we show layouts for the $15$-to-$1$ distillation routine. In \appref{sec:additionaldistillation}, we show layouts for the $100+$ qubit codes.

\subsection{15-to-1 distillation}
\label{sec:15to1}

The $15$-to-$1$ magic state distillation protocol is one of the most widely known protocols for distilling $\ket T$ states (see for instance Refs.~\cite{Bravyi05, BevsUniversal, fowler2018low, Litinski19, Litinski19magic}). The protocol originates from a $\llbracket 15,1,3\rrbracket$ triorthogonal CSS quantum code. The code has the property that the application of $T$ gates on all of the physical qubits of the code implements a logical $T^{\dagger}$ gate. Since we perform distillation with $\ket{T_X}$ states, we define this code to contain $1$ logical qubit,  $10$ $X$-type stabilizers and $4$ $Z$-type stabilizers. A distilled magic state is produced by encoding a logical $\ket{0}$ state, applying the transversal $T$ gates, decoding  and then performing measurements. By propagating the the Clifford gates past the transversal $T$ gates and removing the redundant parts of the  circuit, we are left with the circuit in \cref{fig:Haddist} (see Ref.~\cite{Litinski19} for a more detailed derivation). The circuit contains $11$ commuting Pauli measurements on $5$ logical qubits (four of which are logical $\ket{T_X}$ states). Since the $11$ Pauli measurements commute, they form a size-$11$ PP set. There exist many choices of classical codes to be used in TELS protocols with size-$11$ PP sets. In this paper we will focus on using a Single Error Detect code of distance $2$ ($[12,11,2]$), a BCH code of distance $3$ ($[15,11,3]$) and the Golay code of distance $7$ ($[23,12,7]$).

We will consider using this distillation protocol in a regime where the physical error rate is $p=10^{-4}$ and the target logical error rate per magic state is  $\delta^{(M)}=10^{-10}$. Using the noise model described in \cref{subsec:MSDMSDinjection} for the injection of magic states, we apply the analysis in Ref.~\cite{Litinski19magic} to determine the logical failure probability per output magic state for one round of a $15$-to-$1$ distillation scheme which is given by
\begin{align}
p^{(M)}_L = &35 \Big ( (\epsilon_{\text{L,Z}})^3+\frac{1}{2} 6 (\epsilon_{\text{L,Z}})^2 \epsilon_{\text{L,X}} \nonumber \\
&+\frac{1}{4} 12 \epsilon_{\text{L,Z}} (\epsilon_{\text{L,X}})^2 + \frac{1}{8} 8 (\epsilon_{\text{L,X}})^3  \Big ) \nonumber \\
=& \frac{35(1+\eta)^3}{27 \eta^3} p^3 \: .
\end{align}

For $p=10^{-4}$ and $\eta=100$, the probability that the distillation succeeds is $1 - p_D^{(M)} = (1-\epsilon_L)^{15} = 0.999$ and $p_L^{(M)} = 1.33 \times 10^{-12}$. As this is sufficiently below $\delta^{(M)}$, the lattice surgery measurements used to execute the distillation protocol must be modeled with measurement distance large enough to allow for distilled magic states of logical error rate at most $\delta^{(M)}$. Using the procedure in \cref{app:algoSTcosts}, we determined that the minimum spacelike distances required are $d_x=7$ and $d_z=9$. 

In the subsequent subsections, we detail the specifics of the hardware layouts that are used for the various distillation protocols, both with and without TELS. Arranging the logical qubits according to these layouts minimizes the space requirements of distillation blocks. In addition, TELS is used to minimize the time costs. Overall we observe that protocols that use TELS can achieve lower space-time costs than those that do not.

\begin{figure}
    \centering
\subfloat[\label{fig:15ue} ]{\includegraphics[width=.19\textwidth]{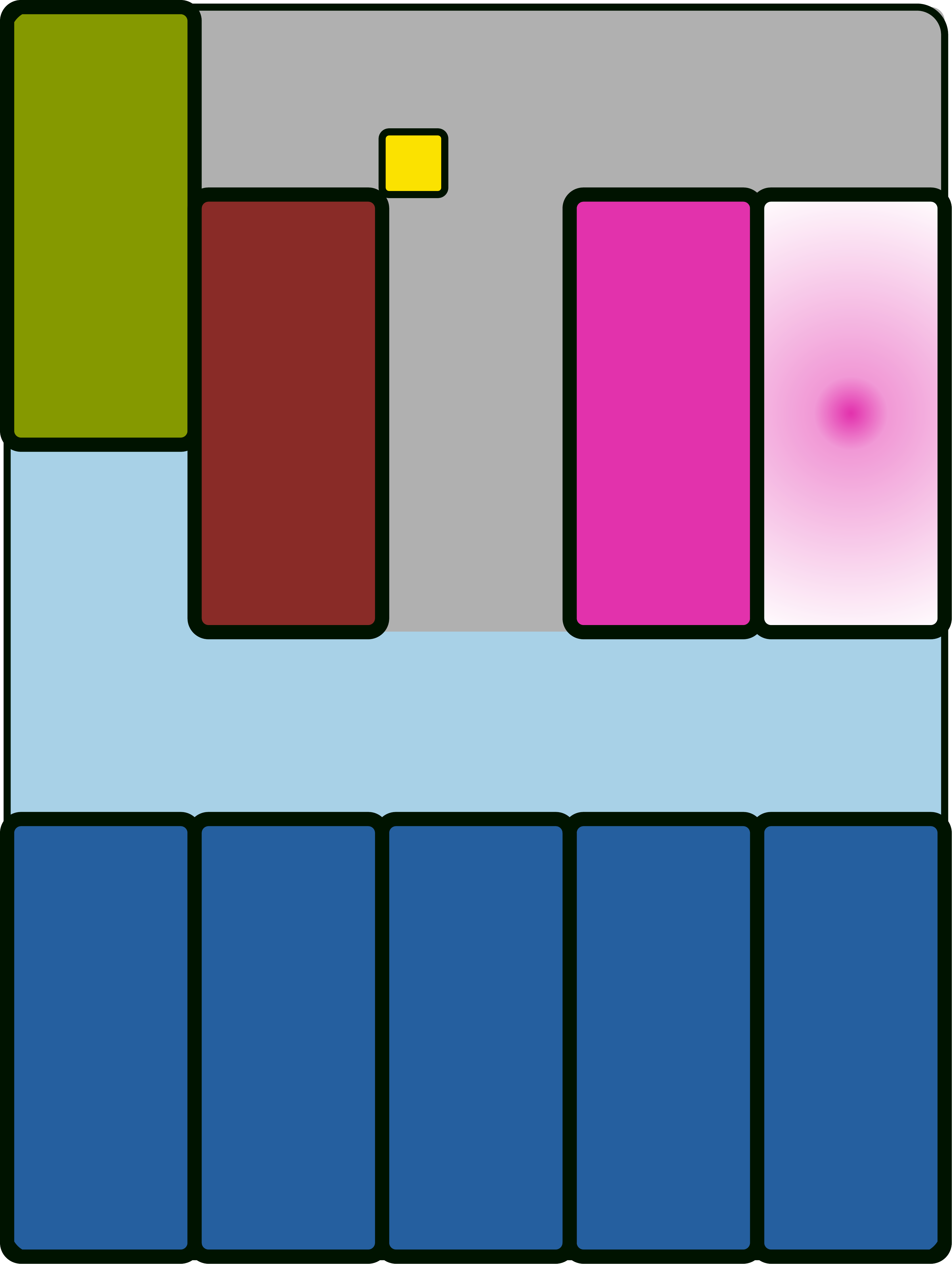}}
    \hspace{0.25cm}
    \subfloat[\label{fig:15sed} ]{\includegraphics[width=.19\textwidth]{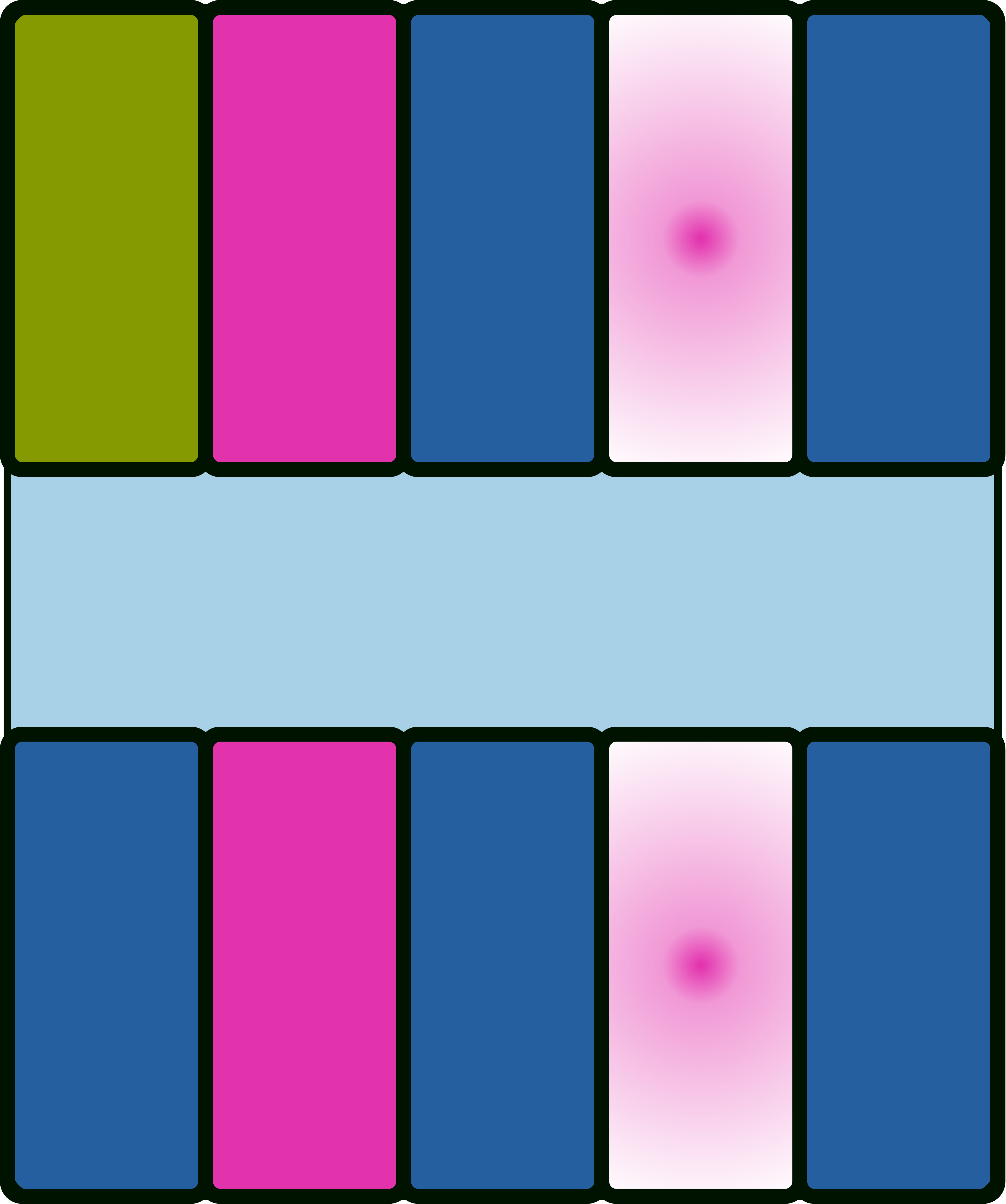}}
    \hspace{.25cm}
    \subfloat[\label{fig:15bch3} ]{\includegraphics[width=.227\textwidth]{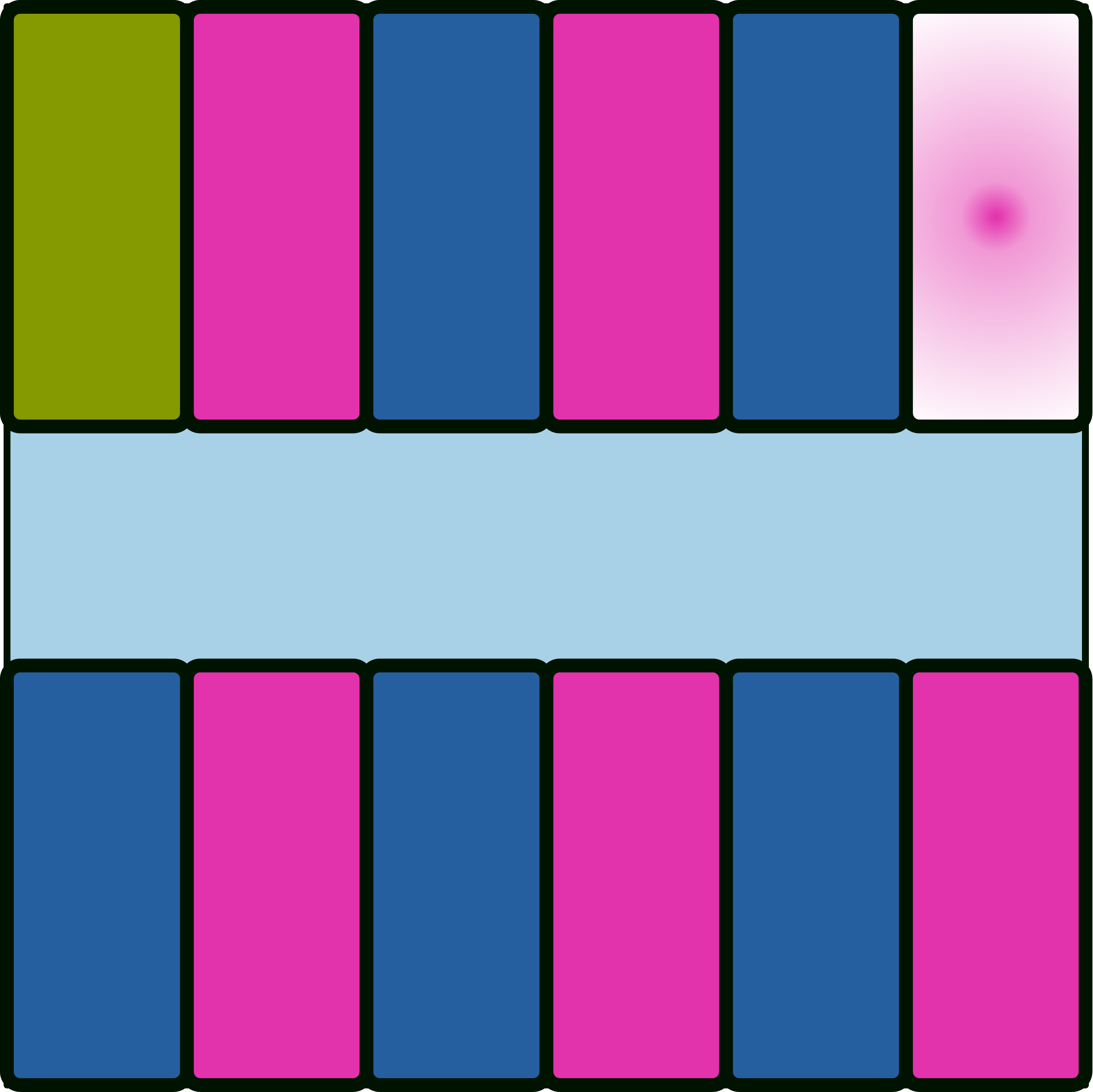}}
    \hspace{0.25cm}
    \subfloat[\label{fig:15golay} ]{\includegraphics[width=.302\textwidth]{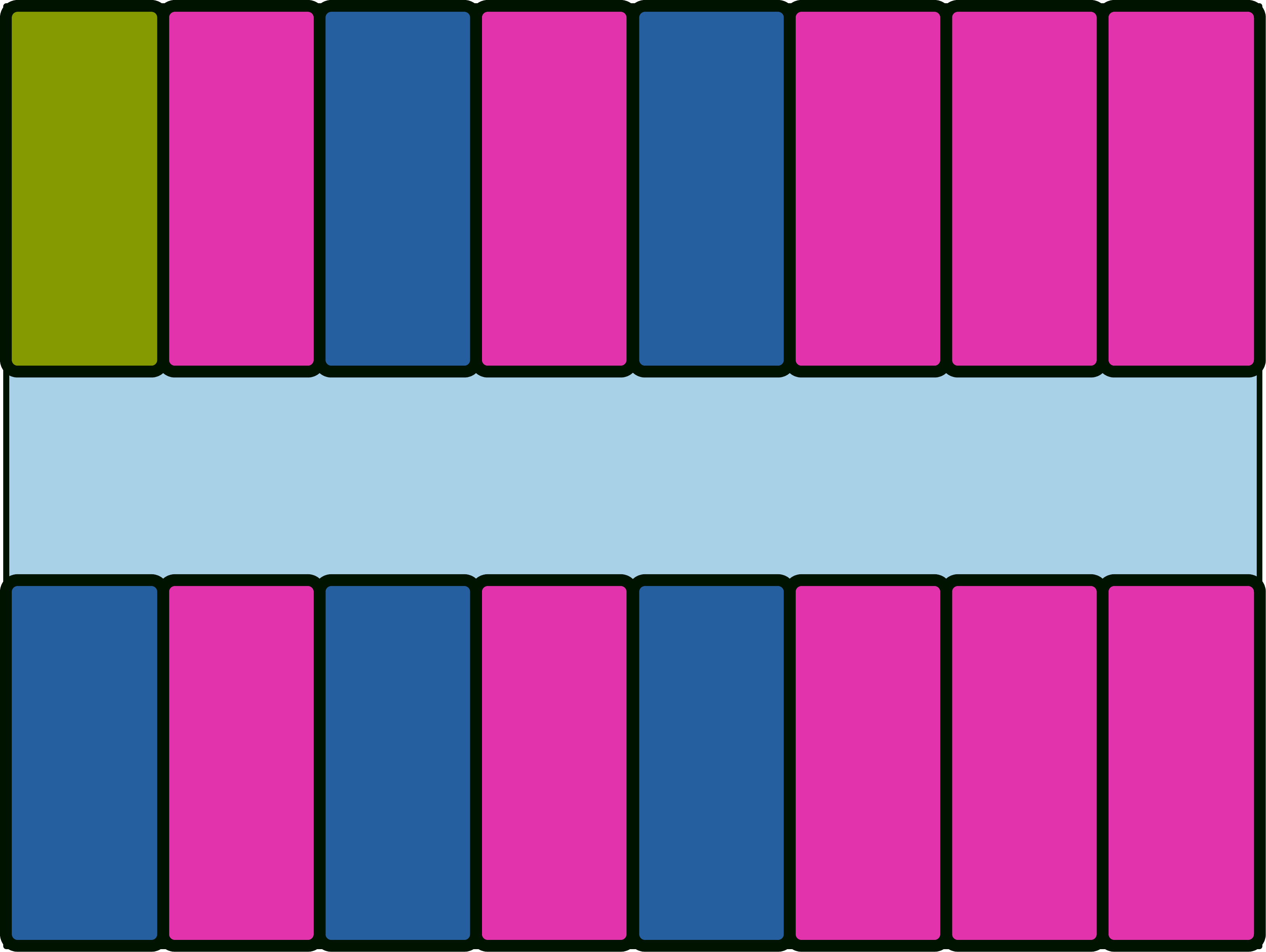}}
    \caption{Layouts of logical qubits for TELS-assisted $15$-to-$1$ state distillation. Data qubits are placed in blue cells. Magic states are in pink cells, where cells with a radial shading are extra cells used to prepare new magic states in parallel with the Pauli measurements. $\ket 0$ ancillas for autocorrected gadgets are placed in the brown cells adjacent to the yellow squares used for twists. Green cells are used to store distilled magic states for use by the core while the next round of distillation occurs. Additional green cells may be required if a distillation tile produces magic states faster than the core consumes them (alternatively, the magic states can be transported to additional tiles surrounding the core). Routing regions between cells are split into grey and blue to show that the relevant lattice surgery operations will not clash. (a) Layout for un-encoded lattice surgery using autocorrected non-Clifford gate gadgets of \cref{fig:PauliAutocorr}. The grey routing region handles the $X \otimes Y$ measurements and the blue routing regions performs $X$-boundary measurements between different logical qubits. (b) Layout for $15$-to-$1$ distillation with TELS, using the $[12,11,2]$ Single Error Detect code. Note that we only need one radial pink cell. However given the geometry of the entire tile, we use the remaining space for another pink radial tile. (c) Layout using the $[15,11,3]$ BCH code, and, (d)~using the $[23,12,7]$ Golay code.}
    \label{fig:15to1layouts}
\end{figure}

\begin{figure}
    \centering
    \subfloat[\label{fig:15uepar} ]{\includegraphics[width=.17\textwidth]{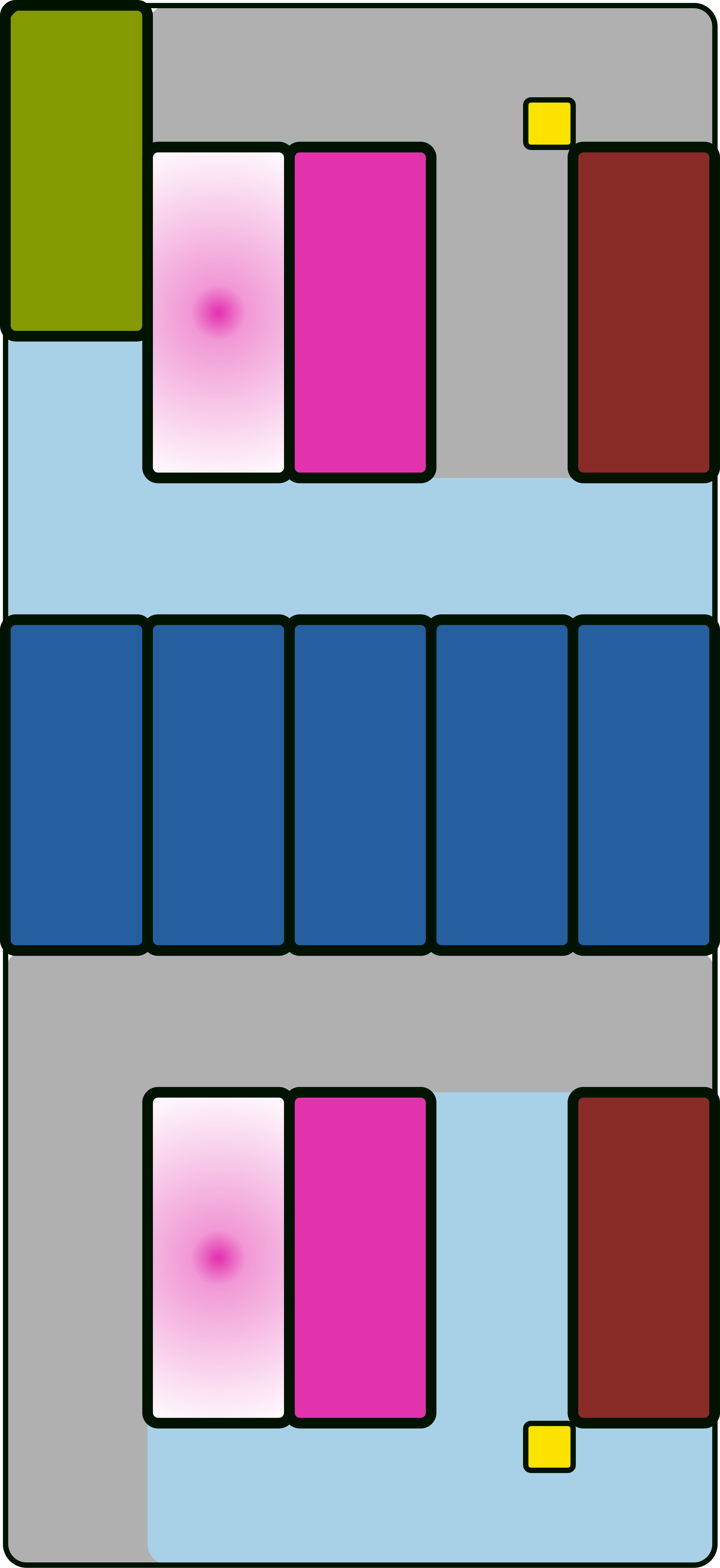}}
    \hspace{25cm}
    \subfloat[\label{fig:15sedparstag1}
    ]{\includegraphics[height=.23\textwidth]{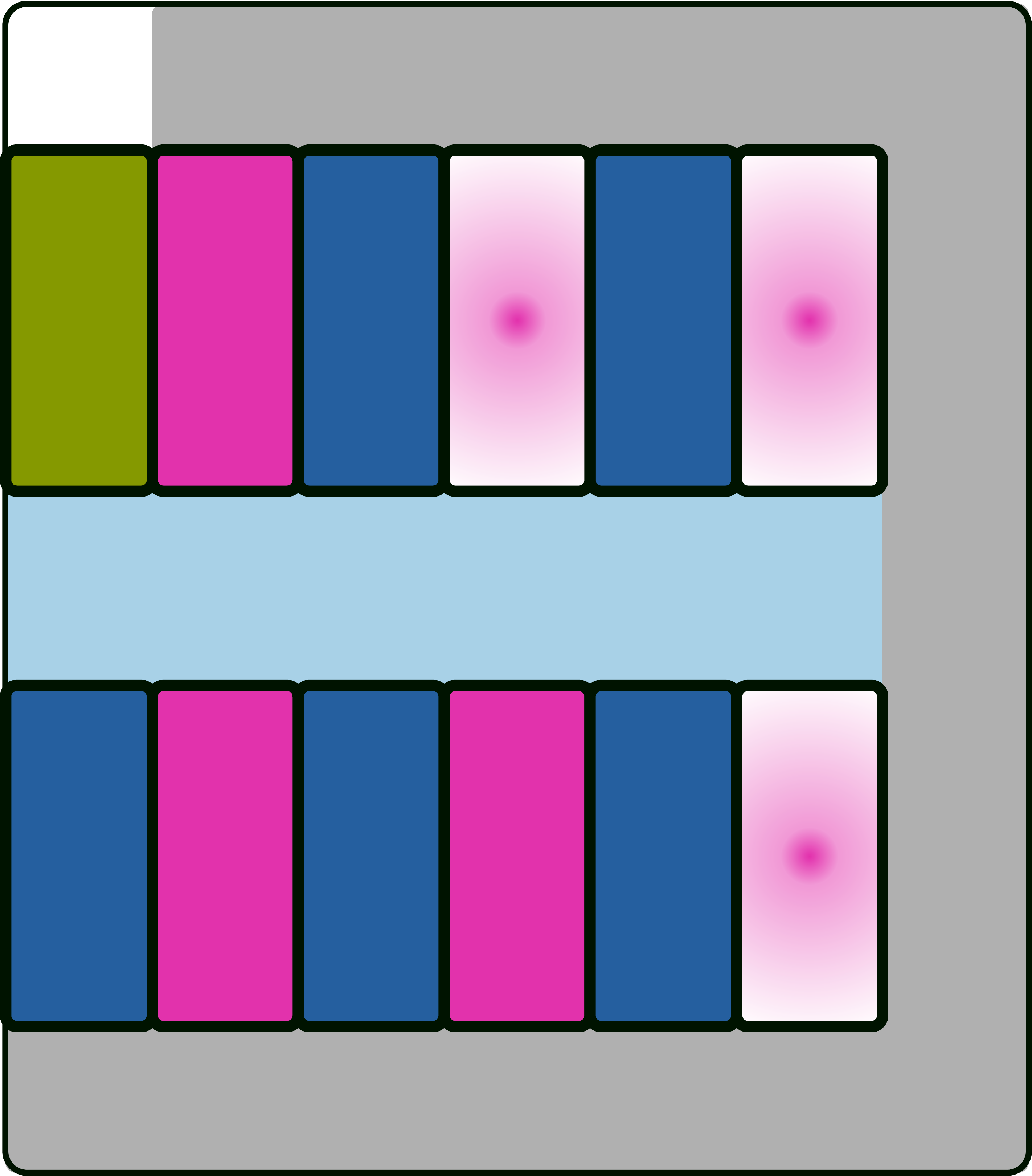}}
    \hspace{0.1cm}
    \subfloat[\label{fig:15sedparstage2} ]{\includegraphics[height=.23\textwidth]{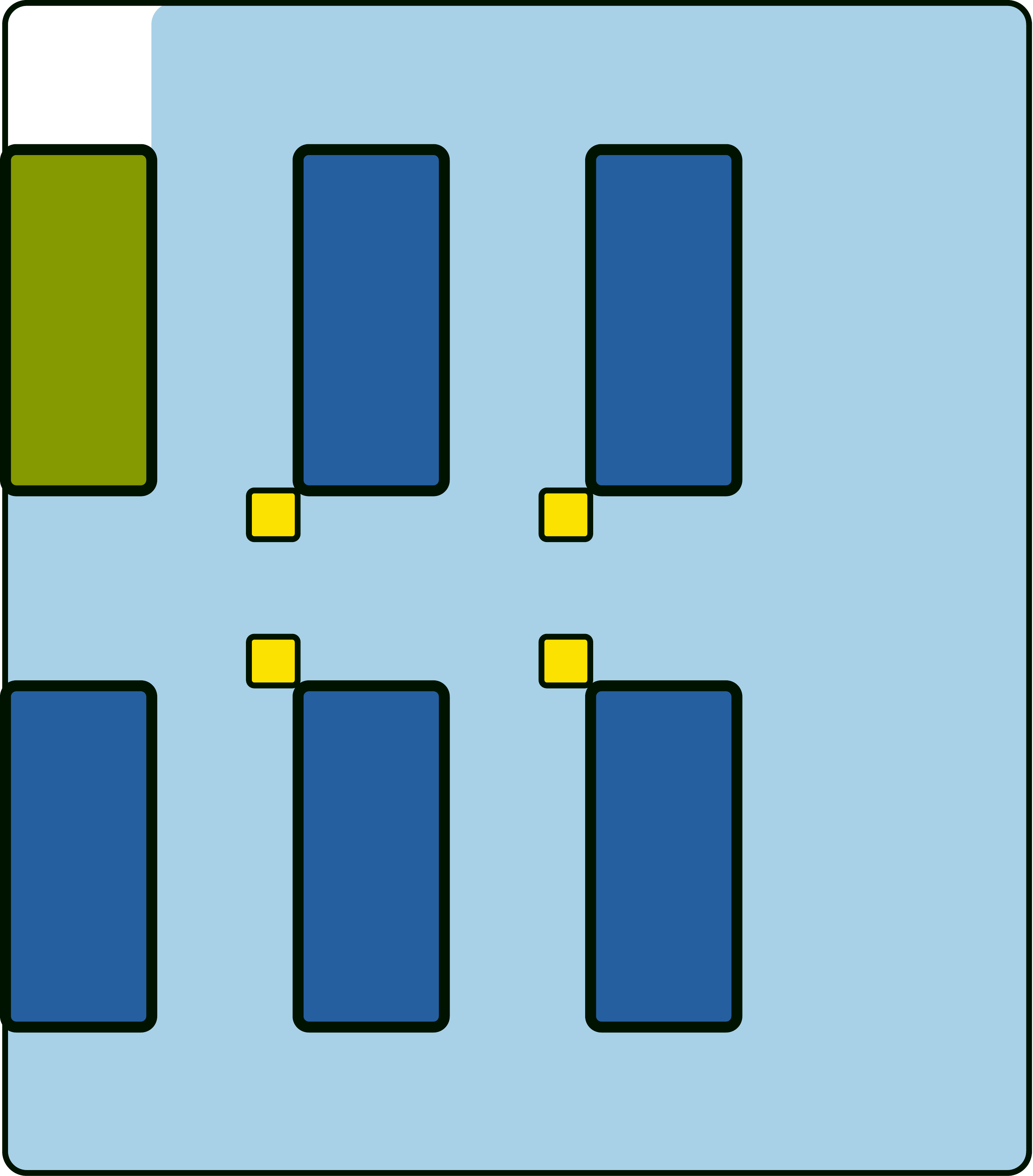}}
    \hspace{.1cm}
    \subfloat[\label{fig:15bch3par} ]{\includegraphics[height=.23\textwidth]{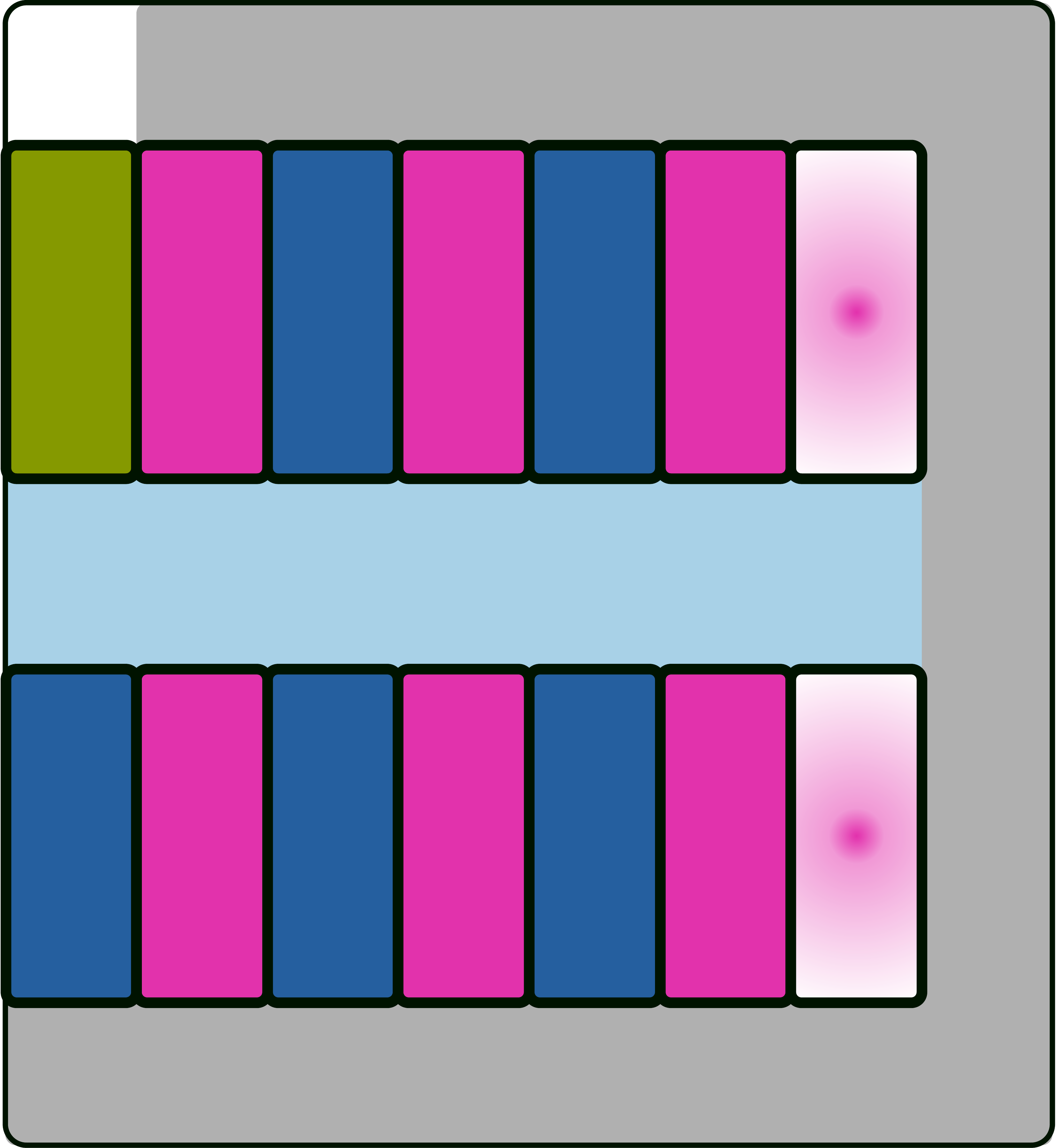}}
    \hspace{0.1cm}
    \subfloat[\label{fig:15golaypar} ]{\includegraphics[height=.23\textwidth]{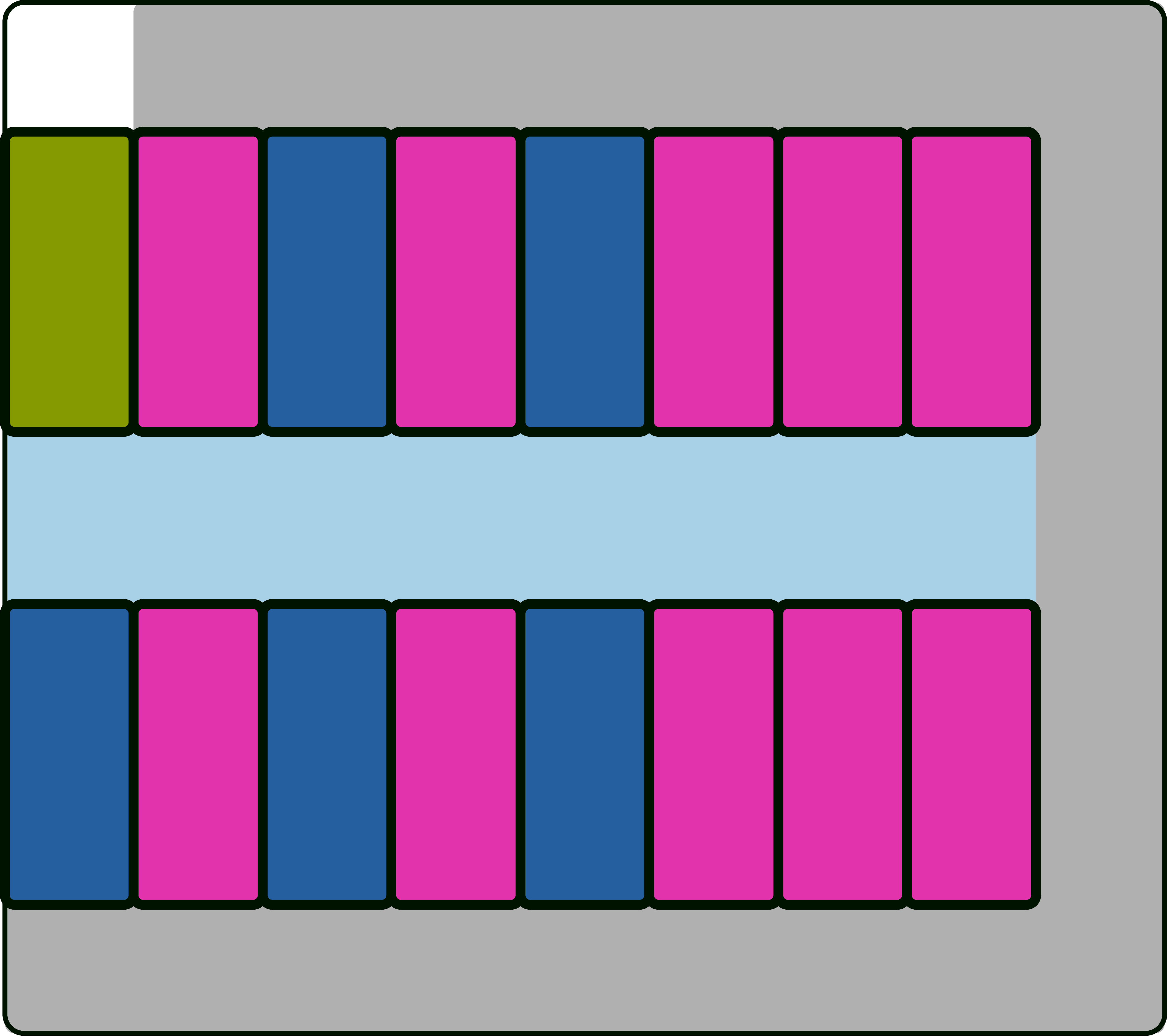}}
    \caption{Layouts of logical qubits for parallelized TELS-based $15$-to-$1$ state distillation protocols. The meaning of each color is described in the caption of \cref{fig:15to1layouts}. (a) Layout for un-encoded lattice surgery, with two routing regions, each accessing one $X$ boundary of the logical qubits. Each routing region has access to a separate magic state and a $\ket 0$ ancilla used in the circuit of \cref{fig:PauliAutocorr}. (b)~Layout for distillation with TELS, using the $[12,11,2]$ Single Error Detect code. Three magic state tiles are held in memory for each pair of parallel Pauli measurements. Then two are discarded and two prepared magic states on other pink cells are used in the following round. (c)~Layout and routing region used for the final multi-qubit Pauli measurements in the Clifford frame distillation protocol. (d)~Layout for parallelized distillation with TELS, using the $[15,11,3]$ BCH code. (e)~Layout for parallelized distillation with TELS using the $[23,12,7]$ Golay code. For (d) and (e), the layouts used to perform the final multi-qubit Pauli operations required by the  Clifford frame can be found in an analogous way from going from  (b) to (c). }
    \label{fig:15to1layoutsparallelized}
\end{figure}

\subsubsection{No temporal encoding} 

\begin{figure*}
    \centering
    \subfloat[\label{fig:layoutSpec} ]{\includegraphics[width=0.48\textwidth]{imagesChap7/latsurg1.pdf}}
    \hspace{0.3cm}
    \subfloat[\label{fig:layoutSpecPaulis} ]{\includegraphics[width=0.48\textwidth]{imagesChap7/latsurg2.pdf}}
    \caption{(a) On the layout of \cref{fig:15sedparstag1}, we show how two separate routing spaces can be used to perform parallel lattice surgery measurements. The logical measurements are $X_1 \otimes X_2 \otimes X_3 \otimes X_{T_{X,1}}$ (in the equatorial routing space) and $X_3 \otimes X_4 \otimes X_{T_{X,1}} \otimes X_{T_{X,2}}$ (in the circumferential routing space). These are the first and second measurements respectively when performing TELS-assisted distillation using the $[12,11,2]$ SED code (see \cref{eq:CyclicG12} of \cref{app:codeconstruction} for the codeword generator matrix). Alternatively, they correspond to the first measurement and the product of the first and second measurements from \cref{fig:Haddist}. The logical patches have code distances $d_x=3$, $d_z=5$. $X$ stabilizers are in red, and $Z$ stabilizers are in blue. The product of the $X$ stabilizers indicated by white vertices gives the parity for the multi-qubit Pauli measurement outcomes. (b) On the same layout, we show how to perform Pauli measurements which are tensor products of $X,Y$, or $Z$ on the data qubits. These measurements are performed after the non-Clifford gates of the distillation protocol. The example in the figure measures $X \otimes Y \otimes X \otimes X \otimes Y$ on the five data qubits. The yellow stabilizers are twist defects that are used to access $Y$ boundaries of logical qubits that are originally defined with only $X$ and $Z$ boundaries, using the techniques shown in Ref.~\cite{Chamberland22b}. Note that the size of the routing space area separating the top and bottom rows of data qubits is taken to be large enough to allow for $Y$ measurements requiring twists. }
\end{figure*}

First, we calculate the space-time cost of the distillation circuit of \cref{fig:Haddist} without temporally encoded lattice surgery. For this, we consider a modified version of the layout used by Litisnki (see Fig. $18$ of Ref.~\cite{Litinski19}), as shown in \cref{fig:15ue}. In this figure, the five blue cells at the bottom correspond to the data qubits of \cref{fig:Haddist}; the pink cells are used to store magic states for performing the $\pi / 8$ multi-qubit Pauli measurements; radial pink cells are used to inject new magic states for subsequent Pauli measurements (thus preventing time delays due to state injection); brown cells with an adjacent yellow square for twists are $\ket 0$ ancillas used in the autocorrected non-Clifford gate gadget of \cref{fig:PauliAutocorr}; and green cells are used to store the distilled magic state from the previous round of distillation, so that it may be accessed by the core of the quantum computer.  

There are two cells assigned for magic states. Without TELS, only one magic state is used per non-Clifford gate (note that one $X$ boundary has access to the data qubits, and the other boundary has access to the $Y$ boundary of the $\ket 0$ ancilla, hence these lattice surgery measurements may be performed in parallel). However, at any given time, one magic state cell will take part in a non-Clifford gate, and the other will be used to prepare a noisy magic state for the subsequent non-Clifford gate. This is required as it takes a non-trivial amount of time to prepare a noisy magic state (roughly 2 syndrome measurement rounds as described in \cref{subsec:MSDMSDinjection}). In \cref{fig:15to1layouts,fig:15to1layoutsparallelized,fig:125to1layouts,fig:116to12layouts,fig:114to14layouts}, we use the radial-shaded pink cell to denote the extra cells needed for this simultaneous magic state preparation.  

Since we assume classical processing is instantaneous, only one qubit cell is assigned for the $\ket 0$ ancilla used in the autocorrected gadget. Note however, that in Ref.~\cite{Litinski19} and Ref.~\cite{EarlParallel}, it was shown that for finite decoding times, additional $\ket{0}$ ancilla qubits can be used to offset the extra time cost associated with decoding all syndrome measurement rounds associated with the previous lattice surgery operations. However with classical parallelization and pre-decoders \cite{ChambsLocalNN22,BrownLocalPre22,EarlParallel,ChaoPrallel22}, such additional ancillas may be unnecessary. 

Note that we do not need to shuttle the distilled magic state from a blue cell to a green cell. We design the distillation tile such that the output magic state is in the right most blue cell. In the next round of distillation, the layout is mirrored about the equator and the magic state cell now becomes a green cell with core access. In this way, the distillation protocol may be restarted without any shuttling delays. Using the procedure detailed in \cref{app:algoSTcosts}, we determined that the lattice surgery measurement distance must be $d_m=9$ to obtain magic states with logical failure rate at most $10^{-10}$.

For every data qubit cell, there are two accessible $X$ boundaries. We can almost trivially speed up the protocol by a factor of two by assigning new routing space and ancilla cells that access the second $X$ boundary of the data qubits. This new hardware layout allows us to perform two multi-qubit Pauli measurements in parallel. This idea was originally proposed in Ref.~\cite{Litinski19}. We show a layout that performs lattice surgery measurements two at a time without temporal encoding in \cref{fig:15uepar}. Note that with this layout, a distilled magic state present on a blue data cell must be shuttled to a green storage cell. There is an extra time cost associated with the shuttling operation. We do not include the time cost of shuttling in \cref{tab:STcosts} as it may still be possible to eliminate shuttling using a more clever layout. In any case, the layout without parallel measurements yields a smaller space-time cost.

\subsubsection{TELS-Single Error Detect $\mathbf{[12,11,2]}$} Next, we consider a distillation protocol that uses TELS to execute the non-Clifford gates, with the protocol described in \cref{subsec:MSDMSDCliff}. We first determine the space-time cost using the Single Error Detect $[12,11,2]$ code. The codeword generator matrix for this code is given in \cref{eq:CyclicG12} of \cref{app:codeconstruction}. If the measurements are performed sequentially, as in the layout of \cref{fig:15sed}, one routing space with access to the $X$ boundaries of all the qubits will suffice. For a faster distillation tile that performs measurements two at a time, the measurements may be performed using extra routing space as shown in \cref{fig:15sedparstag1}. Note that we now use the non-Clifford gadget of \cref{fig:Pnoncliff} to perform the $\pi / 8$ rotations since we perform distillation in the Clifford frame. This frees up space, as we do not need to allocate qubits for a $\ket{0}$ ancilla with $Y$ boundary access. This can reduce the routing space and the number of logical qubit cells needed. On the other hand, TELS incurs a larger space cost as more magic states need to be held in memory.

\begin{figure*}
    \centering
    \subfloat[\label{fig:timedynamicscircuit} ]{\includegraphics[width=0.93\textwidth]{imagesChap6/timedynamicsa2.pdf}}
    \hspace{0.2mm}
    \subfloat[\label{fig:timedynamicslatticesurgery} ]{\includegraphics[width=0.93\textwidth]{imagesChap6/timedynamicsb3.pdf}}
    \caption{Time dynamics of a TELS-assisted distillation factory. Here we consider a 15-to-1 distillation protocol with the $[12,11,2]$ code protecting temporally encoded lattice surgery of the $11$ non-Clifford gates, followed by the $4$ Pauli measurements protected with the $[5,4,2]$ code.  (a) The first part of the circuit consists of the Pauli measurements corresponding to the non-Clifford gates. We assume that the non-Clifford measurement results yield $110000000011$ and the eleven $\ket{T_X}$ state measurements yield $10000000001$.  The Clifford corrections, derived from \cref{fig:Pnoncliff}, are then conjugated through the final Pauli measurements. (b)~Sequence of lattice surgery measurements when both TELS-assisted protocols are combined.}
    \label{fig:timedynamics}
\end{figure*}

In \cref{fig:15sedparstag1}, we separate the routing space into grey and light blue regions to show non-intersecting routing areas for the two parallel multi-qubit Pauli measurements. Each of these routing spaces has access to the $X$ boundaries of all the data and magic state cells involved in the distillation. In \cref{fig:layoutSpec}, we show the routing space regions that are used to perform the lattice surgery measurements corresponding to the first two parallelizable Paulis when the $[12,11,2]$ code is used for TELS. In the first syndrome measurement round of the lattice surgery measurement, stabilizers with white vertices yield random outcomes due to the gauge fixing step~\cite{Vuillot19}. The lattice surgery measurement outcomes are then the error-corrected measurement values corresponding to the $X$ stabilizers in the respective routing regions. Using \cref{app:algoSTcosts}, we determined that $d_m = 5$ is sufficient for $\delta^{(M)} = 10^{-10}$ when using the $[12,11,2]$ code for TELS.

Access to only the $X$ boundaries of the data cells is sufficient for the first stage of the protocol, which is the temporally encoded measurements for the non-Clifford gates. In the second stage, we must perform multi-qubit Pauli measurements which are tensor products of $X$, $Y$ and $Z$. In \cref{fig:15sedparstage2}, we show how to perform these measurements on the same layout without the shuffling around of surface code patches. On the distilled magic state (bottom left blue cell), the multi-qubit measurements only need $X$ boundary access. The remaining data qubits will need at least one accessible $Z$ and $Y$ boundary. For the $15$-to-$1$ distillation protocol, there are at most $4$ multi-qubit $\pi/2$ Pauli measurements. Since these measurement may require access to different types of boundaries on each data cell, they cannot in general be performed in parallel. Hence the entire routing space (in light blue) is used to perform these measurements. As shown in \cref{subsec:MSDMSDCliff}, this set of measurements corresponds to the second PP set. Since there are four single-qubit measurements at the end of \cref{fig:Haddist}, there are at most $4$ measurements in this second PP set. We perform these measurements using TELS with a $[5,4,2]$ Single Error Detect code and with measurement distance $d_m = 5$.

When using the $[12,11,2]$ code for the lattice surgery measurements of the non-Clifford gates, we define $G$ as a cyclic code where at most two magic states need to be accessed simultaneously for each Pauli measurement. After each measurement, a magic state cell can be reset and reused for a future non-Clifford gate. If Pauli measurements are performed two at a time, three magic states will need to be concurrently held in memory. These are the solid pink tiles of \cref{fig:15sedparstag1}. In addition, for each subsequent pair of measurements, at least two injected magic states are required, which is why there is not just one additional magic state tile associated for injection but three. One of them may be removed, but then distillation tiles will either not be rectangular or will contain wasted physical qubits. In any case the extra cell for injection can ensure there is always a magic state injected and ready for a non-Clifford gate. Note however that since keeping track of Clifford frames requires occasionally performing $Y$ measurements when using the distilled magic states in the core, the extra pink radial cell could also be used to store the ancilla needed to perform the $Y$ measurement\footnote{Performing a $Y$ measurement on a surface code patch can be done in various ways. For instance, one can perform a logical phase gate, followed by measuring all the data qubits in the $X$ basis. However, performing a logical phase gate on a two-dimensional planar architecture with the surface code requires additional routing space and measurements involving twists (see for instance Fig. 23 of Ref.~\cite{PsiQuantumLatticeSurgery}). Alternatively, one could use an ancilla prepared in the logical $\ket{0}$ state, and perform a $Y \otimes Z$ measurement to get the parity of the $Y$ measurement outcome.}.

\subsubsection{Other TELS protocols}

In addition to the Single Error Detect code, we considered distillation with a distance-$3$ BCH code and the distance-$7$ classical Golay code. In \cref{fig:15bch3}, we show a layout for a distillation tile that performs TELS with a classical $[15,11,3]$ BCH code. The time cost can be decreased by performing the lattice surgery measurements corresponding to the non-Clifford gates two at a time. A layout with sufficient routing space for this is shown in \cref{fig:15bch3par}. Note that since there are two disjoint routing spaces (in blue and grey), the set of $n=15$ Pauli measurements can be performed in the time required for $8$ sequential measurements. To obtain the time cost shown in \cref{tab:STcosts}, we used $d_m = 3$.  Similarly, in \cref{fig:15golay,fig:15golaypar}, we show layouts for $15$-to-$1$ distillation tiles that perform TELS with a classical $[23,12,7]$ Golay code. Here, we used the parameters $d_m = 3$ and $c=2$ (where $c$ is the maximum weight of classical errors that are corrected in a TELS protocol) to obtain the time costs shown in \cref{tab:STcosts}. For the $15$-to-$1$ distillation protocol, implementing TELS using the classical Golay code does not allow for smaller space or time costs. However, for a small enough physical error rate, it is sufficient to consider a measurement distance $d_m=1$, which allows for a smaller time cost than any other lattice surgery protocol.

The space requirements of all the above layouts are described as functions of $d_x$ and $d_z$ in \cref{app:constants}. Additional constants in \cref{app:constants} can be used with the procedure of \cref{app:algoSTcosts} to determine all the minimum distances (spacelike and timelike) for the distillation protocols.

\subsection{Scheduling distillation tiles in a factory}
\label{subsec:MSDMSDscheduling}

\begin{figure*}
    \centering
    \includegraphics[width=0.95\textwidth]{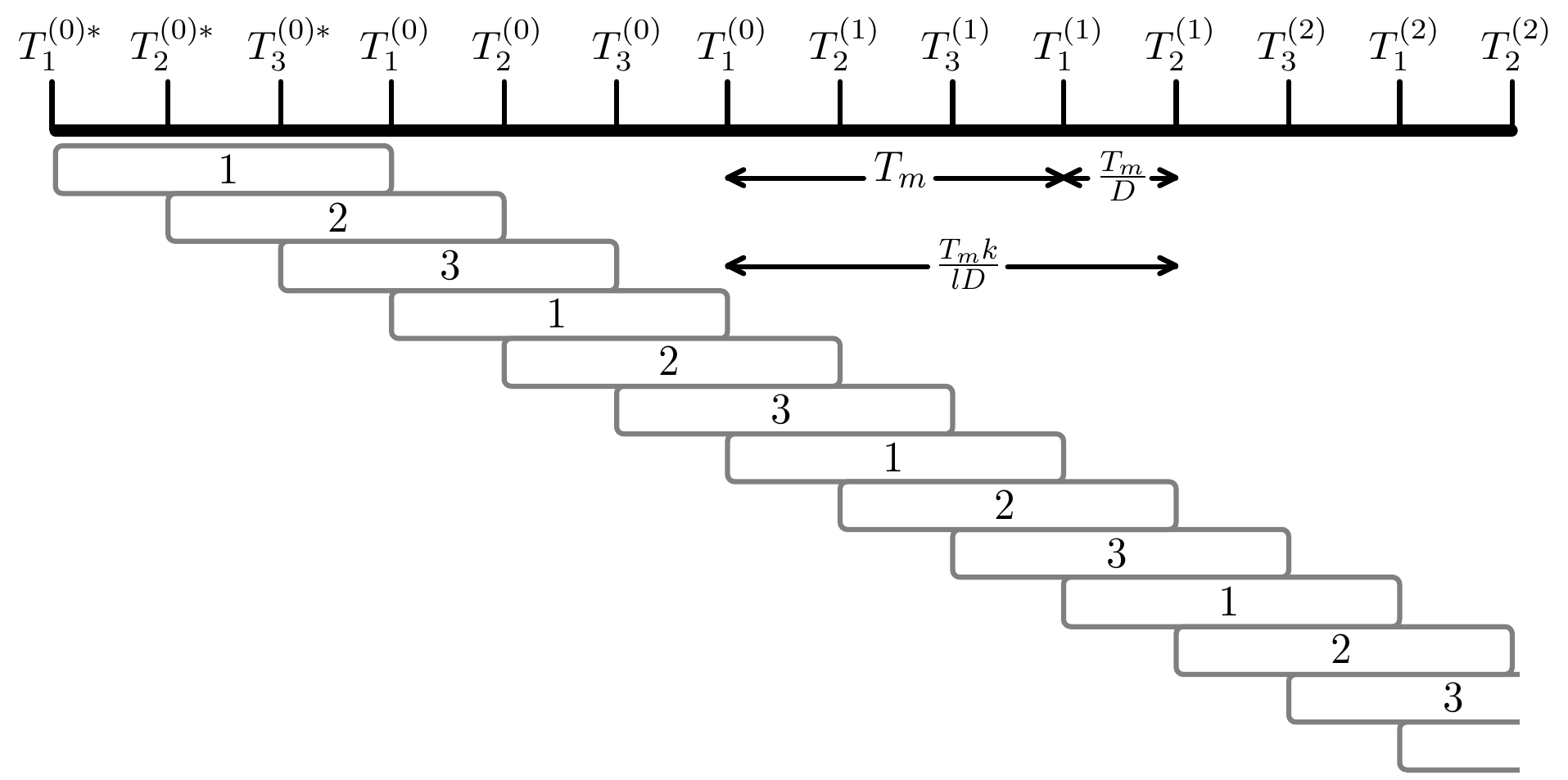}
    \caption{Round robin scheduling of deterministic-time distillation tiles in a factory. This scheduling method allows minimizing core wait time if a distillation tile rejects. The horizontal axis is time and the labels above are timestamps. Timestamp $T_j(i)$ indicates the end of the $j$th distillation tile while accumulating magic states for the $i$th PP set executed in the core. A * indicates the start of the distillation tile for the first time. In this example, there are $D=3$ distillation tiles, each producing $l=2$ distilled magic states in time $T_m$, for a core that executes PP sets of size $k=8$.}
    \label{fig:factoryscheduling}
\end{figure*}

Quantum computer architectures that perform Pauli-based computation generally contain two parts: a core and a magic state distillation factory. The core contains the data qubits taking part in the logical computation. It is known that TELS can speed up the runtime of PP sets executed in the core~\cite{Chamberland22}. In this paper, we also applied TELS to distillation circuits and observed reduced space-time costs. However distillation tiles are just modules that are used to construct a complete distillation factory, where many distillation tiles must be arranged with sufficient routing space to access the core. When merging a factory with a core, we are faced with additional scheduling and layout challenges. In this paper, we tackle the problem of scheduling, applying our speedups from TELS. We leave the design of the layouts of distillation factories for future work.

When executing an algorithm in the core using TELS on PP sets of size $k$, we denote the time to execute the algorithmic PP set as $T_{\text{PBC}}$. The distillation factory will simultaneously be working to distill at least $k$ new magic states for the next algorithmic PP set. It does this in time $T_{\text{magic}}$. General quantum algorithms operate on timescales much longer than $T_{\text{magic}}$, and so it is important to ensure the core is never idle and waiting for magic states. This situation, where the core is idling, is called a magic-state bottleneck. To avoid this bottleneck, we would like to ensure
\begin{equation}
\label{Cliff-bottleneck}
    T_{\text{magic}} \leq  T_{\text{PBC}} .
\end{equation}
The best case scenario is when $T_{\text{magic}} \approx  T_{\text{PBC}}$. As discussed in Ref.~\cite{Chamberland22}, this yields the smallest space-time cost for running algorithms.

We now wish to determine how many distillation tiles are required to satisfy the above condition. This can vary depending on the size of the PP set executed in the core and its relative speed-up. The number of tiles required also depends on the scheduling algorithm used in the factory, especially since some distillation tiles may detect errors and have to restart before producing magic states that can be used by the core. We first show how a simple algorithm can lead to large time costs when a distillation tile fails. Next we show how to use round robin scheduling to minimize additional time costs due to distillation tile failures. Note that the round robin scheduling algorithm can also be used when distilling lower level magic states in a concatenated distillation protocol.

Consider a situation where \cref{Cliff-bottleneck} is satisfied and $k$ is greater than or equal to the number of magic state storage cells (green cells in \cref{fig:15to1layouts}) in the factory. This implies each distillation tile takes less time to produce magic states than $T_{\text{PBC}}$. Hence a simple factory schedule would be to  start distillation on all tiles when a new PP set is beginning to be executed in the core. However, if the core is only marginally slower than the factory and a distillation tile rejects, the core will pause and wait for new magic states to be produced. Such a situation is undesirable since the core would now need to wait by a time $T_m$, where $T_m$ is the worst-case time needed to produce distilled magic states by a distillation tile.

In an attempt to reduce the core waiting time due to the rejection of a distillation tile (either due to TELS or the distillation algorithm), we suggest a round-robin approach. We assume that we have a distillation tile where, in the case where no errors are detected during the magic state distillation protocol, the tile produces $l$ magic states in time $T_m$ using a deterministic algorithm (TELS distillation is adaptive, but we consider the worst case time). The probability that a tile detects an error on an input magic state, or that the TELS protocol detects a timelike failure during lattice surgery is $p_D$.  If $D$ distillation tiles are used with round robin scheduling, the average time to distill $k$ magic states is
% \begin{align}
%     T_{\text{magic}} = & \frac{T_m k}{l D} + {k/l \choose 1} p_D (1-p_D) ^{k/l-1} \frac{T_m}{D}  \nonumber \\
%     & \quad+  {{k/l+1} \choose 2} p_D^2 (1-p_D)^{k/l-1} \frac{2 T_m}{D} + ... \nonumber \\
%     = & \frac{T_m k}{l D} + \bigg (\frac{p_D (1-p_D)^{k/l-1} T_m}{D} \bigg )\nonumber \\
%     & \quad \times \sum_{j=1}^{\infty} j {{k/l+j-1} \choose j}  p_D^{j-1} \nonumber \\
%     = & \frac{T_m k}{l D} + \bigg (\frac{p_D (1-p_D)^{k/l-1} T_m}{D}\bigg ) \frac{k}{l} (1-p_D)^{-k/l-1} \nonumber \\
%     = & \frac{T_m k (1-p_D+p_D^2)}{l D (1-p_D)^2} .
% \end{align}
\begin{align}
    T_{\text{magic}} = & \frac{T_m k}{l D} +\binom{k/l}{1}  p_D (1-p_D) ^{k/l-1} \frac{T_m}{D}   \quad+  \binom{k/l+1}{2} p_D^2 (1-p_D)^{k/l-1} \frac{2 T_m}{D} + ... \nonumber \\
    = & \frac{T_m k}{l D} + \bigg (\frac{p_D (1-p_D)^{k/l-1} T_m}{D} \bigg ) \times \sum_{j=1}^{\infty} j \binom{k/l+j-1}{j}  p_D^{j-1} \nonumber \\
    = & \frac{T_m k}{l D} + \bigg (\frac{p_D (1-p_D)^{k/l-1} T_m}{D}\bigg ) \frac{k}{l} (1-p_D)^{-k/l-1} \nonumber \\
    = & \frac{T_m k (1-p_D+p_D^2)}{l D (1-p_D)^2} .
\end{align}
In \cref{fig:factoryscheduling}, we show an example of the round robin scheduling algorithm with $3$ distillation tiles that each produce $2$ distilled magic states in time $T_m$. If the core executes PP sets with size $8$, the $8$ required magic states are distilled and prepared in time $\frac{T_m k}{l D}$, if no errors are detected.

From this we may solve for the number of distillation tiles required when $T_{\text{magic}} <  T_{\text{PBC}}$ with $D$ distillation tiles given by
\begin{equation}
D= \frac{T_m k (1-p_D+p_D^2)}{l T_{\text{magic}} (1-p_D)^2} .
\end{equation}

The shortcoming of this calculation is that it only applies to constant-time distillation tiles. If the tile takes adaptive time, such as in the magic state distillation protocol of \cref{subsec:MSDMSDCliff}, where there are a non-trivial number of extra measurements, it is unclear what the most efficient scheduling algorithm is. In this case, we may still upper bound the total time $T_m$ of a magic state distillation tile thus making the round robin scheduling algorithm applicable. However, adapting a scheduling algorithm to the type of distillation tile used could allow for a more precise calculation of the time cost of the factory, which could possibly reduce the required number of distillation tiles.

\backmatter
\phantomsection % To create a proper link in the table of contents
\begin{singlespace}
\bibliography{q}
\bibliographystyle{halpha-abbrv}
\end{singlespace}
\appendix
\chapter{Appendices}
\label{chap:appendices}
% \appendix

\section{Post-selective distance-three fault-tolerant cat state preparation}
\label{sec:SMcatPCSP}

% Since we look at postselection, distance $3$ here implies that we should be able to detect not just one fault, but also two faults. Remember that the definition of distance-$3$ error detection is that all faults of weight up to $d-1 = 3$ are detected.

Shor's method for fault-tolerant stabilizer measurement relies on the fault-tolerant preparation of a cat state by postselection.  In \figref{f:Shorcard}, the cat state is prepared fault-tolerantly to distance-two; it detects one fault.  For postselected distance-three fault tolerance, any one or two faults in the circuit must result in an error of weight at most one or two respectively, else the state must be rejected. In \figref{f:d3errordetection} we show how to prepare a weight-$12$ cat state fault-tolerantly to distance three---detecting up to two faults.

%\pagebreak%DEBUG

\begin{theorem} \label{t:errordetcatd3}
One ancilla qubit measured $m \geq 2$ times, can be used to prepare a cat state on $w$ qubits fault-tolerantly to distance three, detecting up to two faults,~for
\begin{equation}
 w \leq 3 \cdot 2^{m - 1} \, .
 \enspace 
\end{equation}
\end{theorem}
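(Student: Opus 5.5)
Our plan mirrors the proof of \thmref{t:catstated3}, swapping the correction rule for a rejection rule. We prepare the $w$-qubit cat state as $\ket{{+}0^{w-1}}$ followed by $\CNOT_{1,w}, \CNOT_{1,w-1}, \ldots, \CNOT_{1,2}$. With this ordering, a single $X$ fault on the control wire leaves one of the errors $\identity, X_1, X_{[2]}, \ldots, X_{[w-1]}$ (up to the stabilizer $X^{\otimes w}$), and $Z$ errors are harmless because the $ZZ$ stabilizers make any $Z$ error equivalent to weight at most one. We then measure $m$ $ZZ$-type parities sequentially into one reset ancilla. The $j$th check compares qubit $1$ with a qubit at position $3(2^{j-1})$ (or, more generally, uses boundaries chosen from the Gray code of \lemref{t:graycode} applied to the $2^{m-1}$ groups-of-three). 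The rule is simple: accept only if every parity is trivial, and otherwise reject.

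For one fault there are two cases. A fault on the control wire during preparation gives a prefix error $X_{[i]}$, and the check pattern equals the parity of $i$ against the check positions. We choose the check positions so that every prefix error of weight at least two (mod $X^{\otimes w}$, i.e.\ $2\le i\le w-2$) triggers at least one check. A fault inside the checking sub-circuit either flips a measurement, which causes rejection, or leaves a single-qubit $X$ or $Z$ error, which is acceptable.

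For two faults we need that any accepted run has residual error of weight at most two. The relevant cases are:
\begin{itemize}
\item Two preparation faults compose to a difference of prefixes $X_{\{i+1,\ldots,j\}}$, an interval.
\item One preparation fault combined with one measurement flip can mask a single nontrivial check.
\item One preparation fault combined with a data fault during checking.
\end{itemize}
The counting is that with $m$ measurements we can afford intervals of length three between consecutive check boundaries. The checks partition the line into $2^{m-1}$ blocks of size three. An interval or prefix lying entirely inside a block has weight at most two. An interval crossing a block boundary must flip at least two checks, so that a single masking measurement fault cannot hide it. Doubling the number of patterns, $2^{m-1}$ blocks from $m$ bits, is what gives $w\le 3\cdot 2^{m-1}$.

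The main obstacle is exactly that last requirement. We must pick the $m$ parity checks (which qubit pairs, in which order) so that every high-weight error from one fault produces a syndrome at Hamming distance at least two from $0^m$, while two-fault intervals either stay inside a block or are also detected. We would try first to assign each block boundary a codeword of an even-weight (distance-two) code of length $m$. That code has $2^{m-1}$ words, which matches the bound, and the checks are realized as parities between consecutive block boundaries. We would then verify by case analysis on the fault locations, as in the table for \thmref{t:adaptiveslowresetd3}, that every remaining undetected two-fault pattern leaves weight at most two.
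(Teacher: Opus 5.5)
There is a genuine gap in your third two-fault case (a preparation fault plus a data fault). It is not a detail left to case analysis: with all $m$ measurements spent on parity checks of the cat state, the even-weight labelling fails exactly there.

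An $X$ fault on a data qubit $b$ (for instance on the target of $\CNOT_{1,b}$, or on qubit $b$ just before the checks) flips exactly the checks containing $b$. That is, it adds the syndrome $h_b$, which is the difference of the labels on either side of boundary $b$. Your $2^{m-1}$ labels exhaust the even-weight words of length $m$, so every nonzero $h_b$ is itself the label of some middle block $j$. A control-wire fault landing in block $j$, together with that data fault, therefore produces the all-zero syndrome and is accepted.

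Here is a concrete instance with $m=3$, $w=12$. Label the prefixes $X_{[i]}$ for $i\in\{2,3,4\},\{5,6,7\},\{8,9,10\}$ by $110,101,011$, and those for $i\in\{0,1,11,12\}$ by $000$. This forces the checks $Z_2Z_8$, $Z_2Z_5Z_8Z_{11}$ and $Z_5Z_{11}$. One control fault between $\CNOT_{1,5}$ and $\CNOT_{1,4}$ gives $X_{[4]}$. Adding an $X$ fault on qubit $2$ leaves $X_1X_3X_4$ with syndrome $000$. Using qubit $8$ instead leaves $X_1X_2X_3X_4X_8$, also with syndrome $000$. In both cases an error of weight at least three is accepted after two faults. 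Distance two of your labels guards against masking by a flipped measurement. A data error, however, shifts the syndrome by a vector lying in the same code, so it masks just as effectively.

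The missing idea, which the paper uses, is to spend one of the $m$ measurements on a flag on the control wire rather than on the cat state. This flag ancilla is coupled to qubit $1$ across the stretch of the preparation where a single $X$ fault spreads to weight at least two. It fires on every such fault and is blind to $X$ errors on the targets.

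An accepted run containing a bad preparation fault therefore needs the second fault to have fooled the flag. That happens in only two ways. Either the second fault is another control fault, leaving an interval $X_{\{i+1,\ldots,j\}}$, or it is a fault on the flag readout, leaving a prefix $X_{[i]}$. In neither case is there an extra data error.

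The remaining $m-1$ checks then only need to give distinct labels to blocks of three. The cyclic Gray code of \lemref{t:graycode} does this, with boundaries at qubits $3j-1$ for $j=1,\ldots,2^{m-1}$. The bound $w\le 3\cdot 2^{m-1}$ thus comes from $1+(m-1)$ measurements, not from an even-weight code of length $m$.
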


\begin{proof}
We explain the proof using the circuit in \figref{f:d3errordetection}.  The circuit passes with acceptable weight-one or weight-two errors when all the flag qubits are measured as $0$.  If one $X$ fault occurs on the $\ket +$ qubit during the preparation of the cat state, it may spread to a data error of weight~$> 1$. However the red flag qubit is triggered and the fault is detected.  If two $X$ faults occur on the $\ket +$ qubit, the red flag qubit may not catch it, yet a data error of weight~$> 2$ can exist on the cat state.  Since this scenario only arises from two faults, it suffices to check the parities between every third qubit of the cat state, as an error on two consecutive qubits is acceptable.  Higher-weight errors, such as the weight-seven error $X_2X_3 \mathellipsis X_8$ in \figref{f:d3errordetection} may not be detected by parity checks that have an even number of erroneous qubits. However these errors are always caught by other parity checks.

To check for errors of weight greater than two, we perform parity checks similar to that in \thmref{t:catstated3}. Instead of the flag sequence from \lemref{t:slowresetdistance3flagsequences}, the Gray code from \lemref{t:graycode} is used. Now the parities are computed between qubits $3j - 1$ for $j \in \{1, 2, \ldots, 2^{m - 1}\}$.  The first and the last qubits are not checked for errors and so with $m$ flags, the maximum cat state weight achieved is $3 \cdot 2^{m - 1}$.  
\end{proof}

\begin{figure}
    \centering
    \includegraphics[width=.5\linewidth]{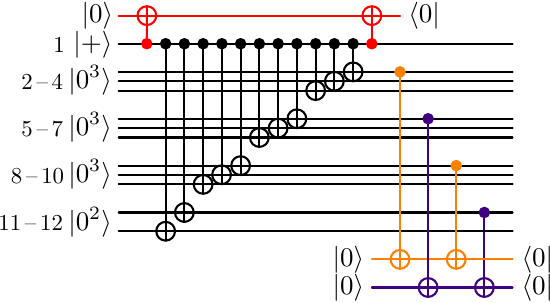}
    \caption{Two-error-detecting fault-tolerant circuit for the preparation of a weight-$12$ cat state. The state is only accepted when all flag qubits are measured as $0$.  Note that with fast reset, only one ancilla qubit is required.}
    \label{f:d3errordetection}
\end{figure}

% We also look at $d = 4$.

\section{Low-depth fault-tolerant cat state preparation}
\label{sec:catparallelized}

So far, we have focused on fault-tolerant preparation circuits with depth linear in the cat state weight.  In this section, we detail how to prepare cat states fault-tolerantly in logarithmic depth.

\begin{figure}
    \centering
    \subfloat[\label{f:parallelcsp8}]{
        \includegraphics[width = 0.25\textwidth]{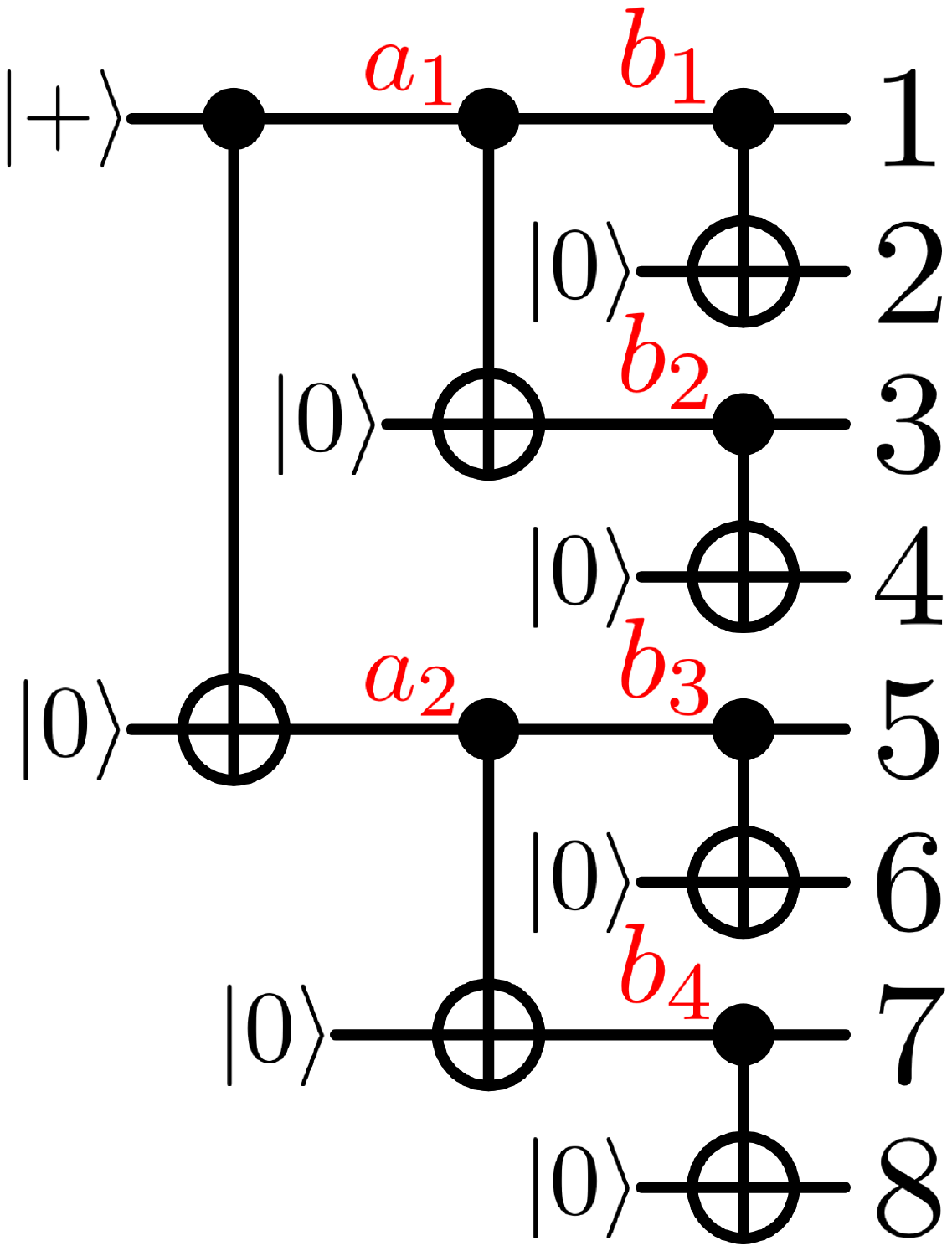}}
    \hspace{0.5cm}
    \subfloat[\label{f:parallelcsp8graph}]{
        \includegraphics[width = 0.33\textwidth]{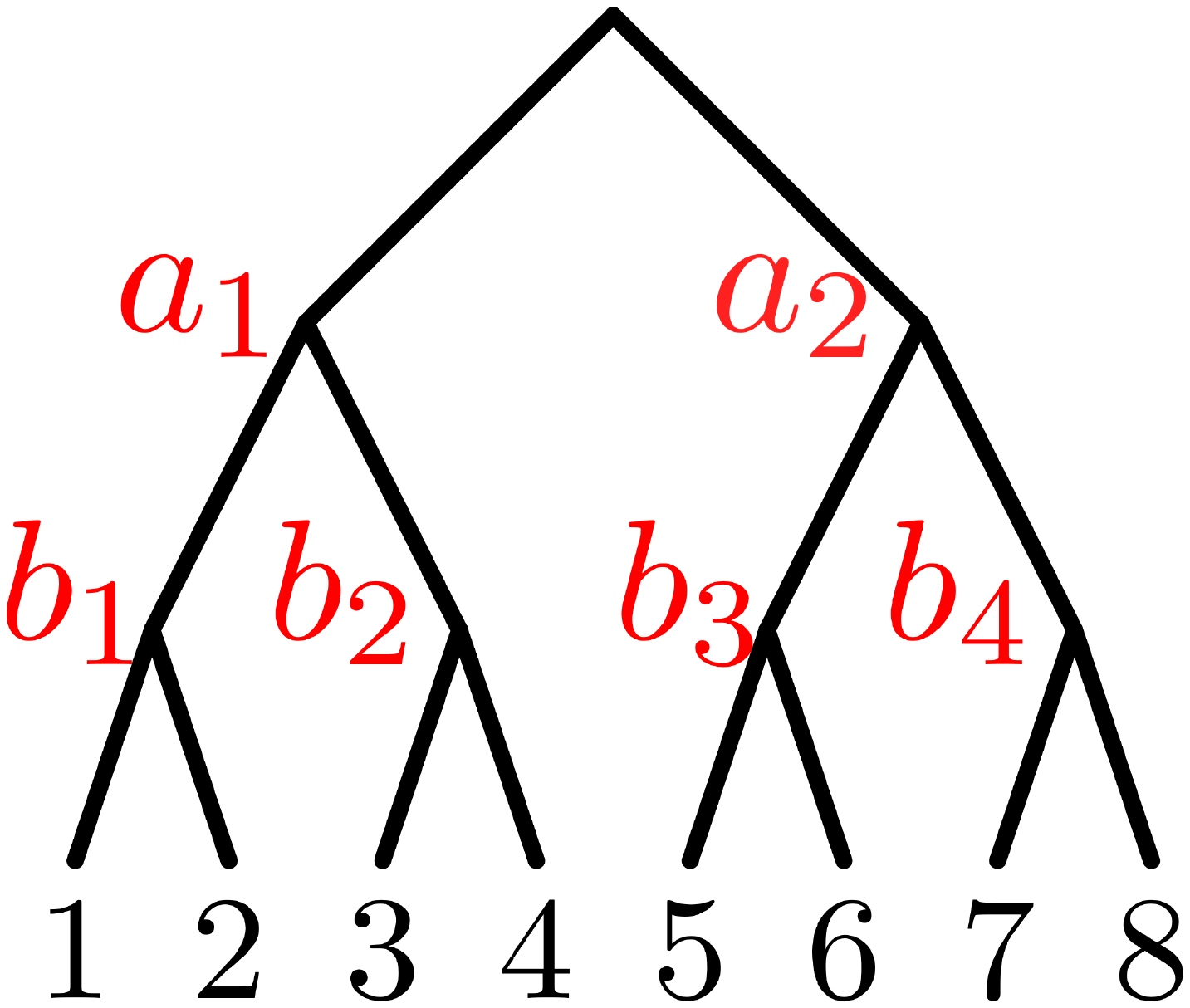}}
    \caption{(a)~Logarithmic-depth preparation of an eight-qubit cat state shows there are six possible locations for $X$ faults that create errors of weight at least two.  Parity checks need to be chosen to find corrections that leave the cat state with error of weight less than two. (b)~The circuit on the left can be represented as a graph, where a CNOT gate is represented by the splitting of an edge.} 
\end{figure}

In~\figref{f:parallelcsp8} an eight-qubit cat state is prepared in three rounds of CNOT gates.  There are six locations (marked in red) where an $X$ fault may cause an error of weight at least two.  These faults result in data errors with a different structure from the linear-depth protocols of \secref{sec:SMcatFDCSP}, hence different parity checks are required.  It is simpler to determine these parity checks if the circuit is viewed as a binary tree, as in~\figref{f:parallelcsp8graph}.  Here time flows down and every CNOT onto a fresh $\ket 0$ qubit is denoted by the splitting of an edge.  An $X$ fault at a marked location results in an $X$ error on all the leaf nodes directly under the location.  Note that a fault at the root cannot cause a bad error.  

We use only two-qubit parity checks, however larger parity checks may be used at the expense of increased depth.  If a parity check checks qubit $x$, it provides information on whether a fault occurred anywhere in the lineage: $l(x) = \{x, \pa(x), \pa(\pa(x)), \ldots , \ro\} $.  Therefore, if a parity check $(x,y)$ is triggered, a fault at one of the locations $ l(x) \cup l(y) \cup \{\spam \}$ has occurred, where $\{\spam \}$ is the set of faults during state preparation or measurement of the parity-check qubit. 

Using the parity checks $(1,5), (2,7), (3,6), (4,8)$, it is possible to separate the five distinct weight at least two errors (since the error due to $a_1$ and $a_2$ is the same up to the cat state's $X^{\otimes w}$ stabilizer) into distinct triggered flag~patterns:

\begin{center}
    \begin{tabular}{c @{\hspace{0.5cm}} c c c c}
     & $(1,5)$ & $(2,7)$ & $(3,6)$ & $(4,8)$\\
     \hline 
    $a_1$, $a_2$ & $\bullet$ &  $\bullet$ &  $\bullet$ & $\bullet$ \\
    $b_1$ &  $\bullet$ &  $\bullet$ & $\circ$ & $\circ$\\
    $b_2$ & $\circ$ & $\circ$ & $\bullet$  &  $\bullet$\\
    $b_3$ &  $\bullet$ & $\circ$ &  $\bullet$ & $\circ$\\
    $b_4$ & $\circ$ &  $\bullet$ & $\circ$ & $\bullet$ \\
    \end{tabular}
\end{center}

Note that a fault at any of the above locations requires a multi-qubit data correction.  We ensure that each of them is detected by at least two parity checks, as one faulty parity check must not induce corrections of weight greater than one.

\begin{theorem} \label{t:parallelcatd3s}
Using parallelized circuits, a $w$-qubit cat state can be prepared fault-tolerantly to distance three using $\frac{w}{2}$ parity checks, where $\frac{w}{2}= 2^j, \, j \in \mathbb{N}$.  The depth of the circuit is $2+ \log_2 w$.
\end{theorem}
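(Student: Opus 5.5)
The plan is to generalize the $w=8$ example above by induction on the depth of the binary-tree preparation circuit. Prepare the $w = 2^{j+1}$-qubit cat state in $\log_2 w$ rounds of CNOTs. Each round doubles the number of qubits in the state by CNOTing every existing qubit onto a fresh $\ket 0$. As in \figref{f:parallelcsp8graph}, view this as a complete binary tree whose leaves are the data qubits. An $X$ fault at an internal location $v$ produces the error $X_{L(v)}$, where $L(v)$ is the set of leaves below $v$. The root and the depth-one locations give the same error up to the stabilizer $X^{\otimes w}$, and faults just above a leaf give weight-one errors. So the bad errors are exactly $X_{L(v)}$ for internal nodes $v$ whose subtree has at least two leaves, modulo complementation at the top level.

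Next, choose $w/2$ disjoint two-qubit parity checks $(x_i,y_i)$ forming a perfect matching of the leaves. Check $(x,y)$ fires on error $X_{L(v)}$ iff exactly one of $x,y$ lies in $L(v)$. So each bad error corresponds to a subset of $[w/2]$, namely the set of checks it triggers. We need two properties of the matching:
\begin{enumerate}
\item the subsets for distinct bad errors, up to the $X^{\otimes w}$ identification, are pairwise distinct;
\item every bad error's subset has size at least two.
\end{enumerate}
Property 2 matters because a single faulty check (a SPAM fault on the parity ancilla, or a fault on one data qubit after the checks began) must never trigger a multi-qubit correction. Weight-one patterns therefore receive no correction or a weight-one correction. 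Faults inside the parity-check subcircuit flip at most one check and leave at most a weight-one data error, since each check touches each data qubit once with a CNOT into the ancilla.

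The core of the argument is to construct such a matching recursively, mimicking $(1,5),(2,7),(3,6),(4,8)$. Label leaves by binary strings $s \in \{0,1\}^{j+1}$; a node at depth $r$ corresponds to a fixed prefix of length $r$. Match leaf $s$ with a partner $\sigma(s)$ that differs from $s$ in the first bit, so no check lies within one half. Choose $\sigma$ so that for each depth $r \geq 1$, the map on the remaining bits is a ``mixing'' permutation, for instance an affine map over $\mathbb{F}_2$ that sends each subcube of fixed prefix to a set meeting every other subcube of the same depth in a balanced way. Then a subtree at depth $r$ (with $2^{j+1-r}$ leaves) is cut by exactly as many checks as it has leaves, all of its leaves' checks leaving it. This gives a triggered set of size $2^{j+1-r}\ge 2$ for $r\le j$, and the triggered sets of distinct subtrees are distinct because they are determined by their leaf sets. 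The one subtlety is avoiding checks entirely inside a subtree. Requiring that $s$ and $\sigma(s)$ have different first bit and, recursively, that the restricted matching on deeper levels also crosses the split, handles this. This is the step I expect to be the main obstacle: making the recursive choice explicit, checking that the $w=8$ table is its base case, and verifying distinctness including the top-level complement identification.

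Finally, the depth count. The tree preparation uses $\log_2 w$ CNOT rounds. The $w/2$ checks act on disjoint qubit pairs, so they run in parallel with $w/2$ ancillas in two CNOT rounds. This gives total depth $2+\log_2 w$. The correction rule assigns, for each triggered pattern of weight at least two, the corresponding $X_{L(v)}$, and nothing otherwise. By the above, every single fault leaves residual error of weight at most one.
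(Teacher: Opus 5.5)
\paragraph{The gap: distinctness of the triggered check sets.}
Your reduction to a perfect matching of checks that cross the top split is sound. Every depth-$r$ subtree error $X_{L(v)}$ then fires exactly $|L(v)|\ge 2$ checks. A single fault in the check layer gives a weight-one pattern. The depth count $2+\log_2 w$ is also right.

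The gap is the distinctness step. The checks fired by $X_{L(v)}$ are $\{\{x,\sigma(x)\} : x\in L(v)\}$. This set determines $L(v)\cup\sigma(L(v))$, not $L(v)$.
\begin{itemize}
\item Two subtrees in the same half are separated, since each check has one endpoint in each half.
\item A first-half subtree $v$ and a second-half subtree $v'$ of the same depth fire identical patterns whenever $\sigma(L(v))=L(v')$. The correction then leaves the residual $X_{L(v)\cup L(v')}$, which has weight at least four.
\end{itemize}
This is exactly what happens for the most natural matching meeting your stated requirement, namely flipping only the first bit. For $w=8$ that matching is $(1,5),(2,6),(3,7),(4,8)$, and both $b_1=\{1,2\}$ and $b_3=\{5,6\}$ fire $\{(1,5),(2,6)\}$.

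So ``determined by their leaf sets'' does not prove distinctness. The subtlety you single out, avoiding checks inside a subtree, is already automatic once every check crosses the top split. The condition you actually need, and never state, is this: $\sigma$ maps no first-half subtree of depth $2\le r\le \log_2 w-1$ onto a second-half subtree.

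Your ``balanced meeting'' condition does not substitute for this. At depth $\log_2 w-1$, a two-leaf image cannot meet all $w/4$ second-half subtrees once $w\ge 16$, so the condition cannot be satisfied. The explicit construction is left open precisely where the proof lives.

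\paragraph{The paper's matching closes the gap.}
The paper's proof supplies the matching explicitly: check $i$ is $(i,\,w/2+2i-1)$ for $i\le w/4$ and $(i,\,2i)$ for $w/4<i\le w/2$. Label leaf $i$ by the binary expansion $(s_1,\dots,s_m)$ of $i-1$, with $m=\log_2 w$. Then this matching sends $(0,s_2,s_3,\dots,s_m)$ to $(1,s_3,\dots,s_m,s_2)$. That is a cyclic shift of the trailing coordinates, so it is indeed linear over $\mathbb{F}_2$, as you guessed.

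Take the depth-$r$ subtree with prefix $(0,p_2,\dots,p_r)$. Its image fixes the last coordinate to $p_2$ and leaves coordinate $r$ free. Hence for $2\le r\le m-1$ the image is never a depth-$r$ subtree. The only coincidence is $a_1$ versus $a_2$ at $r=1$, which is absorbed by the stabilizer $X^{\otimes w}$. With this matching inserted, your argument goes through.
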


\begin{proof}

For parity check $i \in \{1,2, \mathellipsis , \frac{w}{4} \}$, the cat state qubits checked are $(i, \frac{w}{2} + 2i - 1)$.  For the remaining parity checks  $i \in \{ \frac{w}{4} +1, \frac{w}{4} +2, \mathellipsis ,  \frac{w}{2} \}$, the qubits checked are $(i, 2i)$.  As in \figref{f:parallelcsp8graph}, faults at the ${\color{red}a}$ level (depth-one) locations trigger all the parity checks, since each parity check is executed on one cat state qubit from the first half, and one from the second.  The correction $X^{\otimes w/2}$ on either half of the qubits works for both faults as $(X^{\otimes w/2} \otimes \Id^{\otimes w/2})( \Id^{\otimes w/2} \otimes X^{\otimes w/2}) = X^{\otimes w}$ is a stabilizer of the cat state.  Faults at the ${\color{red}b}$ level (depth-two) trigger distinct sets of $\frac{w}{2^2}$ parity checks, where the correction is on all the leaf nodes under the uniquely identified fault.  The same holds for faults at depth-$k$, which trigger distinct sets of $\frac{w}{2^k}$ parity checks.

One faulty parity check leads to a weight-one flag pattern, for which we do not apply corrections, as the error is restricted to at most one cat state qubit.  
\end{proof}

% \section{Multiple cat state preparation}
% \label{sec:SMcatSDSM7}

% \section{Magic state Distillation by stabilizer measurements}
% \label{sec:magicbymeas}

% \section{Fault-tolerant puncturing and uses}
% \label{sec:punturing}

\section{Corrections and rejections for weight-eight stabilizer measurements}
\label{s:wt8corrs}

\begin{figure}
    \centering
    \subfloat[\label{f:degree4k2wt8circ} ]{\includegraphics[width=.7\textwidth]{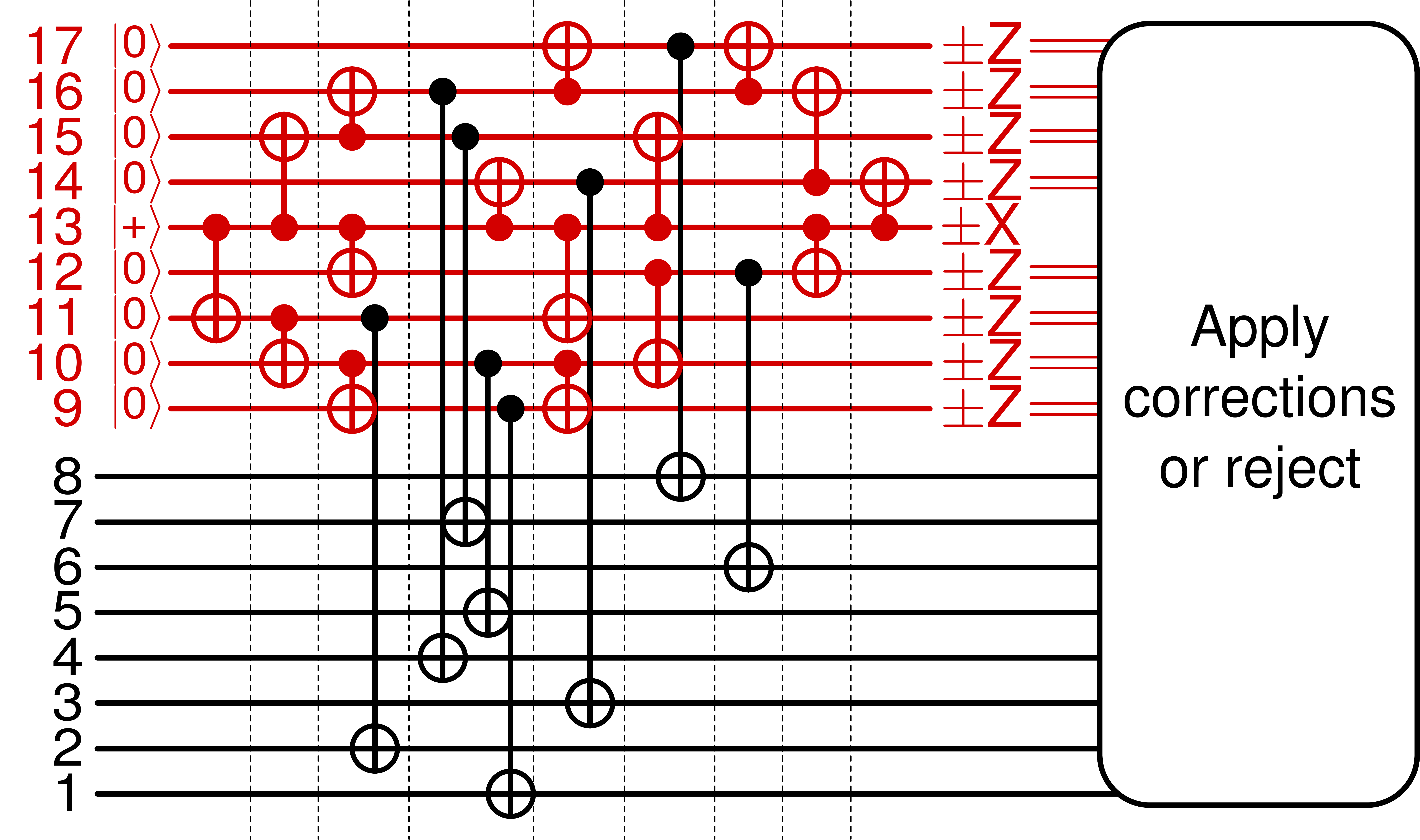}}
    \hspace{.2cm}
    \subfloat[\label{f:degree4k2wt8layout}
    ]{\includegraphics[width=.25\textwidth]{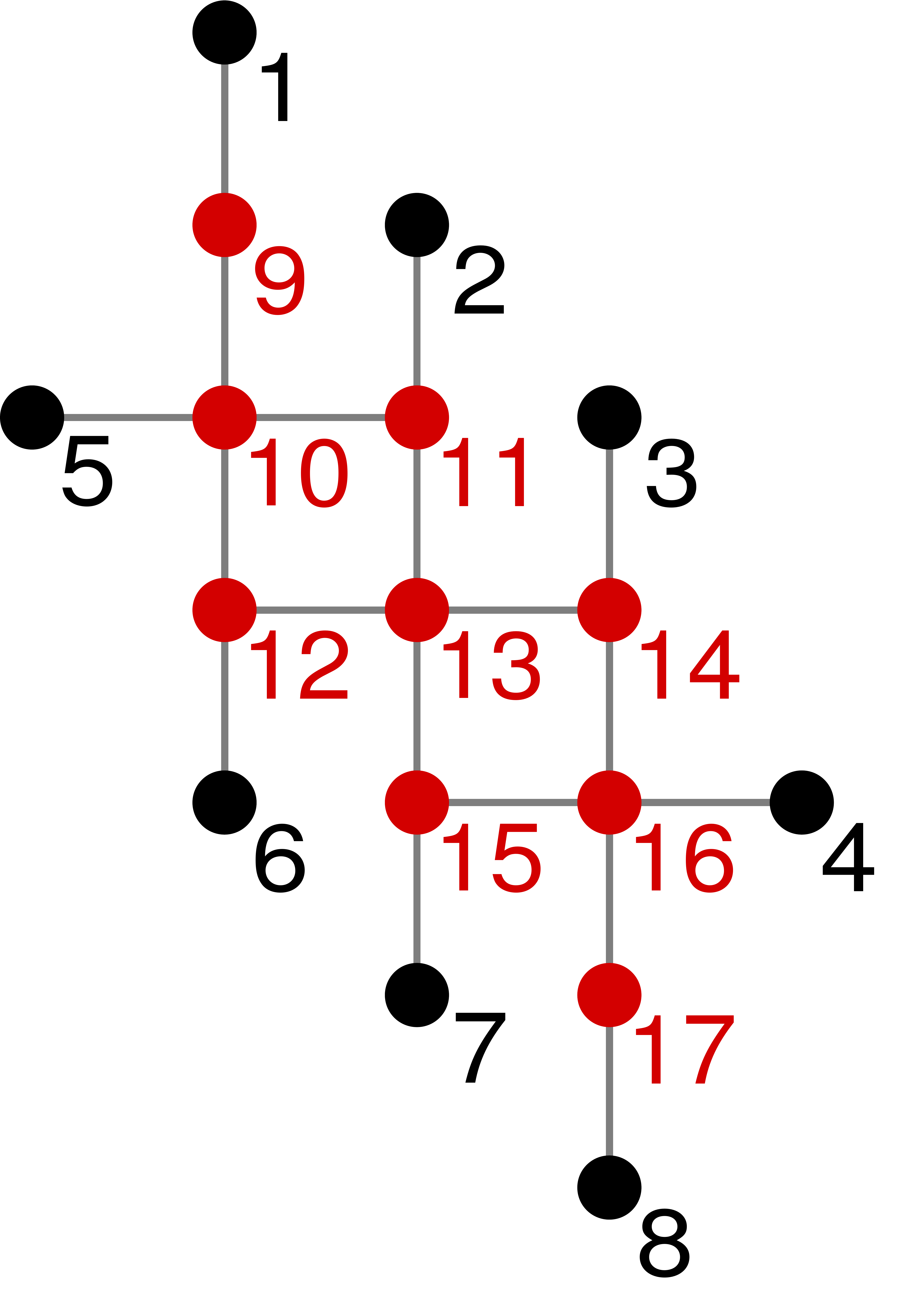}}
\caption{(a)~Distance-4 fault-tolerant circuit for measuring a weight-8 stabilizer on a square lattice layout, as arranged in~(b).}
    \label{f:degree4k2wt8}
\end{figure}

The corrections and rejections for the fault-tolerant weight-eight stabilizer measurement circuit in~\figref{f:wt8circ} are shown below.

\noindent
\setlength{\tabcolsep}{5pt}
\begin{center}
\begin{tabular}{cc|cc}
Raised flags & Correction & Raised flags & Correction  \\
\hline 
\{$13$\} & \{$6$\} & \{$12,13$\} & \{$6$\} \\
\{$9,12$\} & \{$1,5$\} & \{$12,13,14$\} & \{$6,7,8$\} \\
\{$9,11,12,14$\} & \{$1,4,5$\} & $$ & $$ \\ 
\end{tabular} 
\end{center}

%\vspace{0.1cm} %DEBUG

\noindent
\begin{center}
\begin{tabular}{c}
Rejections \\
\hline
\multicolumn{1}{p{0.97\linewidth}}{\{$9,11$\}, \{$9,13$\}, \{$9,14$\}, \{$11,12$\}, \{$12,14$\}, \{$13,14$\}, \{$9,11,12$\}, \{$9,11,14$\}, \{$9,12,13$\}, \{$9,12,14$\}, \{$9,13,14$\}, \{$11,12,13$\}, \{$11,12,14$\}, \{$11,13,14$\}, \{$9,11,12,13$\}, \{$9,11,13,14$\}, \{$9,12,13,14$\},  \{$11,12,13,14$\}, \{$9,11,12,13,14$\} }
\end{tabular}
\end{center}

\vspace{0.25cm} %DEBUG

For the layout described in~\figref{f:degree4k2}, the weight-eight stabilizer is measured fault-tolerantly with the circuit in~\figref{f:degree4k2wt8}, with associated corrections and rejections tabulated below.

\vspace{0.25cm} %DEBUG

\noindent
\setlength{\tabcolsep}{5pt}
\begin{center}
\begin{tabular}{cc|cc}
Raised flags & Correction & Raised flags & Correction  \\
\hline 
\{$10$\} & \{$1$\} & \{$11$\} & \{$2$\} \\
\{$12$\} & \{$6$\} & \{$14$\} & \{$8$\} \\
\{$15$\} & \{$7$\} & \{$16$\} & \{$8$\} \\
\{$10,11$\} & \{$1,2$\} & \{$10,12$\} & \{$1,6$\} \\
\{$15,16$\} & \{$7,8$\} & \{$10,11,15,16$\} & \{$3$\}  \\ 
\end{tabular} 
\end{center}

%\vspace{1mm} %DEBUG
\noindent
\begin{center}
\begin{tabular}{c}
Rejections \\
\hline
\multicolumn{1}{p{0.97\linewidth}}{
 \{$9, 11$\},  \{$ 9, 12$\},  \{$9, 15$\},  \{$9, 16$\},  \{$9, 17$\},  \{$10, 14$\},  \{$10, 15$\},  \{$10,  
16$\},  \{$10, 17$\},  \{$11, 12$\},  \{$11, 15$\},  \{$11, 16$\},  \{$11, 17$\},  \{$12, 15$\},  \{$12,  
16$\},  \{$12, 17$\},  \{$14, 15$\},  \{$15, 17$\},  \{$9, 10, 11$\},  \{$9, 10, 12$\},  \{$9, 10,  
15$\},  \{$9, 10, 16$\},  \{$9, 10, 17$\},  \{$9, 14, 16$\},  \{$9, 15, 16$\},  \{$10, 11,  
12$\},  \{$10, 11, 14$\},  \{$10, 11, 15$\},  \{$10, 11, 16$\},  \{$10, 11, 17$\},  \{$10, 12,  
14$\},  \{$10, 12, 15$\},  \{$10, 12, 16$\},  \{$10, 12, 17$\},  \{$10, 14, 16$\},  \{$10, 15,  
16$\},  \{$10, 16, 17$\},  \{$11, 12, 15$\},  \{$11, 12, 16$\},  \{$11, 14, 16$\},  \{$11, 15,  
16$\},  \{$12, 14, 16$\},  \{$12, 15, 16$\},  \{$14, 15, 16$\},  \{$14, 16, 17$\},  \{$15, 16,  
17$\},  \{$9, 10, 14, 16$\},  \{$9, 10, 15, 16$\},  \{$9, 11, 15, 16$\},  \{$10, 11, 12,  
14$\},  \{$10, 11, 14, 15$\},  \{$10, 11, 14, 16$\},  \{$10, 11, 15, 17$\},  \{$10, 11,  
16, 17$\},  \{$10, 12, 14, 15$\},  \{$10, 12, 14, 16$\},  \{$10, 12, 15, 16$\},  \{$11,  
12, 14, 16$\},  \{$12, 14, 15, 16$\},  \{$9, 10, 11, 15, 16$\},  \{$9, 11, 12, 15,  
16$\},  \{$10, 11, 12, 14, 15$\},  \{$10, 11, 12, 14, 16$\},  \{$10, 11, 12, 15,  
16$\},  \{$10, 11, 14, 15, 16$\},  \{$10, 11, 15, 16, 17$\},  \{$11, 12, 14, 15,  
16$\},  \{$11, 12, 15, 16, 17$\},  \{$9, 10, 11, 12, 15, 16$\},  \{$10, 11, 12, 14,  
15, 16$\} }
\end{tabular}
\end{center}

\section{Malignant set counting}
\label{app:malignantsets}

An $[n,k,d]$ binary classical error-correcting code encodes $k$ logical bits of information into $n \geq k$ physical bits, with distance $d$. During error detection/correction, all errors of weight less than $d$ are detected. However \textbf{some} of the weight-$d$ errors are not detected. These errors are called malignant sets as they can cause erroneous flips of the logical bits. The task of computing how many of the $\binom{n}{d}$ weight-$d$ bit strings are malignant is computationally hard. The deterministic method is to evaluate the weights of all $\binom{n}{d}$ bit strings. But this takes time that is exponential in the problem size. 

For larger codes, we searched for faster methods to estimate the number of malignant fault sets. The first was a Monte Carlo simulation. The second method modelled the malignancy of weight-$d$ errors using a Bernoulli random variable. Finally, a third method used the MacWilliams identity. 

\subsection{Monte-Carlo sampling}
\label{subapp:monte}

For physical bit error rate $p$, the logical bit error rate of an $[n,k,d]$ code is $p_L = \sum_{j=d}^{n-d} l_j p^j (1-p)^{n-j}$, where $l_j$ is the number of malignant sets of weight $j$. At sufficiently low $p$, $p_L$ is approximately the first term of the polynomial, $l_d p^d (1-p)^{n-d}$. 

We can estimate $l_d$ using Monte Carlo simulations in two steps:
\begin{enumerate}
    \item For different, small values of $p$, compute $p_L$ by sampling errors and evaluating the fraction of them that are malignant. An $n$-bit error sample $e$ is obtained by sampling each bit from a Bernoulli random variable with probability $p$. The error $e$ is malignant if $H e = 0$, where $H$ is the parity check matrix of the code.
    \item Perform a least squares fit of the obtained values with the polynomial $p^d (1-p)^{n-d}$. The coefficient of the fit is the Monte Carlo approximation of $l_d$. 
\end{enumerate} 

At sufficiently low $p$, many of the error samples will be trivial. Hence a lot of time is wasted evaluating these samples. For large $d$, this problem becomes worse. The probability of observing a weight-$d$ error scales as $p^d$, implying that the errors that actually may be malignant are rarely ever observed.

\subsection{Modelling malignancy with the Bernoulli distribution}
\label{subapp:bernoulli}

Since we only care to check whether a weight-$d$ error is malignant or not, it is faster to sample only from the set of weight-$d$ errors. Let an error be sampled by choosing $d$ out of $n$ locations at random without replacement. We can now model the malignancy of a weight-$d$ error using a Bernoulli random variable: a weight-$d$ error sample is malignant with probability $p$. We can estimate $p$ with high confidence by checking for malignancy on many samples. Finally, $l_d = p \binom{n}{d}$.

\subsection{MacWilliams identity}
\label{subapp:macwilliams}

For a code $C$ with $k$ codeword generators, there exist $n-k$ vectors spanning the nullspace (kernel), $C^{\perp}$.  If the weights of the $\abs {C^\perp}= 2^{n-k}$ bit strings can be enumerated, then the weights of the codewords of $C$ can be evaluated using the MacWilliams identity:
\begin{equation}
    W_j^{C} = \frac{1}{\abs C^\perp} \sum_{i=0}^n W_i^{C^\perp} K_j(i,n),
\end{equation}
for $j= 0,1,\mathellipsis , n$. Here $W_i^C$ is the number of codewords of $C$ of weight $i$ and $K_j(i,n)$ is the Krawtchouk polynomial 
\begin{equation}
    K_j(i,n) = \sum_{l=0}^j (-1)^l \binom{i}{l} \binom{n-i}{j-l}.
\end{equation}

Then $W_d^C$ is the number of malignant fault sets of weight-$d$.

A short Mathematica script can enumerate the weights of  $2^{36} = 68{,}719{,}476{,}736$ codewords in just under $24$ hours.

\section{Construction of classical codes}
\label{app:codeconstruction}

\subsection{Cyclic codes defined using polynomials}

Binary cyclic codes can be constructed using cyclic shifts of polynomials defined over finite fields. To understand why, first note the isomorphic map between the field $\mathbb{F}_2^n$ and polynomials of degree $<$ $n$ with coefficients in $\mathbb{F}_2$.  For example, in $\mathbb{F}_2^4$, the polynomial $x^3 + 0 x^2 + x + 1$ corresponds to the bit string $1011$, where the bit at position $i\in \{ 0,1,\mathellipsis, n-1 \}$ (right-to-left) is the coefficient of the term $x^i$. An $[n,k,d]$ code is cyclic if the $k$ codewords can be generated by cyclic shifts of a generator polynomial, $g(x)$. Cyclic shifts of $g(x)$ are obtained by multiplying $g(x)$ with $\{ 1,x,x^2, \mathellipsis , x^{k-1} \}$.

\subsubsection{Single Error Detect code}
\label{subsubsec:SED}

The Single Error Detect code with parameters $[\alpha +1, \alpha, 2]$ can be generated by taking cyclic shifts of the generating polynomial $g(x)= x+1$ over the field $\mathbb{F}_2^{\alpha+1}$. For example the codewords of the $[4,3,2]$ code are the rows of $G$ below.
\begin{equation}
    G = \begin{bmatrix} 
0011 \\
0110 \\ 
1100 \\
\end{bmatrix}.
\end{equation}
The parity check matrix $H$ is the nullspace of $G$, i.e. the span of all vectors in $\mathbb{F}_2^n$ that are orthogonal to elements of $G$. For the above code,
\begin{equation}
     H = \begin{bmatrix} 
1111 \\
\end{bmatrix}.
\end{equation}

We also display the codeword generator matrix for the $[12,11,2]$ code which we use in \cref{sec:15to1} for magic state distillation.
\begin{equation}
    G = \begin{bmatrix} 
000000000011 \\
000000000110 \\
000000001100 \\
000000011000 \\
000000110000 \\
000001100000 \\
000011000000 \\
000110000000 \\
001100000000 \\
011000000000 \\
110000000000 
\end{bmatrix}.
\label{eq:CyclicG12}
\end{equation}

\begin{table*}
    \centering
    \begin{tabular}{c c}
         Code & Generator polynomial \\
         \hline
         $[7,4,3]$ & $1011 = x^3+x+1$ \\
         $[15,11,3]$ & $10011$ \\
         $[31,26,3]$ & $ 100101 $ \\
         $[43,36,3]$ & $ 10101011 $ \\
         $[49,43,3]$ & $ 1000011 $ \\
         $[63,57,3]$ & $ 1000011 $ \\
         $[85,77,3]$ & $ 100011101 $ \\
         $[127,120,3]$ & $ 10000011 $ \\[.25cm]
         
         $[15,7,5]$ & $ 111010001 $ \\
         $[31,21,5]$ & $ 11101101001 $ \\
         $[43,29,5]$ & $ 100111110100011 $ \\
         $[49,37,5]$ & $ 1010100111001 $ \\
         $[63,51,5]$ & $ 1010100111001 $ \\
         $[85,69,5]$ & $ 10110111101100011 $ \\
         $[127,113,5]$ & $ 101010001111101 $ \\[.25cm]
         
         $[15,5,7]$ & $ 10100110111$ \\
         $[31,16,7]$ & $ 1000111110101111 $ \\
         $[43,22,7]$ & $ 1010010100110010100001 $ \\
         $[49,31,7]$ & $ 1111000001011001111 $ \\
         $[63,45,7]$ & $ 1111000001011001111 $ \\
         $[85,61,7]$ & $ 1101110111010000110110101 $ \\
         $[127,106,7]$ & $ 1010010011000000011011 $ \\[.25cm]
         
         $[49,25,9]$ & $ 1110110110010011101110111 $ \\
         $[63,39,9]$ & $ 1110110110010011101110111 $ \\
         $[85,53,9]$ & $ 111101110010110110100001011111101$ \\
         $[127,99,9]$ & $ 11000101001010111100100111111 $ \\[.25cm]
         
         $[43,15,10]$ & $ 11111110001001100100000101011 $ \\[.25cm]
         
         $[31,11,11]$ & $ 101100010011011010101 $ \\
         $[49,22,11]$ & $ 1000011011101000000100010011 $ \\
         $[63,36,11]$ & $ 1000011011101000000100010011 $ \\
         $[85,45,11]$ & $ 10011001101111101110100111010110100010001$ \\
         $[127,92,11]$ & $ 111000010001110010101001101101010111 $ \\
    \end{tabular}
    \caption{BCH codes and associated generator polynomials. The codewords generators are  cyclic shifts of the generator polynomial.}
    \label{tab:BCHpolynomials}
\end{table*}

\begin{table*}
    \centering
    \begin{tabular}{c c c}
         $u$ & Code & Generator Polynomial  \\
         \hline
        $ 3$ & $[9,2,6]$ & $ 10111101 = x^7+x^5+x^4+x^3+x^2+1$ \\
        $ 4$ & $[17,9,5]$ & $ 100111001$ \\
        $ 5$ & $[33,22,6]$ & $ 101001100101 $ \\
        $ 6$ & $[65,53,5]$ & $ 1000111110001 $ \\
        $ 7$ & $[129,114,6]$ & $ 1001010000101001$ \\
    \end{tabular}
    \caption{Zetterberg codes with associated generator polynomials.}
    \label{tab:Zettpolynomials}
\end{table*}

\subsubsection{Golay code}
\label{subsubsec:Golay}

The $[23,12,7]$ Golay code is a cyclic code generated by the polynomial $x^{11}+x^9+x^7+x^6+x^5+x+1$ over $\mathbb{F}_2^{23}$. 

\begin{equation}
G = \begin{bmatrix} 

000000000001 0 1 0 1 1 1 0 0 0 1 1 \\
00000000001 0 1 0 1 1 1 0 0 0 1 1 0 \\
0000000001 0 1 0 1 1 1 0 0 0 1 1 00 \\
000000001 0 1 0 1 1 1 0 0 0 1 1 000 \\
00000001 0 1 0 1 1 1 0 0 0 1 1 0000 \\
0000001 0 1 0 1 1 1 0 0 0 1 1 00000 \\
000001 0 1 0 1 1 1 0 0 0 1 1 000000 \\
00001 0 1 0 1 1 1 0 0 0 1 1 0000000 \\
0001 0 1 0 1 1 1 0 0 0 1 1 00000000 \\
001 0 1 0 1 1 1 0 0 0 1 1 000000000 \\
01 0 1 0 1 1 1 0 0 0 1 1 0000000000 \\
1  0 1 0 1 1 1 0 0 0 1 1 00000000000 \\
\end{bmatrix}.
\end{equation}

\pagebreak %DEBUG
\subsubsection{BCH codes}
\label{subsubsec:BCH}

Bose-Chaudhuri–Hocquenghem (BCH) codes are a well-studied family of classical cyclic codes constructed using polynomials over finite fields. Due to the flexible nature of the construction of these codes, codes of different distances can be defined for the same code size. In \cref{tab:BCHpolynomials}, we show the generating polynomials for the BCH codes that were considered in this paper.

\subsubsection{Zetterberg codes}

Zetterberg codes are binary cyclic codes defined as $[2^u+1,2^u+1-2u,5 \leq d \leq 6]$ codes for even $u$. For odd $u$, we obtain the parameters $[2^u+1,2^u-2u,6]$. These codes are quasi-perfect: the distance between two codewords is $5 \leq d \leq 6$.

In this paper we consider Zetterberg codes for $u\in \{3,4,5,6,7\}$. The codewords are defined by taking cyclic shifts of polynomials shown in \cref{tab:Zettpolynomials}. Note that all the polynomials are palindromic. For a chosen $u$, other codes with the same parameters may be defined using different palindromic polynomials of the same degree. For a more detailed description of the construction, consider~\cite{Jing10}.

\subsection{Reed-Muller and Polar codes}

Binary Reed-Muller codes are $[ 2^m, k ,2^{m-r} ]$ codes for $r \leq m$ where 
\begin{equation}
    k= 2^m-\sum_{i=0}^{m-r-1} \binom{m}{i} = \sum_{i=0}^{r} \binom{m}{i}.
\end{equation}
To determine the codewords of the $(r,m)$-Reed-Muller code, start with the m-fold tensor product of the generator matrix $\begin{bmatrix} 1 & 1 \\ 0 & 1\end{bmatrix}$.
Remove the $\sum_{i=0}^{m-r-1} \binom{m}{i}$ rows with fewer than $d$ $1$'s. The $k$ remaining rows denote the codewords. In this paper, we look at the family of $m=r+1$ Single Error Detect codes, $m=r+2$ distance-$4$ Extended Hamming codes, and $m=r+3$ distance-$8$ codes.

Polar codes are $2^m$-bit codes that were initially developed for communication systems to tackle analog noise. The binary codes constructed using this formalism can be used against discrete noise too. The method of construction is the same as that of the Reed-Muller codes, but allows for codes with fewer encoded bits. After removing low-weight codewords to create a Reed-Muller code, remove extra codewords (lowest weight first) until there are exactly as many encoded bits as required.

\section{Speedups offered by different codes}
\label{app:speedups}

In \cref{fig:p3delt152025}, we show the lowest average runtime per Pauli for temporally encoded lattice surgery of $k \in \{ 2,3, \mathellipsis, 100\}$ measurements for $p=10^{-3}$. We also indicate which code achieves the lowest average runtime per Pauli for each $k$. Note that this paper only considered a limited number of classical codes for TELS. It may be possible for other codes to perform better than the ones outlined here.
\cref{fig:p4delt1520} shows the best classical codes for $p=10^{-4}$ and $\delta = 10^{-15}$ and $\delta = 10^{-20}$ respectively.

In \cref{tab:k2to50}, we show the best average speedup due to a TELS code for $k \in \{ 2,3, \mathellipsis, 100\}$. These speedups are computed with respect to performing the $k$ measurements sequentially at the regular measurement distance $d_m$. These speedups are computed for various regimes: $p = 10^{-3}$, $\delta \in \{ 10^{-10}, 10^{-15}, 10^{-20}, 10^{-25}\}$ and $p = 10^{-4}$, $\delta \in \{ 10^{-15}, 10^{-20}\}$.

\begin{figure*}
    \centering
    \hspace{-1mm}
    \subfloat[\label{fig:delt15}]{\includegraphics[width=.98\textwidth]{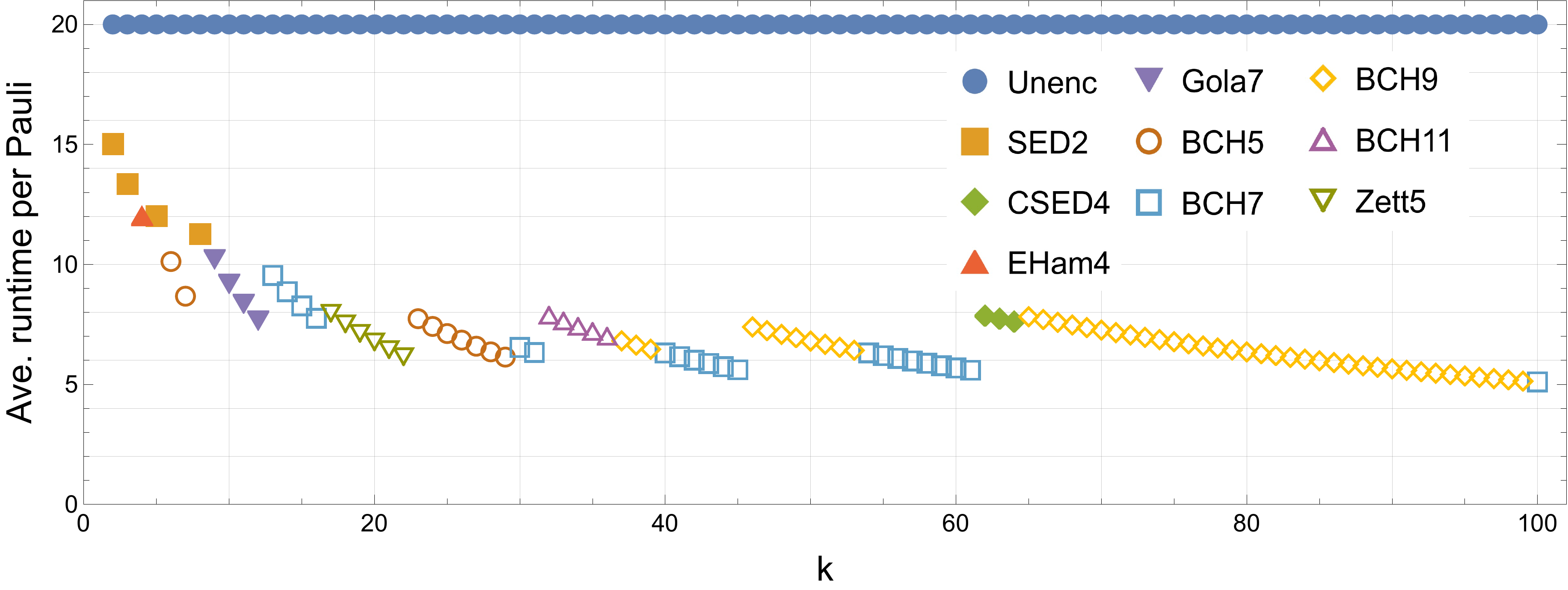}}
    \hspace{0.1cm}
    \subfloat[\label{fig:delt20}]{\includegraphics[width=.98\textwidth]{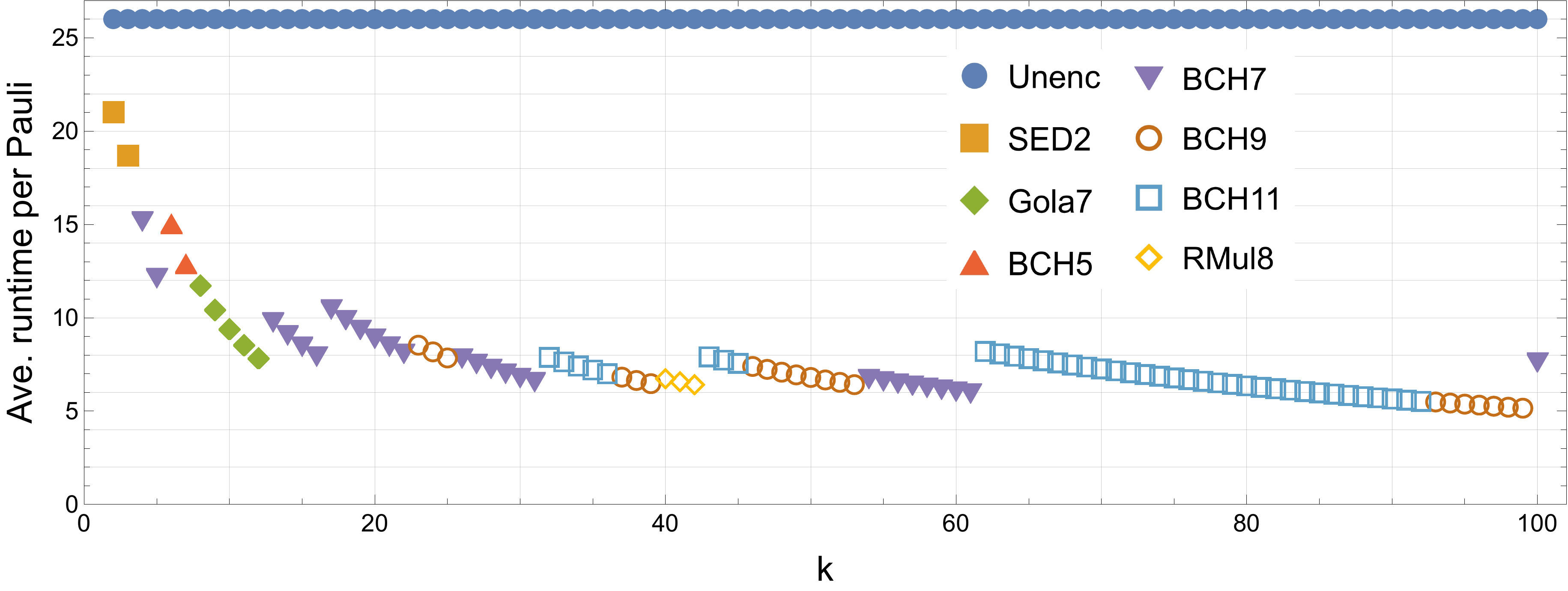}}
    \hspace{0.1cm}
    \subfloat[\label{fig:delt25}]{\includegraphics[width=.98\textwidth]{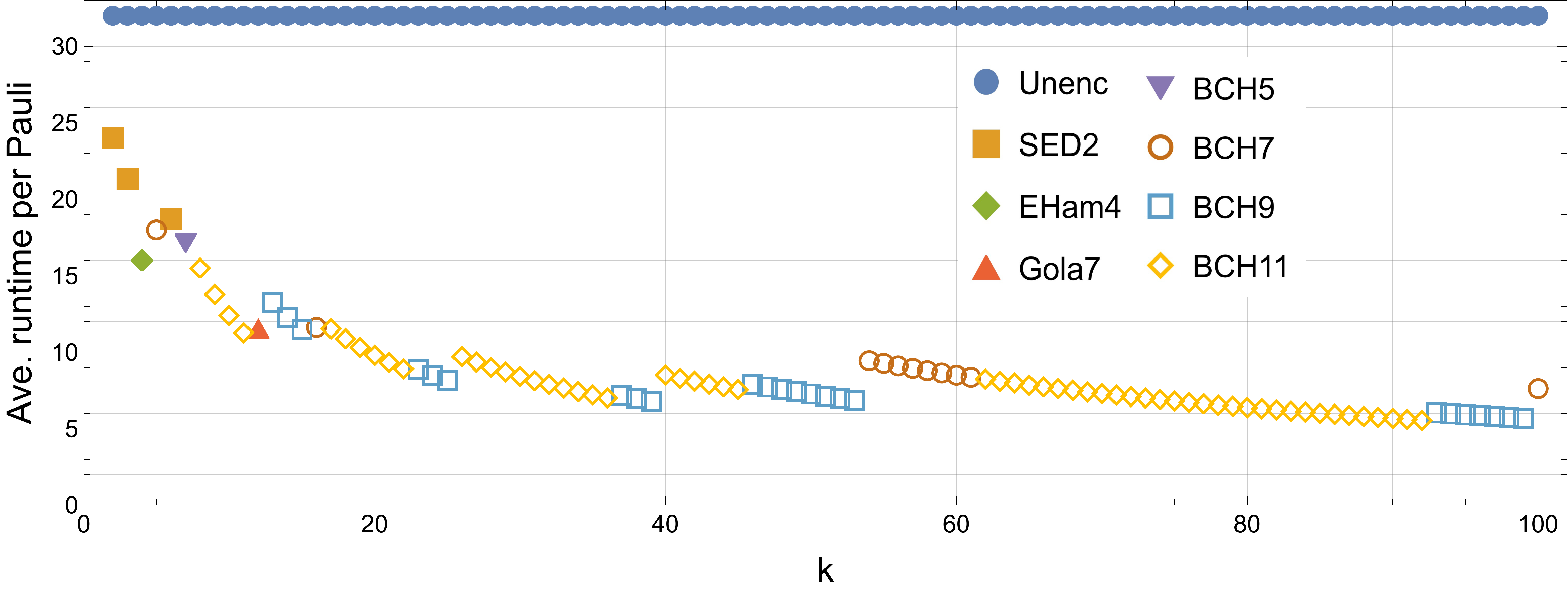}}
    \caption{We show the classical codes achieving the lowest average runtime per Pauli for $k \in \{ 2,3, \mathellipsis, 100\}$ at $p=10^{-3}$ and for (a)~$\delta= 10^{-15}$, (b)~$\delta= 10^{-20}$ and (c)~$\delta= 10^{-25}$. We set the routing space area $A = 100$. }
    \label{fig:p3delt152025}
\end{figure*}

\begin{figure*}
    \centering
    \hspace{-1mm}
    \subfloat[\label{fig:delt15p4}]{\includegraphics[width=.98\textwidth]{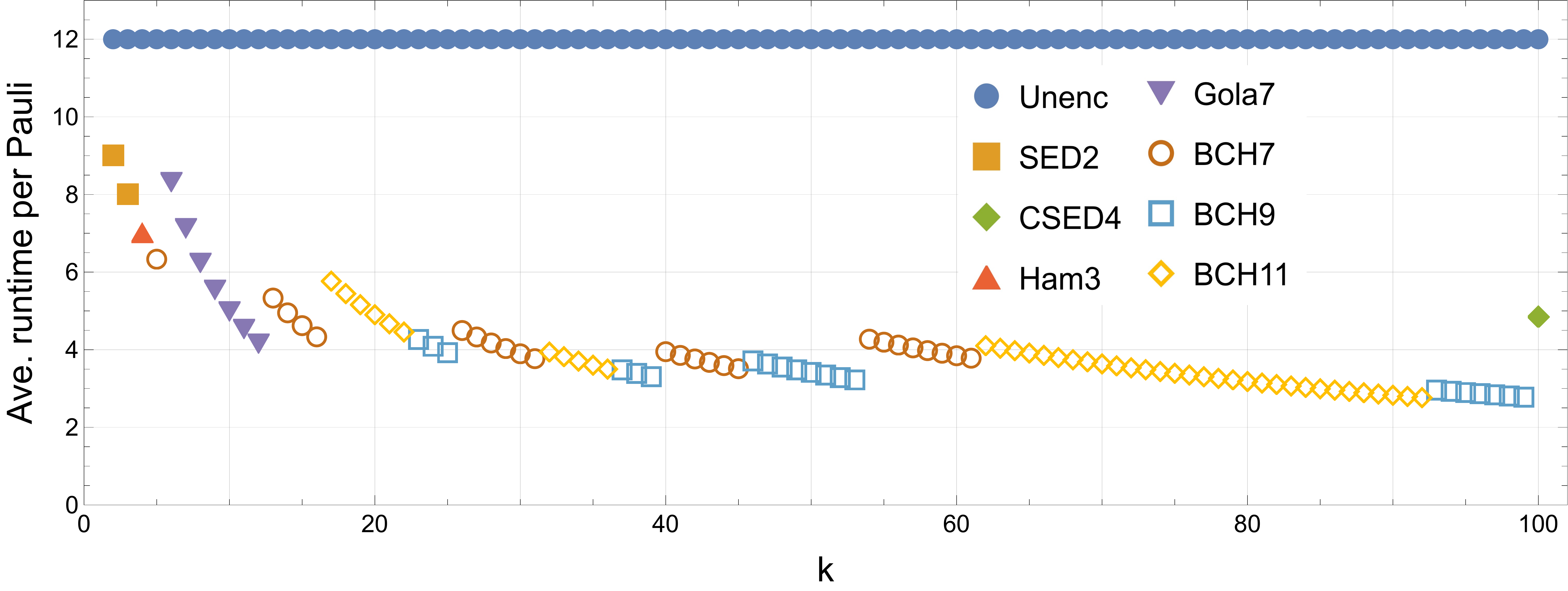}}
    \hspace{0.1cm}
    \subfloat[\label{fig:delt20p4}]{
    \includegraphics[width=.98\textwidth]{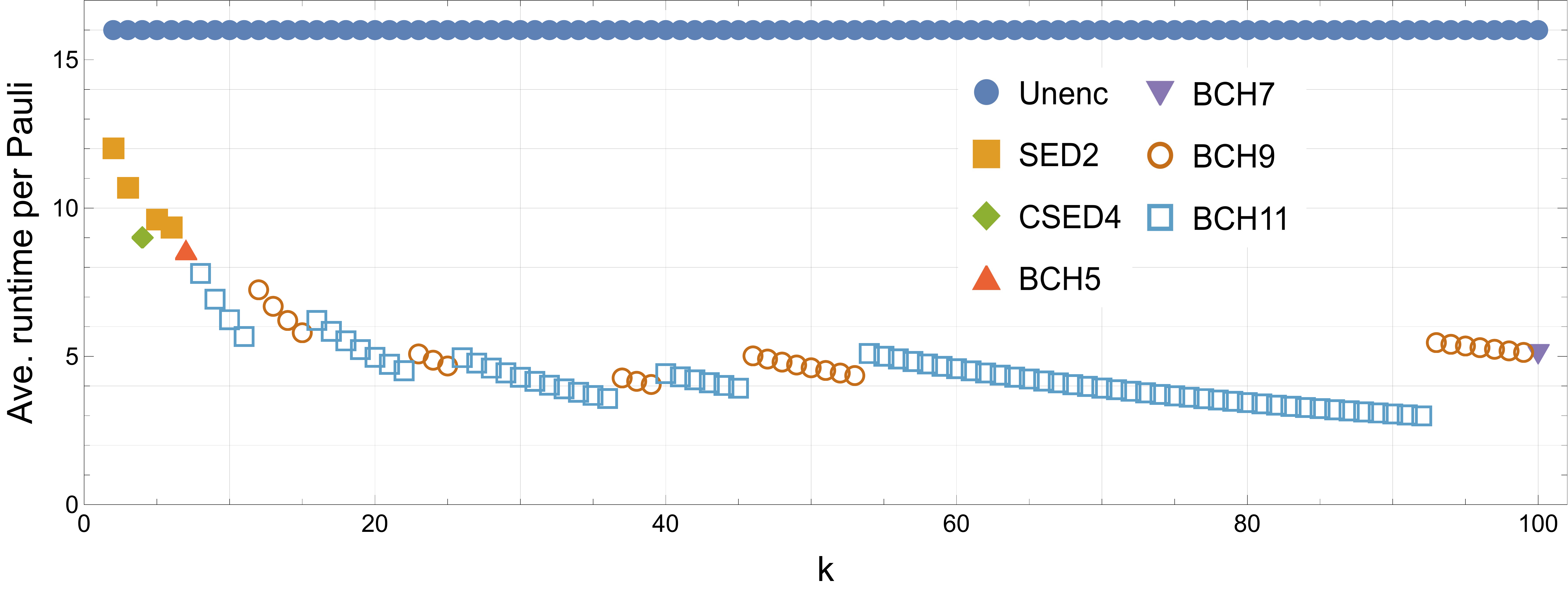}}
    \caption{We show the classical codes achieving the lowest average runtime per Pauli for $k \in \{ 2,3, \mathellipsis, 100\}$ at $p=10^{-4}$ and for (a)~$\delta= 10^{-15}$ and (b)~$\delta= 10^{-20}$. We set the routing space area $A = 100$.}
    \label{fig:p4delt1520}
\end{figure*}

\begin{table*}
    \centering
    \caption{The best lattice surgery speedup for $k \in \{ 1,2, \mathellipsis 100\}$, and associated classical code achieving it, in different noise regimes $p$, and for different target logical error rates $\delta$. Table continues on subsequent pages.}
    \begin{tabular}{c c c c c c c}
        $k$ & $p=10^{-3}$ & $p=10^{-3}$& $p=10^{-3}$ & $p=10^{-3}$ & $p=10^{-4}$ & $p=10^{-4}$ \\
        &  $\delta = 10^{-10}$& $\delta = 10^{-15}$& $\delta = 10^{-20}$& $\delta = 10^{-25}$& $\delta = 10^{-15}$& $\delta = 10^{-20}$ \\
        \hline
        $2$ & SED2, $1.167$ & SED2, $1.333$ & SED2, $1.238$ & SED2, $1.333$ & SED2, $1.333$  & SED2, $1.333$ \\
        $3$ & SED2, $1.312$ & SED2, $1.5  $ & SED2, $1.393$   & SED2, $1.5$ & SED2, $1.5$ & SED2, $1.5$ \\
        $4$ & EHam4, $1.739$ & EHam4, $1.666$ & BCH7, $1.713$& EHam4, $2$ & Ham3, $1.714$ & CSED4, $1.778$ \\
        $5$ & SED2, $1.458$ & SED2, $1.667$ & BCH7, $2.141$   & BCH7, $1.778$ & BCH7, $1.895$ & SED2, $1.667$ \\
        $6$ & SED2, $1.5$ & BCH5, $1.977$ & BCH5, $1.733$     & SED2, $1.714$ & Gola7, $1.441$ & SED2, $1.714$ \\
        $7$ & BCH5, $1.633$ & BCH5, $2.306$ & BCH5, $2.022$ & BCH5, $1.867$ & Gola7, $1.681$ & BCH5, $1.866$ \\
        $8$ & CSED4, $1.728$ & SED2, $1.778$ & Gola7, $2.22$ & BCH11, $2.065$ & Gola7, $1.922$ & BCH11, $2.053$ \\
        $9$ & CSED4, $1.944$ & Gola7, $1.956$ & Gola7, $2.498$ & BCH11, $2.323$ & Gola7, $2.162$ & BCH11, $2.31$ \\
        $10$ & EHam4, $2.16$ & Gola7, $2.174$ & Gola7, $2.775$ & BCH11, $2.581$ & Gola7, $2.402$ &  BCH11, $2.566$  \\
        $11$ & EHam4, $2.376$ & Gola7, $2.391$ & Gola7, $3.053$ & BCH11, $2.839$ & Gola7, $2.642$ & BCH11, $2.823$  \\
        $12$ & Gola7, $1.826$ & Gola7, $2.608$ & Gola7, $3.331$ & Gola7, $2.783$ & Gola7, $2.882$ & BCH9, $2.209$  \\
        $13$ & CSED4, $1.785$ & BCH7, $2.096$ & BCH7, $2.66 $ & BCH9, $2.417$ & BCH7, $2.251$ & BCH9, $2.393$ \\
        $14$ & CSED4, $1.922$ & BCH7, $2.257$ & BCH7, $2.865$ & BCH9, $2.603$ & BCH7, $2.424$ & BCH9, $2.577$ \\
        $15$ & CSED4, $2.059$ & BCH7, $2.419$ & BCH7, $3.069$ & BCH9, $2.789$ & BCH7, $2.597$ & BCH9, $2.761$ \\
        $16$ & CSED4, $2.196$ & BCH7, $2.58$ & BCH7, $3.274$ & BCH7, $2.753$ & BCH7, $2.771$ & BCH11, $2.577$ \\
        $17$ & BCH5, $1.919$ & Zett5, $2.51$ & BCH7, $2.484$ & BCH11, $2.775$ & BCH11, $2.082$ & BCH11, $2.738$ \\
        $18$ & BCH5, $2.032$ & Zett5, $2.657$ & BCH7, $2.63$ & BCH11, $2.939$ & BCH11, $2.204$ & BCH11, $2.899$ \\
        $19$ & BCH5, $2.145$ & Zett5, $2.805$ & BCH7, $2.777$ & BCH11, $3.102$ & BCH11, $2.326$ & BCH11, $3.06$ \\
        $20$ & BCH5, $2.257$ & Zett5, $2.953$ & BCH7, $2.923$ & BCH11, $3.265$ & BCH11, $2.449$ & BCH11, $3.221$ \\
        $21$ & BCH5, $2.37$ & Zett5, $3.1$ & BCH7, $3.069$ & BCH11, $3.429$ & BCH11, $2.571$ & BCH11, $3.382$ \\
        $22$ & EHam4, $2.346$ & Zett5, $3.248$ & BCH7, $3.215$ & BCH11, $3.592$ & BCH11, $2.694$ & BCH11, $3.543$ \\
        $23$ & Pol4, $2.453$ & BCH5, $2.586$ & BCH9, $3.049$ & BCH9, $3.613$ & BCH9, $2.814$ & BCH9, $3.149$ \\
        $24$ & Pol4, $2.56$ & BCH5, $2.698$ & BCH9, $3.181$ & BCH9, $3.77$ & BCH9, $2.937$ & BCH9, $3.286$ \\
        $25$ & Pol4, $2.666$ & BCH5, $2.81$ & BCH9, $3.314$ & BCH9, $3.927$ & BCH9, $3.059$ & BCH9, $3.423$ \\
        $26$ & EHam4, $2.773$ & BCH5, $2.923$ & BCH7, $3.319$ & BCH11, $3.298$ & BCH7, $2.67$ & BCH11, $3.23$ \\
        $27$ & BCH5, $2.196$ & BCH5, $3.035$ & BCH7, $3.446$ & BCH11, $3.425$ & BCH7, $2.773$ & BCH11, $3.354$ \\
        $28$ & BCH5, $2.278$ & BCH5, $3.148$ & BCH7, $3.574$ & BCH11, $3.551$ & BCH7, $2.876$ & BCH11, $3.478$ \\
        $29$ & BCH5, $2.359$ & BCH5, $3.26$ & BCH7, $3.702$ & BCH11, $3.678$ & BCH7, $2.978$ & BCH11, $3.603$ \\
        $30$ & BCH7, $2.143$ & BCH7, $3.059$ & BCH7, $3.829$ & BCH11, $3.805$ & BCH7, $3.081$ & BCH11, $3.727$ \\
        $31$ & BCH7, $2.214$ & BCH7, $3.161$ & BCH7, $3.957$ & BCH11, $3.932$ & BCH7, $3.184$ & BCH11, $3.851$ \\
        $32$ & BCH5, $2.284$ & BCH11, $2.54$ & BCH11, $3.302$ & BCH11, $4.059$ & BCH11, $3.047$ & BCH11, $3.975$ \\
        $33$ & BCH5, $2.355$ & BCH11, $2.619$ & BCH11, $3.405$ & BCH11, $4.186$ & BCH11, $3.143$ & BCH11, $4.1$ \\
        $34$ & BCH5, $2.427$ & BCH11, $2.698$ & BCH11, $3.508$ & BCH11, $4.312$ & BCH11, $3.238$ & BCH11, $4.224$ \\
        $35$ & BCH5, $2.498$ & BCH11, $2.778$ & BCH11, $3.611$ & BCH11, $4.439$ & BCH11, $3.333$ & BCH11, $4.348$ \\
    \end{tabular}
    \label{tab:k2to50}
\end{table*}

\begin{table*}
    \centering
    \begin{tabular}{c c c c c c c}
        $36$ & BCH5, $2.57$ & BCH11, $2.857$ & BCH11, $3.714$ & BCH11, $4.566$ & BCH11, $3.428$ & BCH11, $4.472$ \\
        $37$ & BCH5, $2.641$ & BCH9, $2.937$ & BCH9, $3.813$ & BCH9, $4.471$ & BCH9, $3.447$ & BCH9, $3.747$ \\
        $38$ & BCH9, $2.111$ & BCH9, $3.016$ & BCH9, $3.916$ & BCH9, $4.592$ & BCH9, $3.541$ & BCH9, $3.849$ \\
        $39$ & BCH9, $2.167$ & BCH9, $3.095$ & BCH9, $4.019$ & BCH9, $4.713$ & BCH9, $3.634$ & BCH9, $3.95$ \\
        $40$ & BCH7, $2.222$ & BCH7, $3.171$ & RMul8, $3.863$ & BCH11, $3.765$ & BCH7, $3.038$ &  BCH11, $3.623$\\
        $41$ & BCH7, $2.278$ & BCH7, $3.25$ & RMul8, $3.96$ & BCH11, $3.859$ & BCH7, $3.114$ & BCH11, $3.713$ \\
        $42$ & BCH7, $2.333$ & BCH7, $3.329$ & RMul8, $4.059$ & BCH11, $3.953$ & BCH7, $3.19$ & BCH11, $3.804$ \\
        $43$ & BCH7, $2.389$ & BCH7, $3.409$ & BCH11, $3.288$ & BCH11, $4.047$ & BCH7, $3.266$ & BCH11, $3.895$ \\
        $44$ & BCH7, $2.444$ & BCH7, $3.488$ & BCH11, $3.365$ & BCH11, $4.141$ & BCH7, $3.342$ & BCH11, $3.985$ \\
        $45$ & BCH7, $2.5$ & BCH7, $3.567$ & BCH11, $3.441$ & BCH11, $4.235$ & BCH7, $3.418$ & BCH11, $4.076$ \\
        $46$ & BCH5, $2.553$ & BCH9, $2.706$ & BCH9, $3.517$ & BCH9, $4.05$ & BCH9, $3.235$ & BCH9, $3.191$ \\
        $47$ & BCH5, $2.608$ & BCH9, $2.765$ & BCH9, $3.594$ & BCH9, $4.138$ & BCH9, $3.306$ & BCH9, $3.26$ \\
        $48$ & BCH5, $2.664$ & BCH9, $2.824$ & BCH9, $3.67$ & BCH9, $4.226$ & BCH9, $3.376$ & BCH9, $3.33$ \\
        $49$ & BCH5, $2.719$ & BCH9, $2.882$ & BCH9, $3.747$ & BCH9, $4.314$ & BCH9, $3.446$ & BCH9, $3.399$ \\
        $50$ & BCH5, $2.775$ & BCH9, $2.941$ & BCH9, $3.823$ & BCH9, $4.402$ & BCH9, $3.517$ & BCH9, $3.468$ \\
        $51$ & BCH5, $2.83$ & BCH9, $3$ & BCH9, $3.9$ & BCH9, $4.49$ & BCH9, $3.587$  & BCH9, $3.538$ \\
        $52$ & Zett5, $2.797$ & BCH9, $3.059$ & BCH9, $3.976$ & BCH9, $4.578$ & BCH9, $3.657$ &BCH9, $3.607$ \\
        $53$ & Zett5, $2.85$ & BCH9, $3.118$ & BCH9, $4.053$ & BCH9, $4.666$ & BCH9, $3.728$ & BCH9, $3.676$ \\
        $54$ & EHam4, $2.808$ & BCH7, $3.17$ & BCH7, $3.862$ & BCH7, $3.388$ & BCH7, $2.809$ & BCH11, $3.141$ \\
        $55$ & EHam4, $2.86$ & BCH7, $3.228$ & BCH7, $3.934$ & BCH7, $3.451$ & BCH7, $2.861$ & BCH11, $3.199$ \\
        $56$ & EHam4, $2.912$ & BCH7, $3.287$ & BCH7, $4.006$ & BCH7, $3.514$ & BCH7, $2.913$ & BCH11, $3.257$ \\
        $57$ & EHam4, $2.964$ & BCH7, $3.346$ & BCH7, $4.077$ & BCH7, $3.576$ & BCH7, $2.965$ & BCH11, $3.315$ \\
        $58$ & BCH7, $2.388$ & BCH7, $3.405$  & BCH7, $4.149$ & BCH7, $3.639$ & BCH7, $3.017$ & BCH11, $3.374$ \\
        $59$ & BCH7, $2.429$ & BCH7, $3.463$  & BCH7, $4.22 $ & BCH7, $3.702$ & BCH7, $3.069$ & BCH11, $3.432$ \\
        $60$ & BCH7, $2.47$ & BCH7, $3.522$ &   BCH7, $4.292$ & BCH7, $3.765$ & BCH7, $3.122$ & BCH11, $3.49$ \\
        $61$ & BCH7, $2.512$ & BCH7, $3.581$ &  BCH7, $4.363$ & BCH7, $3.827$ & BCH7, $3.174$ & BCH11, $3.548$ \\
        $62$ & BCH5, $2.548$ & CSED4, $2.548$ & BCH11, $3.173$ & BCH11, $3.887$ & BCH11, $2.926$ & BCH11, $3.606$ \\
        $63$ & BCH5, $2.589$ & CSED4, $2.589$ & BCH11, $3.224$ & BCH11, $3.95 $ & BCH11, $2.973$ & BCH11, $3.664$ \\
        $64$ & BCH5, $2.63$ & CSED4, $2.63 $ &  BCH11, $3.276$ & BCH11, $4.013$ & BCH11, $3.02 $ & BCH11, $3.723$ \\
        $65$ & BCH5, $2.671$ & BCH9, $2.559$  & BCH11, $3.327$ & BCH11, $4.076$ & BCH11, $3.067$ & BCH11, $3.781$ \\
        $66$ & BCH5, $2.712$ & BCH9, $2.598$ & BCH11, $3.378$ & BCH11, $4.138$ & BCH11, $3.114$ & BCH11, $3.839$ \\
        $67$ & BCH5, $2.753$ & BCH9, $2.638$ & BCH11, $3.429$ & BCH11, $4.201$ & BCH11, $3.162$ & BCH11, $3.897$ \\
        $68$ & BCH5, $2.794$ & BCH9, $2.677$ & BCH11, $3.48 $ & BCH11, $4.264$ & BCH11, $3.209$ & BCH11, $3.955$ \\
        $69$ & BCH5, $2.835$ & BCH9, $2.717$ & BCH11, $3.531$ & BCH11, $4.326$ & BCH11, $3.256$ & BCH11, $4.013$ \\
        $70$ & CSED4, $2.265$ & BCH9, $2.756$ & BCH11, $3.583$ & BCH11, $4.389$ & BCH11, $3.303$ & BCH11, $4.072$ \\
    \end{tabular}
    \label{tab:k51to100}
\end{table*}

\begin{table*}
    \centering
    \begin{tabular}{c c c c c c c}
        $71$ & CSED4, $2.297$ & BCH9, $2.795$ & BCH11, $3.634$ & BCH11, $4.452$ & BCH11, $3.35$ &  BCH11, $4.13$ \\
        $72$ & CSED4, $2.329$ & BCH9, $2.835$ & BCH11, $3.685$ & BCH11, $4.514$ & BCH11, $3.397$ & BCH11, $4.188$ \\
        $73$ & CSED4, $2.362$ & BCH9, $2.874$ & BCH11, $3.736$ & BCH11, $4.577$ & BCH11, $3.445$ & BCH11, $4.246$ \\
        $74$ & CSED4, $2.394$ & BCH9, $2.913$ & BCH11, $3.787$ & BCH11, $4.64 $ & BCH11, $3.492$ & BCH11, $4.304$ \\
        $75$ & CSED4, $2.427$ & BCH9, $2.953$ & BCH11, $3.839$ & BCH11, $4.703$ & BCH11, $3.539$ & BCH11, $4.362$ \\
        $76$ & CSED4, $2.459$ & BCH9, $2.992$ & BCH11, $3.89 $ & BCH11, $4.765$ & BCH11, $3.586$ & BCH11, $4.421$ \\
        $77$ & CSED4, $2.491$ & BCH9, $3.031$ & BCH11, $3.941$ & BCH11, $4.828$ & BCH11, $3.633$ & BCH11, $4.479$ \\
        $78$ & CSED4, $2.524$ & BCH9, $3.071$ & BCH11, $3.992$ & BCH11, $4.891$ & BCH11, $3.681$ & BCH11, $4.537$ \\
        $79$ & CSED4, $2.556$ & BCH9, $3.11 $ & BCH11, $4.043$ & BCH11, $4.953$ & BCH11, $3.728$ & BCH11, $4.595$ \\
        $80$ & CSED4, $2.588$ & BCH9, $3.15 $ & BCH11, $4.094$ & BCH11, $5.016$ & BCH11, $3.775$ & BCH11, $4.653$ \\
        $81$ & CSED4, $2.621$ & BCH9, $3.189$ & BCH11, $4.146$ & BCH11, $5.079$ & BCH11, $3.822$ & BCH11, $4.711$ \\
        $82$ & BCH11, $2.26 $ & BCH9, $3.228$ & BCH11, $4.197$ & BCH11, $5.141$ & BCH11, $3.869$ & BCH11, $4.77$ \\
        $83$ & BCH11, $2.287$ & BCH9, $3.268$ & BCH11, $4.248$ & BCH11, $5.204$ & BCH11, $3.917$ & BCH11, $4.828$ \\
        $84$ & BCH11, $2.315$ & BCH9, $3.307$ & BCH11, $4.299$ & BCH11, $5.267$ & BCH11, $3.964$ & BCH11, $4.886$ \\
        $85$ & BCH11, $2.343$ & BCH9, $3.346$ & BCH11, $4.35 $ & BCH11, $5.33 $ & BCH11, $4.011$ & BCH11, $4.944$ \\
        $86$ & BCH11, $2.37 $ & BCH9, $3.386$ & BCH11, $4.402$ & BCH11, $5.392$ & BCH11, $4.058$ & BCH11, $5.002$ \\
        $87$ & BCH11, $2.398$ & BCH9, $3.425$ & BCH11, $4.453$ & BCH11, $5.455$ & BCH11, $4.105$ & BCH11, $5.06$  \\
        $88$ & BCH11, $2.425$ & BCH9, $3.465$ & BCH11, $4.504$ & BCH11, $5.518$ & BCH11, $4.152$ & BCH11, $5.119$  \\
        $89$ & BCH11, $2.453$ & BCH9, $3.504$ & BCH11, $4.555$ & BCH11, $5.58 $ & BCH11, $4.2. $ & BCH11, $5.177$  \\
        $90$ & BCH11, $2.48 $ & BCH9, $3.543$ & BCH11, $4.606$ & BCH11, $5.643$ & BCH11, $4.247$ & BCH11, $5.235$  \\
        $91$ & BCH11, $2.508$ & BCH9, $3.583$ & BCH11, $4.657$ & BCH11, $5.706$ & BCH11, $4.294$ & BCH11, $5.293$  \\
        $92$ & BCH11, $2.535$ & BCH9, $3.622$ & BCH11, $4.709$ & BCH11, $5.768$ & BCH11, $4.341$ & BCH11, $5.351$  \\
        $93$ & BCH9, $2.563$ & BCH9, $3.661$ & BCH9, $4.738$ & BCH9, $5.302$ & BCH9, $4.057$ & BCH9, $2.929$  \\
        $94$ & BCH9, $2.591$ & BCH9, $3.701 $ & BCH9, $4.789$ & BCH9, $5.359$ & BCH9, $4.101$ & BCH9, $2.961$  \\
        $95$ & BCH9, $2.618$ & BCH9, $3.74 $ & BCH9, $4.84$ & BCH9, $5.416$ & BCH9, $4.144$ & BCH9, $2.992$  \\
        $96$ & BCH9, $2.646$ & BCH9, $3.78$ & BCH9, $4.891$ & BCH9, $5.473$ & BCH9, $4.188$ & BCH9, $3.024$  \\
        $97$ & BCH9, $2.673$ & BCH9, $3.819$ & BCH9, $4.942$ & BCH9, $5.53$ & BCH9, $4.232$ & BCH9, $3.055$ \\
        $98$ & BCH9, $2.701$ & BCH9, $3.858$ & BCH9, $4.993$ & BCH9, $5.587$ & BCH9, $4.275$ & BCH9, $3.087$ \\
        $99$ & BCH9, $2.728$ & BCH9, $3.898$ & BCH9, $5.043$ & BCH9, $5.644$ & BCH9, $4.319$ & BCH9, $3.118$ \\
        $100$ & BCH7, $2.755$ & BCH7, $3.919$ & BCH7, $3.412$ & BCH7, $4.199$ & CSED4, $2.477$ & BCH7, $3.15$ \\
    \end{tabular}
    \label{tab:k76to100}
\end{table*}

\section{Clifford frames of distilled magic states}
\label{app:CliffpartTraceproof}

In circuits like \cref{fig:Haddist}, where non-Clifford gates are implemented using Pauli measurements, we prove that the Clifford frame of the distilled magic states are powers of $X_{\pi/4}$.
After the temporally encoded non-Clifford gates, the Clifford frame consists of a sequence of Clifford rotations which are tensor products of $X$ and $\Id$. We first observe the effect of an $(X \otimes X)_{\pi/4}$ gate on an input $\ket{T_X} \otimes \ket{\psi}$ state, where $\ket{\psi} = \alpha \ket 0 + \beta \ket 1$ is some arbitrary state. If we can determine the effect of the $(X \otimes X)_{\pi/4}$ rotation on the subsystem of the distilled magic state, we can determine the final Clifford frame of the distilled magic state.
\begin{align}
    (X \otimes &X)_{\pi/4} \ket{T_X} \otimes \ket{\psi} =   \nonumber \\
    & \begin{bmatrix} 
 \frac{1}{\sqrt{2}} &0 &0 &\frac{-i}{\sqrt{2}}  \\
 0 &\frac{1}{\sqrt{2}} &\frac{-i}{\sqrt{2}} &0  \\
 0 &\frac{-i}{\sqrt{2}} &\frac{1}{\sqrt{2}} &0  \\
 \frac{-i}{\sqrt{2}} &0 &0 &\frac{1}{\sqrt{2}}
\end{bmatrix} \cdot \frac{1}{\sqrt{2}}
\begin{bmatrix} 
 (1+ e^{\frac{i \pi}{4}}) \alpha \\
 (1+ e^{\frac{i \pi}{4}}) \beta  \\
 (1- e^{\frac{i \pi}{4}}) \alpha \\
 (1- e^{\frac{i \pi}{4}}) \beta
\end{bmatrix}.
\end{align}
When we trace out the subsystem that started as $\ket {\psi}$, we obtain the following state on the subsystem of the magic state (up to a global phase)
\begin{align}
    tr_{\ket {\psi}}((X \otimes &X)_{\pi/4} \ket{T_X} \otimes \ket{\psi}) =   \nonumber \\
& \begin{bmatrix} 
 1+ e^{\frac{i \pi}{4}} - i (1- e^{\frac{i \pi}{4}}) \\
 1- e^{\frac{i \pi}{4}} - i (1+ e^{\frac{i \pi}{4}})
\end{bmatrix}.
\end{align}
This is equivalent to $X_{\pi/4} \ket{T_X}$, which is 
\begin{equation}
    X_{\pi/4} \ket{T_X} = 
    \begin{bmatrix} 
 1+ e^{\frac{i \pi}{4}} - i (1- e^{\frac{i \pi}{4}}) \\
 1- e^{\frac{i \pi}{4}} - i (1+ e^{\frac{i \pi}{4}})
\end{bmatrix}. 
\end{equation}

This shows that if there is one Clifford operator in the Clifford frame with $X$ support on the magic state qubit, it essentially results in an $X_{\pi/4}$ Clifford frame update on the magic state. However with more than one Clifford correction in the Clifford frame, the total rotation accumulated on the magic state qubit is the product of $X_{\pi/4}$ rotations for all the Cliffords in the Clifford frame that contain an $X$ operator on the support of the magic state qubit.

\section{Choice of codewords for the Golay code for TELS of a 15-to-1 distillation protocol}
\label{app:golaycodechoice}

In the $15$-to-$1$ distillation protocol of \cref{sec:15to1}, we implemented a TELS protocol for the non-Clifford measurements using the $[23,12,7]$ Golay code. We remove one codeword, since we only need to perform $11$ Pauli measurements in the PP set, and permute some columns (reordering the resulting Pauli measurements) to get the following codeword generator matrix
\begin{equation}
    G = \begin{bmatrix} 

 1 1 1 1 1 1 1 0 0 0 0 0 0 0 0 0 0 0 0 0 0 0 0 \\
 0 0 0 1 1 0 1 1 1 1 1 0 0 0 0 0 0 0 0 0 0 0 0 \\
 0 1 1 0 1 0 0 0 0 1 1 1 1 0 0 0 0 0 0 0 0 0 0 \\
 0 0 0 1 0 0 0 0 1 1 0 1 1 1 1 0 0 0 0 0 0 0 0 \\
 0 0 1 0 1 1 0 0 0 0 0 1 0 1 1 1 0 0 0 0 0 0 0 \\
 0 0 0 1 0 1 1 0 0 1 0 0 0 1 0 1 1 0 0 0 0 0 0 \\
 0 0 0 0 1 1 1 0 0 0 1 1 0 0 0 0 1 1 0 0 0 0 0 \\
 0 0 0 0 0 0 1 0 0 1 1 0 1 1 0 0 0 1 1 0 0 0 0 \\
 0 0 0 0 0 1 0 0 0 0 1 1 1 0 1 0 0 0 1 1 0 0 0 \\
 0 0 0 0 0 0 1 0 0 0 0 0 1 1 1 1 0 0 0 1 1 0 0 \\
 0 0 0 0 0 0 0 0 0 0 0 0 0 1 1 0 1 1 0 1 1 1 1
\end{bmatrix}.
\end{equation}
Now, after the seventh measurement, the cell holding the magic state associated with the first row of $G$ can be reset and used to inject a new magic state that will only be required for the $14$th Pauli measurement (first column (left-to-right) with a $1$ in the last row).

Note that both the columns and rows of G may be permuted; permuting columns reorders the new sequence of Pauli measurements; permuting rows merely swaps codeword generators. It may be possible to find an algorithm that iteratively applies a permutation rule convention to find a codeword matrix that allows magic states to occupy the fewest number of cells in a distillation tile.

% \vspace{-.2cm} %DEBUG 
\section{Procedure for determining code distances of distillation tiles}
\label{app:algoSTcosts}

Here we describe the procedure used to determine the spacelike distances $d_x$ and $d_z$ and timelike distances $d_m$ for lattice surgery in the distillation tiles of \cref{sec:MSDMSDdesign}. First, note that when performing distillation in the Clifford frame using TELS-assisted lattice surgery, we will need to execute two PP sets. The first PP set performs the non-Clifford gates using $\ket{T_X}$ resource states (using a classical $[n_1,k_1,d_1]$ code), and the second PP set performs a set of Pauli measurements associated with conjugating Clifford corrections through single-qubit logical measurements (using a classical $[n_2,k_2,d_2]$ code).

We first set $\delta^{(M)}$ as the error budget per magic state that is output from a distillation protocol. Logical errors may accumulate on the distilled magic states by different mechanisms. Even with noiseless lattice surgery, errors on the input magic states may may cause the final magic state to be logically wrong. This depends entirely on the choice of quantum code, here $\llbracket n_{\text{dist}}, k_{\text{dist}}, d_{\text{dist}} \rrbracket$. The logical error rate (per output magic state) of a distillation protocol with noiseless gates is
\begin{equation}
    p_L^{(M)} = \frac{l_{\text{dist}}}{k_{\text{dist}}} \Big(\frac{p(1+\eta)}{3 \eta} \Big)^{d_{\text{dist}}},
\end{equation}
according to the analysis in Sec. $1$ of Ref.\cite{Litinski19magic}, using the biased circuit-level noise model of \cref{subsec:PauliBasedReview}. Here, $l_{\text{dist}}$ is the number of weight-$d_{\text{dist}}$ fault sets that can cause a logical error.

Given the error budget $\delta^{(M)}$ and the logical error rate with noiseless gates $p_L^{(M)}$, we may now upper bound the logical error rate due to noisy lattice surgery measurements,
\begin{equation}
\delta = (\delta^{(M)} - p_L^{(M)}) k_{\text{dist}}.
\end{equation}
Note that we multiply by $k_{\text{dist}}$ since $\delta^{(M)}$ is the error budget per output magic state, but $\delta$ is the error budget of lattice surgery for the entire distillation protocol. Logical errors due to lattice surgery may occur due to spacelike or timelike errors. 

For each PP set, the logical error rate due to lattice surgery is
\begin{equation}
    p_{\text{PP1}}(d_m, n) = p_{L,X}(d_m , n) + p_{L,Z}(d_m , n) + p_{L}(p_m(d_m , n)), 
\end{equation}
where  $p_{L,X}$ and $p_{L,Z}$ are the spacelike contributions and $p_{L}$ is the timelike failure rate of TELS described in \cref{sec:NewTELS}. Here $d_m$ and $n$ refer to the lattice surgery measurement distance and the number of Pauli measurements respectively. Although we express $p_{\text{PP1}}$ as a function of two variables $d_m$ and $n$, the spacelike distance $d_x$ and $d_z$ are also input variables. The important point is that $d_m$ and $n$ are the only two variables that are different for each PP set. 

In Ref.~\cite{Chamberland22}, Eqs. $3-6$ denote the logical error rates of an $X \otimes X$ lattice surgery measurement. We obtain equations for $p_{L,X}$, $p_{L,Z}$, and $p_m$ by modifying the above equations as shown below
\begin{align}
    p_m(d_m , n) =& 0.01634 n A (21.93p)^{\tfrac{d_m+1}{2}}, \\ 
    p_{L,Z}(d_m , n) =& 0.03148 T N d_x (28.91p)^{\tfrac{d_z+1}{2}}, \\ 
    p_{L,X}(d_m , n) =& 0.0148 T \frac{F}{d_x} (0.762p)^{\tfrac{d_x+1}{2}}, 
\end{align}
where $A$ is the area of the routing space (in units of $d_x$ and $d_z$), $T$ is the average time taken to execute the parallelizable Pauli set (from \cref{sec:NewTELS}), $N$ is the maximum number of logical qubits that are concurrently used during any lattice surgery measurement in a TELS protocol, and $F$ is a pessimistic estimate of the maximum area used during a lattice surgery measurement (routing space $+$ logical qubits associated in the measurement). In \cref{app:constants}, we show equations for $F,A,N$ and $\text{Space}$ in terms of $d_x$ and $d_z$ for each of the different distillation layouts suggested in \cref{sec:MSDMSDdesign}. The time to complete the entire distillation protocol is given in \cref{subsubsec:MSDMSDClifftime}.

If we use measurement distance $d_m'$ for the first PP set and measurement distance $d_m''$ for the second PP set, then the objective is to find a set of parameters $\{d_x, d_z, d_m', d_m''\}$ that minimizes the space-time cost of a distillation factory, while ensuring the following equation is satisfied,
\begin{equation}
\label{eq:conditiondelta}
p_{\text{PP1}}(d_m', n_1) + p_{\text{PP2}}(d_m'', n_2) < \delta.
\end{equation}

Note that \cref{eq:conditiondelta}  ensures that the probability of a single logical failure event is less than $\delta$. Two independent logical failure events occur with probability $\sim \delta^2$, so we omit these higher order events from \cref{eq:conditiondelta}.

\section{Constants used to determine spacetime costs of distillation tiles}
\label{app:constants}

In this section and \tabref{tab:constants} we list the constants that are used to determine the minimum space-like and time-like distances for the distillation layouts described in \cref{app:algoSTcosts}. Distillation protocols that do not use TELS perform auto-corrected non-Clifford gate gadgets for the entire protocol, and hence contain one value each for the number of logical qubits, $N$, routing space area, $A$, and full area of lattice surgery, $F$. For protocols that perform TELS using the Clifford frame distillation circuit of \cref{subsec:MSDMSDCliff}, there are two PP sets. We display two sets of values for $N,A$, and $F$ to account for the changes between the execution of the first and second PP sets.

% \vspace{13cm} %DEBUG
% \hbox{} %DEBUG

\begin{table}
    \caption{Constants associated with layouts of distillation tiles described in \cref{sec:MSDMSDdesign}. These constants are used to determine minimum spacelike and timelike distances using the procedure in \cref{app:algoSTcosts}.}
    \centering
    \begin{tabular}{l}
    $15$-to-$1$ distillation - No temporal encoding \\
    \hline
    Space $= (2 d_z + 2 d_x + 2) (5(d_x+1))$\\
    $N = 7$ \\
    $A =(d_x+1)(d_z + 4(d_x+2))$\\
    $F = \text{Space} - 2 d_z (d_x + 1)$\\
    \hline
    $15$-to-$1$ distillation - No temp. encoding, parallelized \\
    \hline
    Space $= (2 d_z + 4 d_x + 4) (5(d_x+1))$\\
    $N = 9$ \\
    $A =(d_x+1)(d_z + 4(d_x+2))$\\
    $F = \text{Space} - 3 d_z (d_x + 1)$\\
    \hline
    $15$-to-$1$ distillation - SED2 \\
    \hline
    Space $= (2 d_z + d_x + 3) (5(d_x+1))$\\
    $N = 7$ \\
    $A = 5 (d_x+1)(d_x+3)$\\
    $F = \text{Space} - 3 d_z (d_x + 1)$\\
    $N2 = 5$ \\
    $A2 = A + 4 d_z (d_x+1)$\\
    $F2 = \text{Space} - d_z (d_x + 1)$\\
    \hline
    $15$-to-$1$ distillation - SED2, parallellized \\
    \hline
    Space $= (2 d_z + 3 d_x + 3) (7(d_x+1))$\\
    $N = 8$ \\
    $A =  (d_x+1)(2 d_z + 14 d_x+3)$\\
    $F = \text{Space} - 4 d_z (d_x + 1)$\\
    $N2 = 5$ \\
    $A2 = A + 4 d_z (d_x+1)$\\
    $F2 = \text{Space} - d_z (d_x + 1)$\\
    \hline
    $15$-to-$1$ distillation - BCH3 \\
    \hline
    Space $= (2 d_z + d_x + 3) (6(d_x+1))$\\
    $N = 10$ \\
    $A = 6 (d_x+1)(d_x+3)$\\
    $F = \text{Space} - 2 d_z (d_x + 1)$\\
    $N2 = 5$ \\
    $A2 = A + 4 d_z (d_x+1)$\\
    $F2 = \text{Space} - d_z (d_x + 1)$\\
    \hline
    \end{tabular}
    \label{tab:constants}
\end{table}

\begin{table}
    \centering
    \begin{tabular}{l}
    $15$-to-$1$ distillation - BCH3, parallellized \\
    \hline
    Space $= (2 d_z + 3 d_x + 3) (8(d_x+1))$\\
    $N = 11$ \\
    $A =  (d_x+1)(2 d_z + 16 d_x+3)$\\
    $F = \text{Space} - 3 d_z (d_x + 1)$\\
    $N2 = 5$ \\
    $A2 = A + 4 d_z (d_x+1)$\\
    $F2 = \text{Space} - d_z (d_x + 1)$\\
    \hline
    $15$-to-$1$ distillation - Golay \\
    \hline
    Space $= (2 d_z + d_x + 3) (8(d_x+1))$\\
    $N = 15$ \\
    $A = 8 (d_x+1)(d_x+3)$\\
    $F = \text{Space} -  d_z (d_x + 1)$\\
    $N2 = 5$ \\
    $A2 = A + 4 d_z (d_x+1)$\\
    $F2 = \text{Space} - d_z (d_x + 1)$\\
    \hline
$15$-to-$1$ distillation - Golay, parallellized \\
    \hline
    Space $= (2 d_z + 3 d_x + 3) (9(d_x+1))$\\
    $N = 15$ \\
    $A =  (d_x+1)(2 d_z + 18 d_x+3)$\\
    $F = \text{Space} - d_z (d_x + 1)$\\
    $N2 = 5$ \\
    $A2 = A + 4 d_z (d_x+1)$\\
    $F2 = \text{Space} - d_z (d_x + 1)$\\
    \hline
$116$-to-$12$ distillation - No temporal encoding \\
    \hline
    Space $= \max(2(d_z + d_x), 4(d_x+1)) (22d_x + d_z +23)$\\
    $N = 31$ \\
    $A =(d_x+1)(25d_x+4)$\\
    $F = \text{Space} - 13 d_z (d_x + 1)$\\
    \hline
    $116$-to-$12$ distillation - No temp. encoding, parallelized \\
    \hline
    Space $= \max(2(d_z + 3d_x), 4(d_x+1)) (23d_x + 2d_z +25)$\\
    $N = 33$ \\
    $A =(d_x+1) (30 d_x + \max(2(d_z + 3d_x), 4(d_x+1)))$\\
    $F = \text{Space} - 14 d_z (d_x + 1)$\\
    \hline
    \end{tabular}
\end{table}

\begin{table}
    \centering
    \begin{tabular}{l}
    $116$-to-$12$ distillation - Zett5, parallelized \\
    \hline
    Space $= (2 d_z + 3d_x + 3) (31(d_x+1))$\\
    $N = 46$ \\
    $A = (d_x+1)(2 d_z + 51 d_x+3)$\\
    $F = \text{Space} - 14 d_z (d_x + 1)$\\
    $N2 = 29$ \\
    $A2 = A + 19 d_z (d_x+1)$\\
    $F2 = \text{Space} - 12 d_z (d_x + 1)$\\
    \hline
    $116$-to-$12$ distillation - BCH9, parallellized \\
    \hline
    Space $= (2 d_z + 3 d_x + 3) (38(d_x+1))$\\
    $N = 58$ \\
    $A =  (d_x+1)(2 d_z + 65 d_x+3)$\\
    $F = \text{Space} - 15 d_z (d_x + 1)$\\
    $N2 = 29$ \\
    $A2 = A + 32 d_z (d_x+1)$\\
    $F2 = \text{Space} - 12 d_z (d_x + 1)$\\
    \hline
     $114$-to-$14$ distillation - No temporal encoding \\
    \hline
    Space $= \max(2(d_z + d_x), 4(d_x+1)) (23d_x + d_z + 24)$\\
    $N = 31$ \\
    $A =(d_x+1)(26d_x+4)$\\
    $F = \text{Space} - 15 d_z (d_x + 1)$\\
    \hline
    $114$-to-$14$ distillation - No temp. encoding, parallelized \\
    \hline
    Space $= \max(2(d_z + 3d_x), 4(d_x+1)) (24d_x + 2d_z + 26)$\\
    $N = 33$ \\
    $A =(d_x+1) (30 d_x + \max(2(d_z + 3d_x), 4(d_x+1)))$\\
    $F = \text{Space} - 16 d_z (d_x + 1)$\\
    \hline
    $114$-to-$14$ distillation - Zett5, parallelized \\
    \hline
    Space $= (2 d_z + 3d_x + 3) (32(d_x+1))$\\
    $N = 46$ \\
    $A = (d_x+1)(2 d_z + 51 d_x+3)$\\
    $F = \text{Space} - 16 d_z (d_x + 1)$\\
    $N2 = 29$ \\
    $A2 = A + 19 d_z (d_x+1)$\\
    $F2 = \text{Space} - 14 d_z (d_x + 1)$\\
    \hline
    \end{tabular}
\end{table}

\begin{table}
    \centering
    \begin{tabular}{l}
    $114$-to-$14$ distillation - BCH7, parallellized \\
    \hline
    Space $= (2 d_z + 3 d_x + 3) (35(d_x+1))$\\
    $N = 52$ \\
    $A =  (d_x+1)(2 d_z + 57 d_x+3)$\\
    $F = \text{Space} - 16 d_z (d_x + 1)$\\
    $N2 = 29$ \\
    $A2 = A + 25 d_z (d_x+1)$\\
    $F2 = \text{Space} - 14 d_z (d_x + 1)$\\
    \hline
    $125$-to-$3$ distillation - No temporal encoding \\
    \hline
    Space $= \max(2(d_z + d_x), 4(d_x+1)) (18d_x + d_z +19)$\\
    $N = 31$ \\
    $A =(d_x+1)(20d_x+4)$\\
    $F = \text{Space} - 4 d_z (d_x + 1)$\\
    \hline
    $125$-to-$3$ distillation - No temp. encoding, parallelized \\
    \hline
    Space $= \max(2(d_z + 3d_x), 4(d_x+1)) (20d_x + 2d_z  +22)$\\
    $N = 33$ \\
    $A =(d_x+1) (29 d_x + \max(2(d_z + 3d_x), 4(d_x+1)))$\\
    $F = \text{Space} - 5 d_z (d_x + 1)$\\
    \hline
    $125$-to-$3$ distillation - BCH7, parallelized \\
    \hline
    Space $= (2 d_z + 3d_x + 3) (30(d_x+1))$\\
    $N = 52$ \\
    $A = (d_x+1)(2 d_z + 58 d_x+3)$\\
    $F = \text{Space} - 6 d_z (d_x + 1)$\\
    $N2 = 29$ \\
    $A2 = A + 26 d_z (d_x+1)$\\
    $F2 = \text{Space} - 3 d_z (d_x + 1)$\\
    \hline
    $125$-to-$3$ distillation - BCH9, parallellized \\
    \hline
    Space $= (2 d_z + 3 d_x + 3) (33(d_x+1))$\\
    $N = 56$ \\
    $A =  (d_x+1)(2 d_z + 64 d_x+3)$\\
    $F = \text{Space} - 5 d_z (d_x + 1)$\\
    $N2 = 29$ \\
    $A2 = A + 32 d_z (d_x+1)$\\
    $F2 = \text{Space} - 3 d_z (d_x + 1)$\\
    \hline
    \end{tabular}
\end{table}

\section{Additional distillation layouts}
\label{sec:additionaldistillation}

\subsection{125-to-3 distillation}
\label{sec:125to3}

The $125$-to-$3$ magic state distillation protocol is obtained from a triorthogonal CSS quantum $\llbracket 125,3,5\rrbracket$ code. This code is constructed by puncturing the $[128,29,32]$ Reed-Muller code at any three locations~\cite{Haah18}. As a result, the quantum code will contain $3$ logical qubits, $96$ $X$-type stabilizers and $26$ $Z$-type stabilizers. After applying the circuit transformation from a gate-based model to the PBC model~\cite{Litinski19}, we are left with a sequence of $99$ commuting Pauli measurements on $29$ logical qubits, three of which will finally become the distilled magic states. These $99$ measurements form a size-$99$ PP set.

In this paper, we will consider using this protocol in a regime where the physical error rate is $p=10^{-3}$ and our target is to distill magic states with logical error probability at most $\delta^{(M)}=10^{-15}$. Using the noise model described in \cref{subsec:MSDMSDinjection} for the injection of magic states, we apply the analysis in Ref.~\cite{Litinski19magic} to determine the logical failure probability per output magic state for one round of a $125$-to-$3$ distillation scheme,
\begin{align}
    p_L^{(M)}  = & \frac{1}{3} 31 \Big ( (\epsilon_{\text{L,Z}})^5 + \frac{1}{2} 10(\epsilon_{\text{L,Z}})^4 \epsilon_{\text{L,X}} \nonumber \\
    & \quad + \frac{1}{4} 40 (\epsilon_{\text{L,Z}})^3 (\epsilon_{\text{L,X}})^2
    % \nonumber \\ & 
    + \frac{1}{8} 80(\epsilon_{\text{L,Z}})^2 (\epsilon_{\text{L,X}})^3 \nonumber \\
    &\quad + \frac{1}{16} 80 \epsilon_{\text{L,Z}} (\epsilon_{\text{L,X}})^4
    % \nonumber \\ &
    + \frac{1}{32} 32(\epsilon_{\text{L,X}})^5 \Big ) \nonumber \\
    = &\frac{31(1+\eta)^5}{729 \eta^5} p^5 \: .
\end{align}
For $p=10^{-3}$ and $\eta=100$, the probability that the distillation succeeds is $1 - p_D^{(M)} = (1-\epsilon_L)^{125} = 0.9584$ and $p_L^{(M)} = 4.47 \times 10^{-17}$. As this is sufficiently below $\delta^{(M)}$, the lattice surgery measurements used to execute the distillation protocol must be modeled with measurement distance large enough to allow for distilled magic states of logical error rate at most $\delta{(M)}$. Using the procedure in \cref{app:algoSTcosts}, we determined that the minimum spacelike distances are $d_x=13$ and $d_z=25$. 

We show two layouts for distillation tiles that do not use TELS in \cref{fig:125ue} and \cref{fig:125uepar}. These layouts perform distillation in the Pauli frame as described in Ref.~\cite{Litinski19} using auto-corrected non-Clifford gadgets (\cref{fig:PauliAutocorr}). On the layout of \cref{fig:125ue}, $99$ Pauli measurements are performed, each with measurement distance $d_m = 23$ (also derived using \cref{app:algoSTcosts}). On the layout of \cref{fig:125uepar}, Pauli measurements can be performed two at a time. Hence the time required is only the time for $50$ sequential lattice surgery measurements with measurement distance $d_m = 23$. 

In \cref{fig:125bch7par}, we show a layout for a distillation tile that performs TELS with a classical $[127,106,7]$ BCH code. Note that since there are two disjoint routing spaces (in blue and grey), the set of $n=127$ Pauli measurements can be performed in the time required for $64$ sequential measurements. This way, the time cost is nearly halved, with only a minor increase to the height of the distillation tile (two rows of routing space of height $d_x$). To obtain the time cost shown in \cref{tab:STcosts}, we used $d_m = 7$ and $c=2$, where $c$ is the maximum weight of classical errors that are corrected in a TELS protocol. Only classical errors of weight greater than or equal to three triggered detection events. Similarly, in \cref{fig:125bch9par}, we show a layout for a $125$-to-$3$ distillation tile that performs TELS with a classical $[127,99,9]$ BCH code. Here, we used the parameters $d_m = 5$ and $c=1$ to obtain the time cost shown in \cref{tab:STcosts}. For both of the layouts that use TELS, we have only discussed the TELS code used to execute the non-Clifford gates of \cref{subsec:MSDMSDCliff}. For the $125$-to-$3$ distillation protocol, there may be at most $26$ additional Pauli measurements to perform due to the conditional Clifford corrections. The results of these measurements are used to detect if there are errors in the final distilled magic states. These Pauli measurements form a PP set of size at most $26$. For the worst case where the PP set is of size 26, the TELS protocol  uses the $[31,26,3]$ BCH code with lattice surgery measurement distance $d_m = 9$.

The space requirements of all the above layouts are described as functions of $d_x$ and $d_z$ in \cref{app:constants}. Additional constants in \cref{app:constants} can be used with the procedure of \cref{app:algoSTcosts} to determine all the minimum distances (spacelike and timelike) for the distillation protocols.

\begin{figure}
    \centering
    \subfloat[\label{fig:125ue} 
    ]{\includegraphics[height=.09\textwidth]{imagesChap7/125to3ue.pdf}}
    \hspace{0.2cm}
    \subfloat[\label{fig:125uepar}
    ]{\includegraphics[height=.12\textwidth]{imagesChap7/125to3uepar.pdf}}
    \hspace{0.6cm}
    \subfloat[\label{fig:125bch7par}
    ]{\includegraphics[height=.12\textwidth]{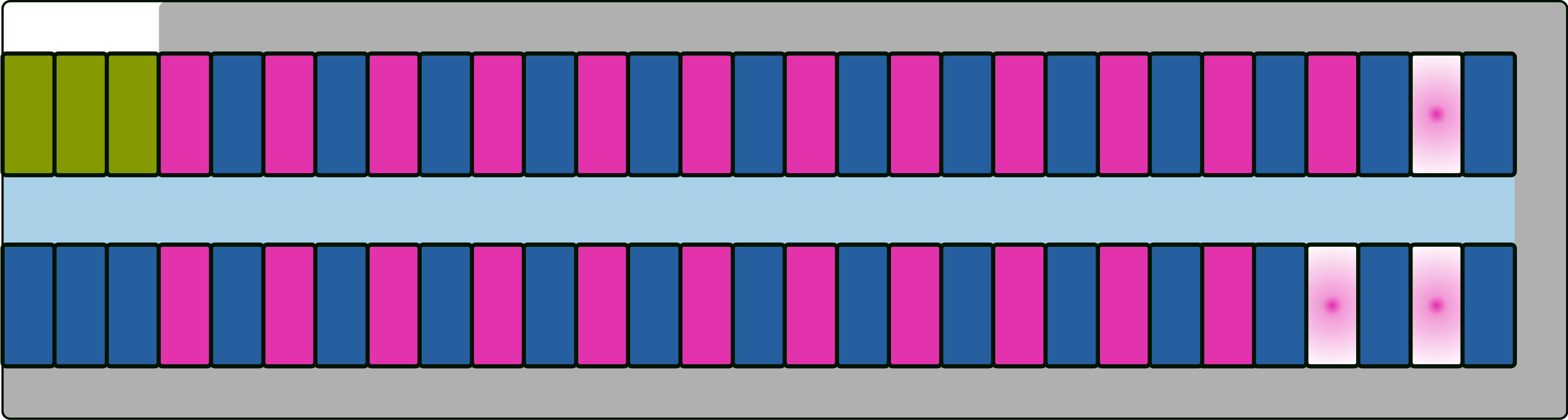}}
    \hspace{0.2cm}
    \subfloat[\label{fig:125bch9par} ]{\includegraphics[height=.12\textwidth]{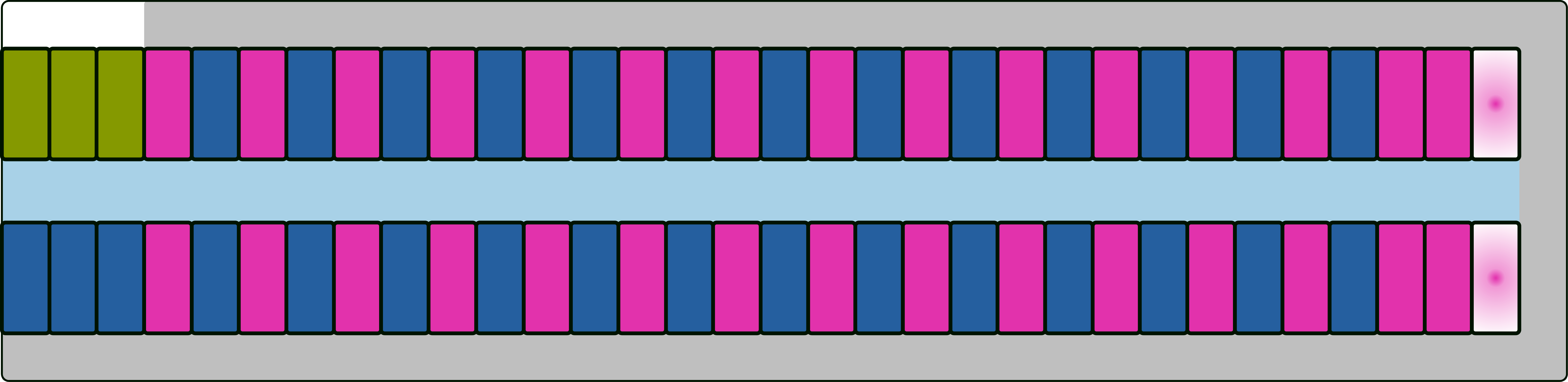}}
    \caption{Layouts of logical qubits for lattice-surgery-based $125$-to-$3$ magic state distillation. Cell color legend in caption of \cref{fig:15to1layouts}.~(a) Layout for a distillation tile without temporally encoded lattice surgery measurements, using an auto-corrected non-Clifford gate gadget. The blue routing space allows Pauli $X$-type measurements as there is access to the $X$ boundaries of all the cells. The gray routing region allows access to a $\ket 0$ ancilla with $Y$ boundary access. This region contains hardware to execute a non-Clifford gate gadget.~(b) Layout for a distillation tile that performs Pauli measurements two at a time, without temporal encoding. The long blue routing space performs one set of measurement with the auto-corrected non-Clifford gadget hardware at the left, and the large grey routing space uses the gadget hardware on the right.~(c) Layout for a distillation tile performing temporally encoded lattice surgery with the $[127,106,7]$ BCH code. Non-Clifford gates are performed two at a time, using separate routing spaces shown in gray and blue.~(d) Layout for a TELS-assisted distillation tile using the $[127,99,9]$ BCH code, performing non-Clifford gates two at a time.}
    \label{fig:125to1layouts}
\end{figure}

\subsection{116-to-12 distillation}
\label{sec:116to12}

The $116$-to-$12$ magic state distillation protocol is obtained from a triorthogonal CSS quantum $\llbracket 116,12,4\rrbracket$ code. This code is constructed by puncturing the $[128,29,32]$ Reed-Muller code at a specific set of $12$ locations as shown in Ref.~\cite{Haah18}. As a result, the quantum code will contain $12$ logical qubits, $87$ $X$-type stabilizers and $17$ $Z$-type stabilizers. After applying the circuit transformation from a gate-based model to the PBC model~\cite{Litinski19}, we are left with a sequence of $99$ commuting Pauli measurements on $29$ logical qubits, twelve of which will finally become the distilled magic states. These $99$ measurements form a size-$99$ PP set.

In this paper, we will consider using this protocol in a regime where the physical error rate is $p=10^{-4}$ and our target is to distill magic states with logical error probability at most $\delta^{(M)}=10^{-15}$. Using the noise model described in \cref{subsec:MSDMSDinjection} for the injection of magic states, we apply the analysis in Ref.~\cite{Litinski19magic} to determine the logical failure probability per output magic state for one round of a $116$-to-$12$ distillation scheme,
\begin{align}
    p_L^{(M)}  = & \frac{1}{12} 495 \Big ( (\epsilon_{\text{L,Z}})^4 + \frac{1}{2} 8(\epsilon_{\text{L,Z}})^3 \epsilon_{\text{L,X}} \nonumber \\
    & \quad + \frac{1}{4} 24 (\epsilon_{\text{L,Z}})^2 (\epsilon_{\text{L,X}})^2
    + \frac{1}{8} 32(\epsilon_{\text{L,Z}}) (\epsilon_{\text{L,X}})^3 \nonumber \\
    &\quad + \frac{1}{16} 16 (\epsilon_{\text{L,X}})^4
     \Big ) \nonumber \\
    = &\frac{495(1+\eta)^4}{972 \eta^4} p^4 \: .
\end{align}
For $p=10^{-4}$ and $\eta=100$, the probability that the distillation succeeds is $1 - p_D^{(M)} = (1-\epsilon_L)^{116} = 0.9961$ and $p_L^{(M)} = 5.3 \times 10^{-17}$. As this is sufficiently below $\delta^{(M)}$, the lattice surgery measurements used to execute the distillation protocol must be modeled with measurement distance large enough to allow for distilled magic states of logical error rate at most $\delta{(M)}$. Using the procedure in \cref{app:algoSTcosts}, we determined that the minimum spacelike distances are $d_x=9$ and $d_z=15$. 

We show two layouts for distillation tiles that do not use TELS in \cref{fig:116ue} and \cref{fig:116uepar}. These layouts perform distillation in the Pauli frame as described in Ref.~\cite{Litinski19} using auto-corrected non-Clifford gadgets (\cref{fig:PauliAutocorr}). On the layout of \cref{fig:116ue}, $99$ Pauli measurements are performed, each with measurement distance $d_m = 13$. On the layout of \cref{fig:125uepar}, Pauli measurements can be performed two at a time. Hence the time required is only the time for $50$ sequential lattice surgery measurements with measurement distance $d_m = 13$. 

In \cref{fig:116zett5par}, we show a layout for a distillation tile that performs TELS with a classical $[129,114,6]$ Zetterberg code. Note that since there are two disjoint routing spaces (in blue and grey), the set of $n=129$ Pauli measurements can be performed in the time required for $65$ sequential measurements. To obtain the time cost shown in \cref{tab:STcosts}, we used $d_m = 3$ and $c=0$. Similarly, in \cref{fig:116bch9par}, we show a layout for a $116$-to-$12$ distillation tile that performs TELS with a classical $[127,99,9]$ BCH code. Here, we used the parameters $d_m = 3$ and $c=2$ to obtain the time cost shown in \cref{tab:STcosts}. As discussed in \cref{subsec:MSDMSDCliff}, we must also execute a second PP set, now of maximum size $17$, due to the conditional Clifford corrections. The results of these measurements are used to detect if there are errors in the final distilled magic states. These Pauli measurements form a PP set of maximum size $17$, and in the worst case, the TELS protocol used is a $[43,17,7]$ BCH code with lattice surgery measurement distance $d_m = 3$ and $c=2$.

The space requirements of all the above layouts are described as functions of $d_x$ and $d_z$ in \cref{app:constants}. Additional constants in \cref{app:constants} can be used with the procedure of \cref{app:algoSTcosts} to determine all the minimum distances (spacelike and timelike) for the distillation protocols.

\begin{figure}
    \centering
    \subfloat[\label{fig:116ue}
    ]{\includegraphics[width=.332\textwidth]{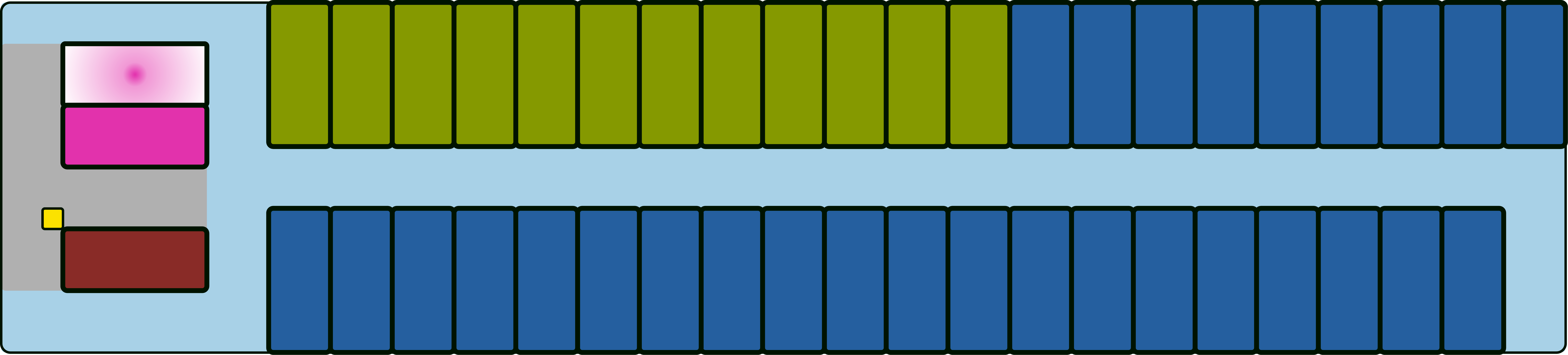}}
    \hspace{0.2cm}
    \subfloat[\label{fig:116uepar}
     ]{\includegraphics[width=.392\textwidth]{imagesChap7/116to12uepar.pdf}}
    \hspace{0.6cm}
    \subfloat[\label{fig:116zett5par}
    ]{\includegraphics[width=.41\textwidth]{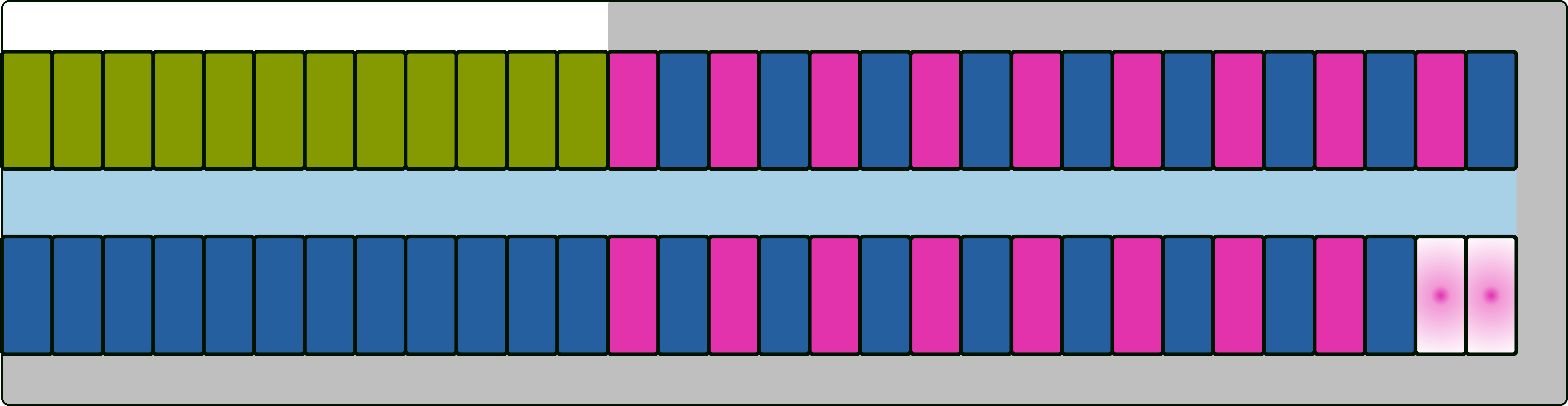}}
    \hspace{0.2cm}
    \subfloat[\label{fig:116bch9par} ]{\includegraphics[width=.5\textwidth]{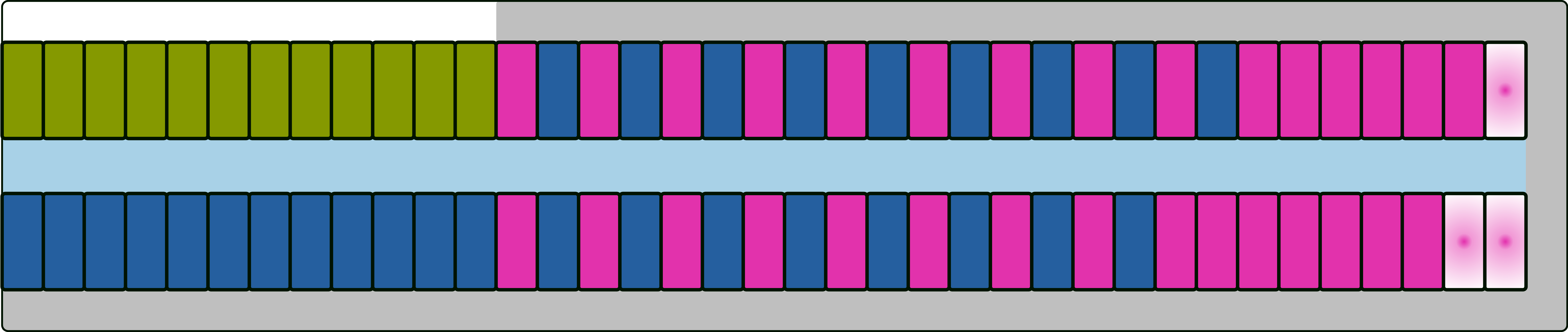}}
    \caption{Layouts of logical qubits for lattice-surgery-based $116$-to-$12$ magic state distillation. Cell color legend in caption of \cref{fig:15to1layouts}.~(a) Layout for a distillation tile without temporally encoded lattice surgery measurements, using an auto-corrected non-Clifford gate gadget.~(b) Layout for a distillation tile that performs Pauli measurements two at a time, without temporal encoding.~(c) Layout for a distillation tile performing temporally encoded lattice surgery with the $[129,114,6]$ Zetterberg code. Non-Clifford gates are performed two at a time, using separate routing spaces shown in gray and blue.~(d) Layout for a TELS-assisted distillation tile using the $[127,99,9]$ BCH code, performing non-Clifford gates two at a time.}
    \label{fig:116to12layouts}
\end{figure}

\subsection{114-to-14 distillation}
\label{sec:114to4}

The $114$-to-$14$ magic state distillation protocol is obtained from a triorthogonal CSS quantum $\llbracket 114,14,3\rrbracket$ code. This code is constructed by puncturing the $[128,29,32]$ Reed-Muller code at a specific set of $14$ locations as shown in Ref.~\cite{Haah18}. As a result, the quantum code will contain $14$ logical qubits, $85$ $X$-type stabilizers and $15$ $Z$-type stabilizers. After applying the circuit transformation from a gate-based model to the PBC model~\cite{Litinski19}, we are left with a sequence of $99$ commuting Pauli measurements on $29$ logical qubits, fourteen of which will finally become the distilled magic states. These $99$ measurements form a size-$99$ PP set.

In this paper, we will consider using this protocol in a regime where the physical error rate is $p=10^{-3}$ and our target is to distill magic states with logical error probability at most $\delta^{(M)}=10^{-10}$. Using the noise model described in \cref{subsec:MSDMSDinjection} for the injection of magic states, we apply the analysis in Ref.~\cite{Litinski19magic} to determine the logical failure probability per output magic state for one round of a $114$-to-$14$ distillation scheme,
\begin{align}
    p_L^{(M)}  = & \frac{1}{14} 30 \Big ( (\epsilon_{\text{L,Z}})^3 + \frac{1}{2} 6(\epsilon_{\text{L,Z}})^2 \epsilon_{\text{L,X}} \nonumber \\
    & \quad + \frac{1}{4} 12 (\epsilon_{\text{L,Z}}) (\epsilon_{\text{L,X}})^2
    + \frac{1}{8} 8 (\epsilon_{\text{L,X}})^3 \Big ) \nonumber \\
    = &\frac{30(1+\eta)^3}{378 \eta^3} p^3 \: .
\end{align}
For $p=10^{-3}$ and $\eta=100$, the probability that the distillation succeeds is $1 - p_D^{(M)} = (1-\epsilon_L)^{114} = 0.962$ and $p_L^{(M)} = 8.18 \times 10^{-11}$. Now, the lattice surgery measurements used to execute the distillation protocol must be modeled with measurement distance large enough to allow for distilled magic states of logical error rate at most $\delta{(M)}$. 

We show two layouts for distillation tiles that do not use TELS in \cref{fig:114ue} and \cref{fig:114uepar}. These layouts perform distillation in the Pauli frame as described in Ref.~\cite{Litinski19} using auto-corrected non-Clifford gadgets (\cref{fig:PauliAutocorr}). On the layout of \cref{fig:114ue}, $99$ Pauli measurements are performed, each with measurement distance $d_m = 15$. Using the procedure in \cref{app:algoSTcosts}, we determined that the minimum spacelike distances for this layout are $d_x=9$ and $d_z=19$. On the layout of \cref{fig:114uepar}, Pauli measurements can be performed two at a time. Hence the time required is only the time for $50$ sequential lattice surgery measurements with measurement distance $d_m = 15$. However in this case, when we calculated the minimum spacelike distances, the $Z$-distance could be dropped by two. This can be attributed to the fact that the distillation protocol finished in nearly half the time, and the probability of a logical $Z$-type error (see \cref{app:algoSTcosts}) scales linearly with time. Hence $d_x=9$ and $d_z=17$.

In \cref{fig:116zett5par}, we show a layout for a distillation tile that performs TELS with a classical $[129,114,6]$ Zetterberg code. Note that since there are two disjoint routing spaces (in blue and grey), the set of $n=129$ Pauli measurements can be performed in the time required for $65$ sequential measurements. To obtain the time cost shown in \cref{tab:STcosts}, we used $d_m = 5$ and $c=0$. Similarly, in \cref{fig:116bch9par}, we show a layout for a $114$-to-$14$ distillation tile that performs TELS with a classical $[127,106,7]$ BCH code. Here, we used the parameters $d_m = 5$ and $c=1$ to obtain the time cost shown in \cref{tab:STcosts}. As discussed in \cref{subsec:MSDMSDCliff}, we also execute a second PP set, now of maximum size $15$, due to the conditional Clifford corrections. The results of these measurements are used to detect if there are errors in the final distilled magic states. These Pauli measurements form a PP set of maximum size $15$, and in the worst case, the TELS protocol used is a $[31,16,7]$ BCH code with lattice surgery measurement distance $d_m = 5$ and $c=2$.

The space requirements of all the above layouts are described as functions of $d_x$ and $d_z$ in \cref{app:constants}. Additional constants in \cref{app:constants} can be used with the procedure of \cref{app:algoSTcosts} to determine all the minimum distances (spacelike and timelike) for the distillation protocols.

\begin{figure}
    \centering
    \subfloat[\label{fig:114ue} 
    ]{\includegraphics[width=.36\textwidth]{imagesChap7/114to14ue.pdf}}
    \hspace{0.2cm}
    \subfloat[\label{fig:114uepar}
    ]{\includegraphics[width=.41\textwidth]{imagesChap7/114to14uepar.pdf}}
    \hspace{0.8cm}
    \subfloat[\label{fig:114zett5par}
    ]{\includegraphics[width=.435\textwidth]{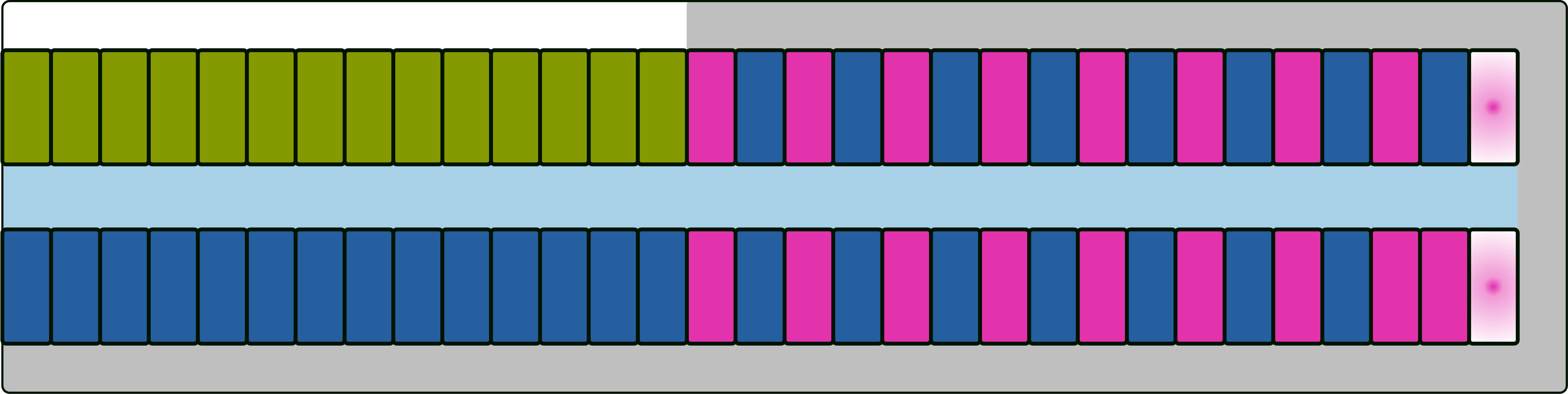}}
    \hspace{0.2cm}
    \subfloat[\label{fig:114bch7par} ]{\includegraphics[width=.48\textwidth]{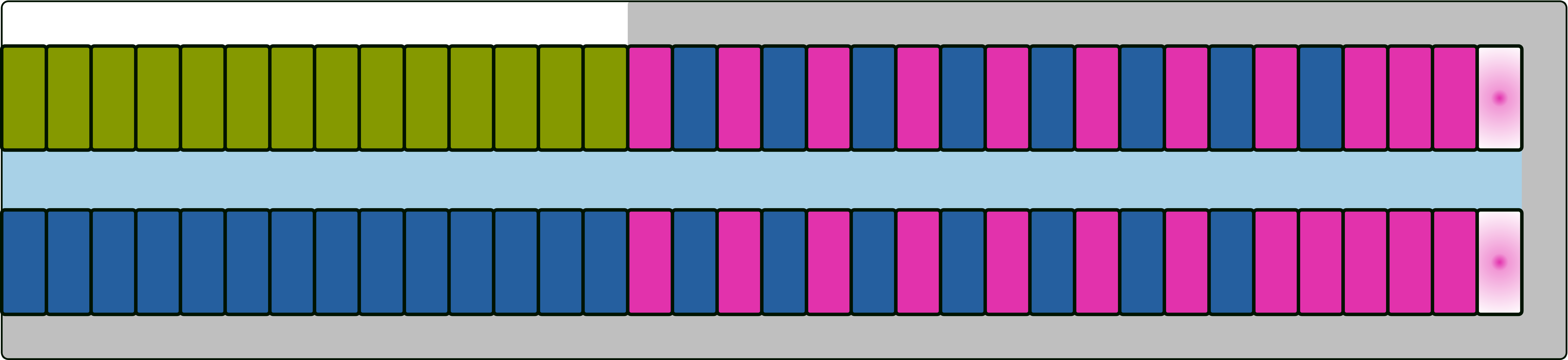}}
    \caption{Layouts of logical qubits for lattice-surgery-based $114$-to-$14$ magic state distillation with a $\llbracket 114,14,3\rrbracket$ quantum code. Cell color legend in caption of \cref{fig:15to1layouts}.~(a) Layout for a distillation tile without temporally encoded lattice surgery measurements, using an auto-corrected non-Clifford gate gadget.~(b) Layout for a distillation tile that performs Pauli measurements two at a time, without temporal encoding.~(c) Layout for a distillation tile performing temporally encoded lattice surgery with the $[129,114,6]$ Zetterberg code. Non-Clifford gates are performed two at a time, using separate routing spaces shown in gray and blue.~(d) Layout for a TELS-assisted distillation tile using the $[127,106,7]$ BCH code, performing non-Clifford gates two at a time.}
    \label{fig:114to14layouts}
\end{figure}

\end{document}